\DeclareSymbolFont{msbm}{U}{msb}{m}{n}
\DeclareMathSymbol{\C}{\mathalpha}{msbm}{'103}
\DeclareMathSymbol{\R}{\mathalpha}{msbm}{'122}
\DeclareMathSymbol{\Z}{\mathalpha}{msbm}{'132}
\DeclareMathSymbol{\N}{\mathalpha}{msbm}{'116}
\newcommand{\G}{\mathcal{G}}
\newtheorem{remark}{Remark}
\newtheorem{proposition}{Proposition}
\def\vr{\boldsymbol{\varrho}}
\def\be{\begin{equation}}
\def\ee{\end{equation}}
\def\bea{\begin{eqnarray}}
\def\ba{\begin{array}{l}\displaystyle}
\def\eea{\end{eqnarray}}
\def\ea{\end{array}}
\def\E{{\cal E}}
\begin{document}
\title{\small \hfill 
    \Large
A MULTISCALE KINETIC-FLUID SOLVER WITH DYNAMIC LOCALIZATION OF
KINETIC EFFECTS
\thanks{Acknowledgements: This work was supported by the Marie Curie Actions of the European
Commission in the frame of the DEASE project (MEST-CT-2005-021122)
and by the French Commisariat \`{a} l'\'{E}nergie Atomique (CEA) in
the frame of the contract ASTRE (SAV 34160) }}

\author[1,2]{Pierre Degond}
\author[1,2,3]{Giacomo Dimarco\footnote
{Corresponding author address: Institut de Math\'{e}matiques de
Toulouse, UMR 5219 Universit\'{e} Paul Sabatier, 118, route de
Narbonne 31062 TOULOUSE Cedex, FRANCE. \\
\emph{E-mail}:pierre.degond@math.univ-toulouse.fr,giacomo.dimarco@unife.it,luc.mieussens@math.u-bordeaux1.fr}}
\author[4]{Luc Mieussens}

\affil[1]{Universit\'{e} de Toulouse; UPS, INSA, UT1, UTM ;

Institut de Math\'{e}matiques de Toulouse ; F-31062 Toulouse,
France.} \affil[2]{CNRS; Institut de Math\'{e}matiques de Toulouse
UMR 5219;

F-31062 Toulouse, France.} \affil[3]{Commissariat  \`a l'Energie
Atomique CEA-Saclay DM2S-SFME;

91191 Gif-sur-Yvette, France.} \affil[4]{Universit\'{e} de Bordeaux;
Institut de Math\'{e}matiques; 33405 Talence cedex Bordeaux,
France.} \maketitle

\date{}

\maketitle

\begin{abstract}
This paper collects the efforts done in our previous works
\cite{degond1},\cite{dimarco1},\cite{degond3} to build a robust
multiscale kinetic-fluid solver. Our scope is to efficiently solve
fluid dynamic problems which present non equilibrium localized
regions that can move, merge, appear or disappear in time. The main
ingredients of the present work are the followings ones: a fluid
model is solved in the whole domain together with a localized
kinetic upscaling term that corrects the fluid model wherever it is
necessary; this multiscale description of the flow is obtained by
using a micro-macro decomposition of the distribution function
\cite{degond3}; the dynamic transition between fluid and kinetic
descriptions is obtained by using a time and space dependent
transition function; to efficiently define the breakdown conditions
of fluid models we propose a new criterion based on the distribution
function itself. Several numerical examples are presented to
validate the method and measure its computational efficiency.
\end{abstract}

{\bf Keywords:} kinetic-fluid coupling, multiscale problems, Boltzmann-BGK equation.\\

\section{Introduction}

Many engineering problems involve fluids in transitional regimes
such as hypersonic flows or micro-electro-mechanical devices. In
these cases, usual fluid models (like Euler or Navier-Stokes
equations) break down in localized regions of the computational
domain (typically in shock and boundary layers). For such problems,
using classical fluid models is generally not sufficient for an
accurate description of the flow in non-equilibrium regions.
However, it is not necessary to solve the Boltzmann equation---which
is computationally more expensive than continuum solvers by several
orders of magnitude---especially in situations where the flow is
close to thermodynamical equilibrium.

For the above reasons, it is important to develop hybrid techniques
which can reduce the use of kinetic solvers to the regions where
they are strictly necessary (kinetic regions), leaving the
simulation in the rest of the domain to a continuum or fluid solver
(fluid regions). The construction of these methods involve two main
problems. The first one is  how to accurately identify the different
regions. We refer, for instance, to the works of Wijesinghe and
Hadjiconstantinou~\cite{he}, Levermore, Morokoff, and
Nadiga~\cite{Levermore}, and Wang and Boyd~\cite{Wang}, in which
various breakdown criteria are proposed. The second main problem is
how to efficiently and correctly match the two models at the
interfaces. Most of the recent methods are based on domain
decomposition techniques, such as in the works of Bourgat, LeTallec,
Perthame, and Qiu \cite{BLPQ}, Bourgat, LeTallec and Tidriri
\cite{BLT}, LeTallec and Mallinger \cite{Letallec}, Aktas and Aluru
\cite{AA}, Roveda, Goldstein and Varghese \cite{RGV1}, Sun, Boyd and
Candler \cite{Boyd2}, Wadsworth and Erwin \cite{Wad}, and Wijesinghe
et al.~\cite{Wij}. The same domain decomposition approach has been
also used in many others fields, such as, for instance, in molecular
dynamics \cite{WE1}, in epitaxial growth \cite{SSE} or for problems
involving diffusive scalings \cite{Klar} instead of hydrodynamic
ones. We also mention the use of decompositions in velocity instead
of physical space done by Crouseilles, Degond and Lemou
\cite{degond} and by Dimarco and Pareschi \cite{dimarco2}.

It is important to stress that most of the mentioned methods use a
static interface between kinetic and fluid regions that is chosen
once for all at the beginning of the computation. However, for
unsteady problems, this approach appears as somehow inadequate and
inefficient, and for this reason, some automatic domain
decomposition methods have also been proposed, see for example
Kolobov et al.~\cite{Kobolov}, or Tiwari~\cite{tiwari_JCP,tiwari_1}
and Dimarco and Pareschi~\cite{dimarco1}. We have also proposed a
similar approach in~\cite{dimarco1}.

In this paper, we propose a method that has similar features as the
methods mentioned above: we solve the Boltzmann-BGK equation coupled
with the compressible Euler equations through an adaptive domain
decomposition technique. With this technique, it is possible to
achieve considerable computational speedup, as compared to steady
interface coupling strategies, without losing accuracy in the
solution. This method is somehow an extension of our previous
work~\cite{dimarco1}, but several important differences must be
noted.  First, we introduce a new breakdown criterion which is based
on a careful inspection of the distribution function. This criterion
can be defined by using the macroscopic variables only, at least in
fluid regions, and thus does not introduce additional expensive
computations. This allows us to define kinetic regions that are
small as possible. Second, we use a decomposition of the
distribution function that has better properties than the one used
in~\cite{dimarco1}. In fact, while it has been proved by Degond,
Jin, and Mieussens~\cite{degond1} that the decomposition used
in~\cite{dimarco1} preserves uniform flows at the continuous level,
we show in this paper that this is not true in general at the
discrete level, except if a quite specific and very expensive
kinetic scheme is used. For this reason in the present work we use
the decomposition proposed by Degond, Liu, and
Mieussens~\cite{degond3}, since it perfectly preserves uniform
flows, both for the continuous and discrete cases. As
in~\cite{degond3}, we decompose the distribution function into an
equilibrium part, that can be described by macroscopic fluid
variables, and a perturbative non-equilibrium part. We obtain a
micro-macro fluid model in which the macroscopic variables are
determined by solving a fluid equation with a kinetic upscaling.
This kinetic upscaling is determined by solving a kinetic equation,
and is dynamically and automatically localized wherever it is
necessary, by using our breakdown criterion and the transition
function idea~\cite{degond2,degond1,degond3,dimarco1}. Third, we
propose an efficient numerical scheme for discretizing our
micro-macro fluid model: we use a time splitting approach that has
several advantages. In particular it is shown to preserve the
positivity of the distribution function.

The outline of the article is the following. In section
\ref{sec_Boltzmann}, we introduce the BGK equation and its
properties. In section \ref{sec_coupling}, we present the coupling
strategy, while in section \ref{sec_num_approx} the numerical scheme
is described and positivity properties are analyzed. In section
\ref{sec:moving}, we derive our breakdown criterion, and the final
algorithm is presented. Several numerical tests are presented in
section \ref{sec_tests} to illustrate the properties of our method
and to demonstrate its efficiency. A short conclusion is given in
section \ref{sec_conclu}. In appendix A, the differences between the
present coupling strategy and the decomposition used in
\cite{degond1,dimarco1} are analyzed in some details.


\section{The Boltzmann-BGK model}
\label{sec_Boltzmann}

We consider the kinetic equation
\be
\partial_t f + v\cdot\nabla_{x}f = Q(f),
\label{eq:B}
\ee
with the initial data
\begin{equation*}
  f(x,v,t=0)=f_{init},
\end{equation*}
where $f=f(x,v,t)$ is the density of particles that have velocity $v
\in \R^3 $ and position $x \in \Omega \subset \R^3$ at time $ t> 0$.
The collision operator $Q$ locally acts in space and time and takes
into account interactions between particles. It is assumed to
satisfy local conservation properties \be \langle m Q(f)\rangle=0\ee
for every $f$, where we denote weighted integrals of $f$ over the
velocity space by \be\langle \phi f\rangle=\int_{\R^3}
\phi(v)f(v)dv,\ee where $\phi(v)$ is any function of $v$, and
$m(v)=(1,v,|v|^2)$ are the so-called collisional invariants. It
follows that the multiplication of (\ref{eq:B}) by $m(v)$ and the
integration in velocity space leads to the system of local
conservation laws \be
\partial_t \langle m f\rangle +\nabla_{x}\langle v m f\rangle = 0. \label{consl}\ee
We also assume that the functions satisfying $Q(f)=0$, referred to
as local equilibrium distributions and denoted by $E[\vr]$, are
defined implicitly through their moments $\varrho$ by \be
\vr=\langle m E[\boldsymbol{\varrho}]\rangle\ee In the present paper
we will work with the BGK model of the Boltzmann collision operator
that reads \be Q(f)=\nu (E[\boldsymbol{\varrho}]-f).\ee With this
operator, collisions are modelled by a relaxation towards the local
Maxwellian equilibrium: \be
 E[\boldsymbol{\varrho]}(v)=\frac{\varrho}{(2\pi \theta)^{3/2}}\exp\left(\frac{-|u-v|^{2}}{2\theta}\right) ,
\label{eq:M} \ee where $\varrho$ and $u$ are the density and mean
velocity while $\theta=RT$ with $T$ the temperature of the gas and
$R$ the gas constant. The macroscopic values $\varrho$, $u$ and $T$
are related to $f$ by: \be \varrho=\int_{\R^3} fdv, \qquad \varrho
u=\int_{\R^3} vfdv, \qquad  \theta=\frac{1}{3\varrho}
\int_{\R^3}|v-u|^{2}fdv, \label{eq:Mo} \ee while the internal energy
$e$ is defined as \be e=\frac{1}{2\varrho} \int_{\R^3}|v|^{2}fdv =
\frac{1}{2}|u|^2 + \frac{3}{2}  \theta. \label{eq:E} \ee The
parameter $\nu> 0$ is the relaxation frequency. In this paper, we
use the classical choice $\nu=\mu/p$ where $\mu=\mu_{ref}(\theta/
\theta_{ref})^{\omega}$ is the viscosity and $p$ is the pressure. We
refer to section \ref{sec_tests} for numerical values of
$\mu_{ref}$, $\theta_{ref}$ and $\omega$.

Boundary conditions have to be specified for equation
(\ref{eq:B}). Different type of conditions are used in applications: inflow,
outflow, specular reflection or total accomodation. We will specify the conditions
we use for every numerical test in section \ref{sec_tests}.

When the mean free path between particles is very small compared to
the size of the computational domain, the space and time variables
can be rescaled to \be x'=\varepsilon x, \ t'=\varepsilon
t\label{eq:scaling}\ee where $\varepsilon$ is the ratio between the
microscopic and the macroscopic scale (the so-called Knudsen
number). Using these new variables in~(\ref{eq:B}), we get \be
\partial_{t'} f^\varepsilon + v\cdot\nabla_{x'}f^\varepsilon =
\frac{\nu}{\varepsilon}(E^\varepsilon[\boldsymbol{\varrho}]-f^{\varepsilon}).\ee
If the Knudsen number $\varepsilon$ tends to zero, this equation shows
that the
distribution function converges towards the local
Maxwellian equilibrium $E^\varepsilon[\boldsymbol{\varrho}]$. Using this relation into the
conservation laws~(\ref{consl}) gives the Euler equations for $\boldsymbol{\varrho}$:
\be \partial_{t'} \boldsymbol{\varrho}
+\nabla_{x'}F(\boldsymbol{\varrho}) = 0,\ee  where
$F(\boldsymbol{\varrho})=\langle v m
E^{\varepsilon}[\boldsymbol{\varrho}]\rangle=(\varrho,\varrho u,
\varrho e)$.

In the sequel, to give a simple description of our approach, all
schemes and all algorithms are shown for the one dimensional case in
velocity and physical space. The extension to the multidimensional
case does not introduce any additional difficulty in the
mathematical setting. We will also omit the primes wherever they are unnecessary.


\section{The coupling method}
\label{sec_coupling}

In this section, we follow the work of~\cite{degond3} and extend the
micro-macro fluid model to allow for dynamic localization of the
kinetic upscaling.


\subsection{Decomposition of the kinetic equation}
\label{subsec_decom_kin}

Our method is based on the micro-macro decomposition of the
distribution function: it is decomposed in its local Maxwellian
equilibrium and the deviation part as \be f=E[\vr]+g.\ee Because the
equilibrium distribution has the same first three moments as $f$ we
have \be \langle m g\rangle=0.\ee Then it can be easily proved that
the following coupled system
\begin{align}
& \partial_t \vr +\partial_x F(\vr)+\partial_x \langle v m g
\rangle=0\label{eq:s1} \\
&  \partial_t g +v \partial_x g=-\nu g -(\partial_t+v \partial_x
)E[\vr]\label{eq:s2}
\end{align}
 is satisfied by $\vr=\langle m f\rangle$ and
$g=f-E[\vr]$, where $F(\vr)=\langle v m E[\vr]\rangle$ is the flux
associated to the equilibrium state. The corresponding initial data
are \be \vr_{t=0}=\vr_{init}=\langle m f_{init}\rangle, \ \ \ \
g_{t=0}=f_{init}-E[\vr_{init}].\label{eq:idata} \ee The converse
statement is also true: if $\vr$ and $g$ satisfy system
(\ref{eq:s1}) and (\ref{eq:s2}) with initial data~(\ref{eq:idata}),
then $f=E[\vr]+g$ satisfies the kinetic equation (\ref{eq:B})
(see~\cite{degond1} for details). In the following section, starting
from this decomposition, we introduce the set of equations that
define the domain decomposition technique we are proposing.

\subsection{Transition function}
\label{subsec_decom_kin}

Let $\Omega_1$, $\Omega_2$, and $\Omega_3$ be three disjointed sets
such that $\Omega_1\cup \Omega_2 \cup \Omega_3=\R^3$. The first set
$\Omega_1$ is supposed to be a domain in which the flow is far from
the equilibrium (the "kinetic zone"), while the flow is supposed to
be close to the equilibrium in $\Omega_2$ (the "fluid zone") and
also in $\Omega_3$ (the "buffer zone"). We define a function
$h(x,t)$ such that
\begin{equation}
h(x,t)=\left\lbrace
\begin{array}{lll}
\displaystyle 1, & \mbox{for} & x \in \Omega_1, \\
0, & \mbox{for} & x \in \Omega_2, \\
0 \leq h(x,t) \leq 1, & \mbox{for} & x \in \Omega_3. \\
\end{array}
\right.
\end{equation}
Note that the time dependence of $h$ means that we account for
possibly dynamically changing fluid and kinetic zones. The topology
and geometry of these zones is directly encoded in $h$ and may
change dynamically as well.

Next, we split the perturbation
term in two distribution functions $g_{K}=hg$ and
$g_{F}=(1-h)g$. The time derivatives of these functions then are
\begin{align*}
& \partial_t g_K = \partial_t(hg) = g \, \partial_t h + h \partial_t g, \\
& \partial_t g_F = \partial_t((1-h)g) = - g \, \partial_t h +
(1-h)
\partial_t g,
\end{align*}
and it is therefore easy to derive the following coupled system of equations
\begin{align}
 &\partial_t \vr +\partial_x F(\vr)+\partial_x \langle v m g_K
\rangle+\partial_x \langle v m g_F \rangle=0\label{eq:s3}\\  &
\partial_t g_{K} + hv\partial_xg_{K}+hv\partial_xg_{F} =-\nu g_K
-h(\partial_t+v \partial_x )E[\vr]+\frac{g_K}{h}\partial_{t}h ,
\label{eq:gk} \\
 & \partial_t g_{F} + (1-h)v\partial_xg_{K} +
(1-h)v\partial_xg_{F}=
-\nu g_F -(1-h)(\partial_t+v
\partial_x )E[\vr] -\frac{g_F}{1-h}\partial_{t}h ,
\label{eq:gF}
\end{align}
with initial data
\begin{equation}
 g_{K,t=0}=h_{t=0}g_{t=0}\, , \quad  g_{F,t=0}=(1-h_{t=0})g_{t=0}\, , \quad
\vr_{t=0}=\vr_{init} \label{eq:Bg}
\end{equation}
and with $h_{t=0}=h_{init}$ and
$g_{t=0}=f_{init}-E[\vr_{init}]$. Again,
system~(\ref{eq:s3}--\ref{eq:gF}) with initial data~(\ref{eq:Bg}) is
equivalent to system~(\ref{eq:s1}--\ref{eq:s2}) with initial data~(\ref{eq:idata}) (see~\cite{degond1} for details).

Now assume that the flow is very close to equilibrium in
$\Omega_2\cup\Omega_3$. This means that $g$ is very small in these
domains and can be set to zero. Since $g=g_F$ in $\Omega_2$, we set
$g_F=0$ in this domain. In $\Omega_3$, we also set $g_F=0$, which
means that we approximate $g$ by $g_K$. Consequently, $g_F$ can be
eliminated from~(\ref{eq:s3}--\ref{eq:gF}) to get:
\begin{align}
&\partial_t \vr +\partial_x
F(\vr)+\partial_x \langle
v m g_K \rangle=0\label{eq:s5}\\
&\partial_t g_{K} + hv\partial_xg_{K} =-\frac{\nu}{\varepsilon} g_K
-h(\partial_t+v \partial_x
)E[\vr]+\frac{g_K}{h}\partial_{t}h ,
\label{eq:gk2}
\end{align}
with initial data
\begin{equation}
 g_{K,t=0}=h_{t=0}g_{t=0}=h_{init}(f_{init}-E[\vr_{init}])\, , \quad \vr_{t=0}=\vr_{init}.
\label{eq:Bg1}
\end{equation}

Note that since by definition $g_K$ is zero in the fluid zone
$\Omega_2$, the kinetic equation equation~(\ref{eq:gk2}) is solved in the kinetic and buffer
zones $\Omega_1$ and $\Omega_3$ only. Indeed, in the fluid zone, we only
solve~(\ref{eq:s5}) with $g_K=0$, which is nothing but the Euler
equations. In the kinetic zone, we have $g_K=g$ and hence
system~(\ref{eq:s5}--\ref{eq:gk2}) is nothing but
system~(\ref{eq:s1}--\ref{eq:s2}), which is equivalent to the original
BGK equation. System~(\ref{eq:s5}--\ref{eq:gk2}) is our micro-macro fluid model
with dynamically localized kinetic effects which will be used to solve multiscale
kinetic problems. With this system, the distribution function $f$ is
approximated by $E[\vr]+g_K$.

In the next section we describe and analyze the numerical scheme we
use to discretize this system, and we compare this new model to the
model used in our previous work~\cite{dimarco1}.

\begin{remark}
  We mention here a slightly different derivation that leads to a
  different micro-macro model. In~(\ref{eq:gk}), the term
  $\frac{g_K}{h}$ can be equivalently replaced by $g_K+g_F$, since $g_K=hg$
  by definition and also $g=g_K+g_F$. In this case, the approximation
  $g_F=0$ in $\Omega_2$ and $\Omega_3$ leads to the model:
\begin{align}
&\partial_t \vr +\partial_x
F(\vr)+\partial_x \langle
v m g_K \rangle=0\label{eq-s5bis}   \\
&\partial_t g_{K} + hv\partial_xg_{K} =-\frac{\nu}{\varepsilon} g_K
-h(\partial_t+v \partial_x
)E[\vr]+g_K\partial_{t}h\label{eq-gK_bis}
\end{align}
Note that, surprisingly, this model is different
from~(\ref{eq:s5}--\ref{eq:gk2}): indeed, the factor of $\partial_t h$
is $g_K$ in~(\ref{eq-s5bis}--\ref{eq-gK_bis}), while it is
$\frac{g_K}{h}$ in~(\ref{eq:s5}--\ref{eq:gk2}). However, we only use
system~(\ref{eq:s5}--\ref{eq:gk2}) in the sequel, since it can
be proved to have very good
properties (like positivity preservation).
\end{remark}


\section{Numerical approximation of the coupled model and its properties}
\label{sec_num_approx}

First, we briefly describe a velocity discretization of the kinetic
BGK equation. Then, we propose a second order in space numerical
scheme for the perturbation term $g_K$ and for the macroscopic fluid
equations. Then, we introduce a time splitting method between the
transition function term and the rest of the system to compute the
evolution of the perturbation function $g_K$. Finally, positivity
property for the distribution function $f$ is analyzed in details.

\subsection{Discrete velocity model for kinetic equations}
\label{subsec_vel_disc}

Here, we replace the continuous velocity space by a bounded
Cartesian grid $\mathcal{V}$ of $N$ nodes $v_{j}=j\Delta v+a$, where
$j$ is a bounded index, $\Delta v$ is the grid step, and $a$ is a
constant. The collisional invariants $m(v)$ are replaced by
$m_{j}=(1,v_{j},\frac{1}{2}|v_{j}|^{2})$. The distribution function
$f$ is approximated on the grid by $(f_{j}(t,x))_{j}$, where
$f_{j}(t,x) \approx f(x,v_{j},t)$, while the fluid quantities are
obtained from $f_{j}$ through discrete summations on $\mathcal{V}$:
\be \boldsymbol{\varrho}=\sum_{j}m_{j}f_j\, \Delta v . \label{eq:DM}
\ee The BGK model is then replaced by the following system of $N$
hyperbolic equations with a stiff relaxation term:   \be
\partial_t f_{j} + v_{j} \partial_xf_{j} = \frac{\nu}{\varepsilon} (\E_{j}[\boldsymbol{\varrho}]-f_{j}),
\label{eq:DM1} \ee where $\E_{j}[\boldsymbol{\varrho}]$ is the
approximation of the continuous Maxwellian $E[\varrho]$. Note that
this approximation is not the evaluation of $E[\vr]$ on the grid
points: in fact, to ensure conservation of macroscopic quantities
and entropy decay at the discrete level, the approximated Maxwellian
$\E[\vr_j]$ is defined through an entropy minimization problem that
can be solved by computing the solution of a small non-linear system
(we refer the reader to~\cite{Mieussens,Mieussens2} for details
about this approximation).

Finally, a micro-macro system with localized upscaling can be
derived from the discrete-velocity BGK equation~(\ref{eq:DM1}),
exactly as in the continous case, and
we find:
\begin{align}
&\partial_t \vr +\partial_x
F(\vr)+\partial_x \langle
v m g_K \rangle=0\label{eq:rhod}\\
&\partial_t g_{K,j} + hv_j\partial_xg_{K,j}
=-\frac{\nu}{\varepsilon} g_{K,j} -h(\partial_t+v_j \partial_x
)\E_j[\vr]+\frac{g_{K_j}}{h}\partial_{t}h , \label{eq:gd}
\end{align}
with initial data
\begin{equation*}
 g_{K,j,t=0}=h_{t=0}g_{t=0,j}=h_{init}(f_{init,j}-\E_j[\vr_{init}])\, , \quad \vr_{t=0}=\vr_{init},
\end{equation*}
where $\langle. \rangle$ now stands for $\sum_j . \Delta v$.


\subsection{Numerical schemes}
\label{subsec_num scheme I}

\subsubsection{Non-splitting scheme}

For the space discretization, we consider a grid of step $\Delta x$
and nodes $x_i$, while for the time discretization, we consider the
step $\Delta t$ and times $t_n=n\Delta t$. The unknowns $\vr$ and
$g_{K,j}$ are approximated by $\vr^n_i\approx\vr(t_n,x_i)$ and
$g^n_{K,i,j}\approx g_j(t_n,x_i)$. Now, the space and time
discretization of the discrete velocity micro-macro
system~(\ref{eq:rhod}--\ref{eq:gd}) is:
\begin{eqnarray}
              & &\frac{g^{n+1}_{K,i}-g^{n}_{K,i}}{\Delta
t}+h_i^n\left(\frac{\phi_{i+1/2}(g_K^{n})-\phi_{i-1/2}(g^{n}_K)}{\Delta
x}\right)=-\frac{\nu}{\varepsilon}g^{n+1}_{K,i} \nonumber\\
              &&-h^{n}_i\left(\frac{\E[\vr^{n+1}_i]-\E[\vr^{n}_i]}{\Delta
t}+\frac{\phi_{i+1/2}(\E[\vr^n])-\phi_{i-1/2}(\E[\vr^n])}{\Delta
x}\right)+\frac{g^{n}_{K,i}}{h_i^n}\frac{h_i^{n+1}-h_i^{n}}{\Delta
t}
         \label{eq:discg}  \end{eqnarray}
where the second order numerical fluxes are defined by \be
\phi_{i+1/2}(g^{n}_K)
=v^{-}g^{n}_{K,i+1}+v^{+}g^{n}_{K,i}+\frac{1}{2}|v_j|\min\hspace{-0.5cm}
\mod(g^n_{K,i}-g^n_{K,i-1},g^n_{K,i+1}-g^n_{K,i},g^n_{K,i+2}-g^n_{K,i})\nonumber\ee
with $v^-=v_j$ if $v_j<0$ and $v^-=0$ in other cases, while
$v^+=v_j$ if $v_j\geq0$ and $v^+=0$ if $v_j$ is negative. The same
numerical flux is used for $\phi_{i+1/2}(E[\vr^n])$. Note that for
simplicity, in~(\ref{eq:discg}) and all what follows, the discrete-velocity index $j$ is omitted,
as well as the space and time dependency of $\nu$.

Note that in~(\ref{eq:discg}), $\vr^{n+1}$ is computed by using a
discrete version of~(\ref{eq:rhod}) which is explained
below. Moreover, note that the last term of the right-hand side
of~(\ref{eq:discg}) models the evolution of the transition function
$h$: the new value $h^{n+1}$ depends on the equilibrium/non
equilibrium state of the gas in a way that will be described in
section~\ref{sec:moving}. In addition, we point out that when $h_i^n=0$, equation (\ref{eq:discg})
is not solved, thus the term $g^n_{K,i}/h_i^n$ does not lead to any computational difficulties.
Finally, note that the stiff relaxation term of~(\ref{eq:discg})
is implicit. This allows us to use a time step which is independent of $\varepsilon$.

Now, we describe the numerical scheme for the macroscopic
equation~(\ref{eq:rhod}). This equation is discretized according to
\be \frac{\vr^{n+1}_i-\vr^{n}_i}{\Delta t}
+\frac{\psi_{i+1/2}(\vr^n,g^n_K)-\psi_{i-1/2}(\vr^n,g^{n}_K)}{\Delta
x}=0\label{eq:discmac}\ee where the numerical flux is an
approximation of the total flux $F(\vr,g_K)=F(\vr)+\langle vm
g_K\rangle$ obtained by the second order MUSCL extension of a
Lax-Friedrichs like scheme: \be  \label{eq:flux}
   \psi_{i+1/2}(\vr^n,g^n_K)=\frac{1}{2}(F(\vr^n_{i},g^n_{K,i})+F(\vr^n_{i+1},g^n_{K,i+1}))-\frac{1}{2}\alpha(\vr^n_{i+1}-\vr^n_{i})+\frac{1}{4}(\sigma^{n,+}_i-\sigma^{n,-}_{i+1})
\ee In this relation, we set \be
\sigma^{n,\pm}_i=\left(F(\vr^n_{i+1},g^n_{K,i+1})\pm
\alpha\vr^n_{i+1}-F(\vr^n_{i},g^n_{K,i})\mp
\alpha\vr^n_{i}\right)\varphi(\chi^{n,\pm}_i)\ee where $\varphi$ is
a slope limiter, $\alpha$ is the largest eigenvalue of the Euler
system and \be \chi^{n,\pm}_i=\frac{F(\vr^n_{i},g^n_{K,i})\pm
\alpha\vr^n_{i}-F(\vr^n_{i-1},g^n_{K,i-1})\mp
\alpha\vr^n_{i-1}}{F(\vr^n_{i+1},g^n_{K,i+1})\pm
\alpha\vr^n_{i+1}-F(\vr^n_{i},g^n_{K,i})\mp \alpha\vr^n_{i}}\ee
where the above vectors ratios are defined componentwise.


\subsubsection{Time splitting scheme}

Here, we propose an alternative scheme based on a time splitting
between the $\partial_t h$ term and the other terms in the kinetic
equation for $g_K$~(\ref{eq:gd}). We will show in the next section
that this method preserves the positivity of the distribution
function $f=E[\vr]+g_K$ under a suitable CFL condition.

First, we solve the macroscopic equation using (\ref{eq:discmac}) as
in the previous scheme, where the numerical fluxes are defined
in~(\ref{eq:flux}). Now, for the kinetic equation on $g_K$, the time
variation of $h$ only is taken into account in~(\ref{eq:gd}) to get
the second step:
\begin{equation*}
g^{n+\frac{1}{2}}_{K,i}=g^{n}_{K,i}+\frac{g_{K,i}^{n}}{h_i^n}(h_{i}^{n+1}-h_{i}^{n}).
\end{equation*}
Note that this relation can be readily simplified in
\begin{equation}
g^{n+\frac{1}{2}}_{K,i}=g_{K,i}^{n}\frac{h_{i}^{n+1}}{h_i^n},
  \label{eq:step1bis}
\end{equation}
where, again, we point out that this equation is solved only if $h^n_i\neq
0$.

In a third step,~(\ref{eq:gd}) is discretized without the $\partial_t
h$ term, by using the same approximation as for the non-splitting
scheme. We get:
\begin{eqnarray}
              & &\frac{g^{n+1}_{K,i}-g^{n+1/2}_{K,i}}{\Delta
t}+h_i^{n+1}\left(\frac{\phi_{i+1/2}(g_K^{n+1/2})-\phi_{i-1/2}(g^{n+1/2}_K)}{\Delta
x}\right)=-\frac{\nu}{\varepsilon}g^{n+1}_{K,i} \nonumber\\
              &&-h^{n+1}_i\left(\frac{\E[\vr^{n+1}_i]-\E[\vr^{n}_i]}{\Delta
t}+\frac{\phi_{i+1/2}(\E[\vr^n])-\phi_{i-1/2}(\E[\vr^n])}{\Delta
x}\right).
         \label{eq:step2}  \end{eqnarray}

Note that for the moment, we did not mention how the new value of
the transition function $h^{n+1}$ is defined. This is done by using
some criteria that are introduced in section~\ref{sec:moving}.
Independently of this problem, we analyze in the following section
the positivity property for the distribution function.

\subsection{Positivity of the distribution function for the
discretized equations}

In this section, we prove that the splitting
scheme~(\ref{eq:step1bis}-\ref{eq:step2}) preserves the positivity
of $f$ under a suitable CFL condition. Another interesting property
of the model here proposed, the preservation of uniform flows, will
be analyzed in the appendix in comparison with different coupling
strategies proposed in the recent past
\cite{degond1,degond2,dimarco1}.
\begin{proposition}
  If $f^0\geq 0$ and $g^{0}_K=h^{0}(f^{0}-\E[\vr^{0}])$, where
  $\vr^0=\langle m E[\boldsymbol{\varrho}]\rangle$ and $0\leq h^0\leq
  1$, then scheme~(\ref{eq:discmac}--\ref{eq:step2}) satisfies
\begin{equation*}
f^n_i=\E[\vr^{n}_i]+g^{n}_{K,i}\geq 0
\end{equation*}
for every $n$ and $i$, provided that $\Delta t$ satisfies the
following CFL condition:
\begin{equation}\label{eq-CFL}
  \Delta t\leq \frac{\Delta
x}{\max(v_j)}\min_{i,v_j}\left(\frac{g^{n+1/2}_{K,i}+h^{n+1}_i\E[\vr^{n}_i]}{h^{n+1}_i(g^{n+1/2}_{K,i}+\E[\vr^{n}_i])}\right).
\end{equation}
\end{proposition}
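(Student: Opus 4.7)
My plan is to proceed by induction on $n$: assuming $f^n_i := \E[\vr^n_i] + g^n_{K,i} \geq 0$ for all $i$, I will show that $f^{n+1}_i \geq 0$. The strategy is to extract $g^{n+1}_{K,i}$ from the third substep~(\ref{eq:step2}) by solving the implicit relaxation, add $(1+\nu\Delta t/\varepsilon)\,\E[\vr^{n+1}_i]$ to both sides so that $(1+\nu\Delta t/\varepsilon)\,f^{n+1}_i$ appears on the left, and rearrange the right-hand side as a sum of explicitly non-negative contributions under the CFL condition~(\ref{eq-CFL}).

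The key observation is that in~(\ref{eq:step2}) the numerical flux differences for $g^{n+1/2}_K$ and for $\E[\vr^n]$ both carry the common prefactor $h^{n+1}_i\Delta t/\Delta x$, so by linearity of the (first-order) upwind part of $\phi_{i\pm 1/2}$ they combine into a single upwind flux of the auxiliary function $\bar u^{n+1/2} := g^{n+1/2}_K + \E[\vr^n]$. After this merging, splitting by the sign of $v_j$ and collecting the diagonal (index $i$) contributions yields the identity
\begin{equation*}
\bigl(1+\tfrac{\nu\Delta t}{\varepsilon}\bigr)\,f^{n+1}_i \;=\; D_i + U_i + M_i,
\end{equation*}
where
\begin{equation*}
D_i = g^{n+1/2}_{K,i} + h^{n+1}_i\,\E[\vr^n_i] - \frac{h^{n+1}_i \Delta t\,|v_j|}{\Delta x}\bigl(g^{n+1/2}_{K,i} + \E[\vr^n_i]\bigr),
\end{equation*}
$U_i = (h^{n+1}_i \Delta t\,|v_j|/\Delta x)\,\bar u^{n+1/2}_{i\mp 1}$ is the upwind contribution (sign $\mp$ opposite to that of $v_j$), and $M_i = (1-h^{n+1}_i + \nu\Delta t/\varepsilon)\,\E[\vr^{n+1}_i]$ is a residual Maxwellian term.

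With this decomposition the proof reduces to three sign checks. First, $D_i\geq 0$ is exactly the CFL~(\ref{eq-CFL}): clearing denominators in the bound gives precisely $\Delta t\,h^{n+1}_i|v_j|(g^{n+1/2}_{K,i}+\E[\vr^n_i]) \leq \Delta x\,(g^{n+1/2}_{K,i}+h^{n+1}_i\E[\vr^n_i])$. Second, $M_i\geq 0$ since $h^{n+1}_i\in [0,1]$ and the discrete Maxwellian is non-negative by construction. Third, $U_i\geq 0$ follows from $\bar u^{n+1/2}_j\geq 0$, which using $g^{n+1/2}_{K,j} = (h^{n+1}_j/h^n_j)\,g^n_{K,j}$ and the induction hypothesis $f^n_j = \E[\vr^n_j]+g^n_{K,j}\geq 0$ rewrites as
\begin{equation*}
\bar u^{n+1/2}_j \;=\; \frac{h^{n+1}_j}{h^n_j}\,f^n_j + \Bigl(1-\frac{h^{n+1}_j}{h^n_j}\Bigr)\,\E[\vr^n_j],
\end{equation*}
a convex combination of non-negative quantities whenever $h^{n+1}_j \leq h^n_j$.

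I expect the main obstacles to be twofold. First, the MUSCL slope-limiter piece of $\phi$ in~(\ref{eq:discg}) is genuinely nonlinear, so the merging $\phi(g^{n+1/2}_K)+\phi(\E[\vr^n])=\phi(g^{n+1/2}_K+\E[\vr^n])$ only holds for the first-order upwind portion; the proof will either restrict to the first-order scheme or exploit the min-mod limiter together with a slightly reinforced CFL to preserve the sign of the diagonal term. Second, guaranteeing $\bar u^{n+1/2}_j \geq 0$ when $h^{n+1}_j > h^n_j$ (a region becoming more kinetic) falls outside the convex-combination argument above; I expect this to be handled either by exploiting the specific update rule for $h$ described in section~\ref{sec:moving} or by an additional CFL-type constraint that bounds the ratio $h^{n+1}_j/h^n_j$ from above.
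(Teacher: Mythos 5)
Your overall architecture (solve the implicit relaxation explicitly, split the upwind flux into a diagonal part controlled by the CFL condition, an off-diagonal upwind part, and a residual Maxwellian part) is the same as the paper's, and your first flagged obstacle is handled in the paper simply by carrying out the proof with the first-order upwind fluxes. But your second flagged obstacle is a genuine gap, and it cannot be repaired along the lines you suggest. The quantity you need to be non-negative after the reprojection step~(\ref{eq:step1bis}) is $\bar u^{n+1/2}_j=\frac{h^{n+1}_j}{h^{n}_j}g^{n}_{K,j}+\E[\vr^{n}_j]$, and the induction hypothesis $f^{n}_j=g^{n}_{K,j}+\E[\vr^{n}_j]\geq 0$ is simply too weak to control it: take $g^{n}_{K,j}=-\E[\vr^{n}_j]$ (so $f^{n}_j=0$) and $h^{n+1}_j>h^{n}_j$, and $\bar u^{n+1/2}_j<0$. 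The case $h^{n+1}>h^{n}$ is not a pathological corner case here --- it is exactly what happens whenever a kinetic region is created or grows, which is the whole point of the method --- and neither the update rule for $h$ of section~\ref{sec:moving} (which allows $h$ to jump from $0$ to $1$ in one step) nor a CFL-type bound on $\Delta t$ can exclude it, since the ratio $h^{n+1}_j/h^{n}_j$ is not controlled by the time step.

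The fix is not an extra constraint but a stronger induction invariant. The paper proves by induction that $h^{n}_i\E[\vr^{n}_i]+g^{n}_{K,i}\geq 0$ (which holds at $n=0$ because this quantity equals $h^{0}_if^{0}_i$). This invariant is exactly what the reprojection preserves for free: $g^{n+1/2}_{K,i}+h^{n+1}_i\E[\vr^{n}_i]=\frac{h^{n+1}_i}{h^{n}_i}\left(h^{n}_i\E[\vr^{n}_i]+g^{n}_{K,i}\right)\geq 0$ regardless of whether $h$ increases or decreases, since the prefactor is merely required to be non-negative; and then $g^{n+1/2}_{K,i}+\E[\vr^{n}_i]=(g^{n+1/2}_{K,i}+h^{n+1}_i\E[\vr^{n}_i])+(1-h^{n+1}_i)\E[\vr^{n}_i]\geq 0$, which is your $\bar u^{n+1/2}$. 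The transport-relaxation step then propagates the invariant at level $n+1$ (this is the identity~(\ref{eq-hEg}), whose left-hand side is $g^{n+1}_{K,i}+h^{n+1}_i\E[\vr^{n+1}_i]$, not $f^{n+1}_i$), and the claim of the proposition follows at the very end by writing $f^{n}_i=(h^{n}_i\E[\vr^{n}_i]+g^{n}_{K,i})+(1-h^{n}_i)\E[\vr^{n}_i]$. You should therefore replace your induction hypothesis $f^{n}_i\geq 0$ by $h^{n}_i\E[\vr^{n}_i]+g^{n}_{K,i}\geq 0$ and rerun your three sign checks; your decomposition then closes, and your residual Maxwellian term also simplifies since the factor $(1-h^{n+1}_i)\E[\vr^{n+1}_i]$ is no longer needed inside the inductive step.
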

\begin{proof}
  The idea is in fact to prove a stronger property: indeed, we can
  prove, by induction, that the positivity of
  $h^{n}_i\E[\vr^{n_i}]+g^{n}_{K,i}$ is preserved at any time.

  First, note that this relation holds at $n=0$: from the definition of
  $g^{0}_K$, we have
  $h^{0}_i\E[\vr^{0}_i]+g^{0}_{K,i}=h^{0}_if^{0}_i\geq 0$.

  Then, we assume that this relation is satisfied for some $n$, and we
  prove that it is true for $n+1$. This is done in the following three
  steps.

{\it Step 1.} \\
We first use~(\ref{eq:step2}) (where the numerical fluxes
$\phi_{i+1/2}$ are computed by the first order upwind scheme) to
explicitely compute $g^{n+1}_{K,i}$ and then to obtain:
\begin{equation}  \label{eq-hEg}
\begin{split}
 g^{n+1}_{K,i}+h^{n+1}_i\E[\vr^{n+1}_i]=
 \frac{1}{1+\nu\Delta t/\varepsilon}
 &  \biggl(
   (g^{n+1/2}_{K,i}+h^{n+1}_i\E[\vr^{n}_i])
 -\frac{|v|\Delta t}{\Delta x}h^{n+1}_i(g^{n+1/2}_{K,i}+\E[\vr^{n}_i])
 \\
&
+\frac{v^+\Delta t}{\Delta  x}h^{n+1}_{i-1}(g^{n+1/2}_{K,{i-1}}+\E[\vr^{n}_{i-1}]) \\
 & -\frac{v^-\Delta t}{\Delta x}h^{n+1}_{i+1}(g^{n+1/2}_{K,{i+1}}+\E[\vr^{n}_{i+1}])\biggr) \\
 &
+\frac{1}{1+\varepsilon/(\nu\Delta t)}h^{n+1}_i\E[\vr^{n+1}_i]
\end{split}
\end{equation}
Now, it is clear that the sign of the left-hand side depends on the sign
of $g^{n+1/2}_{K,i}+h^{n+1}_i\E[\vr^{n}_i]$ and
$g^{n+1/2}_{K,i}+\E[\vr^{n}_i]$. These two terms are studied in step 2.

\bigskip

{\it Step 2.} \\
Here, we use the definition of $g^{n+1/2}_{K,i}$
(see~(\ref{eq:step1bis})) to obtain
$g^{n+1/2}_{K,i}+h^{n+1}_i\E[\vr^{n}_i] =
\frac{h^{n+1}_i}{h^{n}_i}(h^n_i\E[\vr^{n}_i]+g^{n}_{K,i})$ which is
non-negative (due to the induction assumption). Consequently,
$g^{n+1/2}_{K,i}+h^{n+1}_i\E[\vr^{n}_i]$ is non-negative. Since
$\E[\vr^{n}_i]\geq 0$ and $ 0\leq h^{n+1}_i\leq 1$, then we also
have that $g^{n+1/2}_{K,i}+\E[\vr^{n}_i]$ is non-negative.

\bigskip

{\it Step 3.} \\
Note that step 2 shows that the last three terms of the right-hand
side of~(\ref{eq-hEg}) are non-negative. Consequently,~(\ref{eq-hEg}) shows that
$g^{n+1}_{K,i}+h^{n+1}_i\E[\vr^{n+1}_i]$ is non-negative if $\Delta t$
satisfies the CFL condition~(\ref{eq-CFL}). By induction,
$h^{n}_i\E[\vr^{n}_i]+g^{n}_{K,i}$ is non-negative for every $n$, and
for every $i$
and $v$.

Finally, using again that $\E[\vr^{n}_i]\geq 0$ and $ 0\leq h^{n}_i\leq
1$, we easily deduce that $\E[\vr^{n}_i]+g^{n}_{K,i}$ is also
non-negative.
\end{proof}

\begin{remark}
If $g^{n}_{K,i}\geq 0$, condition~(\ref{eq-CFL}) is less restrictive
than the CFL condition $ \Delta t\leq \frac{\Delta x}{\max(v_j)}$
obtained with a classical semi-implicit scheme for the original BGK
equation. On the contrary, condition~(\ref{eq-CFL}) becomes more
restrictive if the perturbation term $g^{n}_{K,i}$ is negative.
However, if we assume that $g$ is small enough (e.g.
$\E[\vr^{n}_i]\gg g^{n+1/2}_{K,i}$), then the factor $\frac{\Delta
x}{\max(v_j)}$ in~(\ref{eq-CFL}) is close to 1. Indeed: \be
\left(\frac{g^{n+1/2}_{K,i}+h^{n+1}_i\E[\vr^{n}_i]}{h^{n+1}_i(g^{n+1/2}_{K,i}+\E[\vr^{n}_i])}\right)=
\left(1+\frac{(1-h_i^{n+1})g^{n+1/2}_{K,i}}{h^{n+1}_i(g^{n+1/2}_{K,i}+\E[\vr^{n}_i])}\right)\simeq
1, \ee and~(\ref{eq-CFL}) reduces to the classical CFL for transport
$\Delta t\leq \frac{\Delta x}{\max(v_j)}$.
\end{remark}

By contrast, we justify below why we think that the non-splitting
scheme~(\ref{eq:discg}--\ref{eq:discmac}) cannot preserve the
positivity of $f$. Indeed, by using similar computations as the ones
we did for the splitting scheme, we find:
\begin{eqnarray}
& &f^{n+1}_i\geq \frac{\varepsilon/\nu}{\varepsilon/\nu+\Delta
t}\left((g^{n}_{K,i}+h^{n}_i\E[\vr^{n}_i])-\frac{v\Delta t}{\Delta
x}h^{n}_i(g^{n}_{K,i}+\E[\vr^{n}_i])+g_{K,i}^{n}\frac{h^{n+1}_{i}-h^{n}_{i}}{h^{n}_{i}}\right)\nonumber\\
&&+\frac{\varepsilon/\nu}{\varepsilon/\nu+\Delta
t}\frac{h^{n}_i\Delta t v}{\Delta
x}(g^{n}_{K,i-1}+\E[\vr^{n}_{i-1}])+\frac{\Delta
t}{\varepsilon/\nu+\Delta t}\E[\vr^{n+1}_i].\nonumber
\end{eqnarray}
Now, it is clear that the sign of $f^{n+1}$ depends on the signs of
$h^{n+1}_{i}-h^{n}_{i}$ and $g_K^{n}$, and hence cannot be controlled by a CFL
condition on $\Delta t$.


\section{Localization of Fluid-Kinetic Transitions and the Dynamic Coupling Technique}
\label{sec:moving}

One of the key points in a domain decomposition technique for gas
dynamics problems is to efficiently localize the regions where the
state of the gas departs from equilibrium, so as to describe the
solution with the appropriate microscopic model. In other words we
look for an accurate criterion the evaluation of which is
computationally inexpensive.

Here, we propose three different criteria based on the information
which can be retrieved from either the kinetic distribution function
or from the macroscopic variables. The way the localization of the
equilibrium and non-equilibrium regions evolves is described at the
end of this section.


\subsection{Analysis of Microscopic and Macroscopic Criteria}
\label{subsec:micro_criteria}

\subsubsection{Microscopic Criteria}

In regions where the kinetic or coupled kinetic/fluid models are
solved, we can use the distribution function to measure the fraction
of gas particles which are not distributed according to a Maxwellian
(as in~\cite{dimarco1}). In the same way, the fractions of momentum,
energy, and heat flux due to the non-equilibrium flux can be
measured. Consequently, in every cell where $h\neq 0$, it is
possible to evaluate the parameters \be\label{eq:lambda}
\lambda_{1,K}=\langle |g_K(v)|\rangle, \ \lambda_{2,K}=\langle v
|g_K(v)| \rangle, \
\lambda_{3,K}=\langle\frac{|v|^2}{2}|g_K(v)|\rangle, \
\lambda_{4,K}=|\langle v\frac{|v|^2}{2}g_K(v)\rangle|. \ee Note that
the first three values above will be zero if we use $g_K$ instead of
$|g_K|$, while the last one is in general different from zero. For
compatibility with the macroscopic criterion introduced in the
sequel, the definition of $\lambda_4$ in~(\ref{eq:lambda}) has been
preferred to the alternate definition $\lambda_{4,K}=\langle|
v\frac{|v|^2}{2}g_K(v)|\rangle$. The four parameters are computed in
our code with the following quadrature formula
\begin{eqnarray}
\lambda_{1,K}=\sum_{j}|g_{K,j}| \Delta v, \ \lambda_{2,K}=\sum_{j
}v_j
|g_{K,j}|\Delta v,\nonumber\\
\lambda_{3,K}=\sum_{j}\frac{|v_j|^2}{2} |g_{K,j}|\Delta v, \
\lambda_{4,K}=|\sum_{j}v\frac{|v_j|^2}{2} g_{K,j}\Delta v|,
\end{eqnarray} where $g_{K,j}(t,x) \approx g_K(x,v_{j},t)$.
In kinetic and buffer zones $\Omega_1\cup \Omega_3$ (where $h\neq
0$), the discrepancy between the fluid and kinetic models can be
measured by the following parameters \be
\beta^{n}_{1,i,K}=\frac{\lambda^{n}_{1,i,K}}{\varrho^n_i}, \
\beta^{n}_{2,i,K}=\frac{\lambda^{n}_{2,i,K}}{\varrho_i^n u_i^n}, \
\beta^{n}_{3,i,K}=\frac{\lambda^{n}_{3,i,K}}{\varrho_i^n
e_i^n}\label{eq:beta}, \ee or alternatively, we can use the value of
the heat flux relatively to the value of the equilibrium energy flux
\be \beta_{4,i,K}^{n}=\frac{\lambda_{4,i,K}^{n}}{|\langle
v\frac{1}{2}|v|^2
  E[\rho_i^n]  \rangle|} .\label{eq:crit1}
\ee In order to define a unique variable which permits to switch
from one model to the other one in every regime and in every region
($\Omega_i, \ i=1,2,3$), we choose $\beta_4$ as the breakdown
parameter. Indeed, as shown below, it is possible to estimate this
quantity also in fluid regimes. By using this criterion, the
transition function can be defined by an appropriate function that
maps $\beta_4$ to the interval $[0,1]$:
$h_i^n=\mathfrak{f}(\beta^{n}_{4,i,K})$. Such a mapping is defined
in section~\ref{subsec:adaptive_algo}.

\subsubsection{Macroscopic criteria}

The previous analysis is quite efficient and does not induce
expensive additional computations, since the perturbation term $g_K$
is already known in regions where the parameter $\beta_4$ has to be
computed. However, if we decide to use this criterion in the whole
domain, the cost will be equivalent to the cost of computing the
solution of the kinetic model in the whole domain. For this reason,
it is necessary to look for others indicators that are based on the
equilibrium values only. The most obvious one is the local Knudsen
number $\varepsilon$ which is defined as the ratio of the mean free
path of the particles $\lambda_{path}$ to a reference length $L$:
\be \varepsilon=\lambda_{path}/L, \label{eq:knudsen}\ee where the
mean free path is defined by
$$
\lambda_{path}=\frac{kT}{\sqrt{2}\pi p \sigma_{c}^{2}} ,
$$
with $k$ the Boltzmann constant equal to $1.380062\times 10^{-23}
JK^{-1}$, $p$ the pressure and $\pi \sigma_{c}^2$ the collision
cross section of the molecules. The Knudsen number is determined
through macroscopic quantities and can be computed in the whole
domain with a minimum additional cost. Now, in order to take into
account the elementary fact that, even in extremely rarefied
situations, the flow can be in thermodynamic equilibrium, according
to Bird \cite{bird}, the reference length is defined as \be
L=\min\left(\frac{\varrho}{\partial\varrho/\partial x},\frac{\varrho
u}{\partial\varrho u/\partial x},\frac{\varrho e}{\partial \varrho
e/\partial x}\right).\ee  According to~\cite{Levermore}
and~\cite{Kobolov}, the fluid model is accurate enough if the local
Knudsen number is lower than the threshold value $0.05$. It is
argued that, in this way, the error between a macroscopic and a
microscopic model is less than 5\% \cite{Wang}. This parameter has
been extensively used in many works and is now considered in the
rarefied gas dynamic community as an acceptable indicator. We notice
that the local Knudsen number takes into account both the physics
(with the measure of the mean free path and the identification of
large gradients) and the numerics (through the approximation of
derivatives on the mesh).

In the present work we propose an alternative criterion, based on
the analysis of the micro-macro decomposition. We will apply this
criterion only in fluid regions. For this reason, we consider the
equation for the {\it non-localized} perturbation $g$
(see~(\ref{eq:s2})) in its discretized form
\begin{eqnarray}
              & &\frac{g^{n+1}_{i}-g^{n}_{i}}{\Delta
t}+\left(\frac{\phi_{i+1/2}(g^{n})-\phi_{i-1/2}(g^{n})}{\Delta
x}\right)=-\frac{\nu}{\varepsilon}g^{n+1}_{i} +\nonumber\\
              &&-\left(\frac{\E[\vr^{n+1}_i]-\E[\vr^{n}_i]}{\Delta
t}+\frac{\phi_{i+1/2}(\E[\vr^n])-\phi_{i-1/2}(\E[\vr^n])}{\Delta
x}\right).\label{eq:discg1}  \end{eqnarray} Let us consider a point
$x_i$ which lies in the macroscopic region at time $t^n$ , i.e.
$g^{n}_{i}\equiv 0$. If, in addition, we assume that $g^n$ is close
to zero in the neighboring cells (which should be true if the
transition function is smooth enough), then we obtain \be
g^{n+1}_{i}=-\frac{\varepsilon/\nu}{\varepsilon/\nu+\Delta
t}\left(\E[\vr_i^{n+1}]-\E[\vr_i^{n}]\right)-\frac{\varepsilon/\nu\Delta
t}{(\varepsilon/\nu+\Delta t)\Delta x}\left(
\phi_{i+1/2}(\E[\vr^{n}])-\phi_{i-1/2}(
\E[\vr^{n}])\right)\label{eq:pert} \ee Using this relation, we are
able to evaluate the mismatch between the macroscopic fluid
equations and the kinetic equation. In fact, note that, in the
macroscopic equation (\ref{eq:s1}) we do not know how to evaluate
the kinetic term $\partial_x <vmg>$, except if we solve, at the same
time, the kinetic and the macroscopic problem
(\ref{eq:s1}-\ref{eq:s2}). However, at point $x_i$, thanks to
(\ref{eq:pert}), the perturbation term only depends on the
Maxwellian distribution which in turn only depends on the
macroscopic variables at the previous time step. Then,
integrating (\ref{eq:pert}) over the velocity space we get:
\begin{eqnarray}\nonumber
  &&\int_{\R^3}v m g^{n+1}_{i}dv = -\frac{\varepsilon/\nu}{\varepsilon/\nu+\Delta
t}\left(\int_{\R^3}v m \E[\vr_i^{n+1}]
dv-\int_{\R^3}v m \E[\vr_i^{n}] dv\right)+ \\
  &&-\frac{\varepsilon/\nu\Delta
t}{(\varepsilon/\nu+\Delta t)\Delta
x}\left[\phi_{i+1/2}\left(\int_{\R^3}v v m \E[\vr^{n}]
dv\right)-\phi_{i-1/2}\left(\int_{\R^3}v v m \E[\vr^{n}]
dv\right)\right]\label{eq:equil}
\end{eqnarray}
where in one space dimension we have \be \int_{\R^3}v E[\vr]
dv=\varrho u, \ \int_{\R^3}v^2 E[\vr] dv=\varrho (u^{2}+3\theta)\ee
and \be \int_{\R^3}v^3 E[\vr] dv=\varrho u(u^{2}+5\theta), \
\int_{\R^3}v^4 E[\vr] dv=\varrho (u^{4}+8u^2\theta+5\theta^2)\ee

Now, the last step is to measure the ratio of the non-equilibrium
fraction to the equilibrium one. Observe that in the one dimensional
case the only non-zero moment of the non-equilibrium term $g$ is the
heat flux. Thus, like for the microscopic criterion
(\ref{eq:crit1}), we measure the ratio of the heat flux to the
energy flux: \be
\beta_{4,i}^{n}=\frac{\lambda_{4,i}^{n}}{|F_3(\vr^{n}_i)|}, \
\lambda_{4,i}=\int_{\R^3}v \frac{|v|^2}{2}
g^{n+1}_{i}dv\label{eq:crit2}\ee and define, as before, the value of
the transition function $h_i^n$ at this point as an appropriate
function of $\beta_4$:\be h^{n}_{i}=\mathfrak{f}(\beta^{n}_{4,i}), \
0\leq h_i^n\leq 1\ee In practice, we use the same function
$\mathfrak{f}$ to evaluate $h_i^n$ in all the computational domain,
but while $\beta_{4,i}^{n}$ is defined by (\ref{eq:crit1}) in kinetic
regions, it is defined by (\ref{eq:crit2}) in fluid regions.

The quantities (\ref{eq:crit1})-(\ref{eq:crit2}) (which will be
referred to as breakdown parameters) furnish a \emph{true} measure
of the model error, while the local Knudsen criterion is rather a
physical-based criterion. In the numerical test section, we will
compare these two strategies.

\begin{remark}~
In the above analysis we have discarded the convection term $(v
\partial_x g)$. This can be justified by the
hypothesis of smoothly varying transitions, which means that this term
is supposed to be small. Anyway, it is possible to take it into
account. For example, through an upwind discretization, we obtain
$$
v\partial_x g^{n}_{i}=\left\lbrace
\begin{array}{ll}
\displaystyle v
\frac{g^{n}_{i}-g^n_{i-1}}{\Delta x} \ if \ v\geq 0\\
\displaystyle v \frac{g^n_{i}-g^n_{i-1}}{\Delta x} \ if
\ v<0\\
\end{array}
\right.
$$
Now, as before, $g_{i}^{n}\equiv 0$, while $g_{i+1}^{n}$ or
$g_{i-1}^{n}$ assume known values, which can be different from zero
if the transition function $h$ appears to be greater than zero in
these cells ($h^{n}_{i+1}\neq 0$ or $h^{n}_{i-1}\neq 0$).
\end{remark}


\subsection{Kinetic/Fluid Coupling Algorithm}
\label{subsec:adaptive_algo}

We now describe the kinetic/fluid coupling algorithm.

Define $\beta_{thr}$ and $\beta^*_{thr}\leq\beta_{thr}$ as the
maximum errors that we can afford by using the fluid model instead
of the kinetic one. Then:
\medskip

Assume $\vr^n, g^n_K, h^n$ are known in the whole space domain at
time $n$.
\begin{enumerate}
\item Advance the macroscopic equation in time by using
scheme~(\ref{eq:discmac}) and obtain $\vr^{n+1}$.
\item Compute the equilibrium parameter $\beta^n_{4,i}$
in every space cell for which $h=0$ through relation
(\ref{eq:crit2}).
\item If $\beta^n_{4,i}\geq \beta_{thr}$ then set $h_i^{n+1}=1$ which means that
$x_i$ at time $t^{n+1}$ belongs to the kinetic region; if
$\beta^n_{4,i}< \beta_{thr}^*$ then set $h_i^{n+1}=0$, which means
that $x_i$ at time $t^{n+1}$ belongs to the fluid region.
\item If $\beta^*_{thr}\leq \beta_{4,i}^n\leq \beta_{thr}$ then set $h_i^{n+1}=\frac{\beta_{4,i}^n-\beta^*_{thr}}{\beta_{thr}-\beta^*_{thr}}$,
which means that $x_i$ at time $t^{n+1}$ belongs to the buffer
region.
\item Advance the kinetic equation in time by using
scheme~(\ref{eq:step1bis})--(\ref{eq:step2}) and obtain $g^{n+1}_K$.
\item Compute the equilibrium parameter $\beta^n_{4,i,K}$
in every space cell for which $h\neq 0$ through relation
(\ref{eq:crit1}).
\item If $\beta^n_{4,i,K}\geq \beta_{thr}$ then set $h_i^{n+1}=1$ which means that
$x_i$ at time $t^{n+1}$ belongs to the kinetic region, if
$\beta^n_{4,i,K}< \beta_{thr}^*$ then set $h_i^{n+1}=0$, which means
that $x_i$ at time $t^{n+1}$ belongs to the fluid region.
\item If $\beta^*_{thr}\leq \beta_{4,i,K}^n\leq \beta_{thr}$ then set $h_i^{n+1}=\frac{\beta_{4,i,K}^n-\beta^*_{thr}}{\beta_{thr}-\beta^*_{thr}}$,
which means that $x_i$ at time $t^{n+1}$ belongs to the buffer
region.
\item Re-project the non equilibrium part of the distribution
function $g^{n}_K$ through the relation (\ref{eq:step1bis}).
\end{enumerate}

\begin{remark}~
\begin{itemize}
\item In the above algorithm the steps 2-3-4 can be substituted
by equivalent steps in which the breakdown criterion is the local
Knudsen number, with convenient threshold values.
\begin{enumerate}
\setcounter{enumi}{1}
\item Compute the local Knudsen number $\varepsilon^n_i$ in every space cell for which $h=0$ through
relation (\ref{eq:knudsen}).
\item If $\varepsilon^n_i\geq \varepsilon_{trh}$ then set $h_i^{n+1}=1$ which means that
$x_i$ at time $t^{n+1}$ belongs to the kinetic region, if
$\varepsilon^n_i< \varepsilon_{trh}^*$ then set $h_i^{n+1}=0$, which
means that $x_i$ at time $t^{n+1}$ belongs to the fluid region.
\item If $\varepsilon^*_{trh}\leq \varepsilon_i^n\leq \varepsilon_{trh}$ then set $h_i^{n+1}=\frac{\varepsilon_i^n-\varepsilon^*_{trh}}{\varepsilon_{trh}-\varepsilon^*_{trh}}$,
which means that $x_i$ at time $t^{n+1}$ belongs to the buffer
region.
\end{enumerate}
In the next section, we report comparisons between using the Knudsen
number and the new breakdown parameter.
\item At the beginning of the computation the full domain is
supposed in thermodynamical equilibrium. During the computation,
kinetic regions are created. Some of these regions can become even
one cell thick, merge or split. The transition function can also
pass from 0 to 1 and vice-versa in a single time step and with jumps
in space. Every step is completely automatic in each simulation, no
additional parameters are used.
\item Compared to the previous work \cite{dimarco1} in
which the buffer regions where fixed once for all at the beginning
of the computation, the present strategy consists in making the
buffer regions dependent on the current state of the gas through the
functions (\ref{eq:crit1})-(\ref{eq:crit2}) and the thresholds
values. This modification leads to a considerable improvement in
accuracy, flexibility and usability of the method.
\end{itemize}
\end{remark}


\section{Numerical tests}
\label{sec_tests}


\subsection{General setting}
\label{subsec_gen_set}

In this section, we present several numerical results to highlight
the performances of the method. By using unsteady test problems, we
emphasize the deficiencies of the static decomposition method. As in
\cite{dimarco1} we start with an unsteady shock test problem. Even
in this simple situation, a standard static domain decomposition
fails in its scope. Indeed, the shock moves in time. Thus in
rarefied regimes, it is necessary to use a kinetic solver in the
full domain, which turns to be a quite inefficient method. On the
other hand, with our algorithm, we reduce the computationally
expensive regions to a small zone compared to the full domain.

Next,
we use our scheme to compute the solution of the Sod test. Here some
new difficulties arise. Indeed, contact discontinuities and
rarefaction waves appear but, as described below, the method
efficiently deals with such situations.

Finally, in the third test, a
blast wave simulation is performed. In spite of the complexity of
the solution, the algorithm shows a very good behavior, creating,
deleting or merging zones together and obtaining fast and precise
results.

In order to obtain the correct equation of state with only one
velocity-space dimension, we use the following model:
$$
\begin{array}{l}
\partial_{t}\left(
\begin{array}{ll}
F \\G
\end{array}
\right) \ + \ v\partial_{x}\left(
\begin{array}{ll}
F \\G
\end{array}
\right) \ = \ \nu\left(\begin{array}{ll} M_{F}-F
\\TM_{F}-G
\end{array}\right)
\end{array} .
\label{eq:BGK3}
$$
It is obtained from the full three-dimensional Boltzmann-BGK
system by means of a reduction technique \cite{Huang}.
In this model, the fluid energy is given by
$$
E=\int_{\R} (\frac{1}{2} v^{2} F + G) dv .
$$
This model permits to recover the correct hydrodynamic limit given
by the standard Euler system even with a one-dimensional velocity
space.

The collision frequency is given by
$\nu=\tau^{-1}=(\frac{\mu}{p})^{-1}$ where
$\mu=C\cdot\theta^{\omega}$. We choose gas hydrogen for our
simulations. Thus $C=1.99\times 10^{-3}$, $\omega=0.81$ and
$R=208.24$ \cite{bird}.

In all tests the time step is given by the minimum of the maximum
time steps allowed by the kinetic and fluid schemes. This means that
no attempts have been made to try to increase the computational
efficiency by means of a reduction of the number of necessary
effective steps for the less restrictive scheme, by, e.g. freezing
one model in time, while the other one is advanced. Indeed, such a
reduction still requires further investigations. Thus, the global
speed-up is only due to the reduction of the sizes of the kinetic
and buffer regions inside the domain. This reduction is achieved through a correct prediction of the evolution of the
transition function and the use of efficient criteria for the determination of the equilibrium regions. We point out that no a-priori choices on the dimension of
the buffer and position of the different regions are done in all the
tests. Instead all the procedure is automatic and determined by the step by step algorithm presented in the previous section. For all
tests, we set $\beta_{thr}=10^{-3}$ and $\beta^*_{thr}=10^{-4}$.


\subsection{Unsteady Shock Tests}
\label{subsec_unst_shock}

We consider an unsteady shock that propagates from left to right in
the computational domain $x=-20 \ m$, $x=20 \ m$. The shock is
produced pushing the hydrogen gas against a wall which is located on
the left boundary. We consider that the particles are specularly
reflected and that the wall instantaneously adopts the temperature
of the gas. This effect is numerically simulated by setting
macroscopic variables in ghost cells (two cells for a second order
scheme) beyond the left boundary with parameters $\varrho, \ T$
equal to the values of the first cell while the momentum is set
opposite. In the non equilibrium case $g_K$ is also different from
zero in the ghost cell, and is equal to a specularly reflected copy
of $g_K$ in the first cell. At the right boundary, we mimic the
introduction of the gas by adding two ghost cells where, at each
time step, the initial values for density, momentum and energy are
fixed. The gas is supposed in thermodynamic equilibrium, which
implies that $g_K=0$. The computation is stopped at the final time
$t=0.04$ s. There are $300$ cells in physical space and $40$ cells
in velocity space. We do not use a finer mesh because the scheme is
second order. Symmetric artificial boundaries in velocity space are
fixed at the beginning of the computation through the relation
$v_b=\pm C_1\max(\sqrt{RT_W})$, where $R$ is the gas constant, $C_1$
is a parameter normally fixed equal to $4$ and $T_W$ is a
temperature set equal to the maximum attainable temperature, which
is obtained by supposing that all the kinetic energy is transformed
into thermal energy. The transition function $h$ is initialized as
$h=0$ (fluid region) everywhere.

In the first test the initial conditions are $\varrho=10^{-6}$
kg/m$^{3}$ for the mass density, $u=-900$ m/s for the mean velocity
and $T=273$ K for the temperature. In Figure \ref{UST1.1} we have
reported the mass density on the left and the velocity on the right.
>From top to bottom, time increases from $t=10^{-2} \ s$ (top) to
$t=4\times 10^{-2} \ s$ (bottom), with $t=2 \times 10^{-2} \ s$
middle top and $t=3\times 10^{-2} \ s$ middle bottom. In Figure
\ref{UST1.2} we have reported the temperature on the left, the
transition function, the heat flux and the local Knudsen number on
the right. From top to bottom the same instants of time as for the
previous Figure are shown. In the plots regarding the macroscopic
variables we reported the solution computed with our algorithm
(mic-mac in the legend of the Figures), the solution computed with a
kinetic solver and the solution computed with a macroscopic fluid
solver. Magnifications of the solutions close to non equilibrium
regions are given for clarity.

As soon as the simulation starts on the left boundary, the
transition function $h$ increases from zero to one, which means that
the solution is computed with the kinetic scheme, while in the rest
of the domain the solution is still computed with the fluid scheme
($h=0$). When the shock starts to move towards the right, we notice
a splitting of the kinetic region. One very narrow region still
continues to follow the shock and one remains close to the left
boundary. Once that the threshold values of the breakdown parameters
$\beta$ and $\beta_K$ are fixed, the procedure automatically
determines the sizes of the kinetic and buffer regions.

We repeat the simulation with a lower initial density
$\varrho=10^{-7}$ kg/m$^{3}$. This yields different results which
are reported in Figure \ref{UST2.1} for the density and velocity and
in Figure \ref{UST2.2} for the temperature, local Knudsen, heat flux
and transition function. The results are obtained with the same
criteria as in the previous test case. Again at the beginning $h$ is
set equal to zero (fluid), but now the shock is much less sharp and
the non equilibrium region becomes larger.

We observe that in this first test the discrepancy between the fully
kinetic solver and the coupling strategy is not perceivable, while
the computational time is reduced in proportion to the ratio between
the areas of the kinetic and buffer regions compared to the entire
domain. Thus, in the first case we have a reduction of approximately
$70\%$ of the computational time while in the second one the
reduction is only the $20\%$. Further reductions are possible
through code optimization.


\subsection{Sod Tests}
\label{subsec_sod}

In these second series of tests, we consider the classical Sod
initial data in a domain which ranges from $-20 \ m$ to $20 \ m$.
The numerical parameters are the same as for the previous examples.
Thus respectively $300$ and $40$ mesh points in physical and
velocity space are used. Symmetric artificial boundary condition are
fixed in velocity space at the same points $v_b=\pm
C_1\max(\sqrt{RT_W})$, while Neumann boundary condition are chosen
both for the kinetic (if necessary) and fluid models. Finally the
simulations are initialized with a thermodynamic equilibrium with
$h=0$ and $g_K=0$ everywhere.

In the first case, we take the following initial conditions: mass
density $\varrho_{L}=2\times 10^{-5}$ kg/m$^{3}$, mean velocity
$u_{L}=0$ m/s and temperature $T_{L}=273.15$ K if $-20 \leq x \leq 0
\ m$, while $\varrho_{R}=0.25\times 10^{-5}$ kg/m$^{3}$, $u_{R}=0$
m/s, $T_{R}=218.4$ K if $0 \leq x \leq 20 \ m$. The results are
reported in Figure \ref{sod1.1} for the density (left) and velocity
(right) and in Figure \ref{sod1.2} for the temperature (left),
Knudsen number, heat flux and transition function (right). Snapshot
at increasing times are displayed top to down, corresponding
successively to $t=6\times 10^{-3} \ s$, then $t=1.2\times 10^{-2} \
s$, $t=1.8\times 10^{-2} \ s$ and finally $t=2.4\times 10^{-2} \ s$.
Again we provide magnifications of the solution close to non
equilibrium zones in order to highlight the differences between the
three different schemes: the macroscopic and kinetic ones and the
coupling strategy (mic-mac in the legend). We observe that due to
the initial shock, a kinetic region appears immediately and starts
to grow in time, but as soon as the different non equilibrium
regions separate, the kinetic region itself splits into three: one
around the rarefaction wave, one around the contact discontinuity,
and one around the shock. Even with magnifications it is not
possible to perceive differences between the kinetic model and the
coupling strategy, even though the kinetic regions remain very tiny,
permitting a fast computation.

The simulation is repeated, with a lower initial density
$\varrho_{L}=5\times 10^{-6}$ kg/m$^{3}$ and $\varrho_{R}=0.75\times
10^{-6}$ kg/m$^{3}$, and the results are displayed in Figure
\ref{sod2.1} for the density and velocity and in Figure \ref{sod2.2}
for the temperature, heat flux, local Knudsen and transition
function. The same qualitative features as in the previous test can
be observed, the only difference being that the kinetic regions are
thicker, which means that the non equilibrium zone is larger. This
is clearly visible from the plots of the macroscopic quantities: the
difference between the kinetic and fluid models is significant in a
large portion of the domain.

We finally observe that compared to \cite{dimarco1}, in which a
similar scheme was developed, we are able to capture small
discrepancies between the kinetic and macroscopic models in very
tiny regions. This turns to be a much more efficient use of the
domain decomposition technique. The computational time reduction is
of the order of $70\%$ and $60\%$ respectively for the two tests
compared to a kinetic solver.


\subsection{Blast Wave Tests}
\label{subsec_Blast_wave}

In this paragraph we present two interacting symmetric blast waves
in hydrogen gas. The domain ranges from $x=0 \ m$ to $x=1 \ m$, while
the numerical parameters in terms of mesh and velocity space
boundaries are the same as in the previous tests. The physical
boundaries are represented by two specularly reflecting walls, on
which impinging particles are re-emitted in the opposite direction
with the same velocity (in magnitude). Mass is conserved at the walls
which additionally are supposed to adopt the gas temperature
instantaneously. These effects are obtained like in the unsteady
shock test with two ghost cells (four for a second order scheme), in
which the same macroscopic values as those of the first and last
cells are imposed, except for momentum which changes sign. The
perturbation function $g_K$ can in general be different from zero
and assumes the same values of its corresponding counterpart in the
first and last cell, with a sign change in the velocity variable. At
the beginning, we set $h=0$ and $g_K=0$ everywhere.

In the first test, the initial data are: \be \nonumber
\varrho=10^{-3} kg/m^{3} \ \ u=200 \ m/s \ \ T=10000 \ K \ if \
x\leq 0.1\ee \be \nonumber \varrho=10^{-3} kg/m^{3} \ \ u=-200 \ m/s
\ \ T=10000 \ K \ if \ x\geq 0.9\ee \be \nonumber\varrho=10^{-3}
kg/m^{3} \  \ \ u=0 \ m/s \ \ \ \ \ \ T=50 \ K \ if \ 0.1\leq x\leq
0.9\ee The results in terms of the density and mean velocity are
reported in Figure \ref{blast1.1}, while the temperature, local
Knudsen, heat flux and transition function are reported in Figure
\ref{blast1.2}. The displayed results are for increasing times
$t=10^{-4} \ s$ to $t=4\times 10^{-4} \ s$ from top to bottom. Again
we plot the kinetic, the fluid schemes and the coupling strategy
(micmac scheme) in each Figure and magnifications close to non
equilibrium regions are provided.

In the second test the initial density is decreased, and we use: \be
\nonumber \varrho=10^{-4} kg/m^{3} \ \ u=200 \ m/s \ \ T=10000 \ K \
if \ x\leq 0.1\ee \be \nonumber \varrho=10^{-4} kg/m^{3} \ \ u=-200
\ m/s \ \ T=10000 \ K \ if \ x\geq 0.9\ee \be
\nonumber\varrho=10^{-4} kg/m^{3} \  \ \ u=0 \ m/s \ \ \ \ \ \ T=50
\ K \ if \ 0.1\leq x\leq 0.9\ee The results obtained with this
second set of data are reported in Figure \ref{blast2.1} and
\ref{blast2.2}. We observe that starting from a situation where the
fluid model is used almost everywhere we end up in the opposite
situation where the kinetic model is used in the whole domain($h=1 \
\forall x$). Thus, while a static domain decomposition technique
leads to similar computational times as a fully kinetic resolution,
the coupling strategy leads to a speed up of about $40\%$ for the
first test and $30\%$ for the second test, compared to a fully
kinetic solver.


\section{Conclusion}
\label{sec_conclu}

In this paper we have presented a moving domain decomposition method
which provides an efficient way to deal with multiscale fluid
dynamic problems. Regions far from thermodynamical equilibrium are
treated with a kinetic solver. The method is based on the
micro-macro decomposition technique developed by
Degond-Liu-Mieussens \cite{degond3} in which macroscopic fluid
equations are coupled with a kinetic equation which describes the
time evolution of the perturbation from equilibrium. The method
consists in splitting the distribution function into an equilibrium
part and a non-equilibrium part, together with the introduction of
buffer zones and transition functions as proposed in \cite{degond3},
\cite{degond1} and \cite{dimarco1} to smoothly pass from the
macroscopic model to the kinetic model and vice versa.

In order to build up an efficient method that can be used in a wide
spectrum of situations, and by contrast to \cite{degond3}, we
consider the possibility of moving the different domains like in
\cite{dimarco1}, using however, a decomposition technique which
shows enhanced performances. Moreover, we have developed a scheme
which is able to automatically create, cancel and move as many
kinetic, fluid or buffer regions as necessary. The method relies on
the proper combination of the two equilibrium criteria we have
identified and on a priori tolerance that we decide to accept. An
important point also resides in the introduction of a new criterion
for the breakdown of the fluid model, which is able to measure the
discrepancy between the kinetic and the fluid model in a much more
accurate way then the mere Knudsen number. Finally, we have proved
that the coupling strategy preserves positivity under a CFL
condition, and the uniform flows, also in the fully discrete case.

The last part of the work is devoted to several numerical tests. The
results clearly demonstrate the advantages of this method over
existing ones. The method captures the correct kinetic behaviors even
in transition regions and provides significant improvements in terms
of computational speedup while maintaining the accuracy of a kinetic
solver.

In the future we will extend the method to make it consistent with the
Navier-Stokes equations instead of the Euler model. This step can further
improve the technique allowing very narrow kinetic zones and
providing considerable speed-up while maintaining accuracy. We will
also explore the use of Monte Carlo techniques for the full
Boltzmann equation and that of time sub-cycling for the two models.
We finally observe that the computational speed-up will
significantly increase for two or three dimensional simulations,
which we intend to carry out in the future. To conclude, because
multiscale effects are very important also in many others fields we
plan to extend our method to other models.

\appendix
\section{Preservation of uniform flows}
\label{sec_preservation of uniform flows}

Preservation of uniform flows is a very important property, which
prevents oscillations to appear when the transition region is
located in a domain where the flow is smooth. In this appendix we
compare the properties of the present micro-macro coupling strategy to those of the
decomposition methods of \cite{degond1,degond2,dimarco1} regarding
preservation of uniform flows.

In \cite{degond3}, it has been demonstrated that the micro-macro
model is able to preserve uniform flows in the continuous case. This
property is also true for the decomposition used in
\cite{degond1,degond2,dimarco1}, but only when the collision
operator has specific properties (which are satisfied by the
Boltzmann and BGK operator). In this appendix, we show that the present
micro-macro coupling strategy is able to preserve uniform flows also in the
discrete case independently of the choice of the discretization
scheme. We observe that this property does not hold in
the general case for the decompositions used in
\cite{degond1,degond2,dimarco1}. The satisfaction of this property by the present micro-macro coupling strategy constitutes a very big advantadge of this method over the previous ones \cite{degond1,degond2,dimarco1}.

As an example, we consider the decomposition used in \cite{dimarco1}, which reads \be
\frac{\partial \vr_{L}}{\partial t} + (1-h)
\partial_x F(\vr_L)+ (1-h) \partial_x \langle v m f_R\rangle= -
\vr \partial_{t} h \label{eq:old1} \ee
\begin{equation}
\partial_t f_{R} + h v \partial_x f_{R} + h v \partial_x E[\vr] = h\frac{\nu}{\varepsilon} (E[\vr]-f) + f \partial_{t} h ,
\label{eq:old2}
\end{equation}
where the distribution function is defined by $f=f_R+E[\vr_L]$,
$f_R=hf$ and $E[\vr_L]=(1-h)E[\vr]$. In this model the solution of
the full kinetic problem is given by $f_{R}$ if $x\in \Omega_1$, by
$E[\vr_L]$ if $x\in \Omega_2$ and by $E[\vr_L]+f_{R}$ if $x \in
\Omega_3$. This is due to the fact that $f_R=0$ for $x\in \Omega_2$,
$E[\vr_L]=0$ for $x\in \Omega_1$, while in $\Omega_3$ they are both
different from zero and the global solution is obtained as a sum of
the two partial solutions. We refer to the above cited papers for
details.

Here we recall that, in \cite{degond1,degond2,dimarco1}, small oscillations  appear inside the transition regions except when the scheme used for the fluid part is an exact discrete velocity integration of the scheme used for the kinetic part (in this case, we say that the two schemes are 'compatible', and are 'incompatible' otherwise). These oscillations appear even in situations where preservation of uniform flows is true for the continuous model. To circumvent this problem, in \cite{degond1,degond2,dimarco1}, we
used a standard shock-capturing scheme (such as e.g. the Godunov scheme) for the Euler equations in the pure fluid
region (i.e. $h=0$), but we converted to a compatible scheme with the discretization of the kinetic model  inside
the buffer zones (see \cite{dimarco1} for details). However, this
choice introduces some implementation difficulties and reduces the
performances. Indeed, a compatible scheme with the discretization of the kinetic model has intrinsically the same cost as the full kinetic solver, and the coupling strategy is twice more costly than the mere kinetic model in all the buffer region.

In order to prove the property that uniform flows are not
preserved by the decomposition (\ref{eq:old1}-\ref{eq:old2}) in the
discrete case, we focus on the first one of the two equations. The same
considerations hold for the other one. If the initial data is such
that $f=E[\vr]$ we have
\begin{eqnarray}
   && \frac{\partial \vr_{L}}{\partial t} + (1-h)
\partial_x F(\vr_L)+ (1-h) \partial_x \langle v m f_R\rangle+
\vr \partial_{t} h= \label{eq:old3} \nonumber\\
   &&= (1-h)\left(\frac{\partial \vr}{\partial t} + (1-h)
\partial_x F(\vr)+ h \partial_x \langle v m E[\vr]\rangle-h'F(\vr)+h'\langle v m
E[\vr]\rangle\right)= \nonumber\\
&&=(1-h)\left(\frac{\partial \vr}{\partial t} + (1-h)
\partial_x F(\vr)+ h \partial_x \langle v m
E[\vr]\rangle\right).\nonumber
\end{eqnarray}
In the above equation, the time derivative with respect to $h$
disappears and so does the flux, using that $F(\vr)=\langle v m
E[\vr]\rangle$. In the continuous case it is also true that \be(1-h)
\partial_x F(\vr)+ h \partial_x \langle v m
E[\vr]\rangle=\partial_x F(\vr)=\partial_x \langle v m
E[\vr]\rangle\label{eq:fluxdisc}\ee and so, uniform flows are
preserved. However, when we discretize the fluxes according to $\partial_x
F(\vr)=(\phi_{i+1/2}(\vr)-\phi_{i-1/2}(\vr))/\Delta x$ and
$\partial_x \langle v m
\E(\vr)\rangle=(\psi_{i+1/2}(\E[\vr])-\psi_{i-1/2}(\E[\vr]))/\Delta
x$, the equality (\ref{eq:fluxdisc}) does not hold anymore. In fact,
in the general case we have
\begin{eqnarray}
  && (1-h)\left( \frac{\phi_{i+1/2}(\vr)-\phi_{i-1/2}(\vr)}{\Delta
x}\right)+ h\left(
\frac{\psi_{i+1/2}(\E[\vr])-\psi_{i-1/2}(\E[\vr])}{\Delta
x}\right)\neq \nonumber\\
  &&\neq\left( \frac{\phi_{i+1/2}(\vr)-\phi_{i-1/2}(\vr)}{\Delta
x}\right)\neq\left(
\frac{\psi_{i+1/2}(\E[\vr])-\psi_{i-1/2}(\E[\vr])}{\Delta
x}\right).\nonumber
\end{eqnarray}
Thus, if two incompatible numerical schemes are used to discretize
the kinetic and fluid fluxes, oscillations in the solution can
appear as documented in \cite{dimarco1}. However, we observe that,
using the same numerical flux is not sufficient to ensure
preservation of uniform flows through the decomposition
(\ref{eq:old1}-\ref{eq:old2}). To that aim, consider a generic
discretization of the coupled system (\ref{eq:old1}-\ref{eq:old2}):
\begin{align}
\nonumber
\boldsymbol{\varrho}^{n+1}_{i,L}&=\boldsymbol{\varrho}^{n}_{i,L}
-(1-h_{i})\frac{\Delta t}{\Delta
x}\left(\psi_{i+1/2}(\boldsymbol{\varrho}_{L}^n)-\psi_{i-1/2}(\boldsymbol{\varrho}
_{L}^{n})\right)
\\&
-(1-h_{i})\frac{\Delta t}{\Delta
x}\sum_{k}\mathbf{m}_{k}\left(\psi_{i+1/2}(f_{k,R}^{n})-\psi_{i-1/2}(f_{k,R}^{n})\right)\Delta
v \label{eq:H1scheme},
\end{align}
\begin{align}
\nonumber f^{n+1}_{k,i,R} = f_{k,i,R}^{n} & -h_{i} \frac{\Delta
t}{\Delta x} \left( \phi_{i+1/2} (f^n_{k,R}) - \phi_{i-1/2}
(f^n_{k,R}) \right)
\\ \label{eq:Kscheme} &-h_{i} \frac{\Delta t}{\Delta x} \left( \phi_{i+1/2}
(\E_k [\boldsymbol{\varrho}^n_{L}] ) - \phi_{i-1/2} (\E_k
[\boldsymbol{\varrho}^n_{L}] ) \right)
\\ \nonumber & + h_{i} \frac{\Delta t \nu} {\varepsilon} \left(
\E_{k} [\boldsymbol{\varrho}^n_{i}] - f^{n}_{k,i} \right),
\end{align}
where a discrete velocity model has been used to resolve the kinetic
equation (\ref{eq:Kscheme}) with $\mathbf{m}_{k}$ the discretized
collision invariants. The function $\phi_{i\pm 1/2}, \psi_{i\pm
1/2}$ are two different generic numerical fluxes while, for
simplicity, the function $h$ is considered constant in time. The
initial data are $\vr^{0}_i=\vr_0$, $f^{0}_i=E[\vr_{0}]$,
$f^{0}_{R,i}=h_if_{0}$, $\vr^{0}_{L,i}=(1-h_i)\vr_{0}$ and
$E[\vr^{0}_{L,i}]=(1-h_i)E[\vr_{0}]$ $\forall i$. Now, supposing the
flow uniform at time $n$ we will see that not every numerical scheme
ensures a uniform flow at time $n+1$. To this aim, if we integrate
equation (\ref{eq:Kscheme}) multiplied by
the collision invariants $\textbf{m}_{k}$ over the velocity space, we can rewrite the
coupled numerical schemes (\ref{eq:H1scheme}-\ref{eq:Kscheme}) as
follows:
\begin{equation}
\boldsymbol{\varrho}^{n+1}_{i,L}=\sum_{j=i}^{i\pm I}
\boldsymbol{A}_j\boldsymbol{\varrho}^{n}_{j,L}\label{eq:H1scheme1},
\end{equation}
\begin{equation}
\boldsymbol{\varrho}^{n+1}_{i,R}=\sum_{j=i}^{i\pm I1}\sum_k^{K}
\boldsymbol{B}_{j,k}\mathbf{m}_{k}f^{n}_{j,R} \Delta
v\label{eq:H1scheme1},
\end{equation}
with $\vr^{n+1}_{j,R}=\sum_{k}\mathbf{m}_{k}f^{n+1}_{j,R} \Delta v$,
$I$ and $I1$ the length of the stencils in physical space and
$K$ in velocity space. The symbols $\boldsymbol{A}_j$ and
$\boldsymbol{B}_{j,k}$ represent the weights determined by the
particular choice of the numerical schemes. Without loss of
generality, suppose that $I=I1$. Then, we have that
\begin{eqnarray}
\vr^{n+1}_i&=&\vr^{n+1}_{i,L}+\vr^{n+1}_{i,R}=\sum_{j}\boldsymbol{A}_j\vr^{n}_{j,L}
+\sum_{j}\sum_k \boldsymbol{B}_{j,k}\mathbf{m}_{k}f^{n}_{j,R} \Delta
v =\nonumber\\
&&=\sum_{j}\boldsymbol{A}_j(\vr^{n}_{j,L}+\vr^{n}_{j,R})+\sum_{j}\left[\left(\sum_k
\boldsymbol{B}_{j,k}\mathbf{m}_{k}f^{n}_{j,R}\right)-\boldsymbol{A}_j\vr^{n}_{j,R}\right]= \nonumber\\
&&=\sum_{j} \boldsymbol{A}_j\vr^{n}_j+\sum_{j}\left[\left(\sum_k
\boldsymbol{B}_{j,k}\mathbf{m}_{k}f^{n}_{j,R}\right)-\boldsymbol{A}_j\vr^{n}_{j,R}\right]=\nonumber\\
&&=\vr^{n}_i+\sum_{j}\left[\left(\sum_k
\boldsymbol{B}_{j,k}\mathbf{m}_{k}f^{n}_{j,R}\right)-\boldsymbol{A}_j\vr^{n}_{j,R}\right]
\end{eqnarray}
which means that we do not have preservation of uniform flows except
in some particular cases, such as, for instance, when the numerical schemes
used to discretize the two equations
(\ref{eq:H1scheme}-\ref{eq:Kscheme}) are compatible. Therefore, such compatible schemes are needed in all buffer zones to make sure that
oscillations in the solutions will be avoided.

Instead if we consider the present micro-macro coupling strategy with the same
initial data $f=E[\vr]$, the following property holds:
\begin{proposition}
If the initial condition $f^0\geq 0$ is a constant equilibrium
$E[\vr^{0}]$, then $\vr = \vr^{0}$ and $g_K=h(f-E[\vr])=0$ are
solutions of the micro-macro model (\ref{eq:s5}-\ref{eq:gk2}), and
$E[\vr] + g_K = E[\vr^{0}]$. In other words, the kinetic/fluid solution of
the micro-macro model is exactly the solution of the original
kinetic model.\end{proposition}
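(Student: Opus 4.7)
The plan is to directly substitute the candidate solution $(\vr, g_K) = (\vr^0, 0)$ into the micro--macro system (\ref{eq:s5}--\ref{eq:gk2}) and verify that both equations and the initial conditions (\ref{eq:Bg1}) are satisfied identically, then to extend this verification to the fully discrete splitting scheme by induction on the time index. The key point to highlight throughout is that the macroscopic flux $F(\vr)$ appears directly in the macroscopic equation, so a uniform Maxwellian initial datum produces no artificial source term --- unlike the decomposition (\ref{eq:old1}--\ref{eq:old2}) analyzed earlier in the appendix.

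First I would check the initial data. By hypothesis $f_{init} = E[\vr^0]$, so $\vr_{init} = \langle m E[\vr^0]\rangle = \vr^0$ and $g_{K,t=0} = h_{init}(f_{init} - E[\vr_{init}]) = h_{init}(E[\vr^0] - E[\vr^0]) = 0$, matching the claimed solution. Then I would plug $\vr \equiv \vr^0$ and $g_K \equiv 0$ into (\ref{eq:s5}): $\partial_t \vr^0 = 0$, $\partial_x F(\vr^0) = 0$ because $F$ depends on $\vr$ only, and $\partial_x\langle v m g_K\rangle = 0$. For (\ref{eq:gk2}) the left-hand side vanishes, and on the right-hand side the relaxation term $-\frac{\nu}{\varepsilon}g_K$ and the transition term $\frac{g_K}{h}\partial_t h$ both vanish identically because $g_K \equiv 0$ (so no indeterminacy arises where $h=0$, since the equation is solved only where $h \neq 0$), while $-h(\partial_t + v\partial_x)E[\vr^0] = 0$ because $E[\vr^0]$ is constant in $(t,x)$.

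Next, for the fully discrete statement --- which is really the distinguishing point of this micro--macro model emphasized in the appendix --- I would argue by induction on $n$, assuming $\vr_i^n = \vr^0$ and $g_{K,i}^n = 0$ for all $i$. Step (a): in the macroscopic update (\ref{eq:discmac}), the flux (\ref{eq:flux}) depends only on $\vr^n$ and $g_K^n$; with $\vr^n$ constant and $g_K^n \equiv 0$ the slopes $\sigma_i^{n,\pm}$ vanish and the centered terms telescope, so $\psi_{i+1/2} - \psi_{i-1/2} = 0$ and $\vr_i^{n+1} = \vr^0$. Step (b): the projection (\ref{eq:step1bis}) gives $g_{K,i}^{n+1/2} = g_{K,i}^{n}\, h_i^{n+1}/h_i^{n} = 0$ wherever $h_i^n \neq 0$, and leaves $g_K$ at zero elsewhere. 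Step (c): in (\ref{eq:step2}) every right-hand side term vanishes since $g_K^{n+1/2} \equiv 0$ and $\E[\vr_i^{n+1}] = \E[\vr_i^{n}]$ is spatially constant, so $g_{K,i}^{n+1} = 0$.

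There is no real obstacle: the whole argument rests on the observation that once $g_K$ starts at zero with $\vr$ at equilibrium, nothing in the model can generate a non-zero perturbation, because (i) no equilibrium Maxwellian enters the macroscopic equation --- only $F(\vr)$ and moments of $g_K$ --- and (ii) the spatial and temporal derivatives of $E[\vr]$ that drive the kinetic equation vanish on a constant state. This is in sharp contrast with (\ref{eq:old1}--\ref{eq:old2}), where $E[\vr_L]$ and $f_R = hf$ both appear with fluxes that must be discretized by "compatible" schemes in order to cancel. The only point that deserves explicit comment in the write-up is the convention that the kinetic equation (\ref{eq:gk2}) is simply not solved where $h = 0$, which disposes of the apparent singularity in $g_K/h$.
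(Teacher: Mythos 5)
Your proof is correct and follows the same underlying idea as the paper, which disposes of the proposition in a single sentence by noting that the total flux in the micro--macro model is independent of $h$ and is not split into complementary $h$-weighted pieces, so that nothing can generate a nonzero perturbation from a uniform equilibrium. Your write-up is simply a more explicit and careful version of this: the continuous substitution and the discrete induction through steps (\ref{eq:discmac}), (\ref{eq:step1bis}) and (\ref{eq:step2}) are exactly what the paper's ``it follows directly'' is implicitly asserting, and your handling of the $g_K/h$ term and of the vanishing slopes in the MUSCL flux is sound.
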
 Indeed, for the micro-macro
decomposition, the total flux is independent of $h$ in equilibrium
regimes and is not obtained as a sum of two complementary terms
weighted by the function $h$. It follows directly that the
micro-macro method preserves uniform flows even in the discrete case
independently of the choice of the numerical scheme.



\begin{figure}
\begin{center}
\includegraphics[scale=0.34]{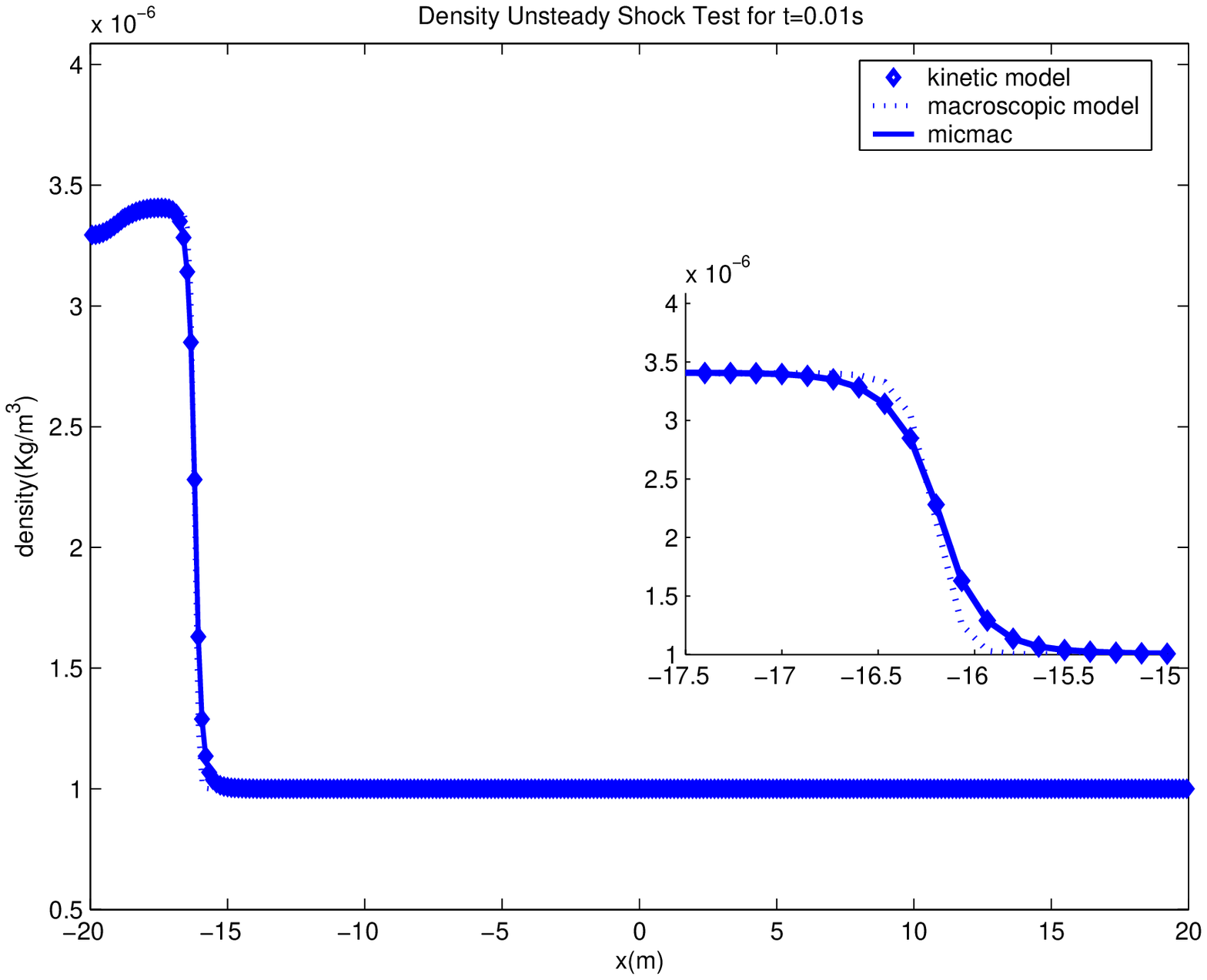}
\includegraphics[scale=0.34]{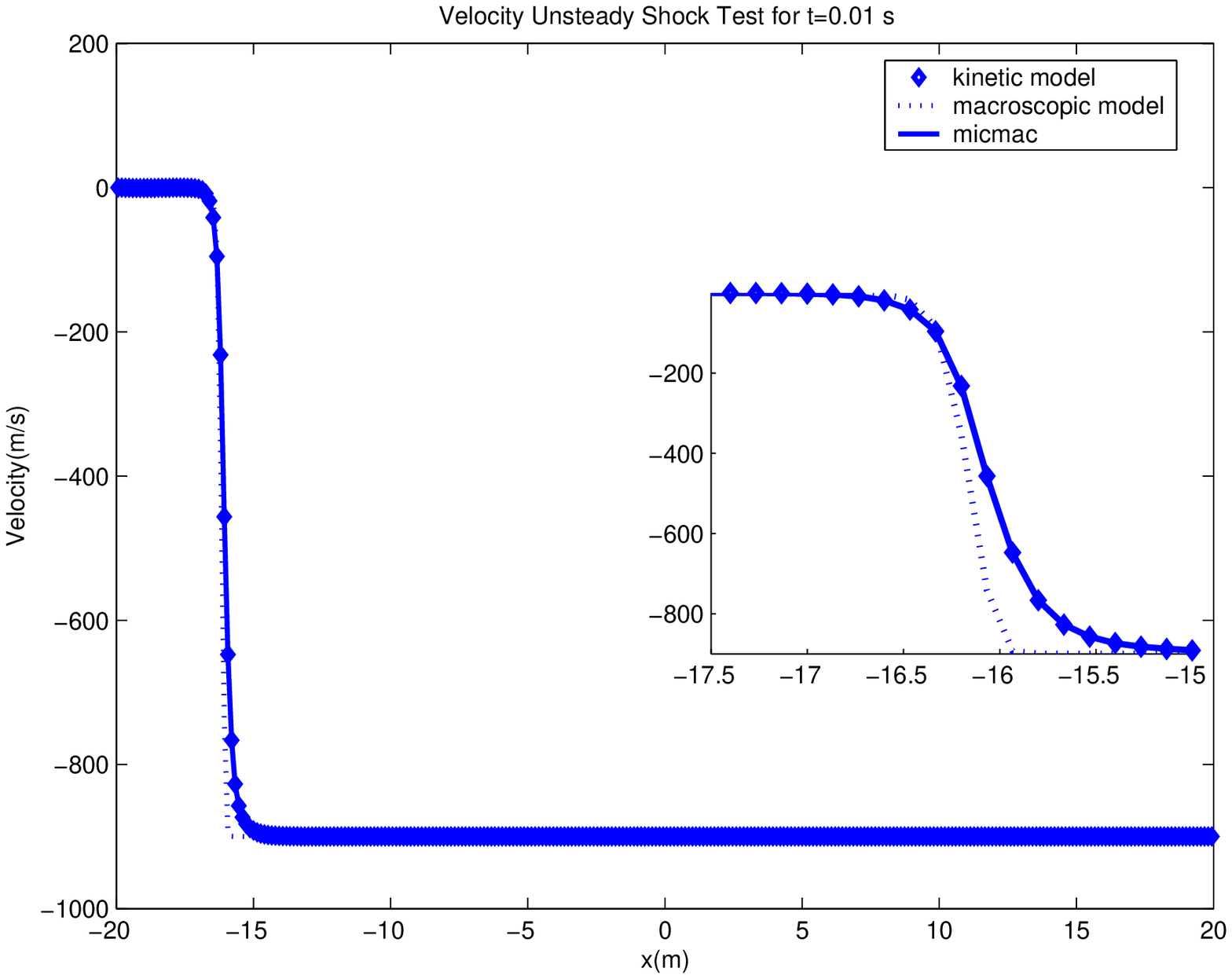}
\includegraphics[scale=0.34]{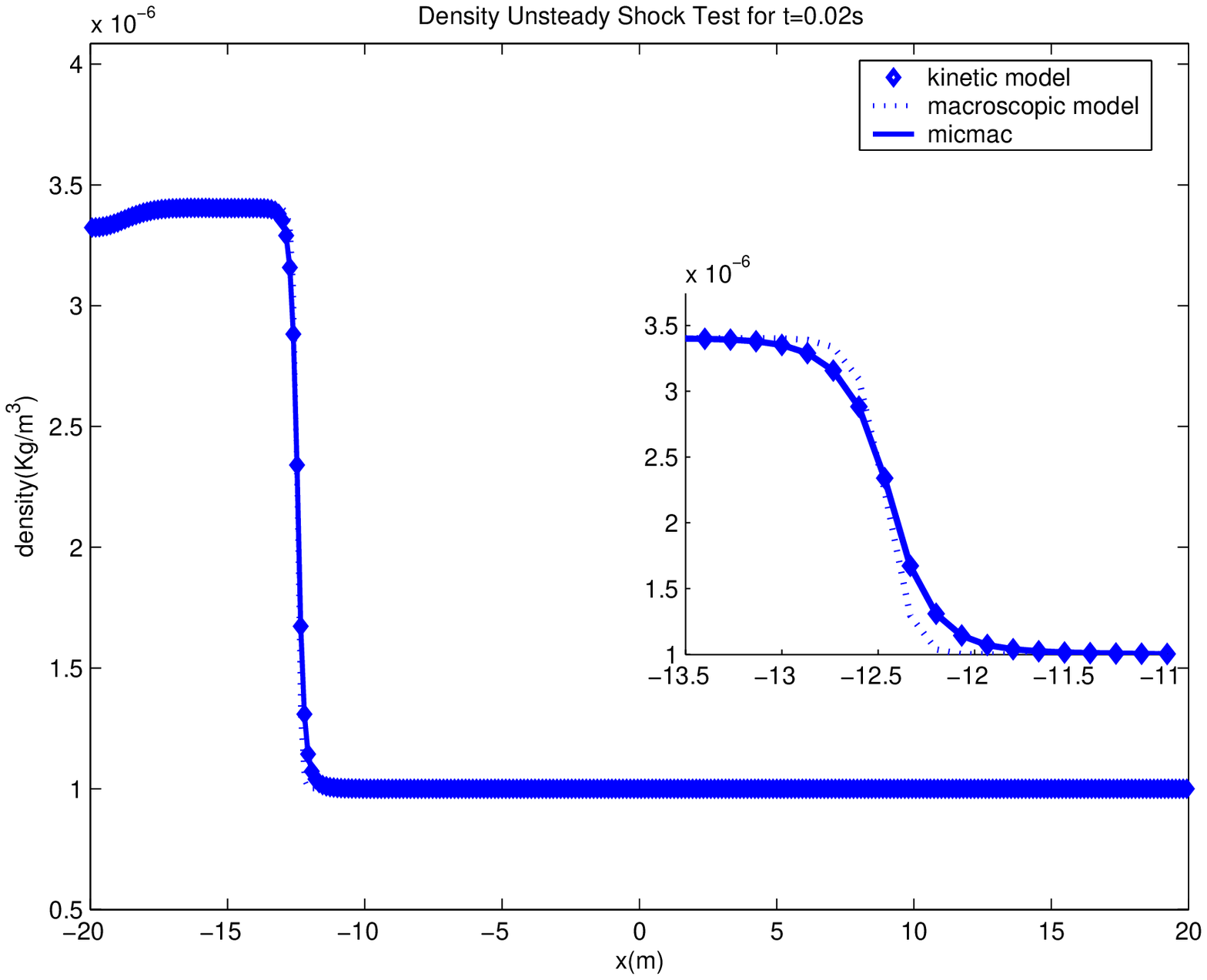}
\includegraphics[scale=0.34]{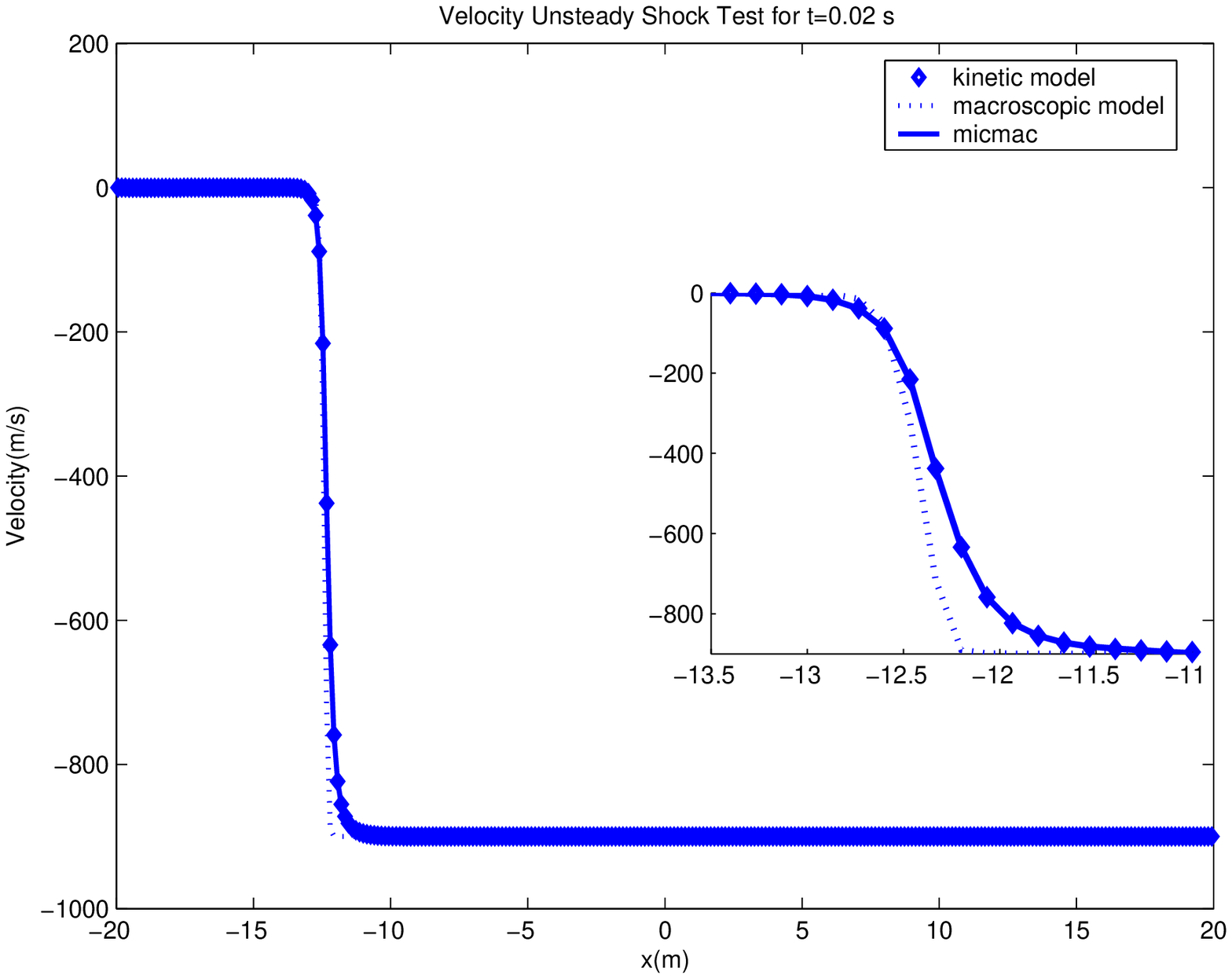}
\includegraphics[scale=0.34]{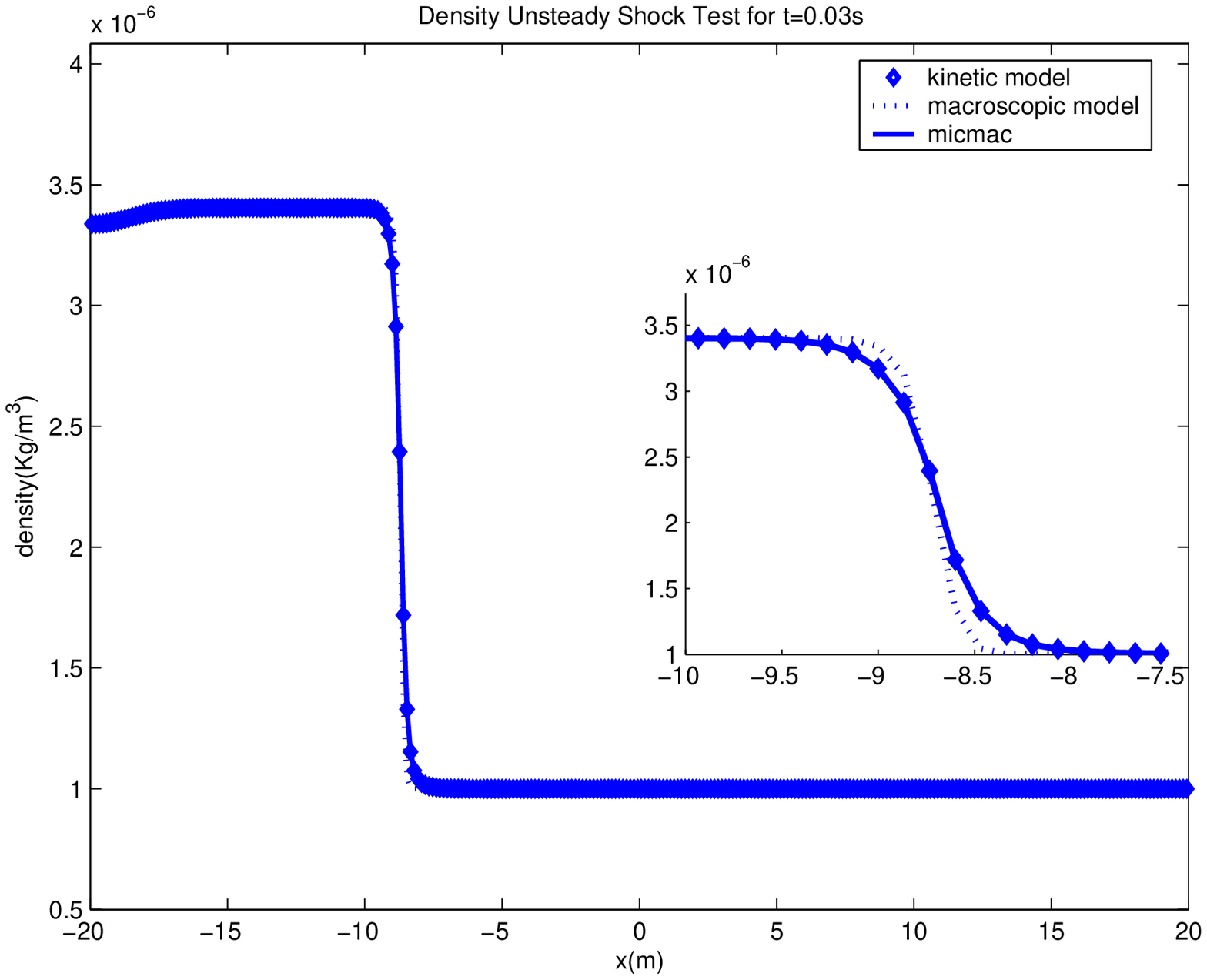}
\includegraphics[scale=0.34]{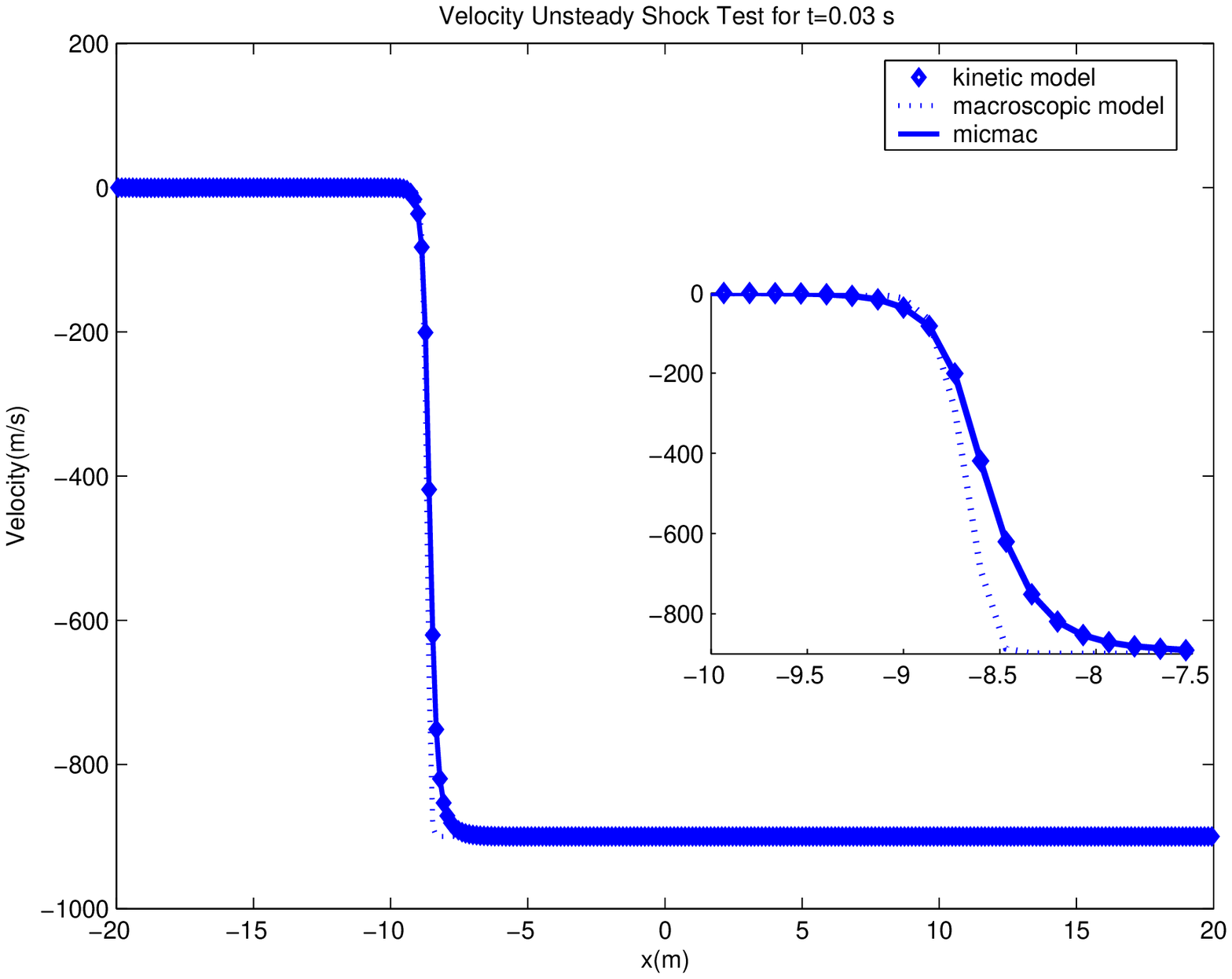}
\includegraphics[scale=0.34]{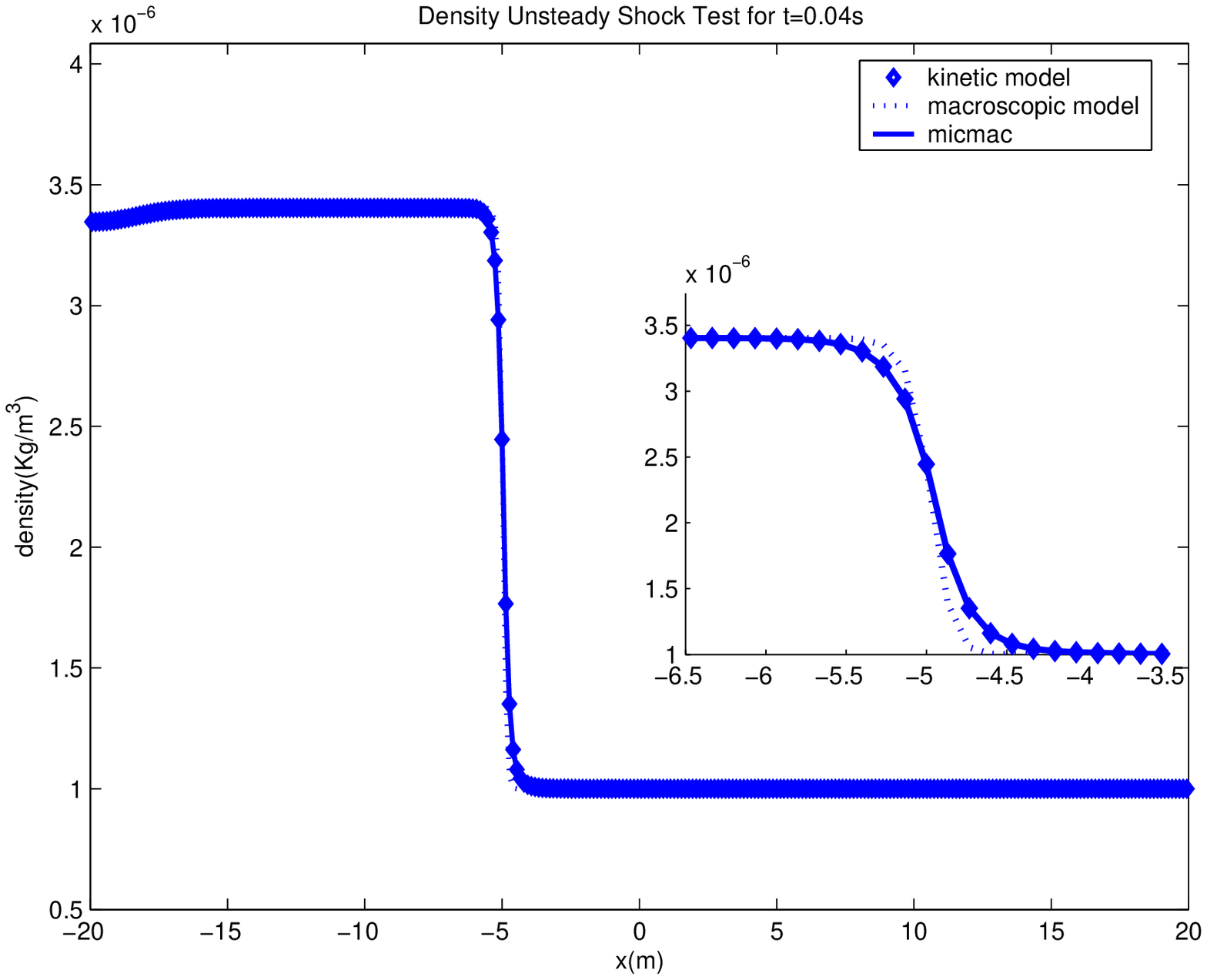}
\includegraphics[scale=0.34]{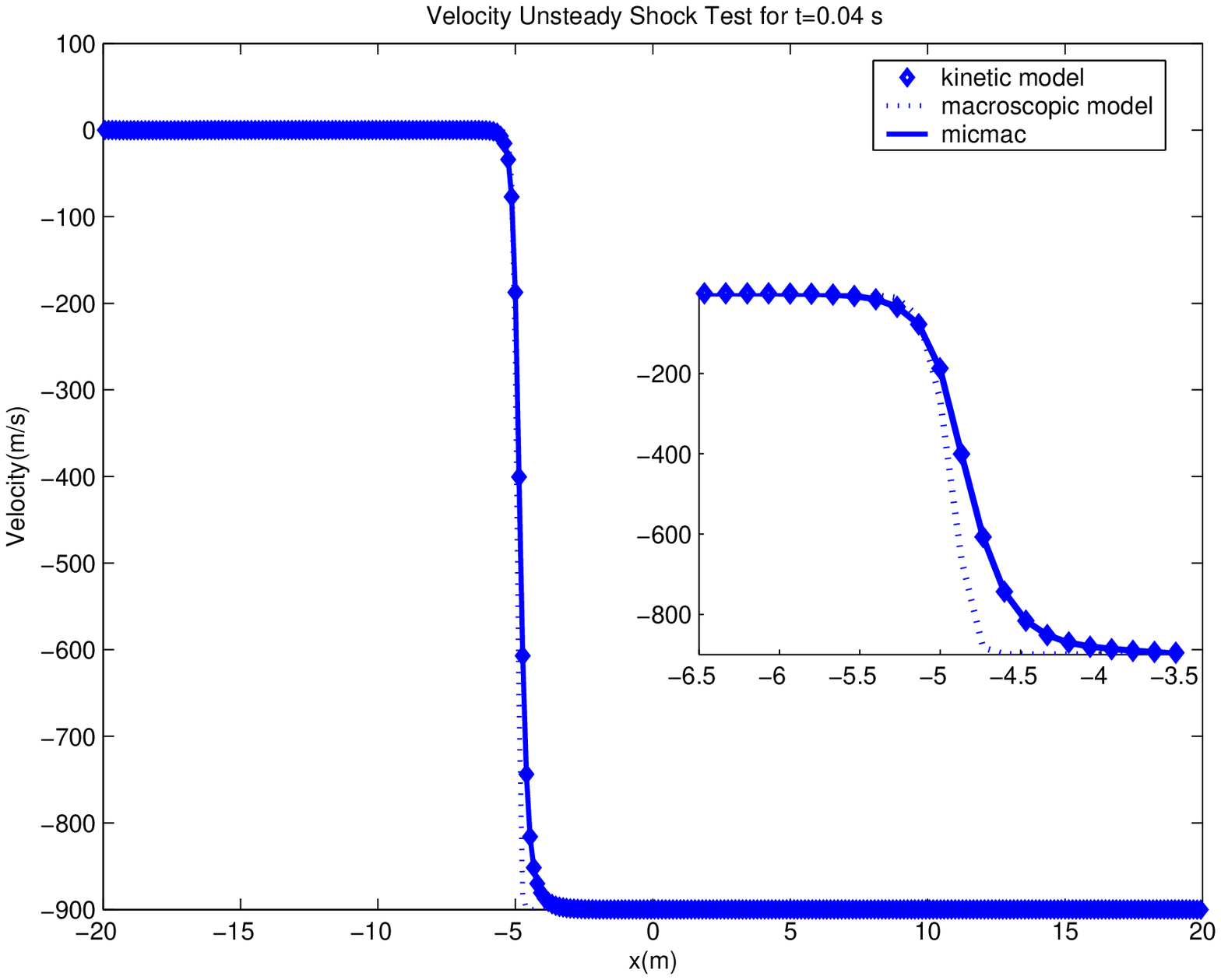}
\caption{Unsteady Shock 1: Solution at $t=1\times 10^{-2}$ top,
$t=2\times 10^{-2}$ middle top, $t=3\times 10^{-2}$ middle bottom,
$t=4\times 10^{-2}$ bottom, density left, velocity  right. The small
panels are a magnification of the solution close to the shock.
\label{UST1.1}}
\end{center}
\end{figure}

\begin{figure}
\begin{center}
\includegraphics[scale=0.34]{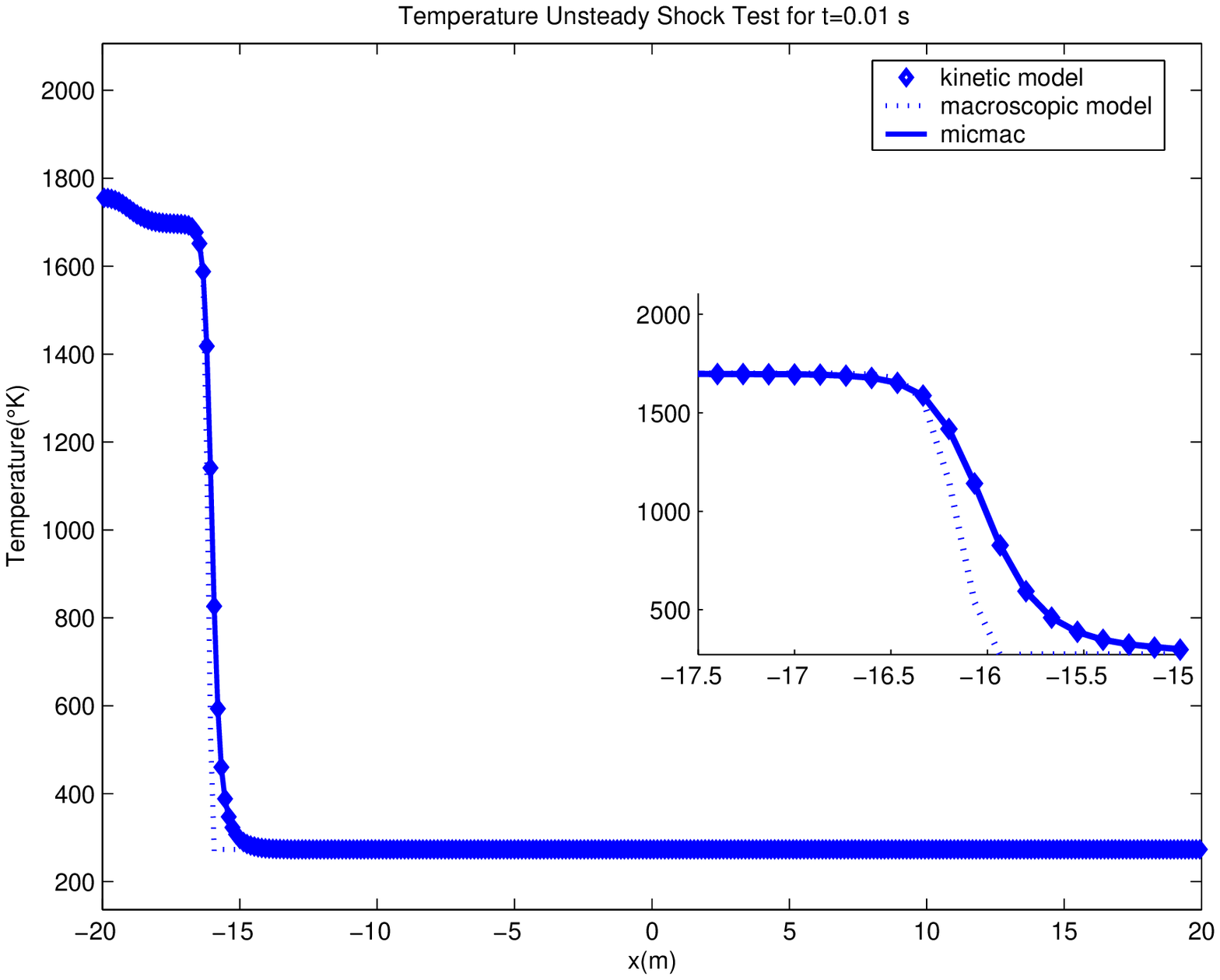}
\includegraphics[scale=0.34]{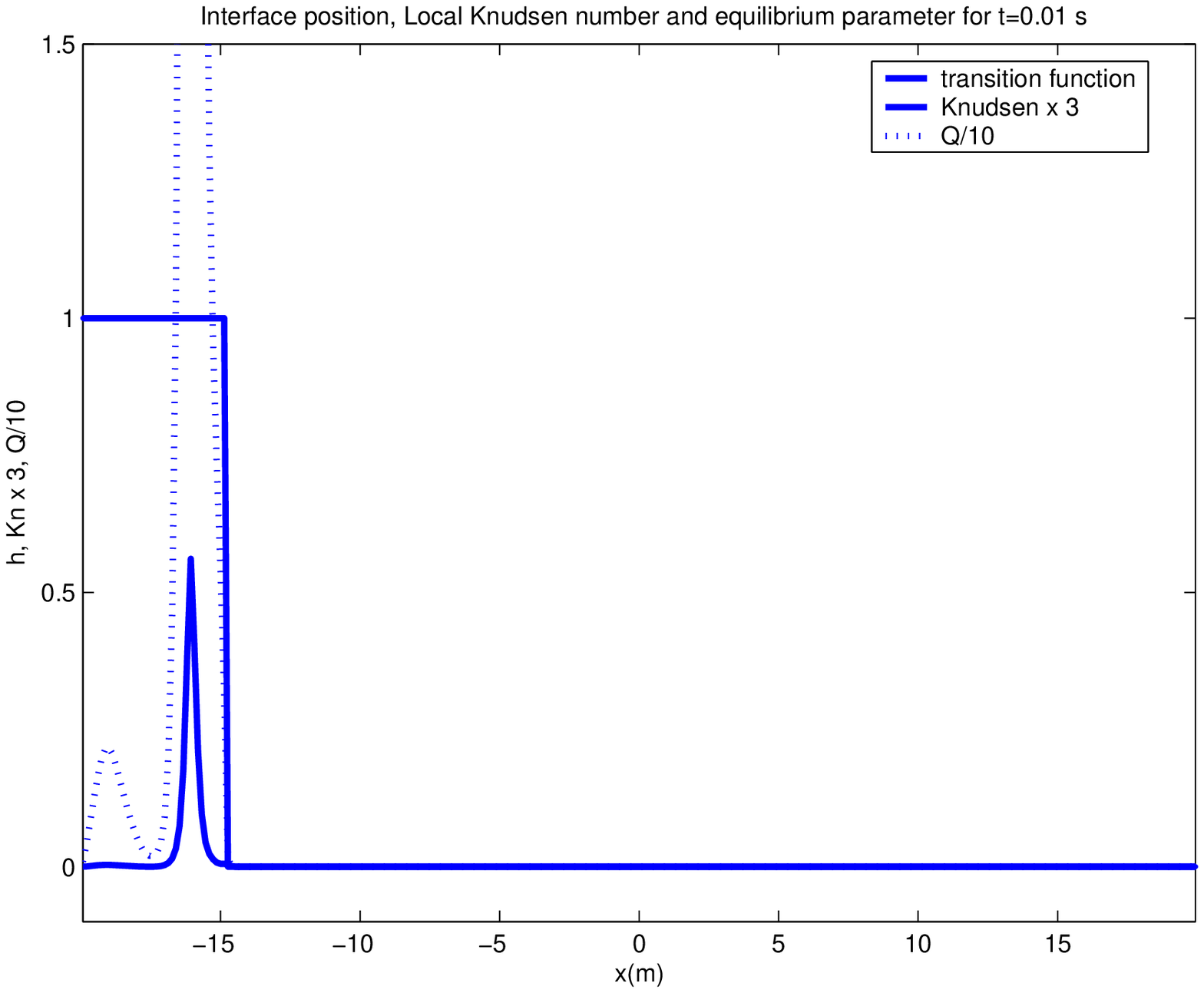}
\includegraphics[scale=0.34]{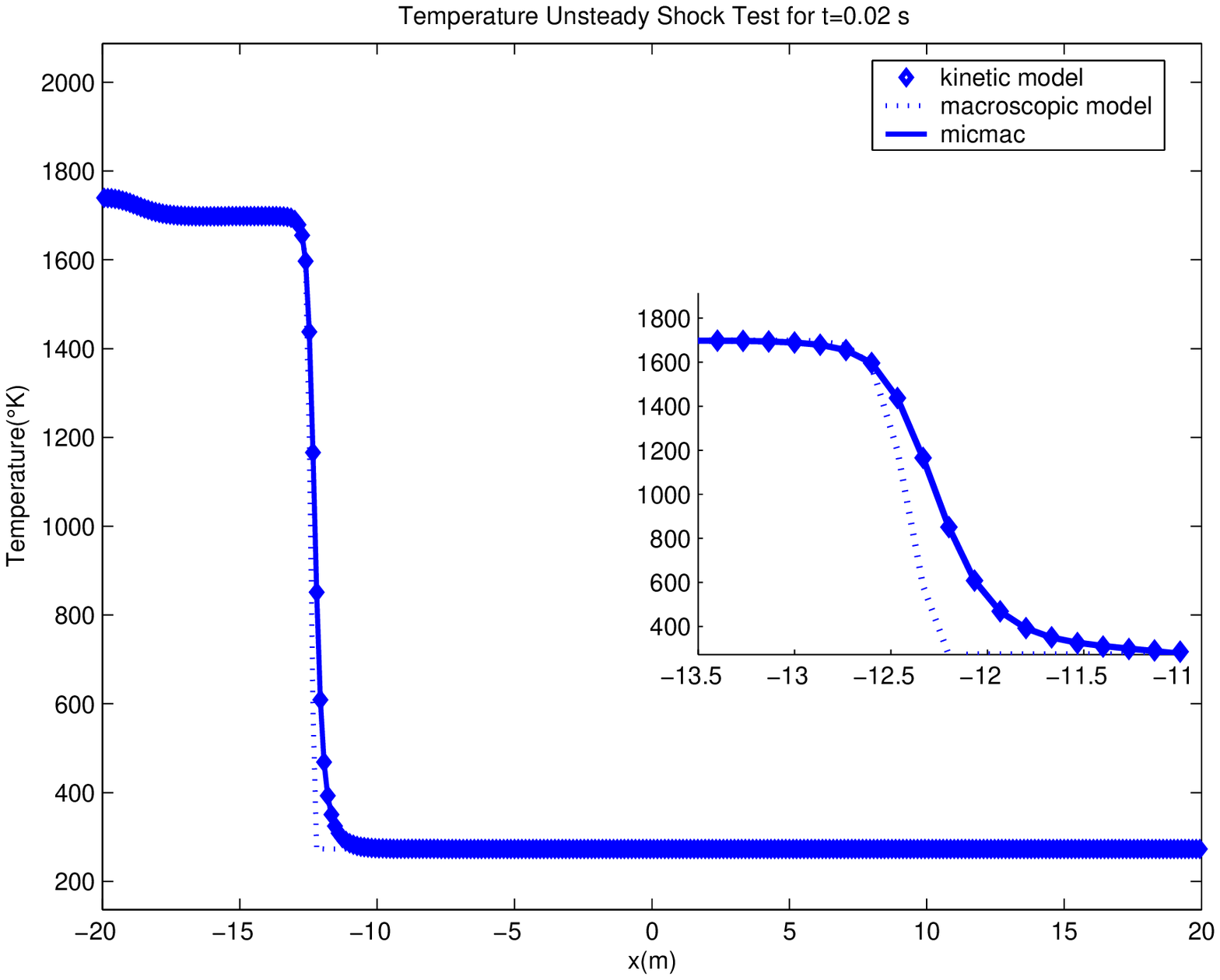}
\includegraphics[scale=0.34]{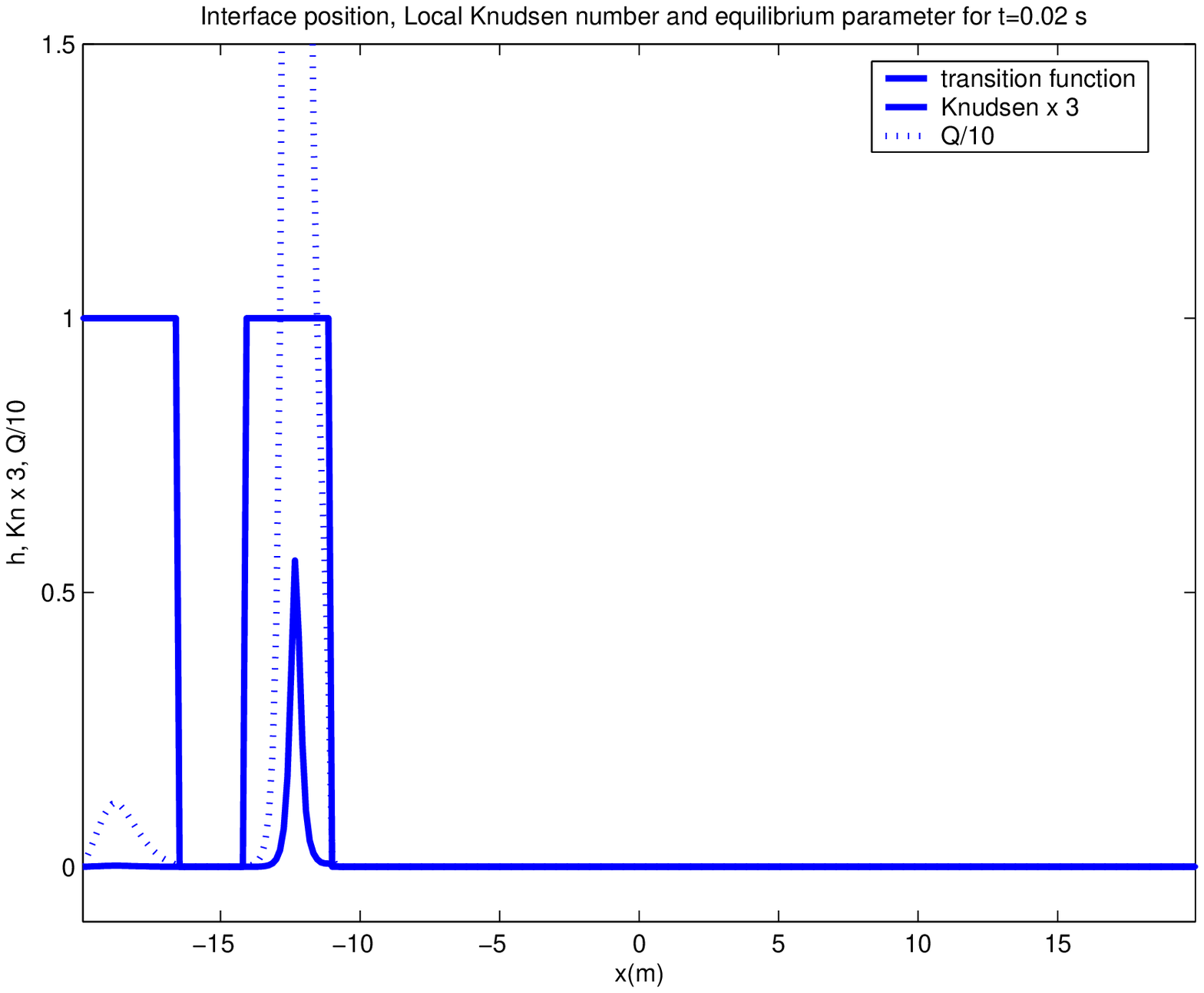}
\includegraphics[scale=0.34]{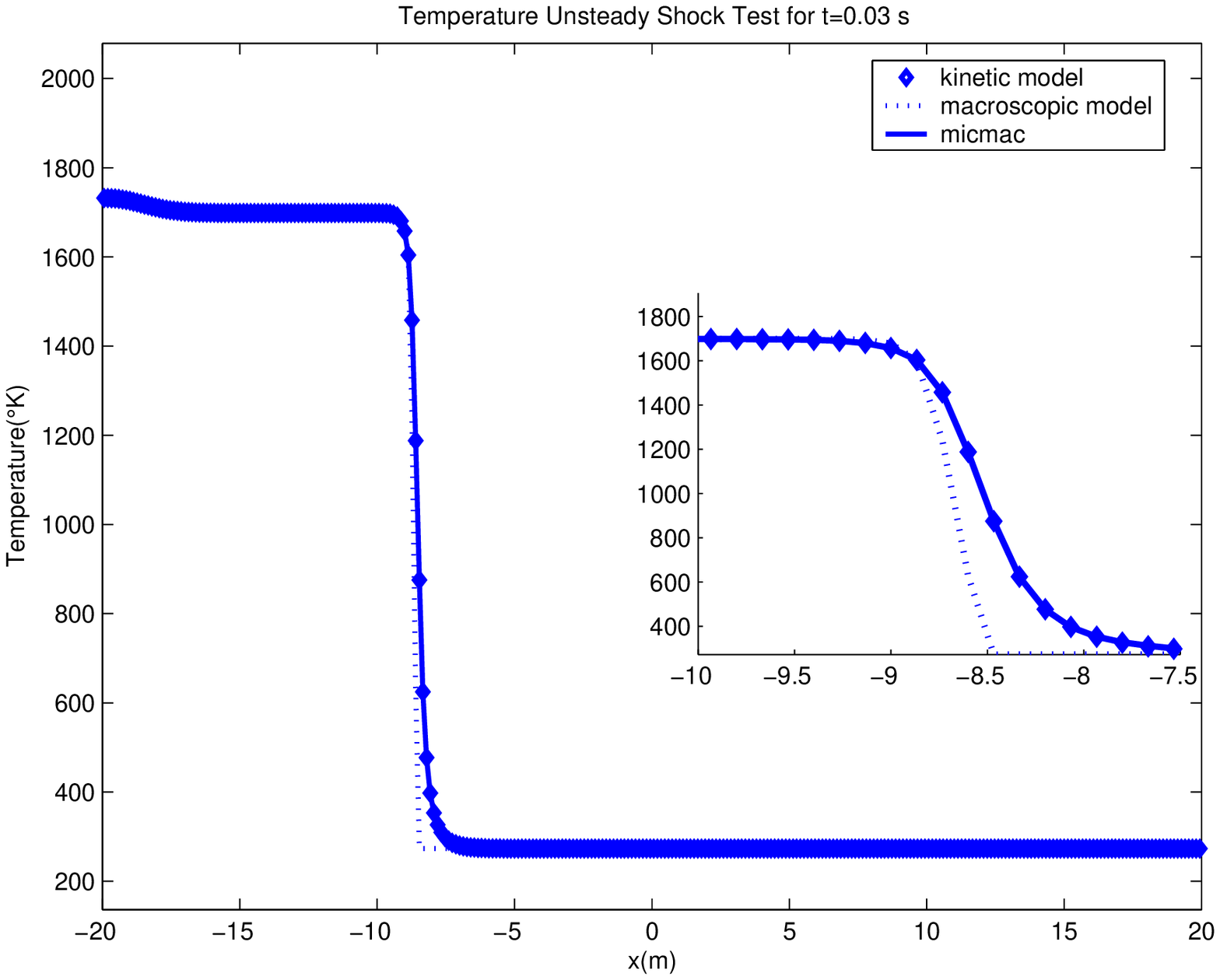}
\includegraphics[scale=0.34]{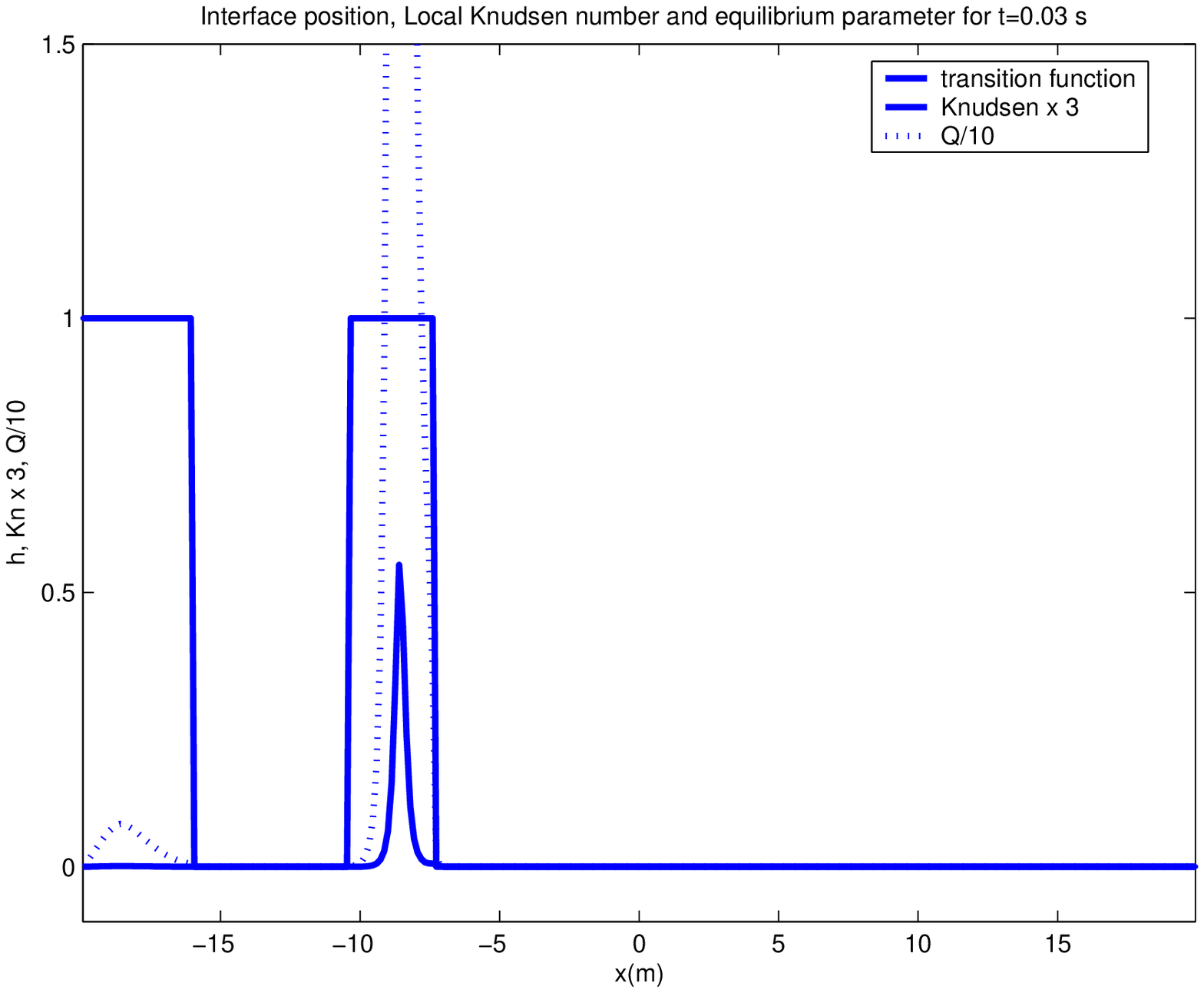}
\includegraphics[scale=0.34]{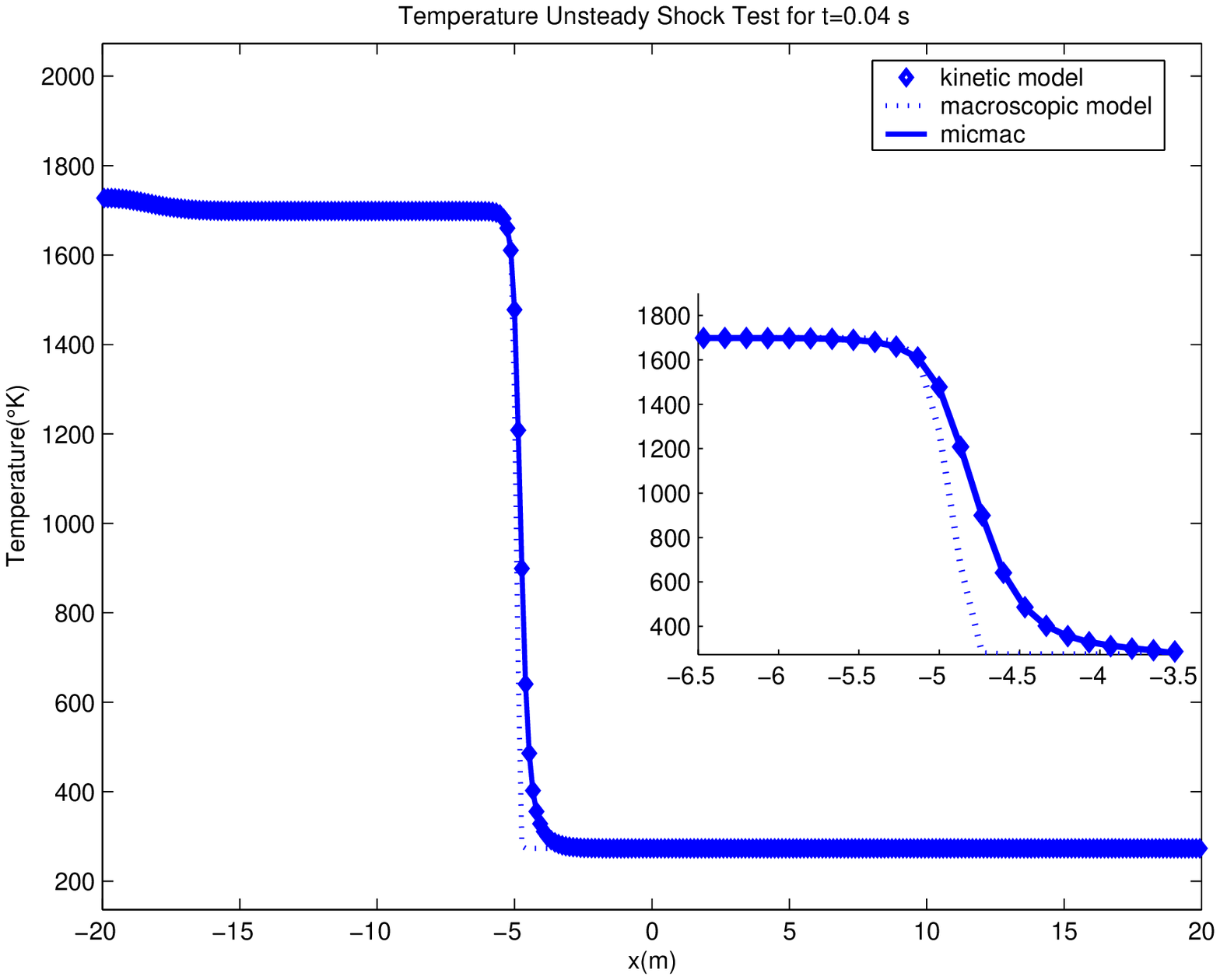}
\includegraphics[scale=0.34]{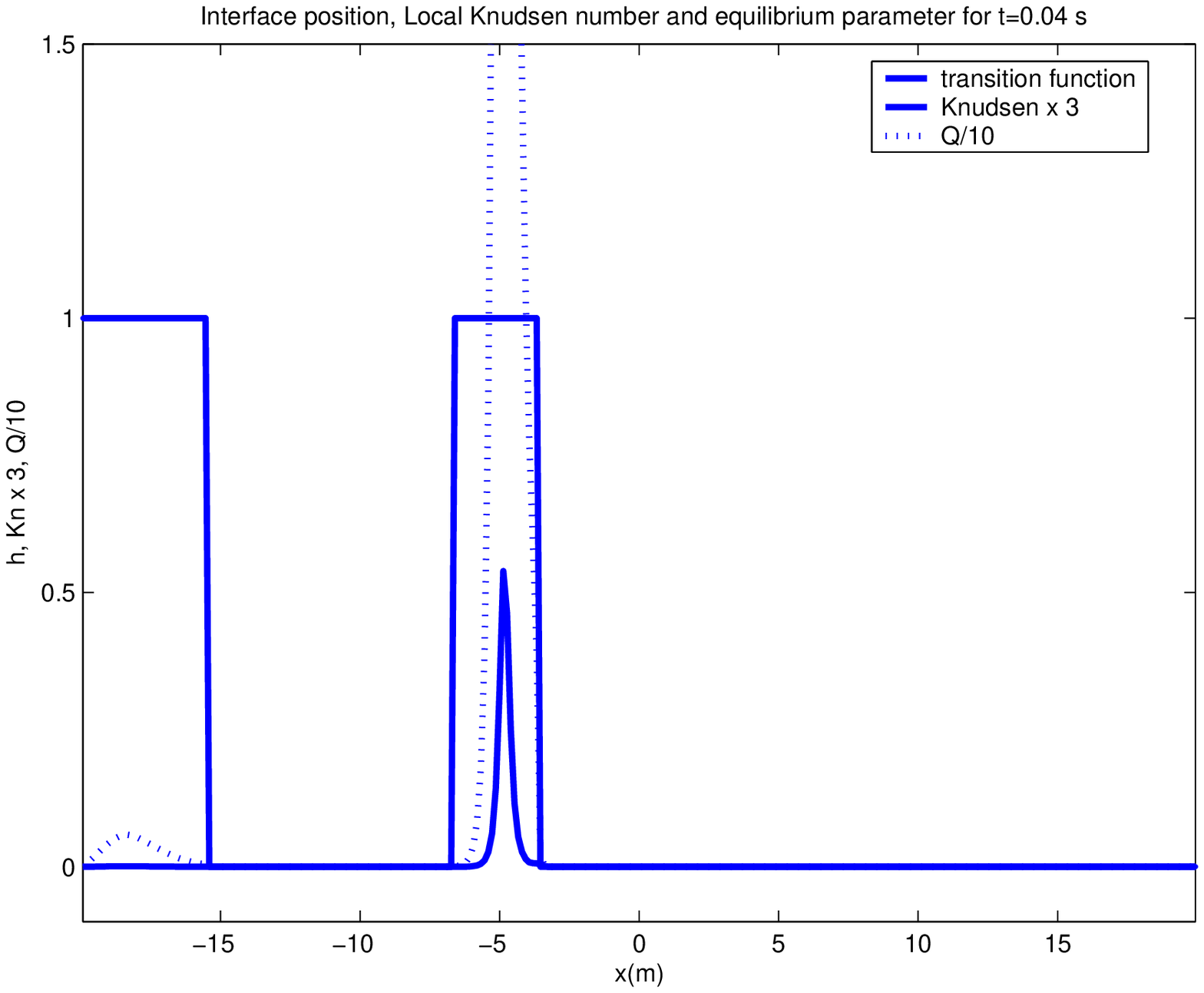}
\caption{Unsteady Shock 1: Solution at $t=1\times 10^{-2}$ top,
$t=2\times 10^{-2}$ middle top, $t=3\times 10^{-2}$ middle bottom,
$t=4\times 10^{-2}$ bottom, temperature left, transition function,
Knudsen number and heat flux right. The small panels are a
magnification of the solution close to the shock. \label{UST1.2}}
\end{center}
\end{figure}

\begin{figure}
\begin{center}
\includegraphics[scale=0.34]{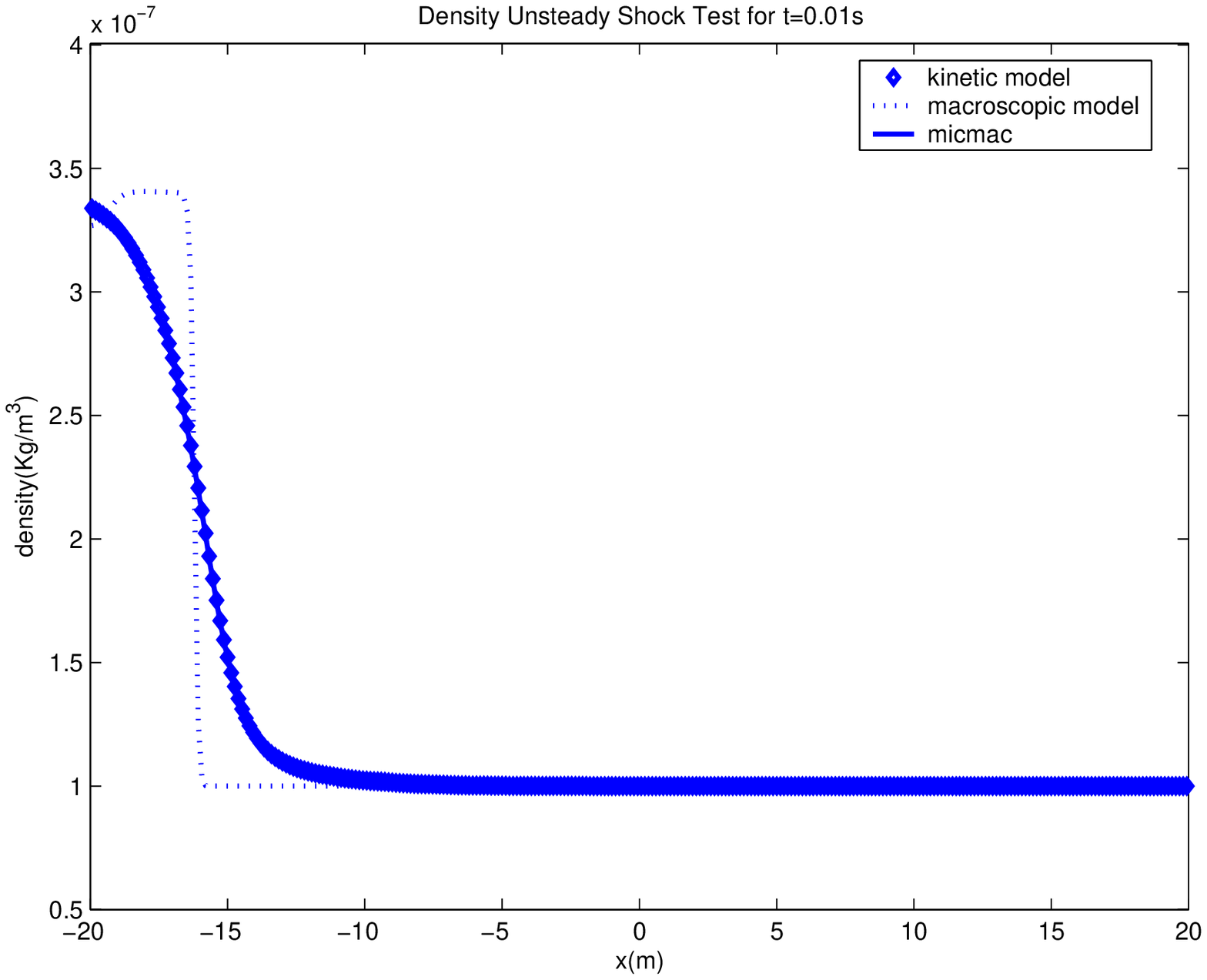}
\includegraphics[scale=0.34]{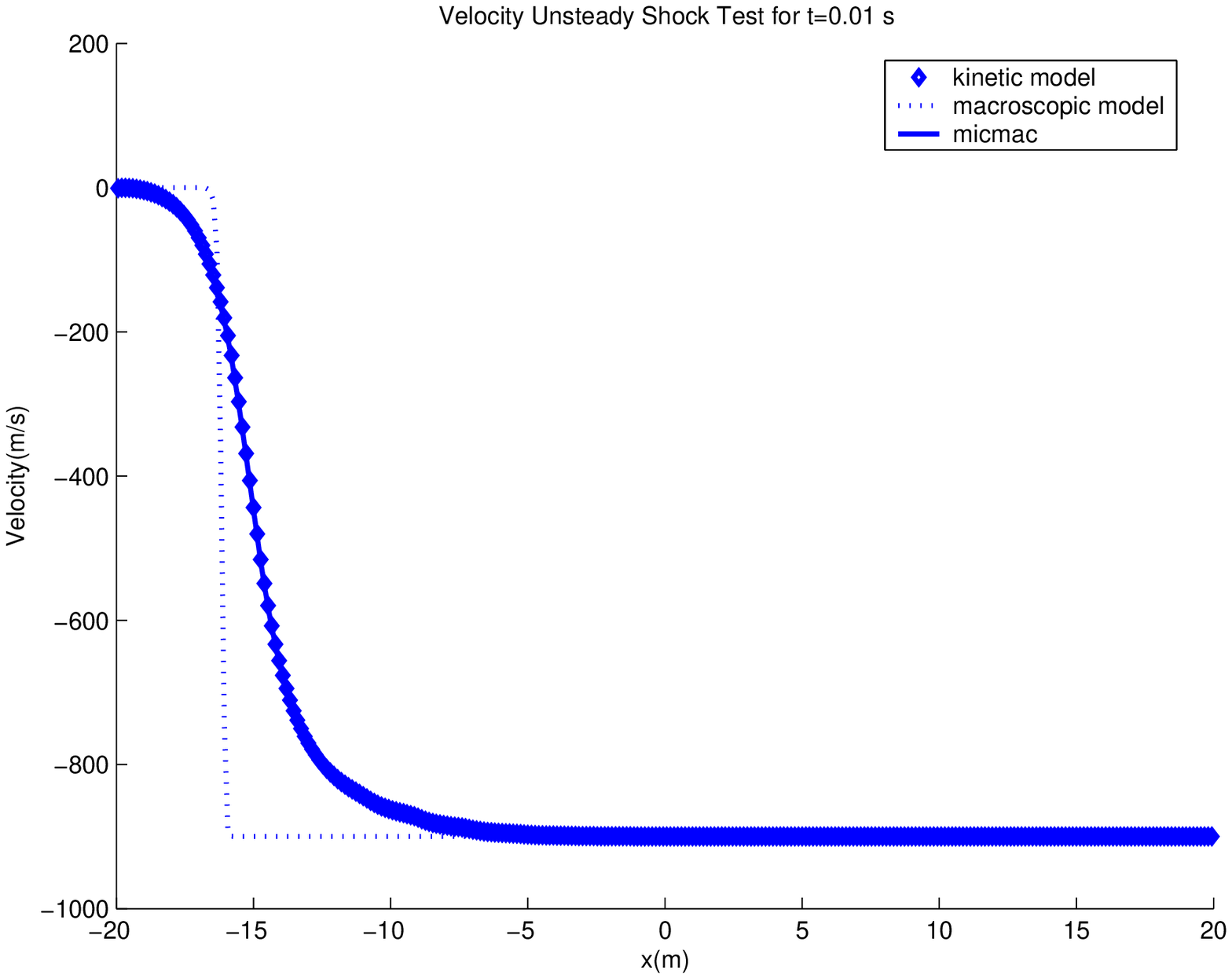}
\includegraphics[scale=0.34]{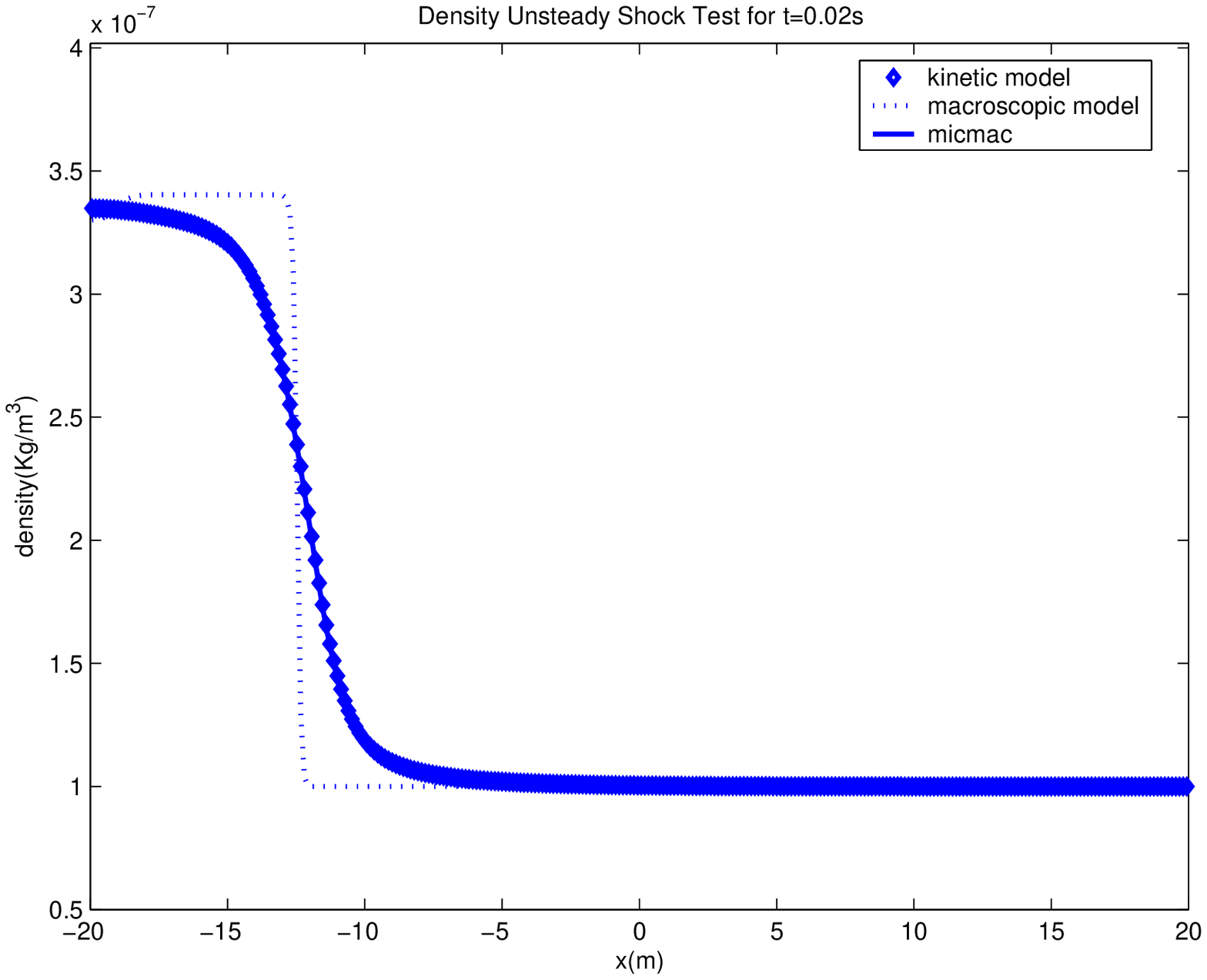}
\includegraphics[scale=0.34]{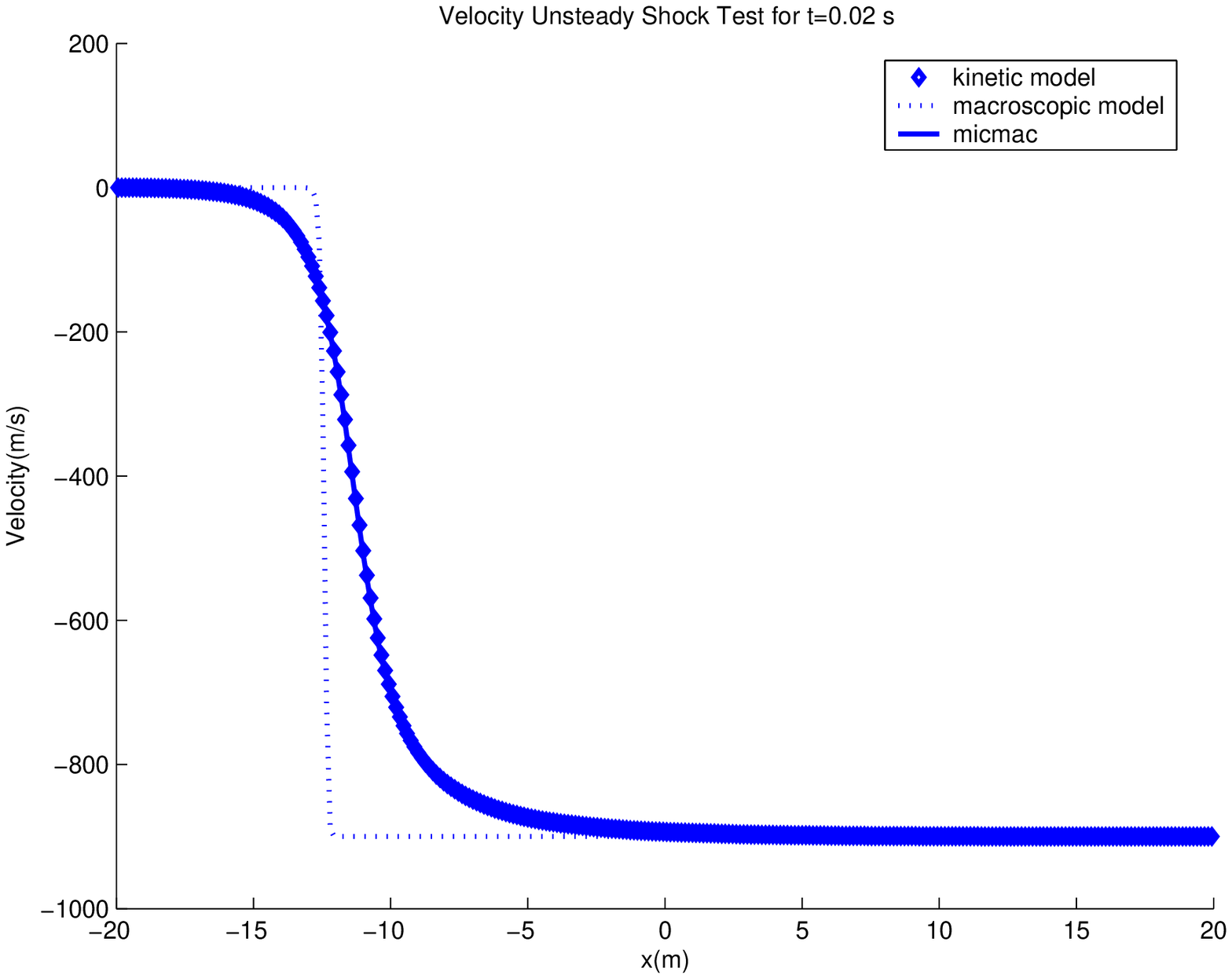}
\includegraphics[scale=0.34]{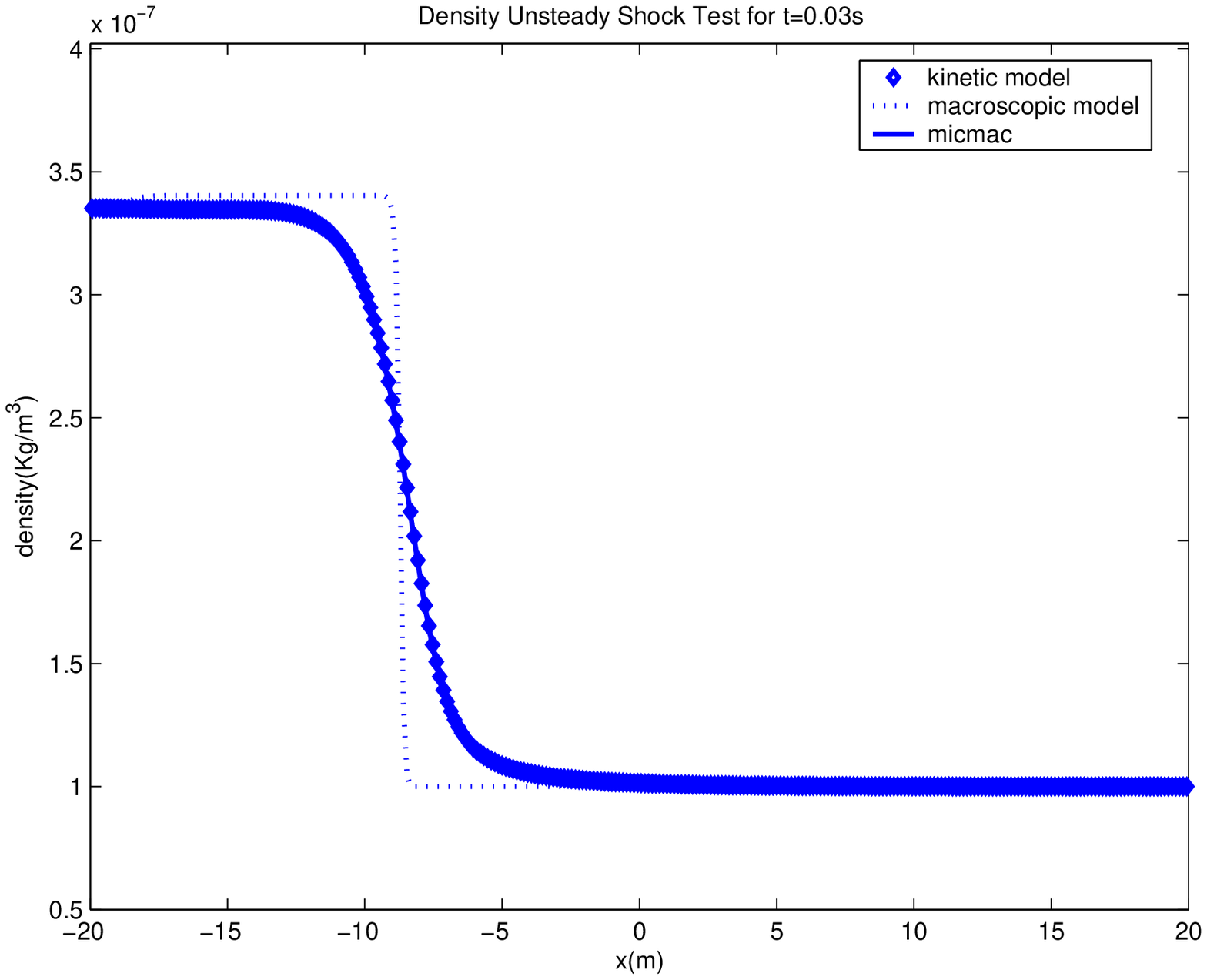}
\includegraphics[scale=0.34]{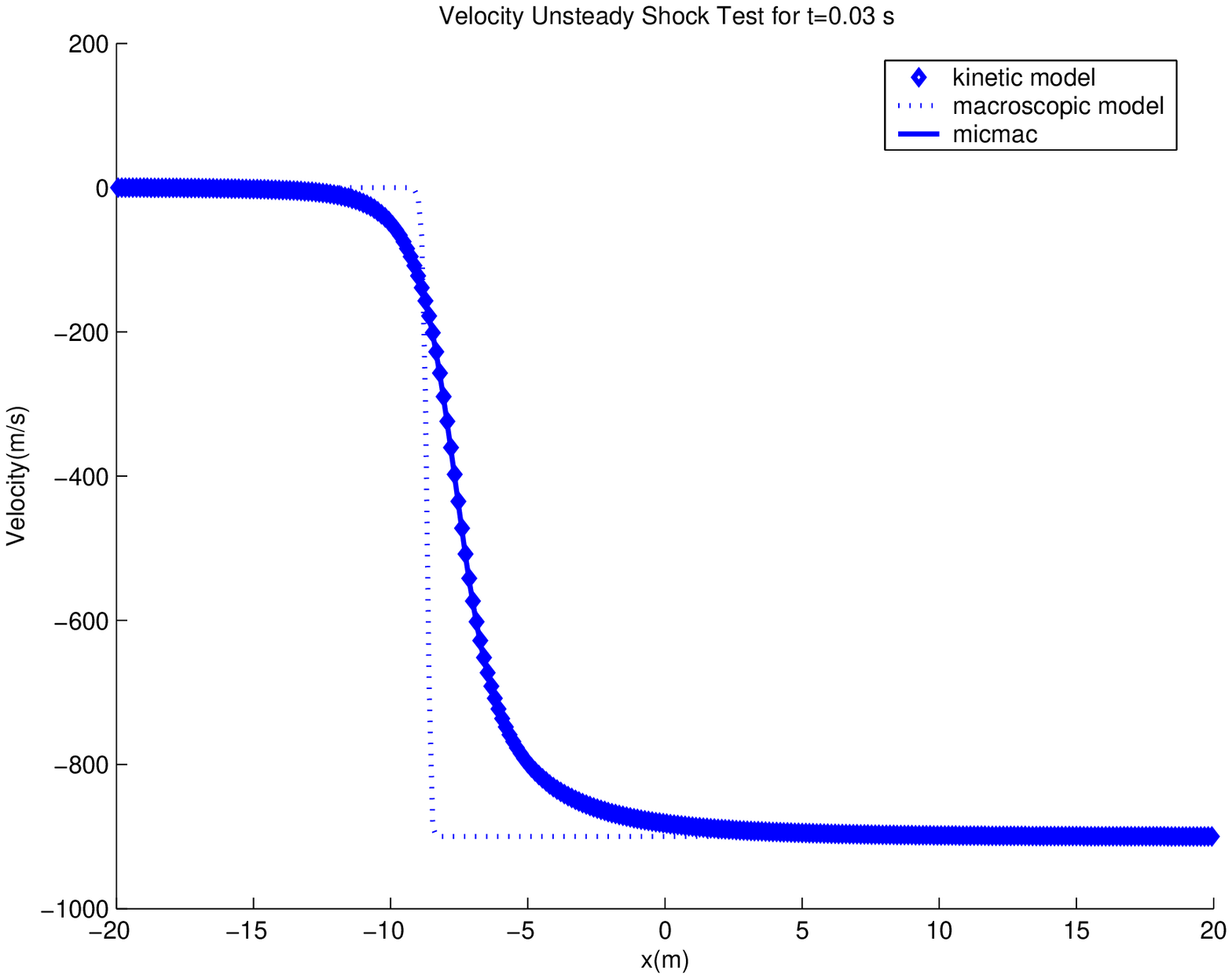}
\includegraphics[scale=0.34]{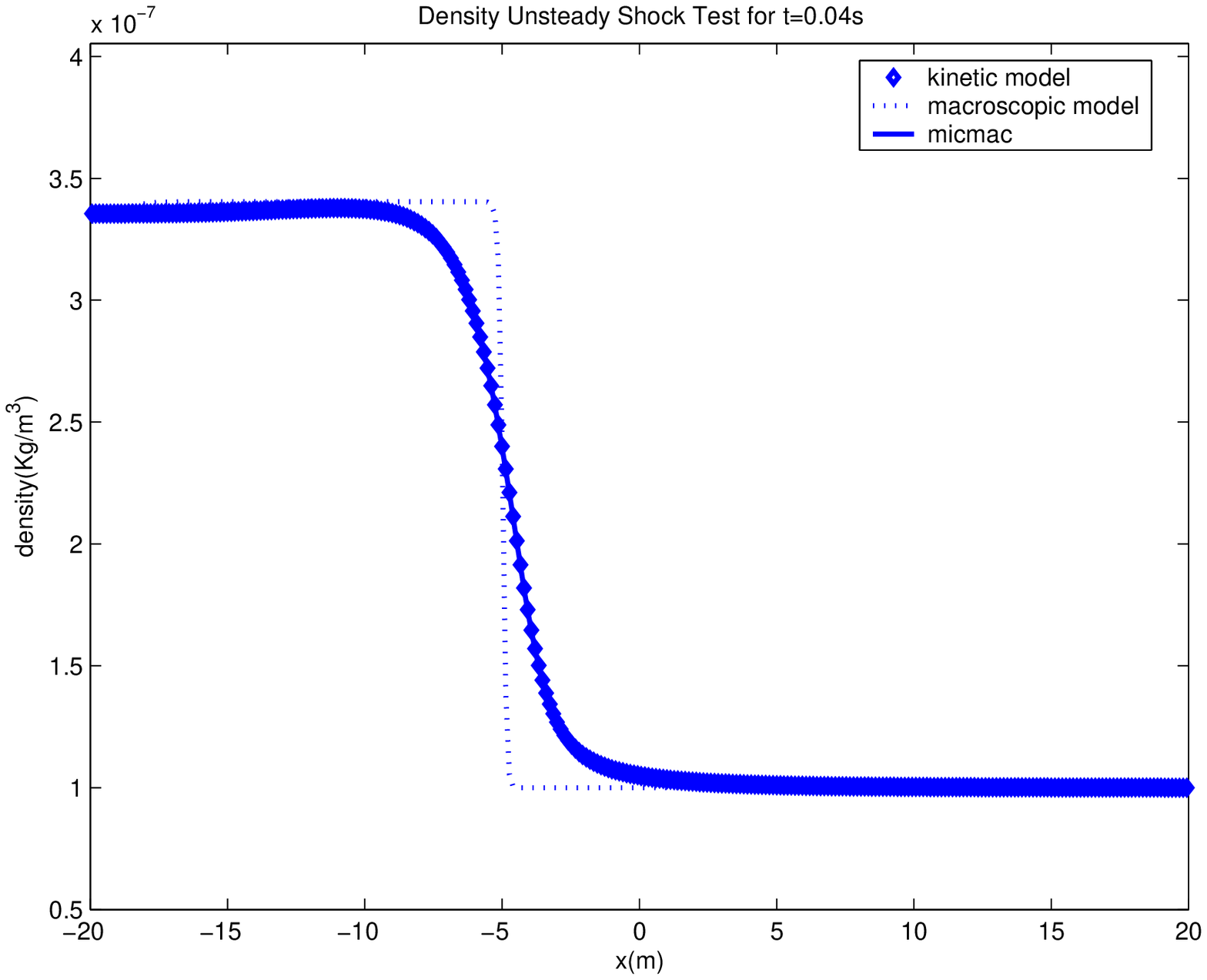}
\includegraphics[scale=0.34]{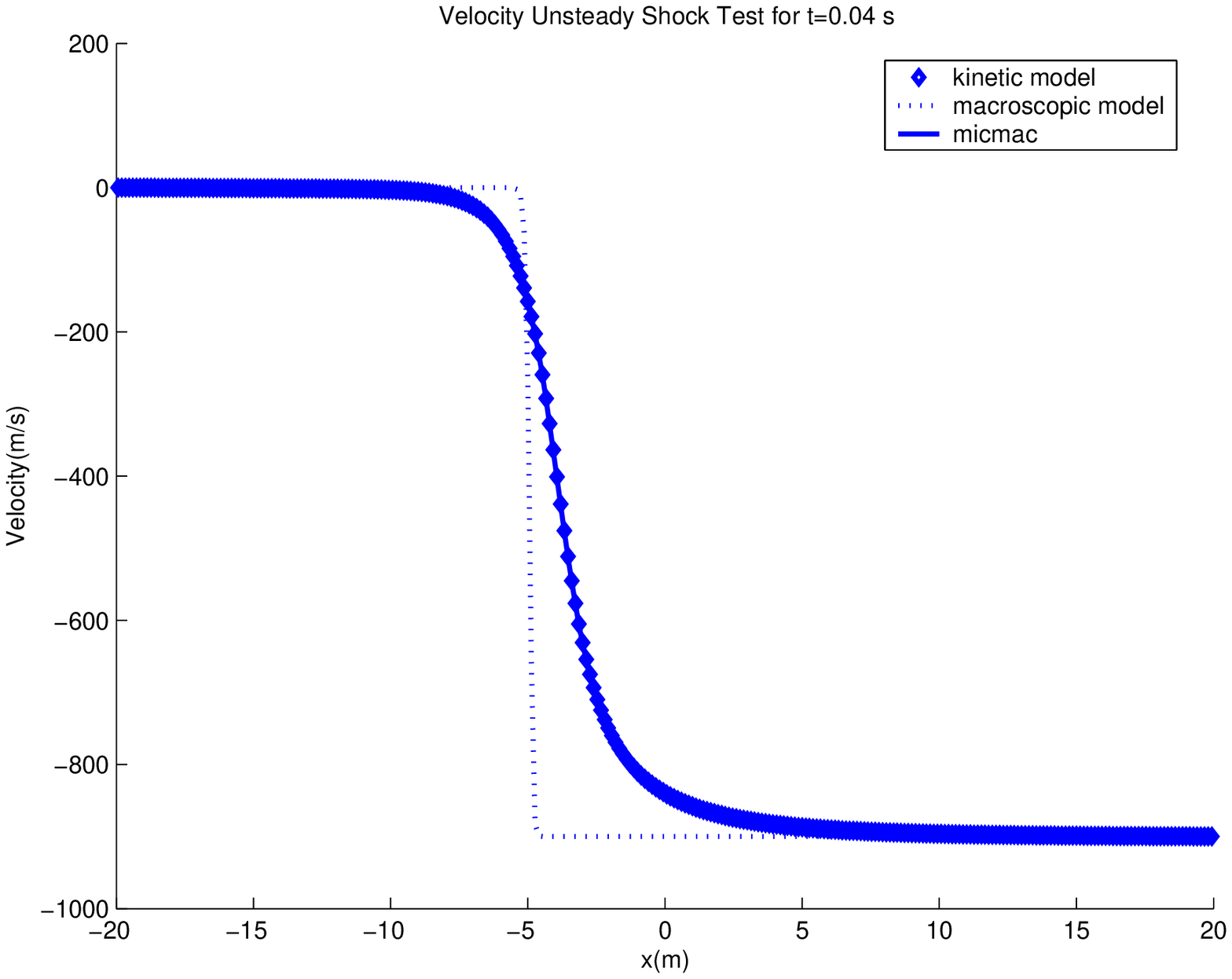}
\caption{Unsteady Shock 2: Solution at $t=1\times 10^{-2}$ top,
$t=2\times 10^{-2}$ middle top, $t=3\times 10^{-2}$ middle bottom,
$t=4\times 10^{-2}$ bottom, density left, velocity  right.
\label{UST2.1}}
\end{center}
\end{figure}

\begin{figure}
\begin{center}
\includegraphics[scale=0.34]{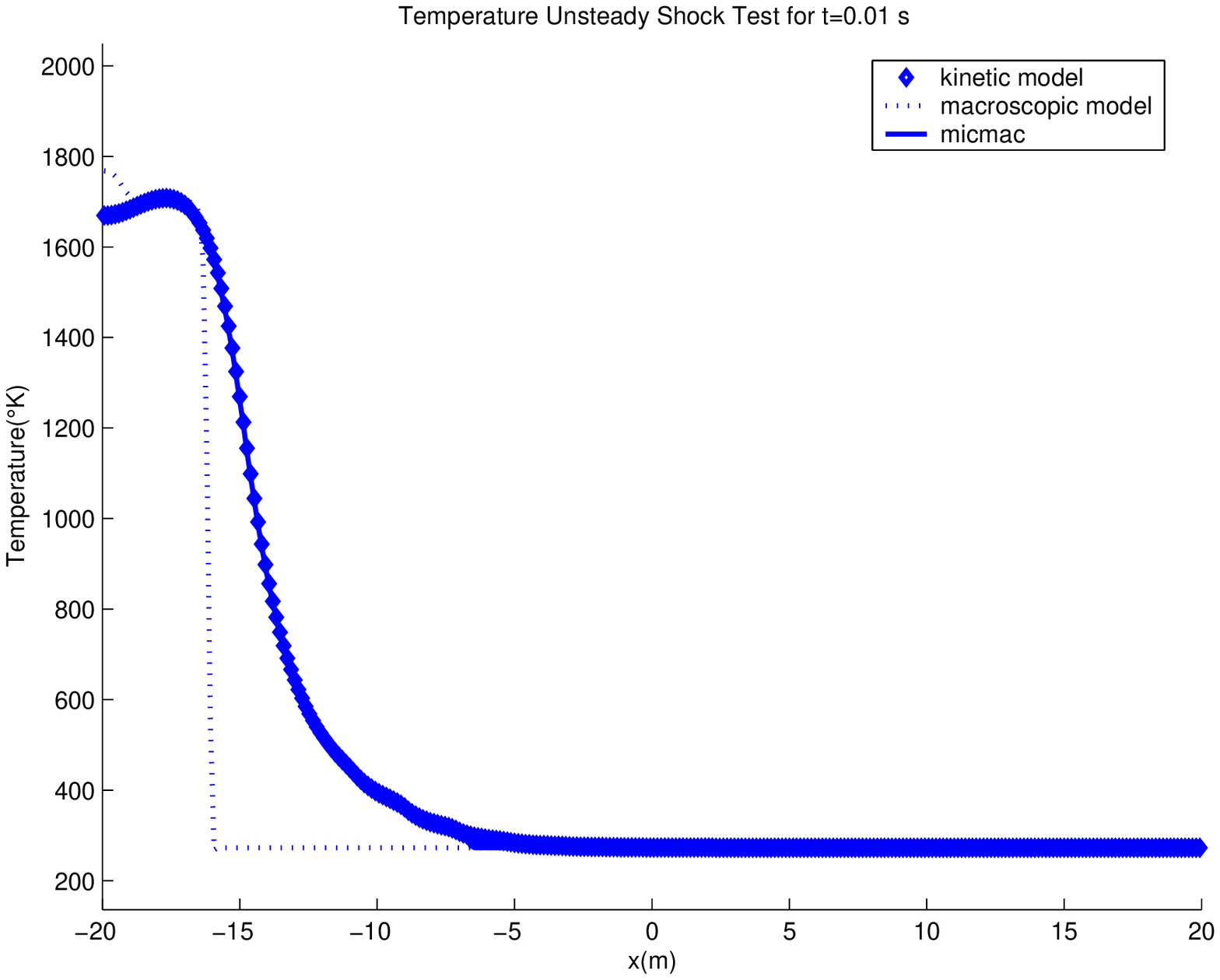}
\includegraphics[scale=0.34]{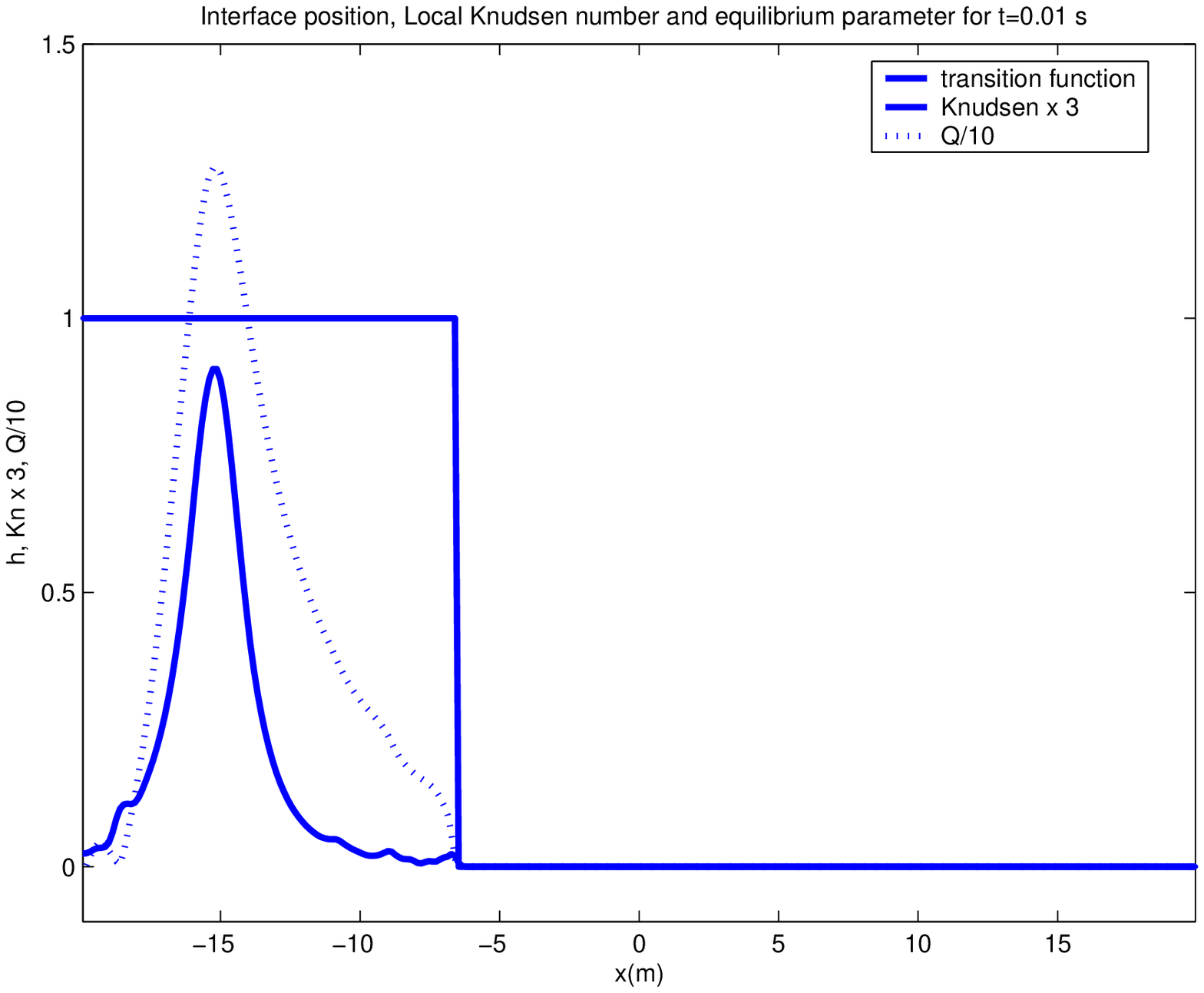}
\includegraphics[scale=0.34]{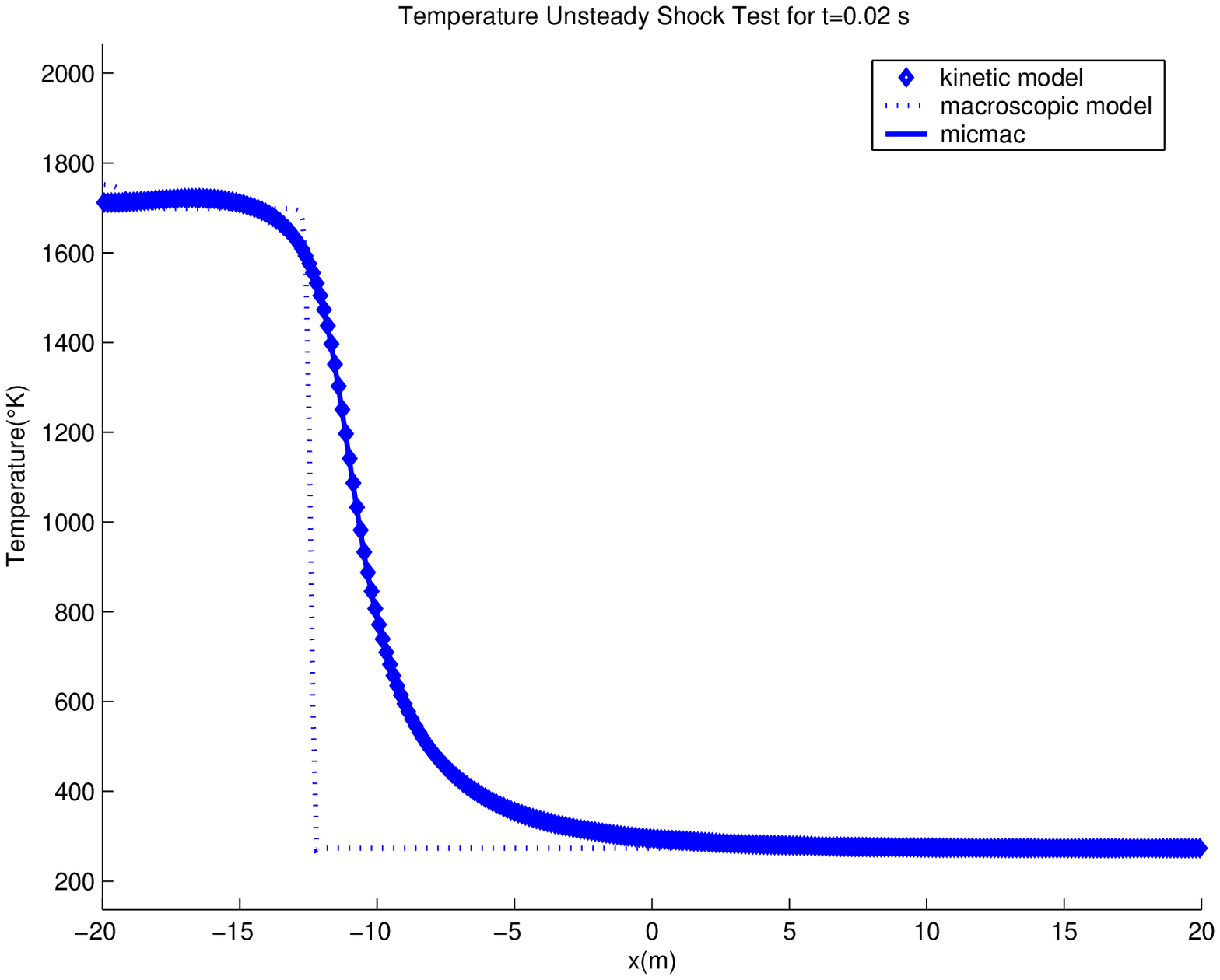}
\includegraphics[scale=0.34]{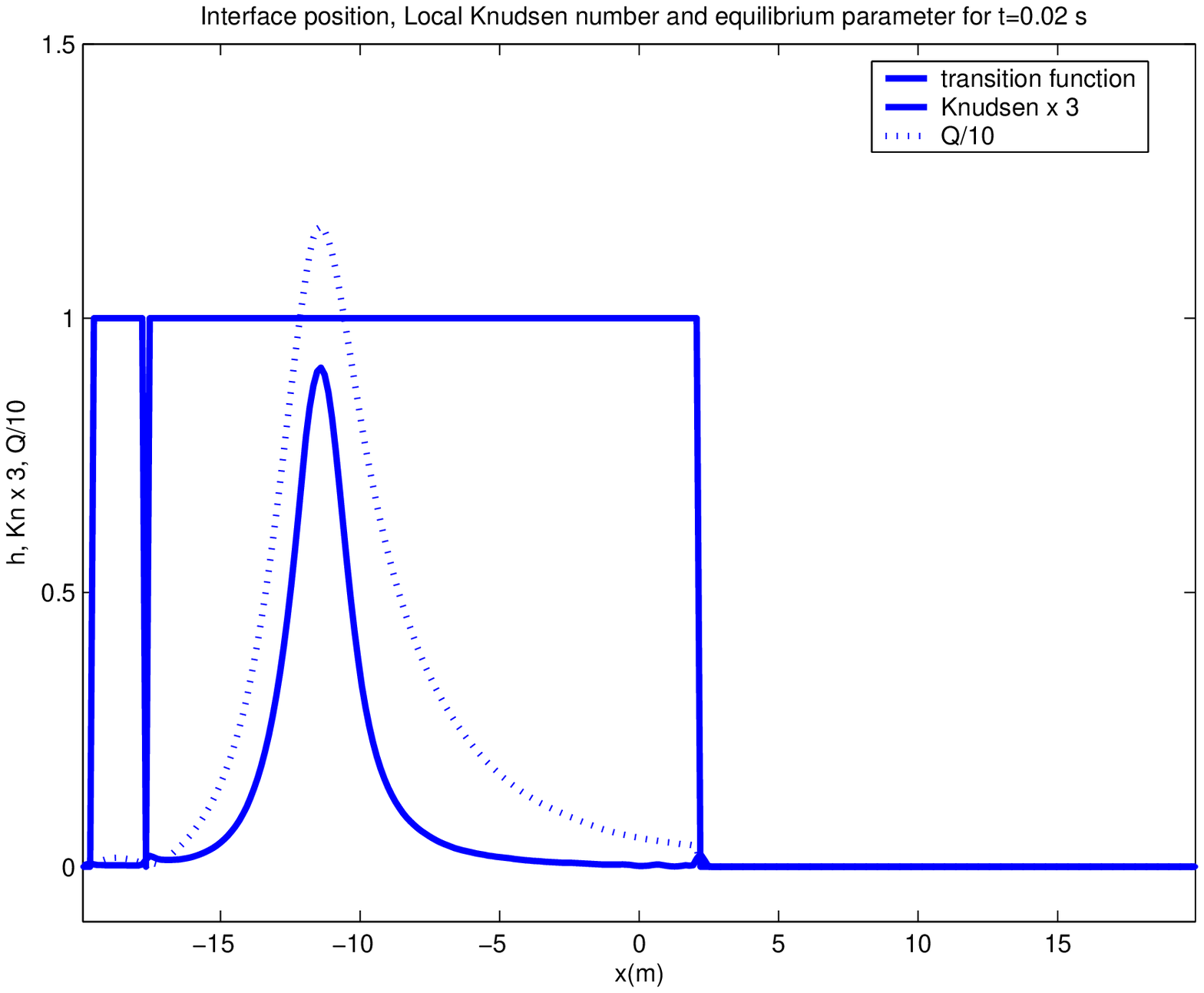}
\includegraphics[scale=0.34]{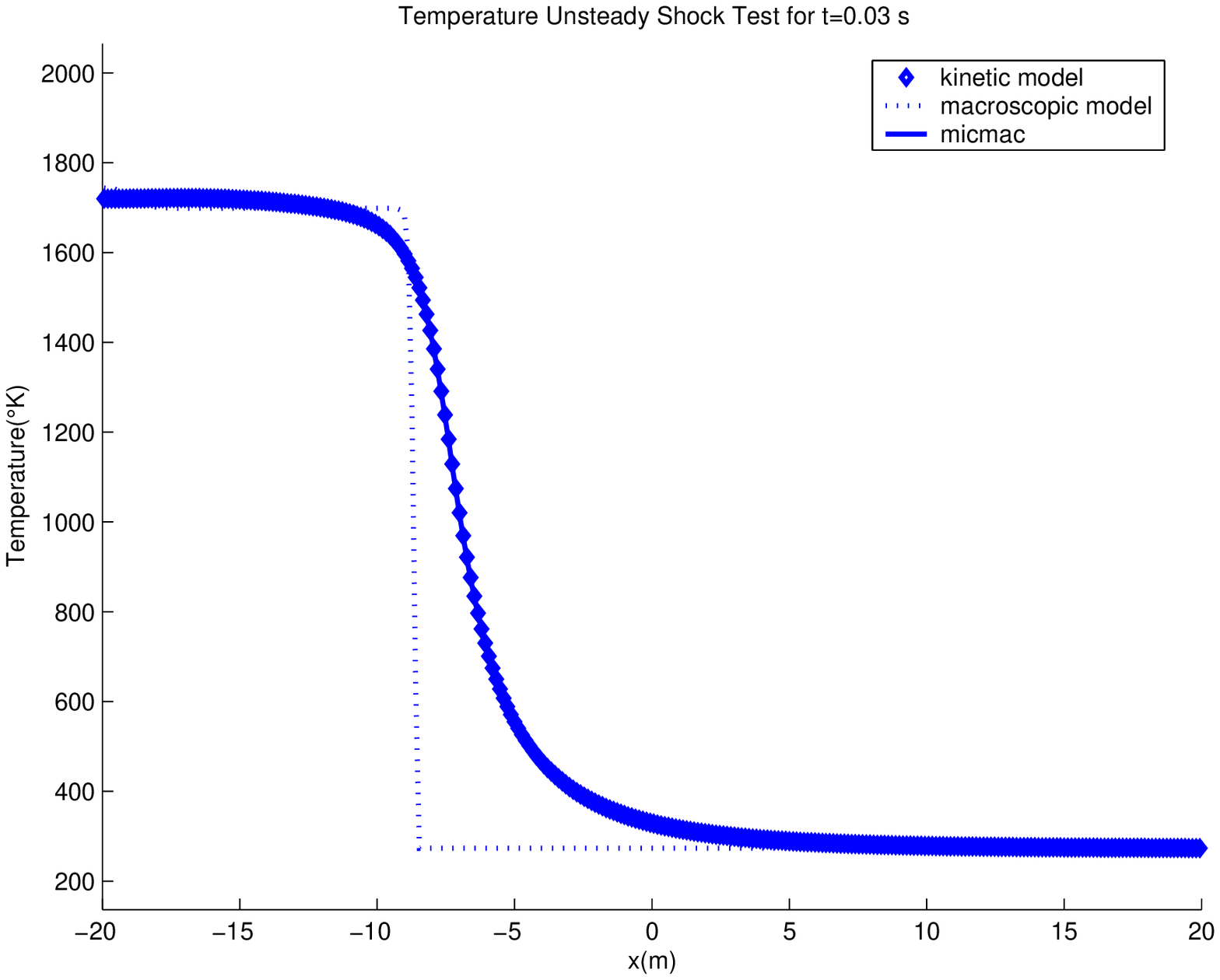}
\includegraphics[scale=0.34]{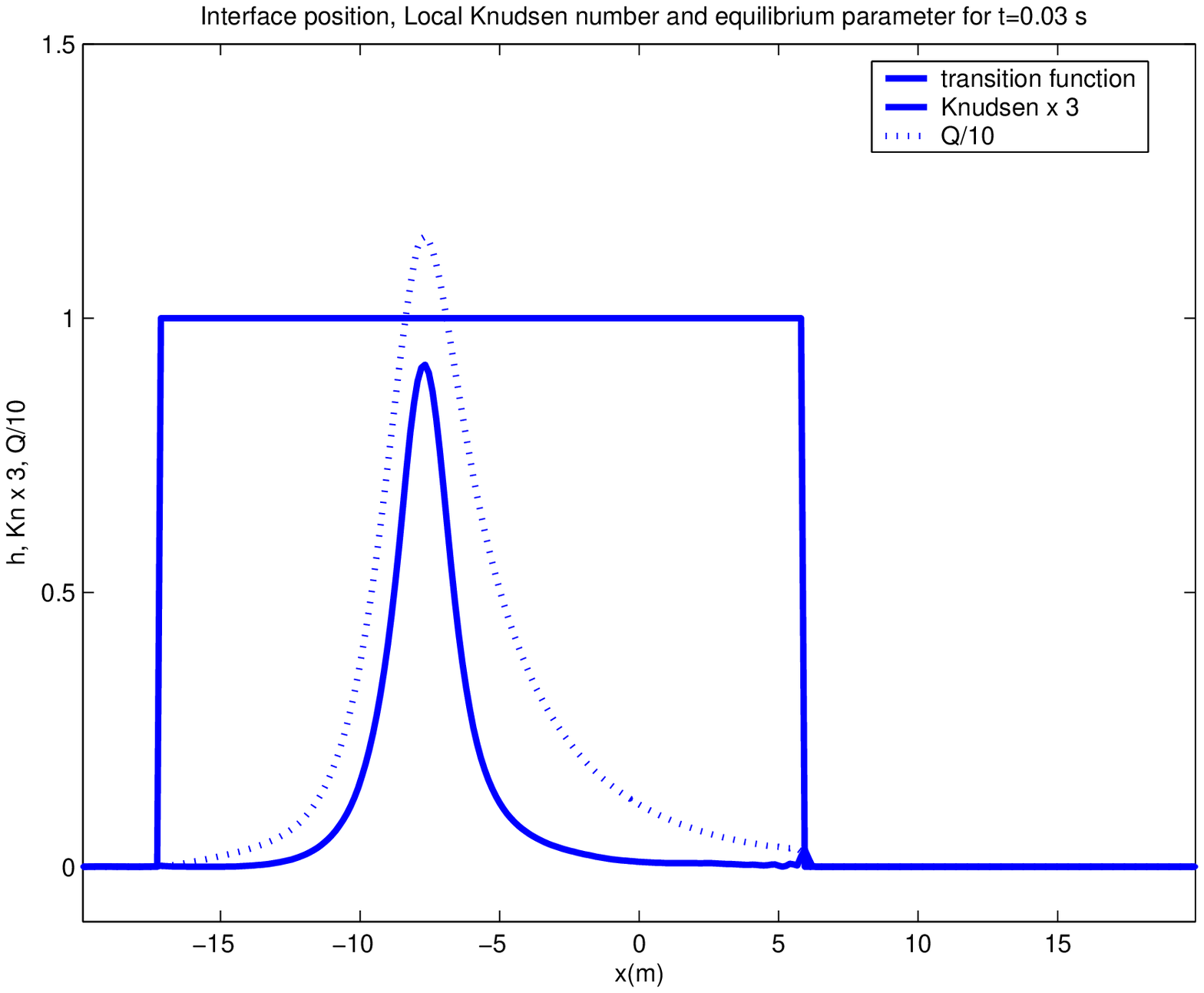}
\includegraphics[scale=0.34]{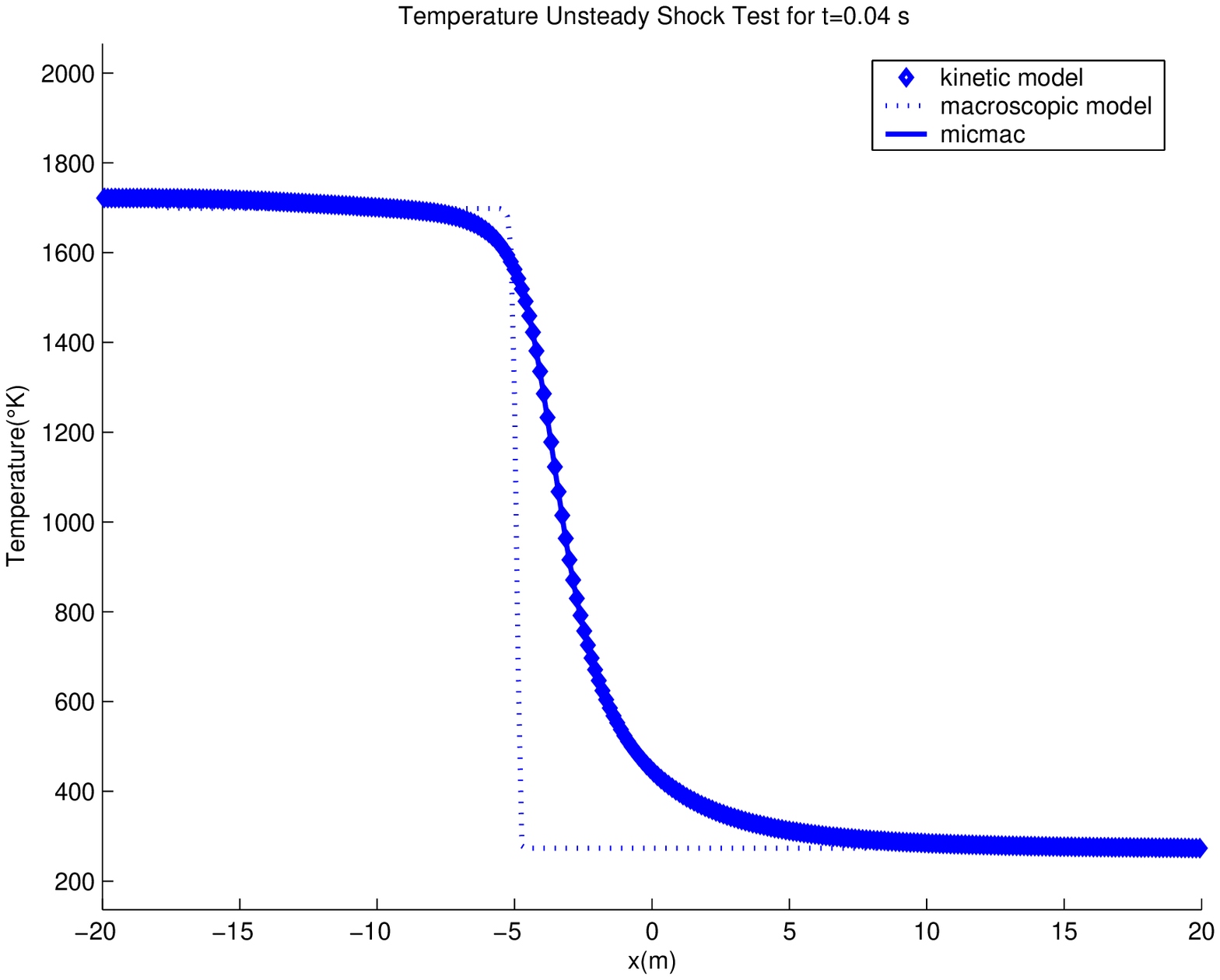}
\includegraphics[scale=0.34]{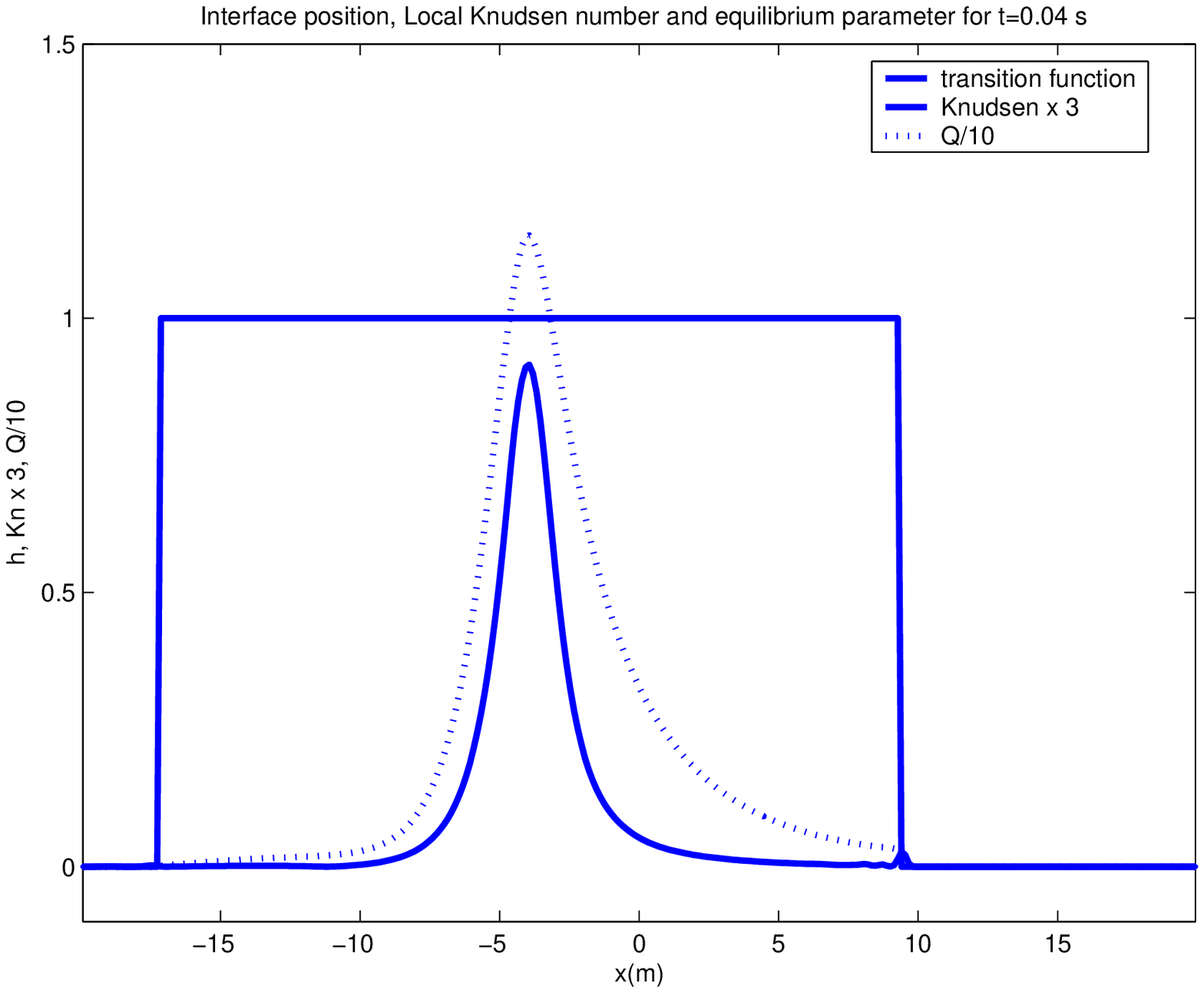}
\caption{Unsteady Shock 2: Solution at $t=1\times 10^{-2}$ top,
$t=2\times 10^{-2}$ middle top, $t=3\times 10^{-2}$ middle bottom,
$t=4\times 10^{-2}$ bottom, temperature left, transition function,
Knudsen number and heat flux right. \label{UST2.2}}
\end{center}
\end{figure}


\begin{figure}
\begin{center}
\includegraphics[scale=0.34]{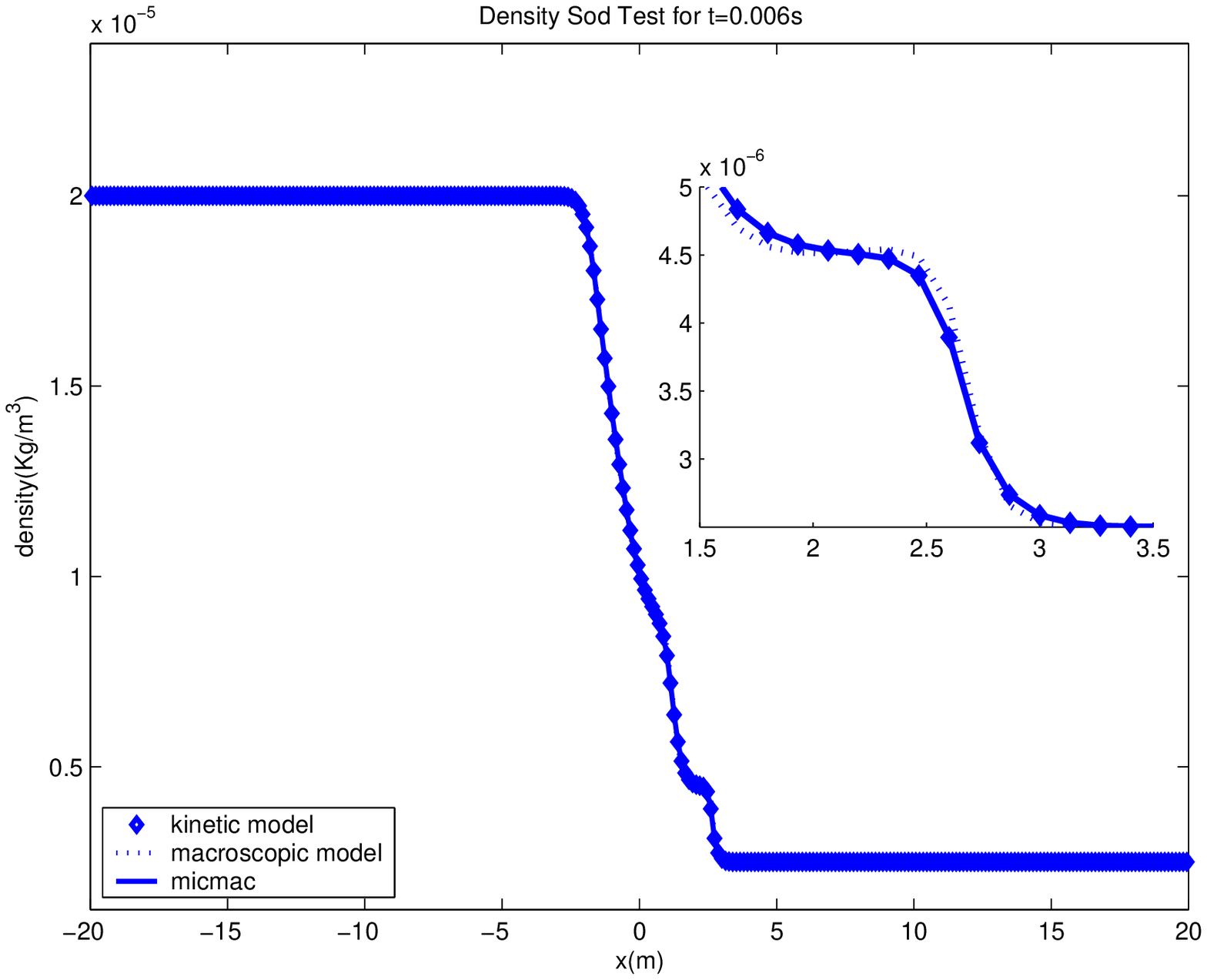}
\includegraphics[scale=0.34]{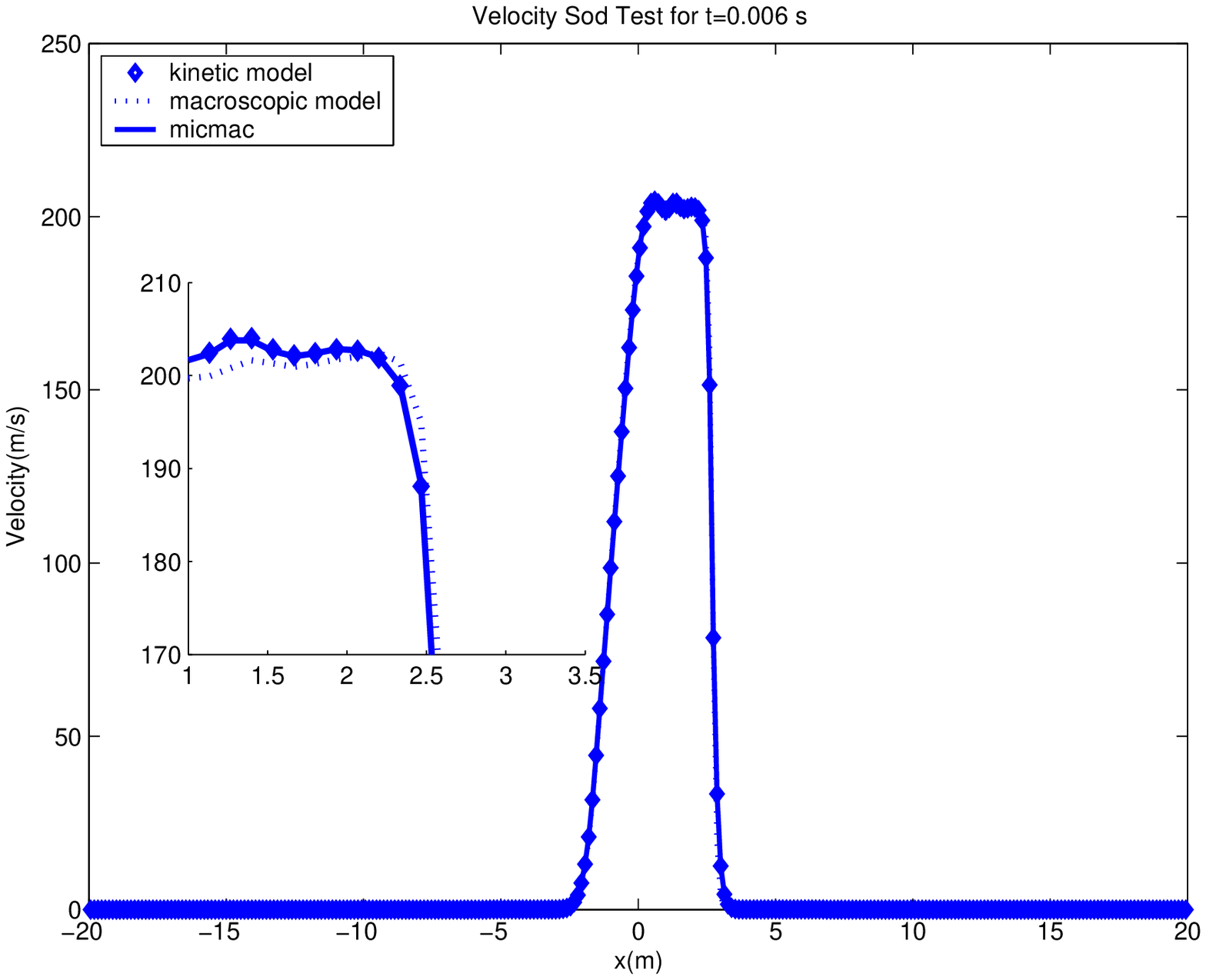}
\includegraphics[scale=0.34]{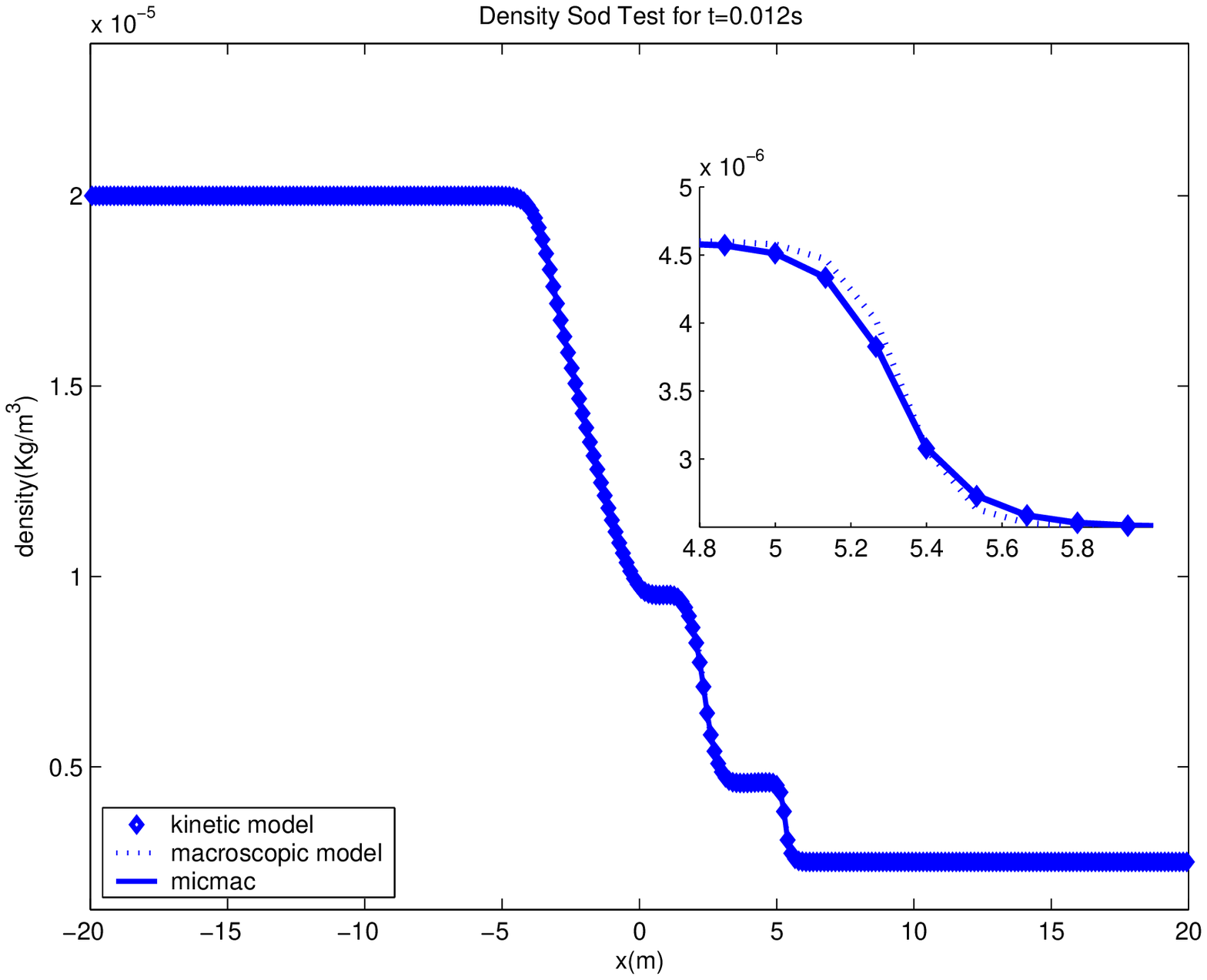}
\includegraphics[scale=0.34]{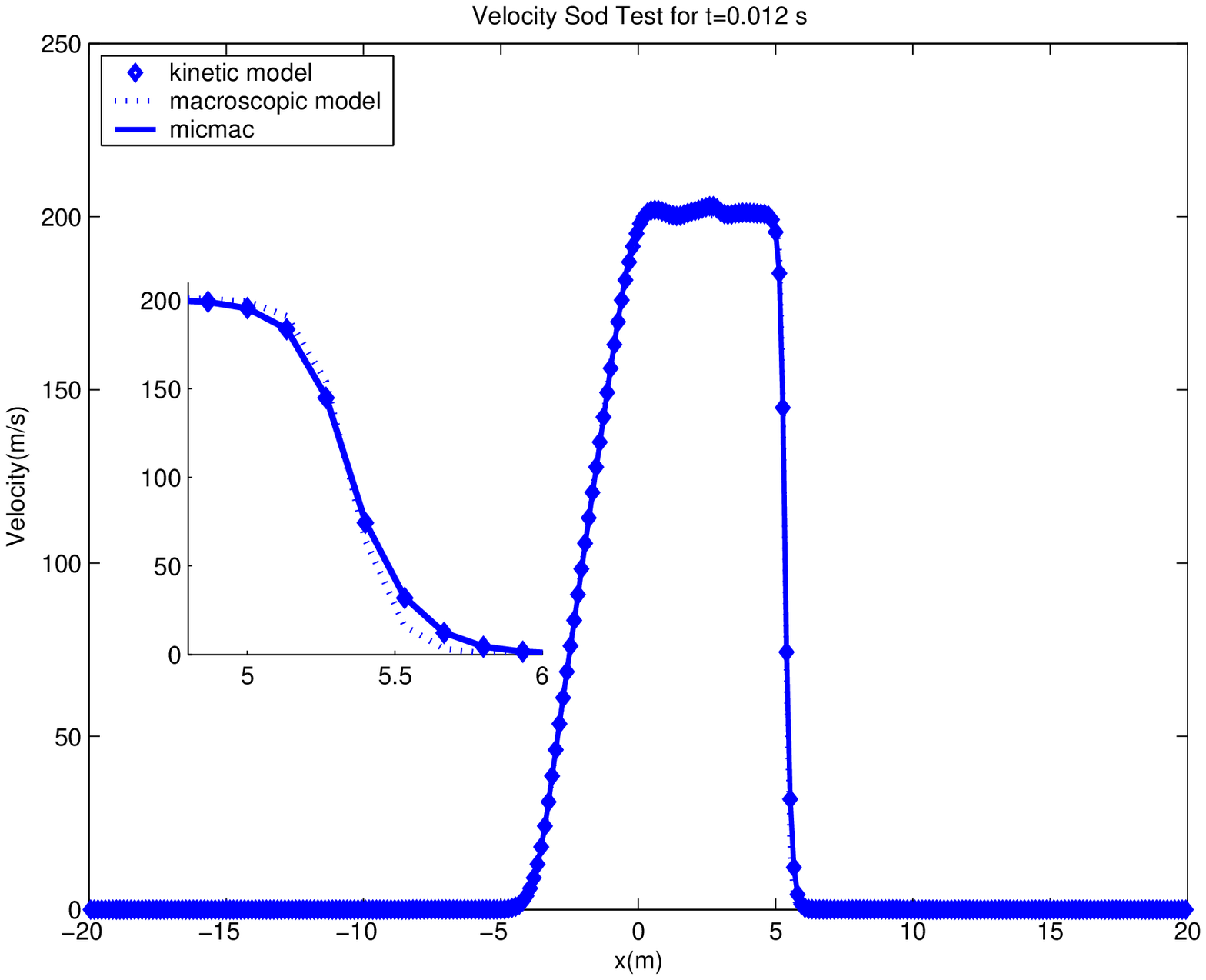}
\includegraphics[scale=0.34]{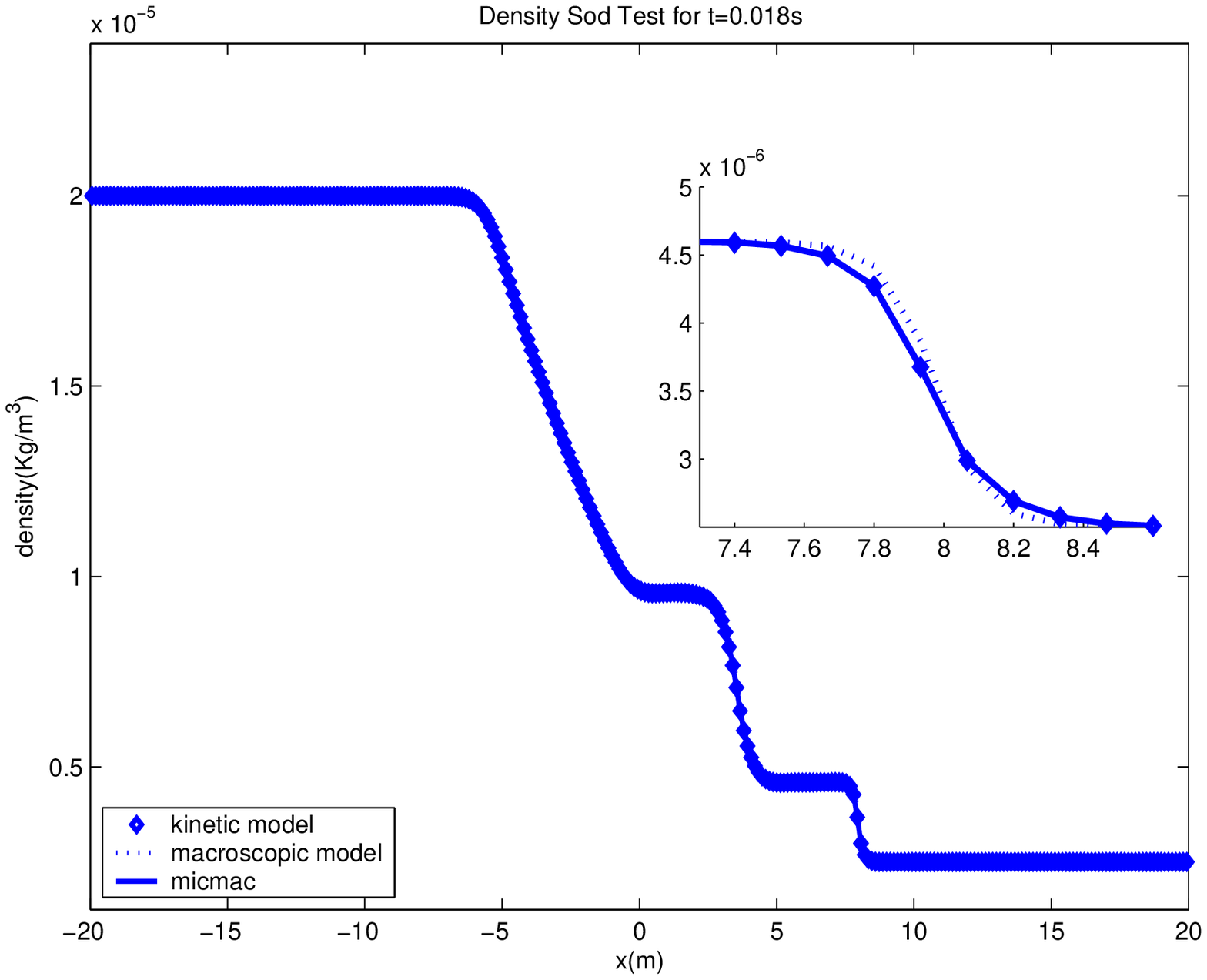}
\includegraphics[scale=0.34]{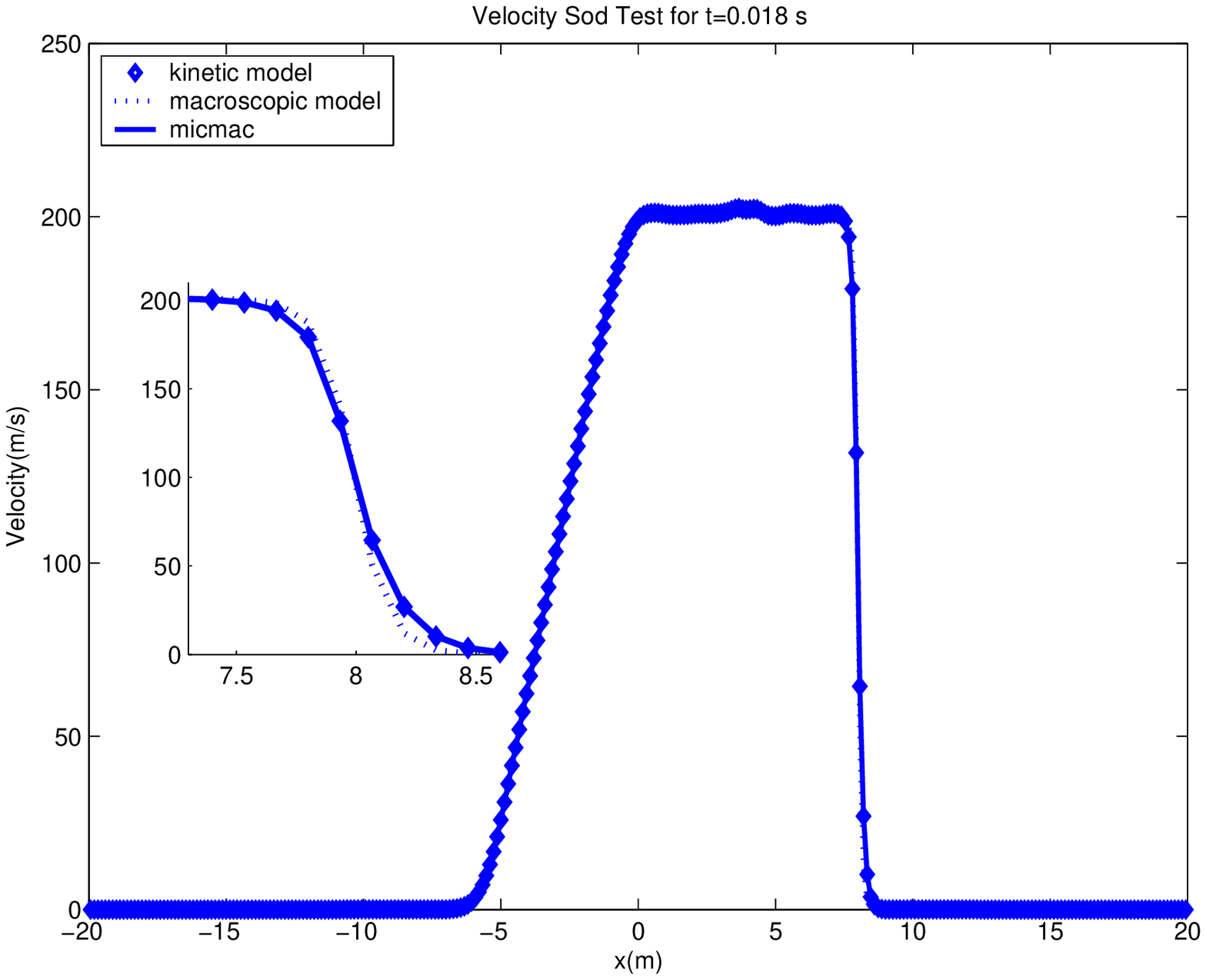}
\includegraphics[scale=0.34]{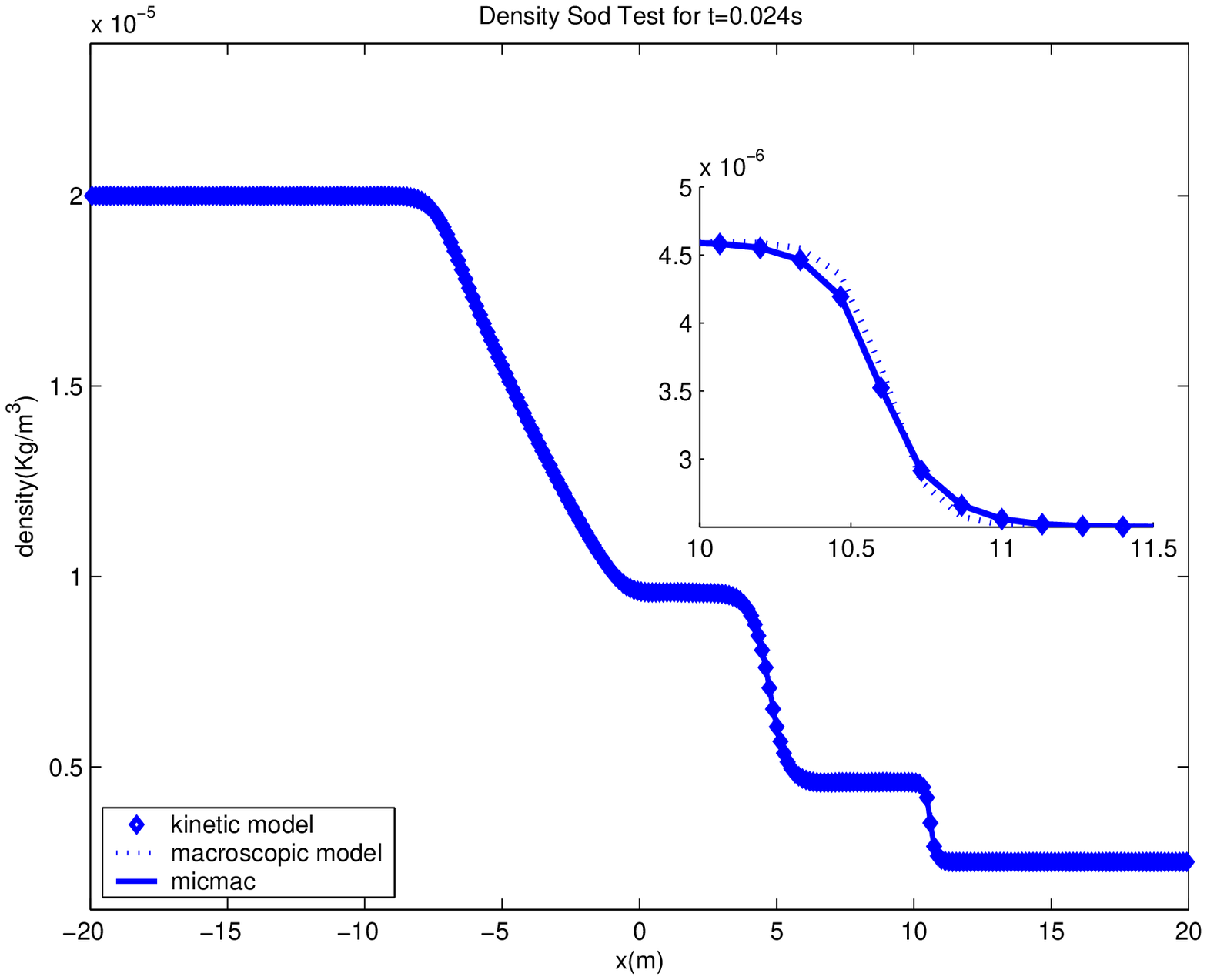}
\includegraphics[scale=0.34]{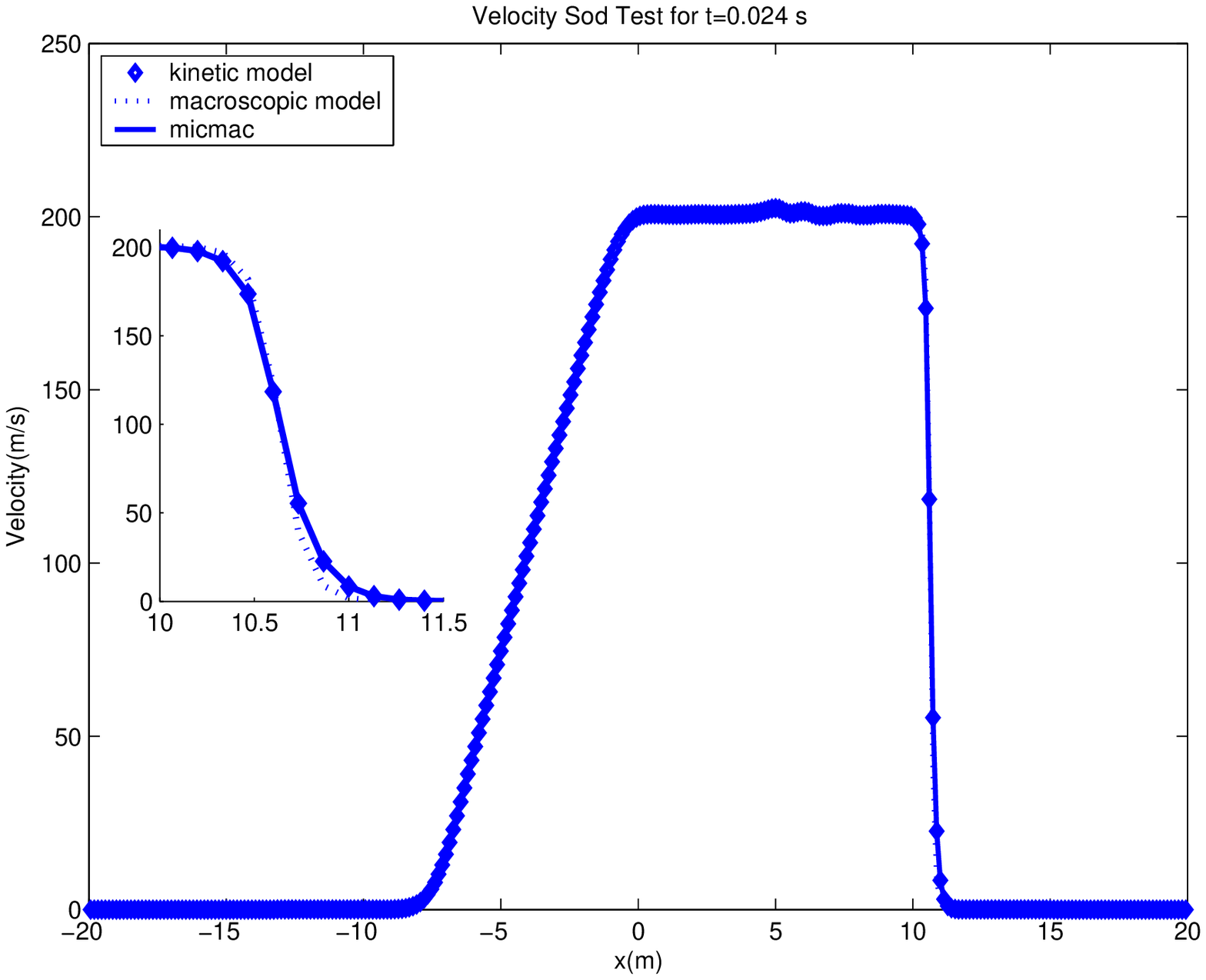}
\caption{Sod Test 1: Solution at $t=0.6\times 10^{-2}$ top,
$t=1.2\times 10^{-2}$ middle top, $t=1.8\times 10^{-2}$ middle
bottom, $t=2.4\times 10^{-2}$ bottom, density left, velocity  right.
The small panels are a magnification of the solution close to non
equilibrium regions. \label{sod1.1}}
\end{center}
\end{figure}

\begin{figure}
\begin{center}
\includegraphics[scale=0.34]{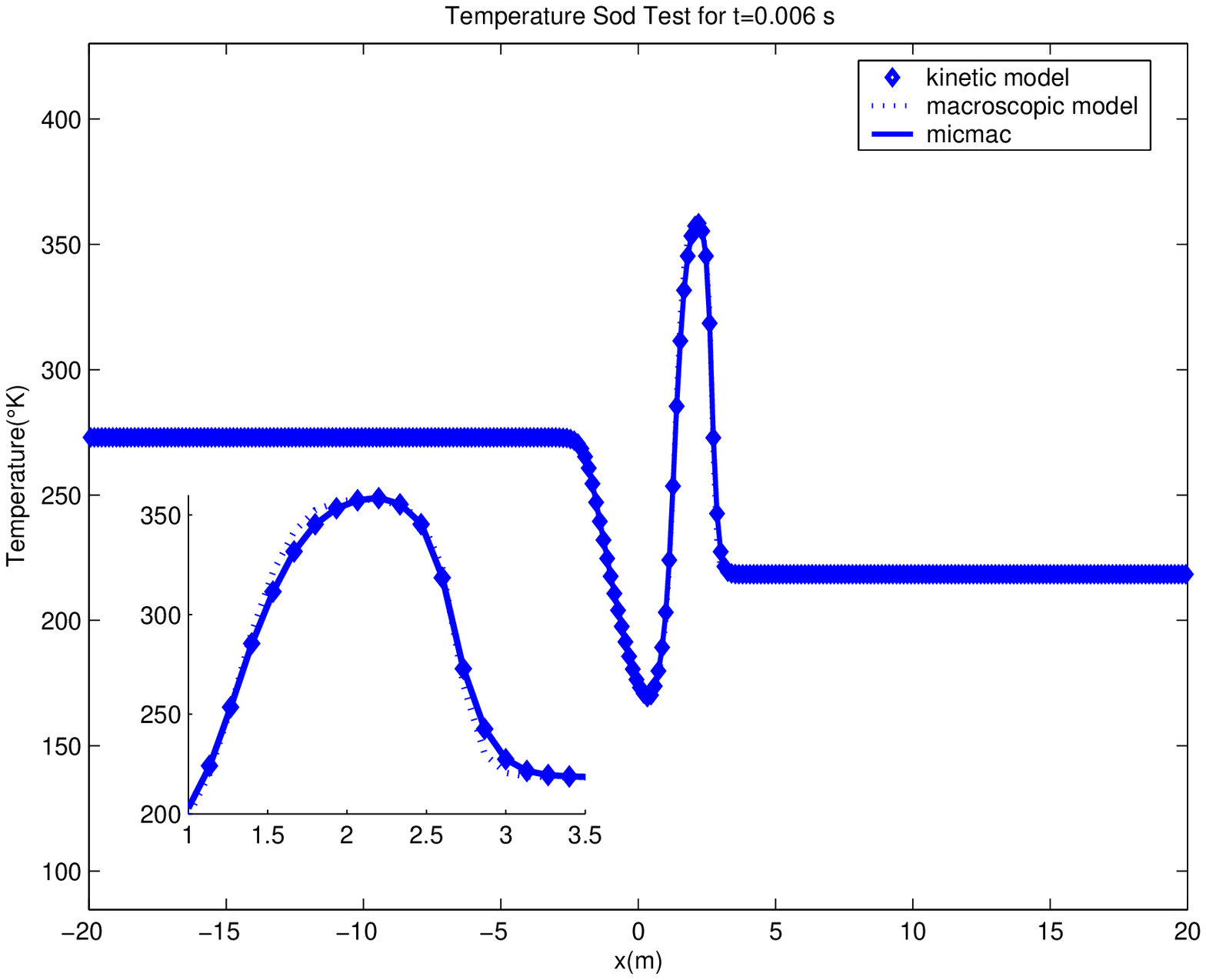}
\includegraphics[scale=0.34]{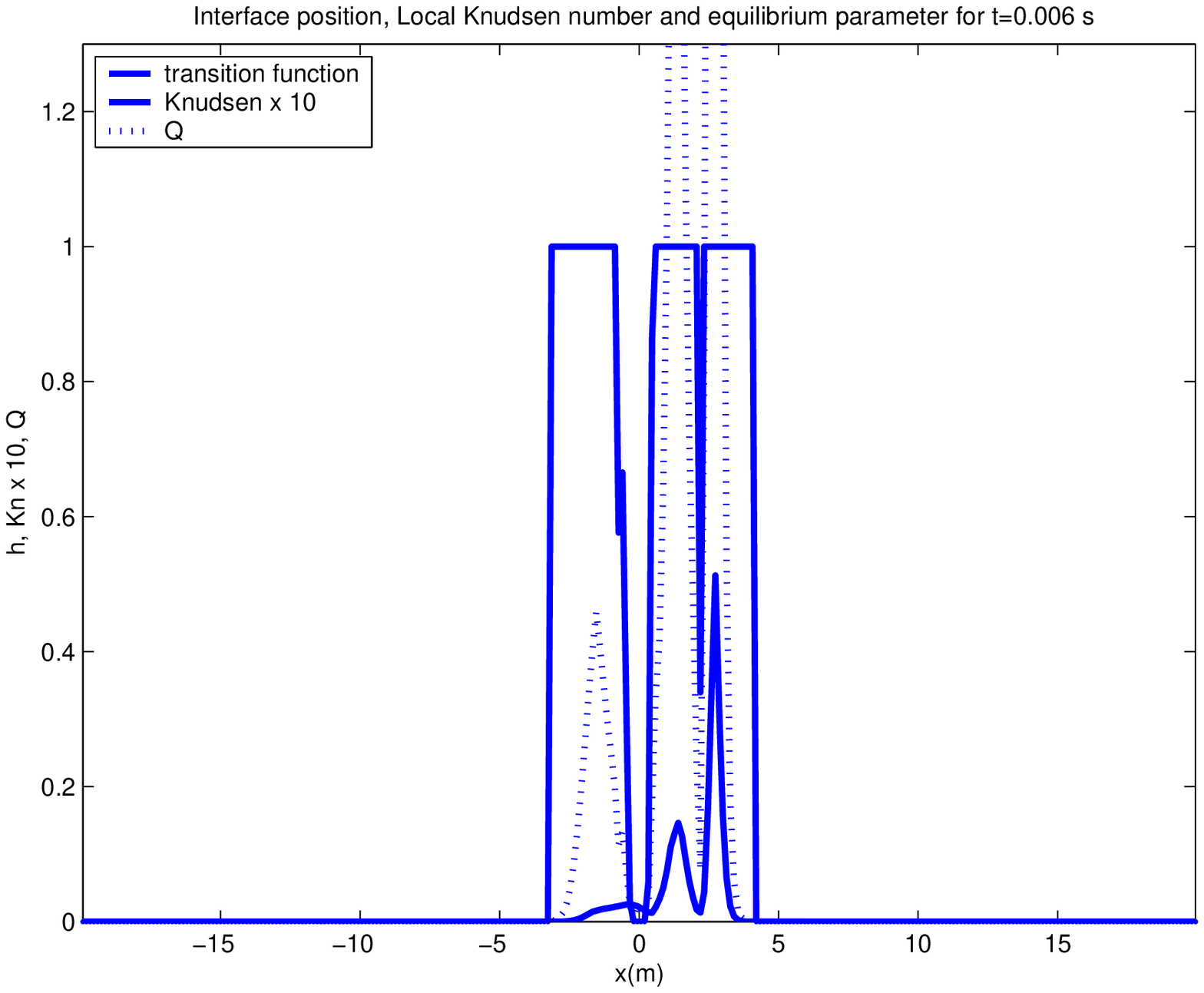}
\includegraphics[scale=0.34]{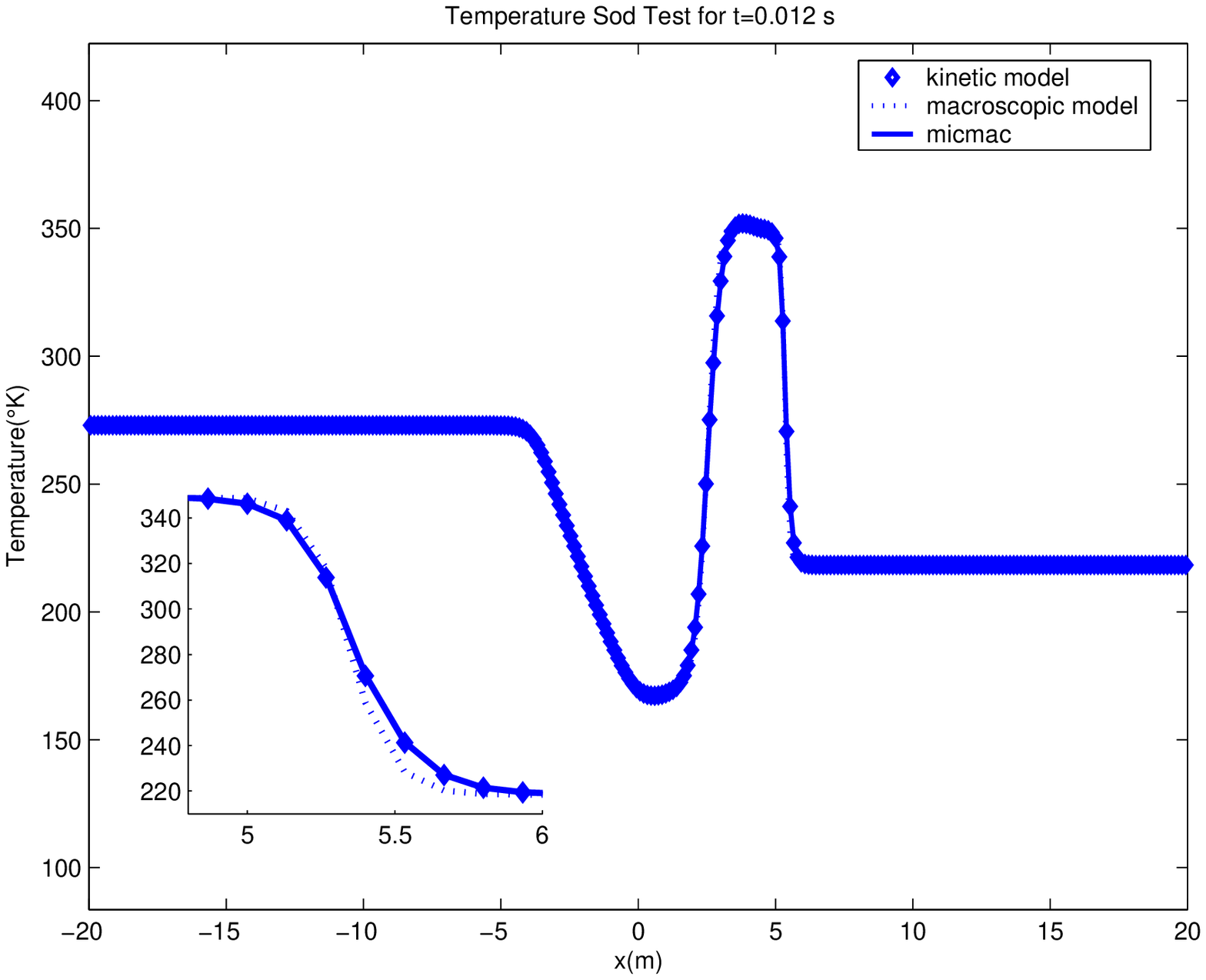}
\includegraphics[scale=0.34]{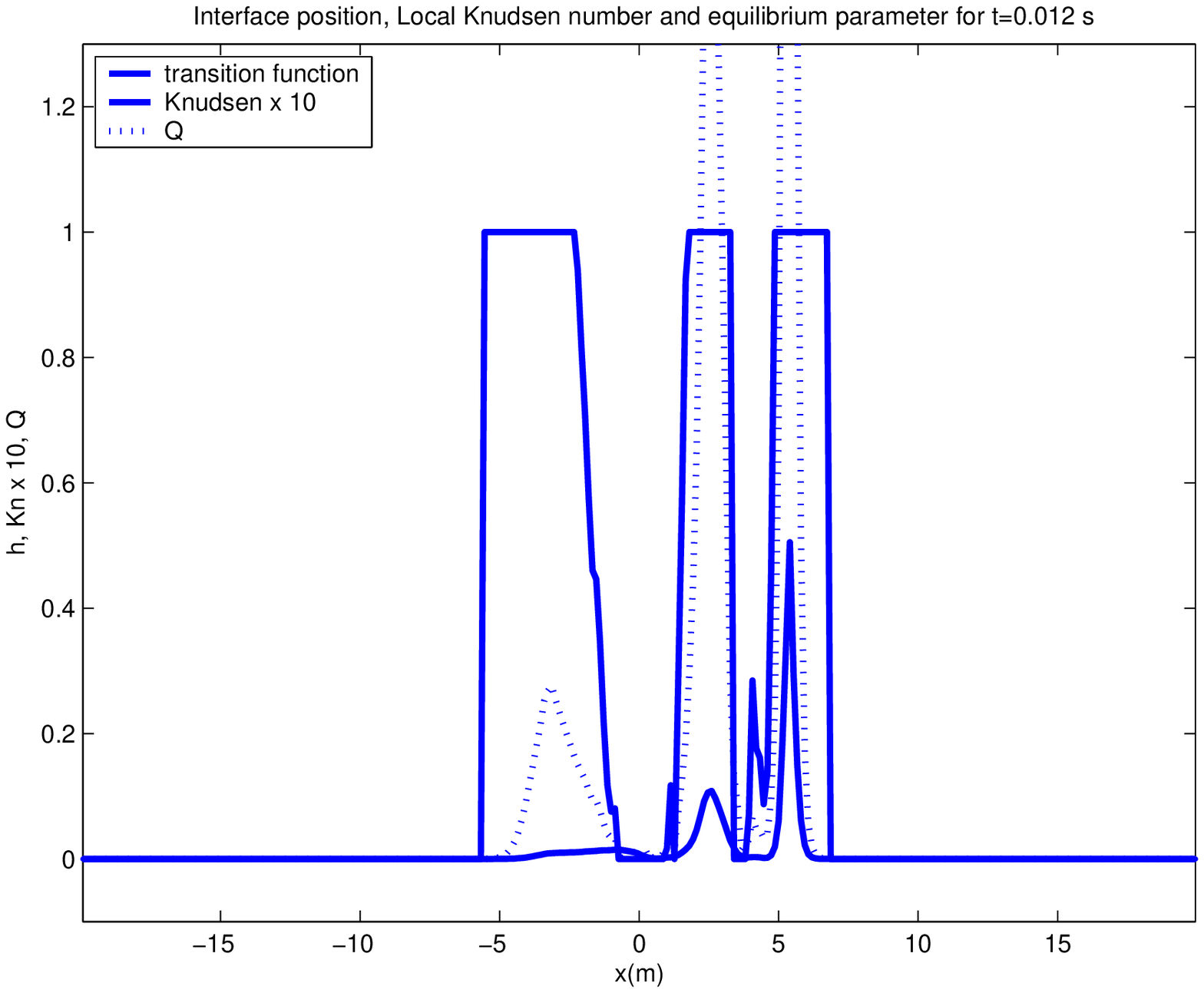}
\includegraphics[scale=0.34]{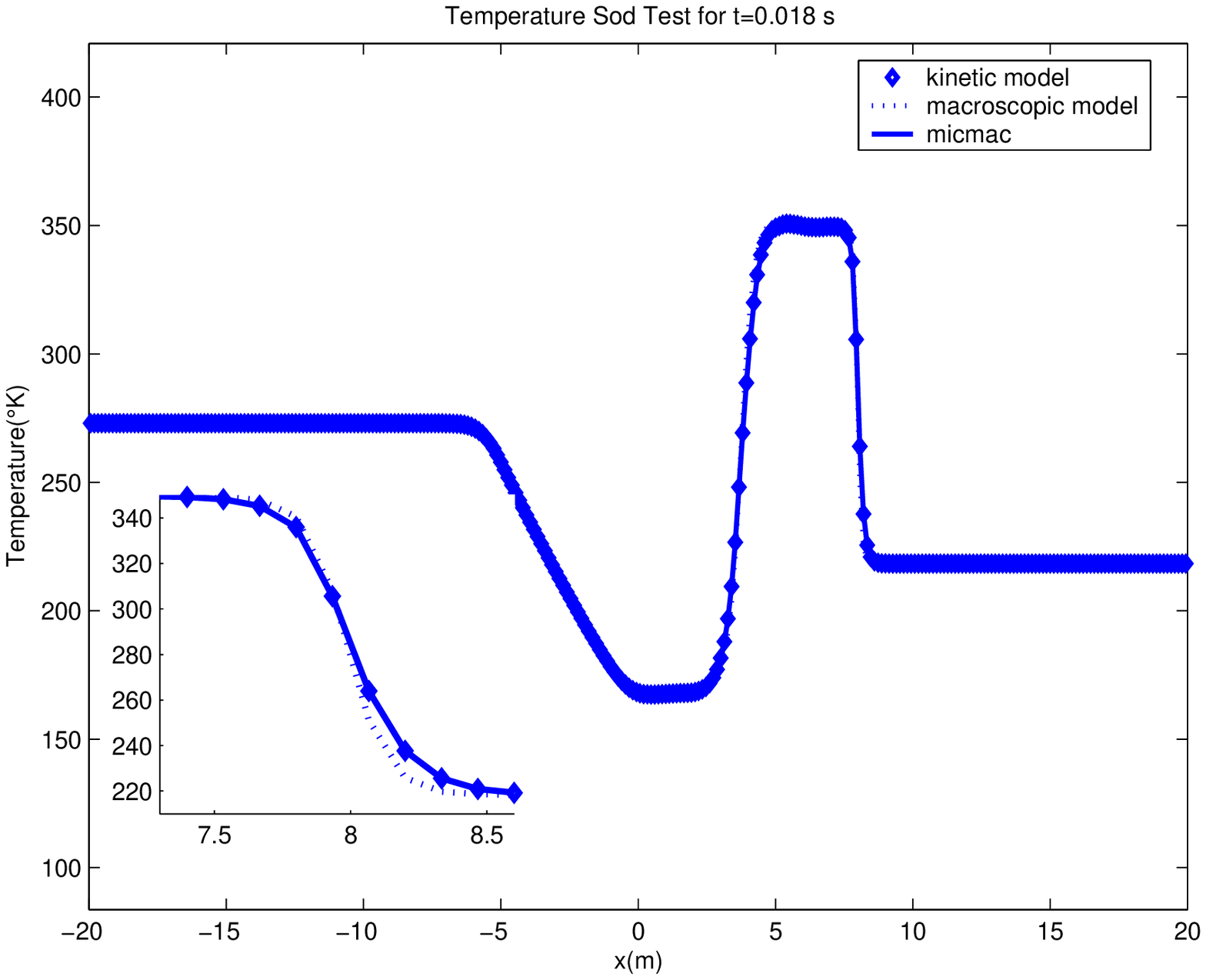}
\includegraphics[scale=0.34]{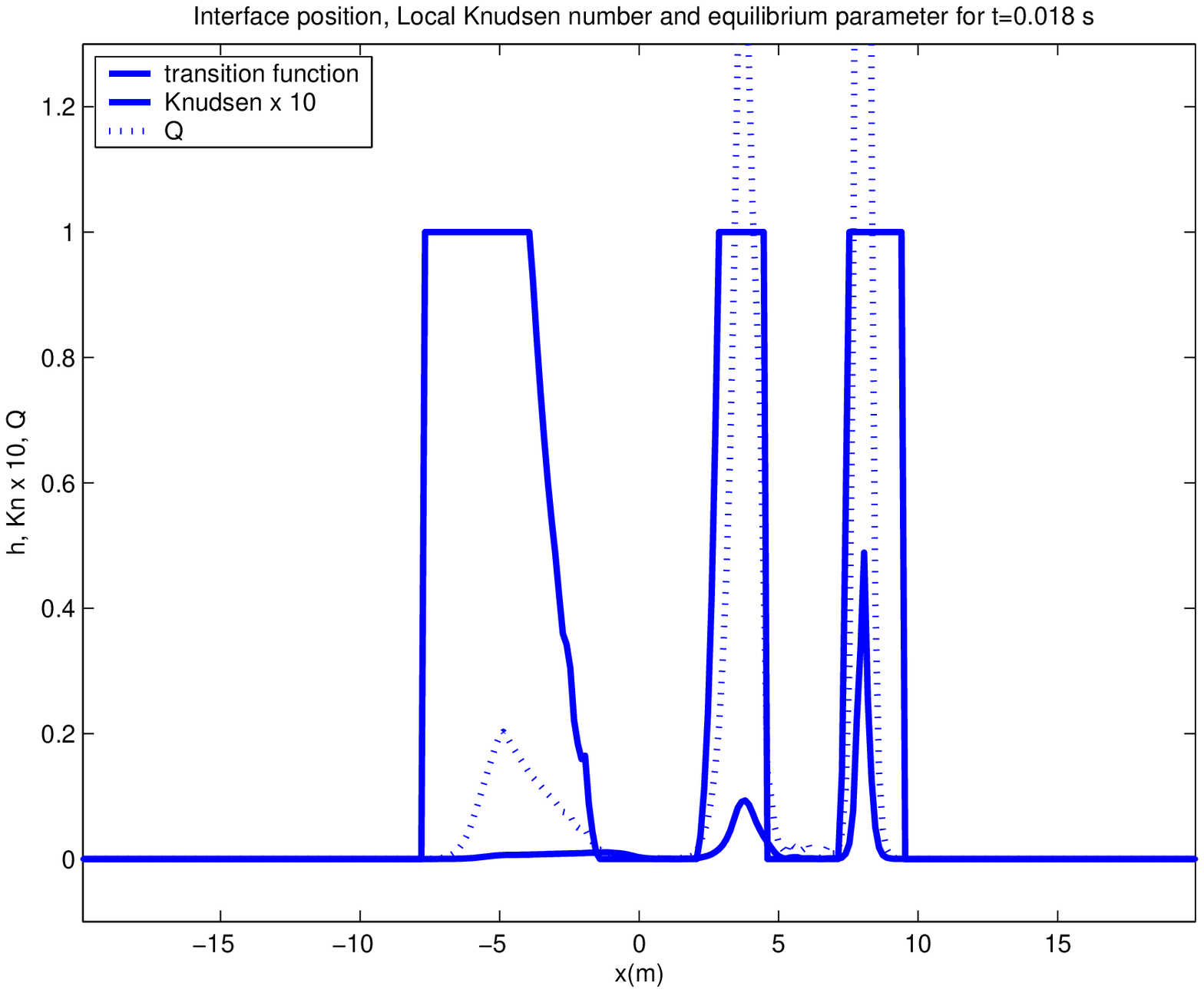}
\includegraphics[scale=0.34]{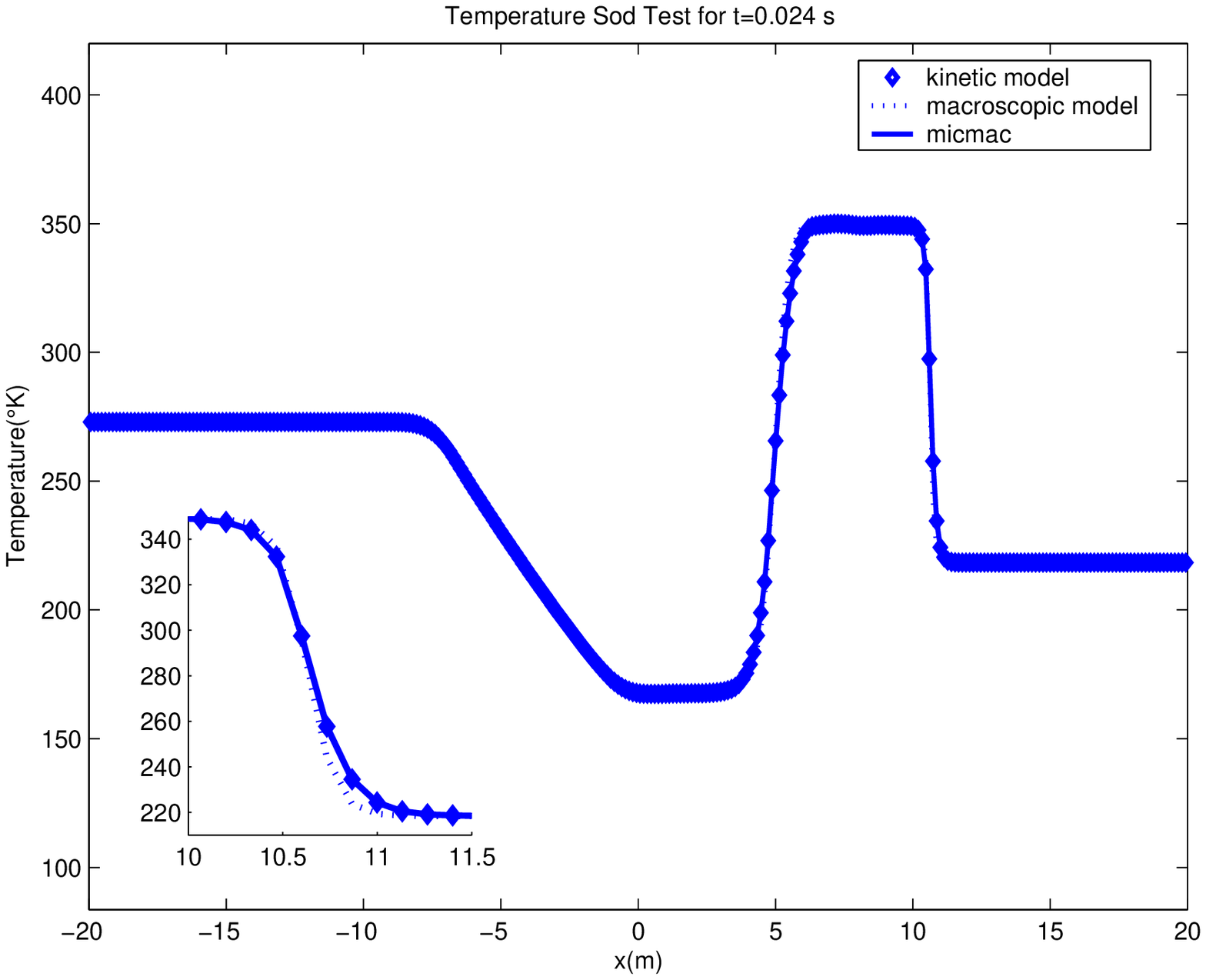}
\includegraphics[scale=0.34]{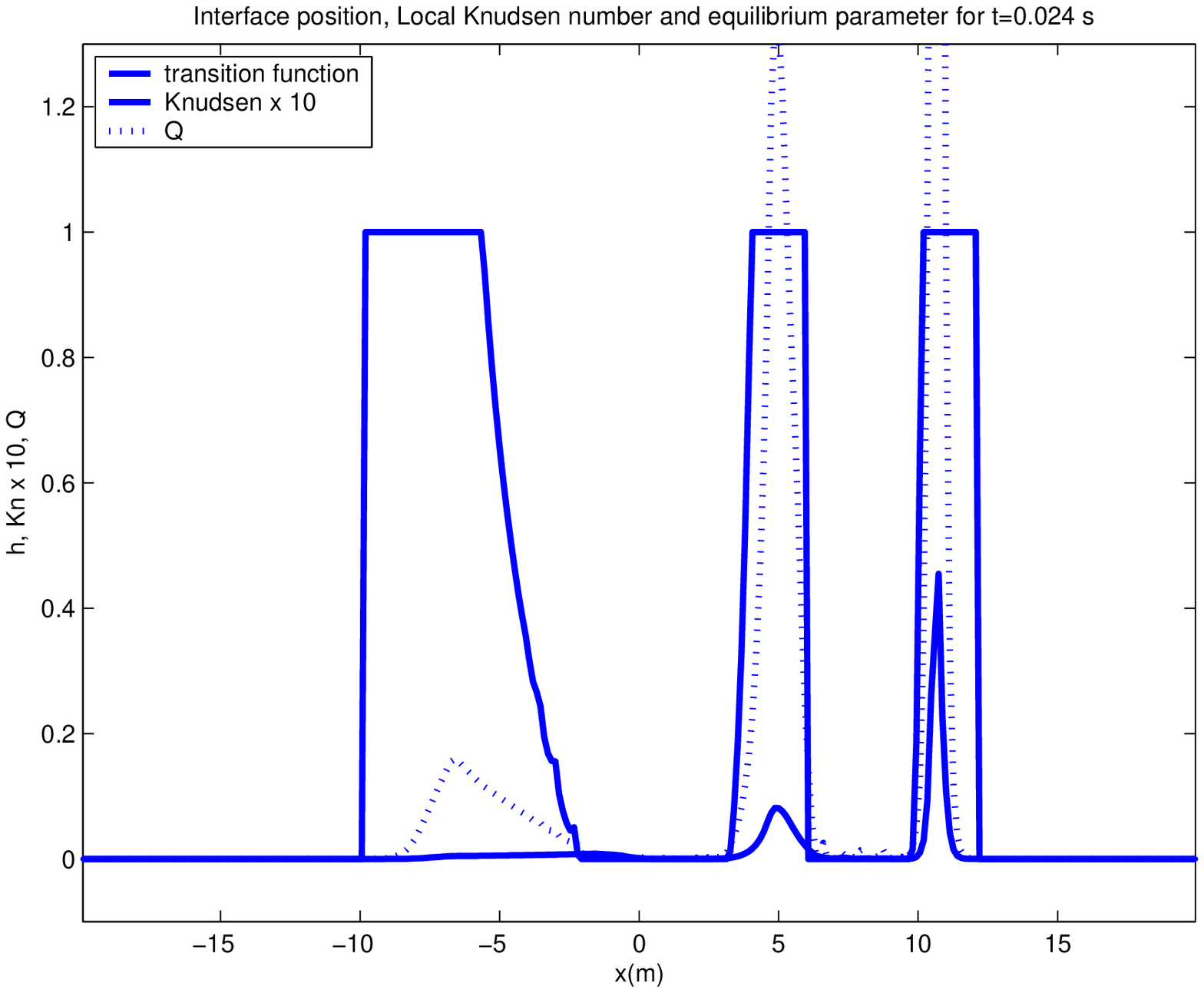}
\caption{Sod Test 1: Solution at $t=0.6\times 10^{-2}$ top,
$t=1.2\times 10^{-2}$ middle top, $t=1.8\times 10^{-2}$ middle
bottom, $t=2.4\times 10^{-2}$ bottom, temperature left, transition
function , Knudsen number and heat flux right. The small panels are
a magnification of the solution close to non equilibrium
regions.\label{sod1.2}}
\end{center}
\end{figure}

\begin{figure}
\begin{center}
\includegraphics[scale=0.34]{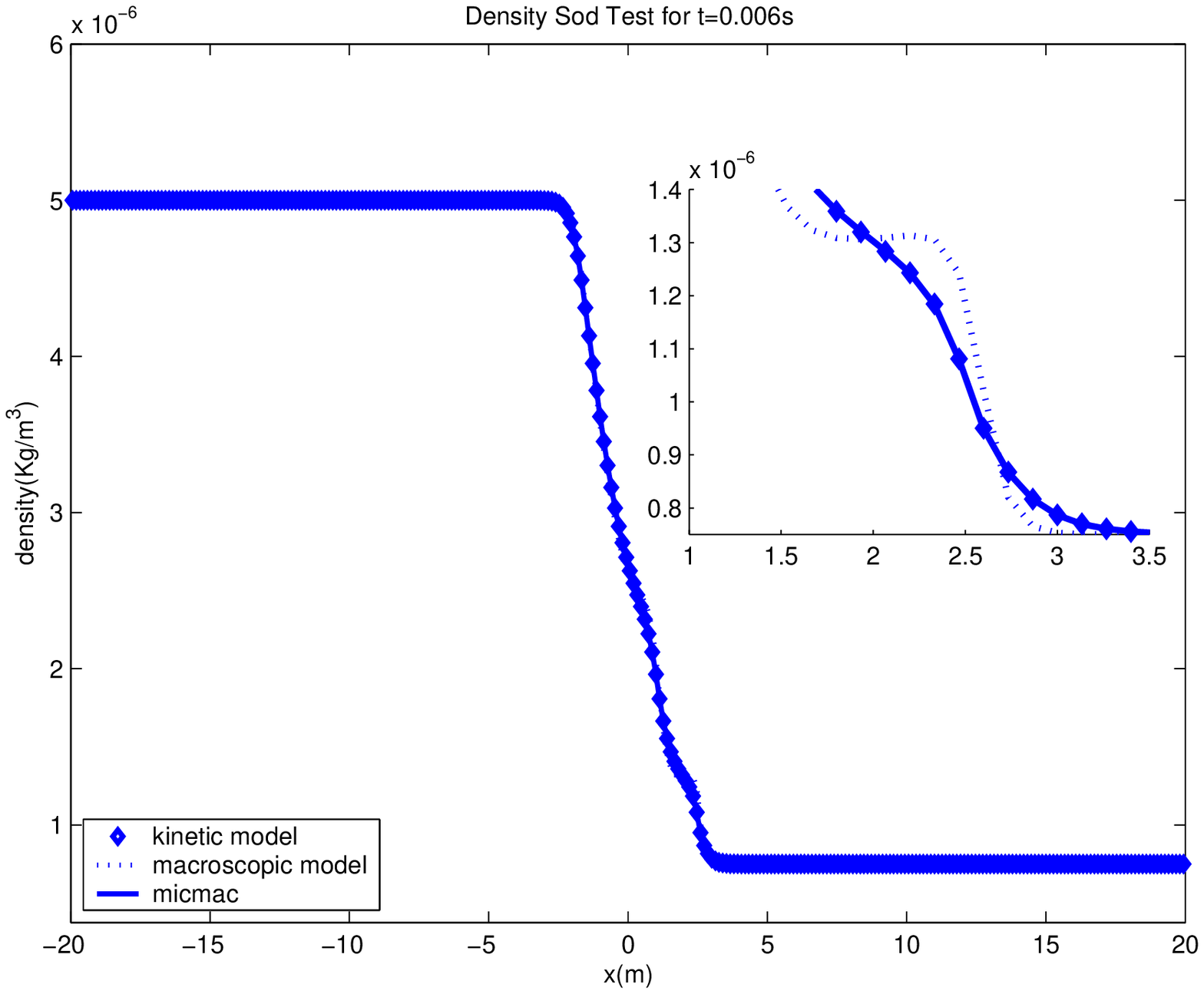}
\includegraphics[scale=0.34]{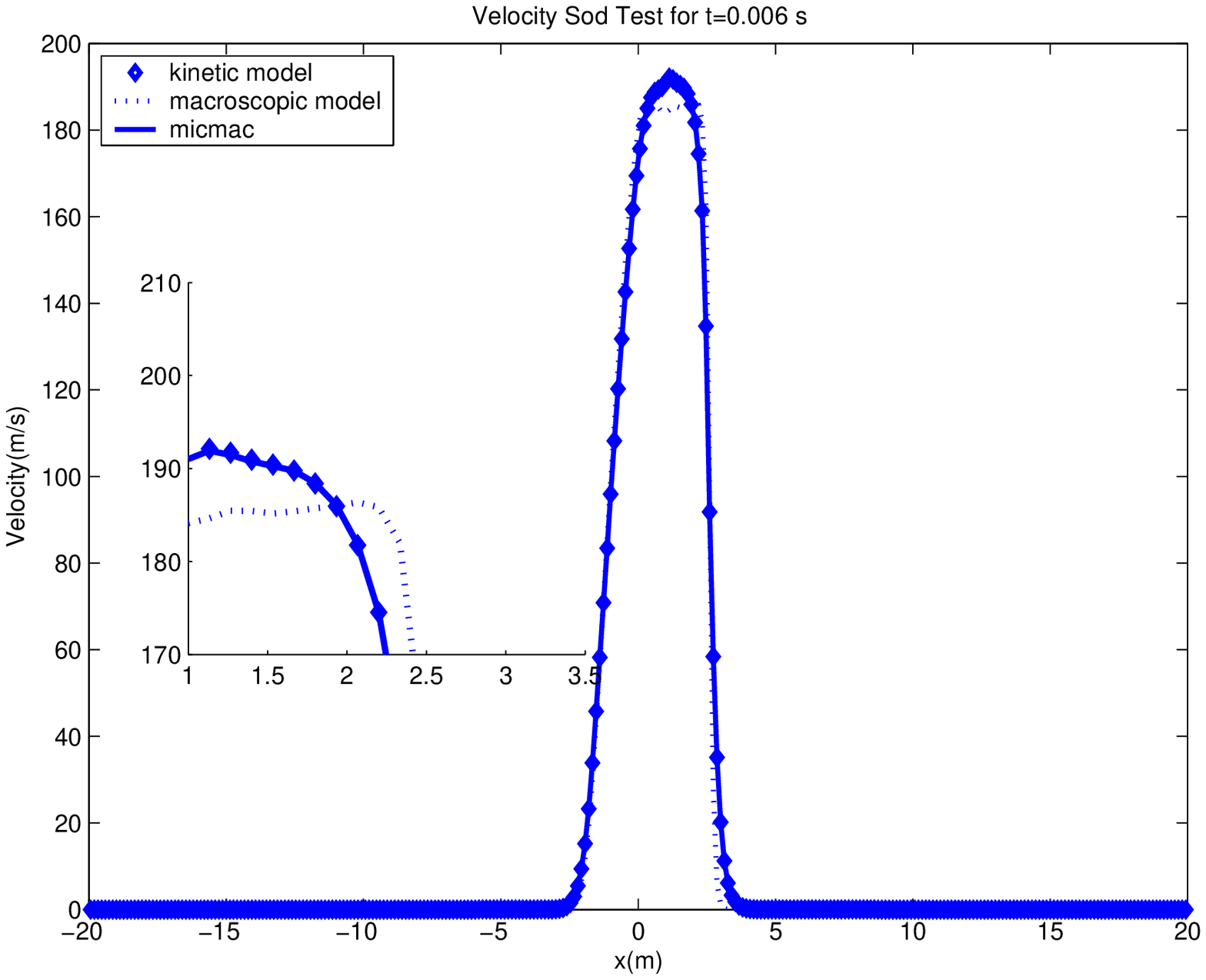}
\includegraphics[scale=0.34]{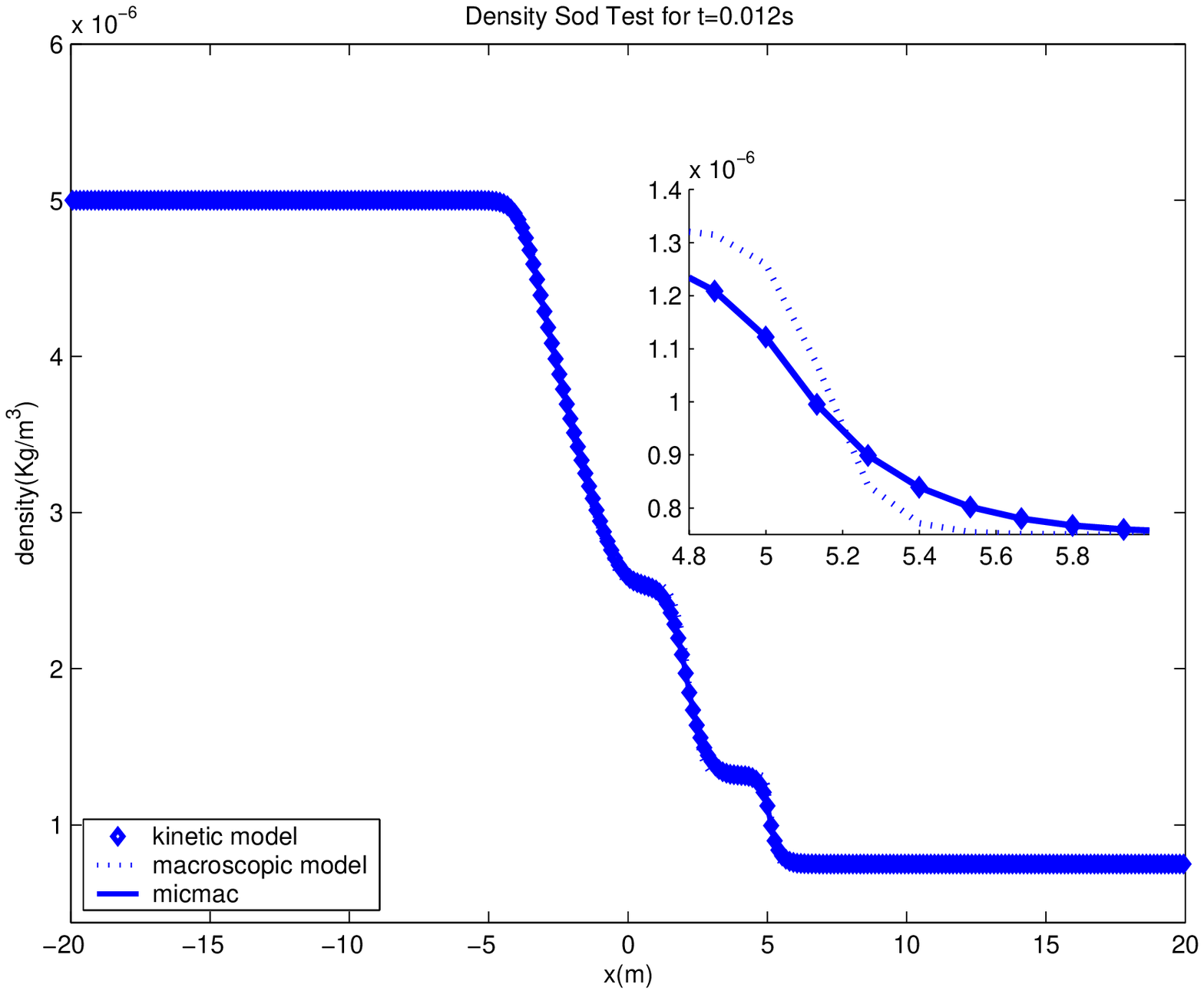}
\includegraphics[scale=0.34]{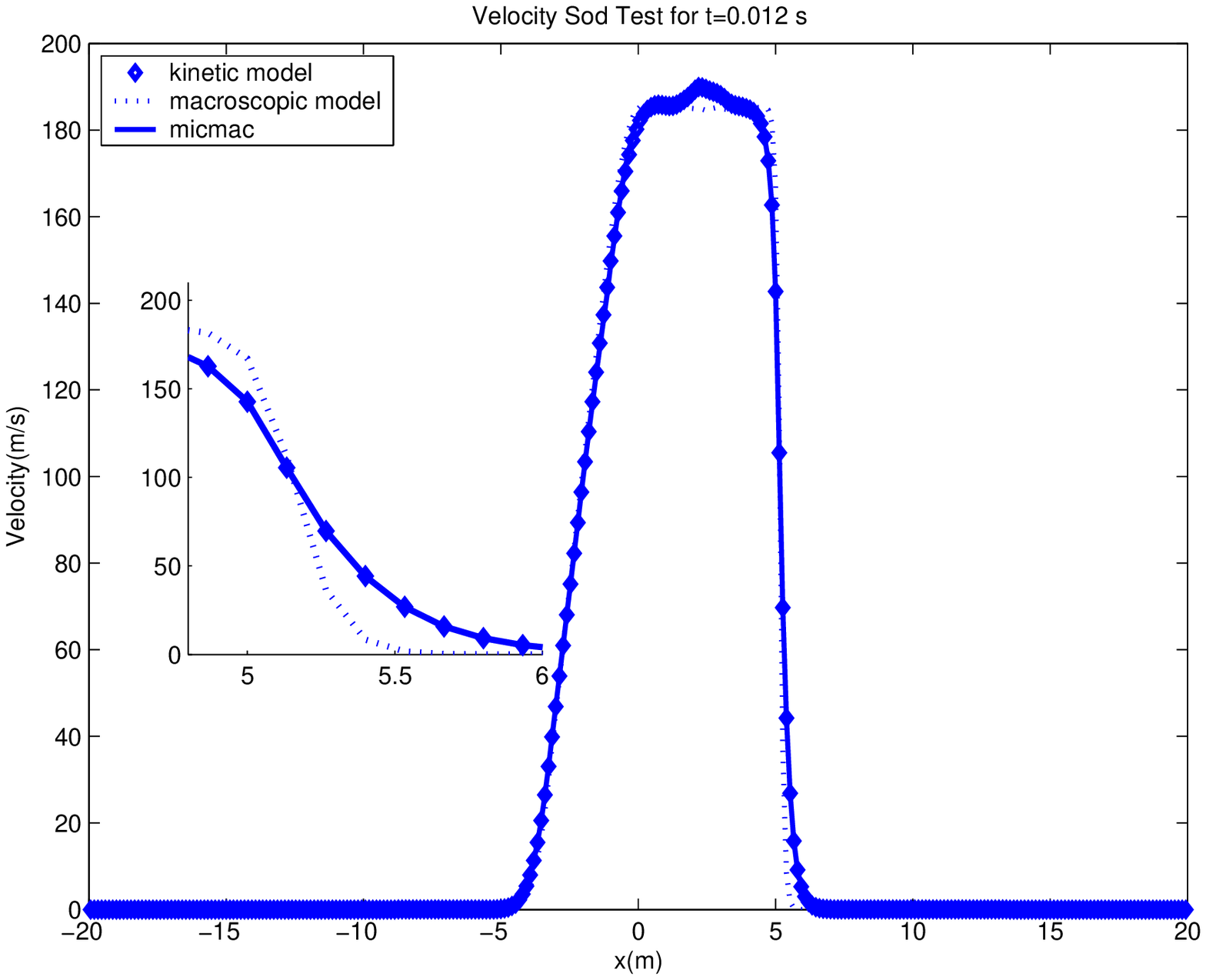}
\includegraphics[scale=0.34]{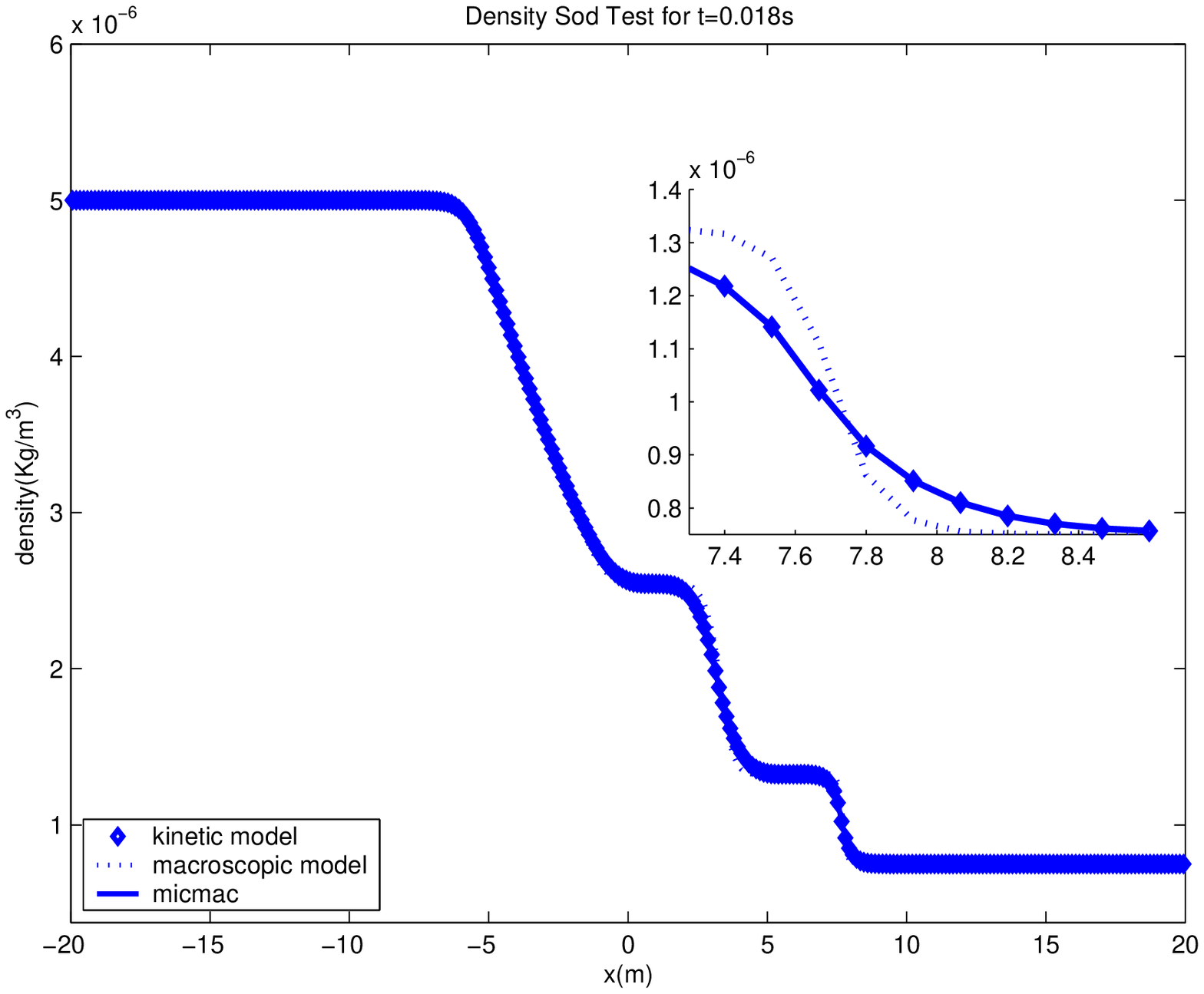}
\includegraphics[scale=0.34]{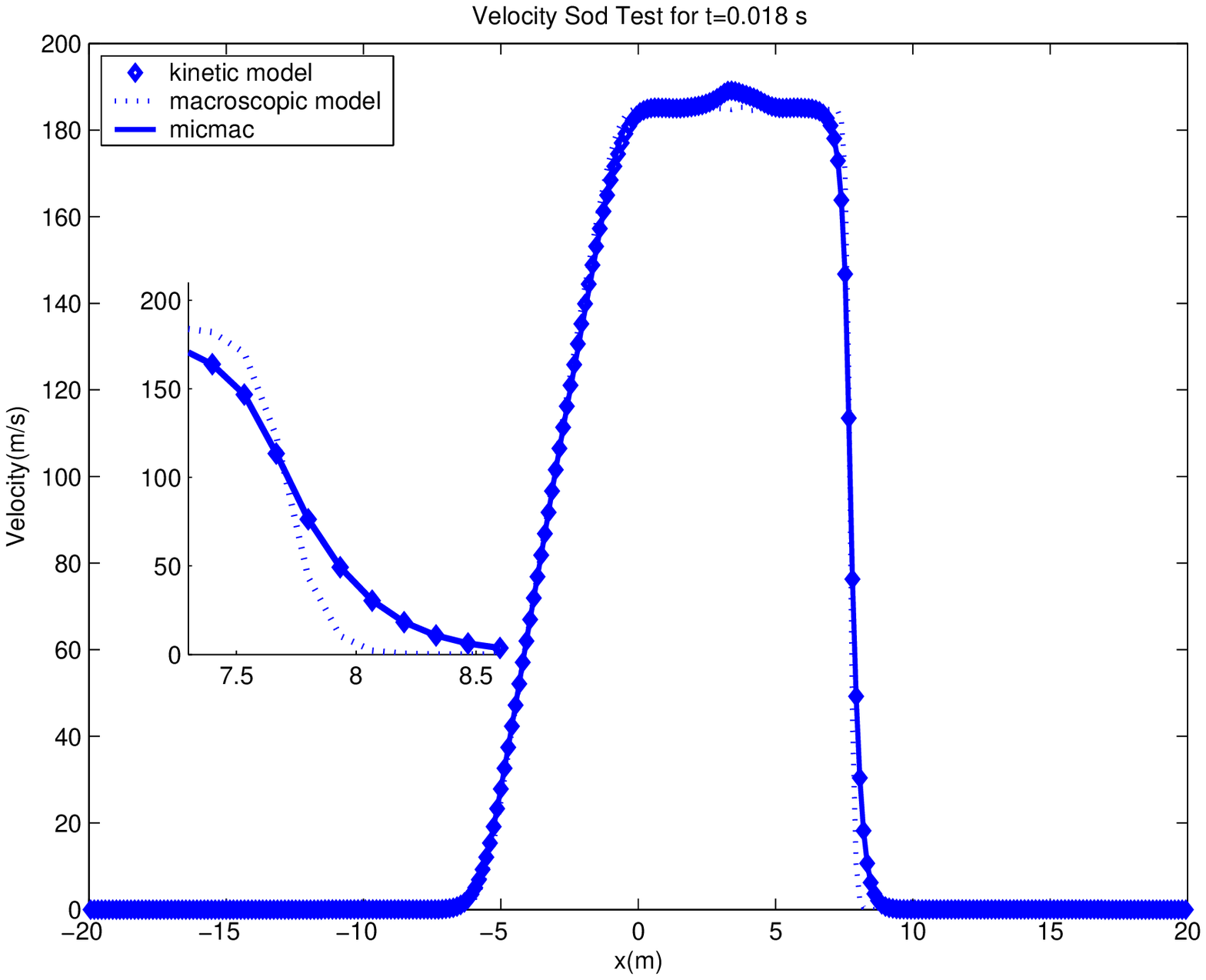}
\includegraphics[scale=0.34]{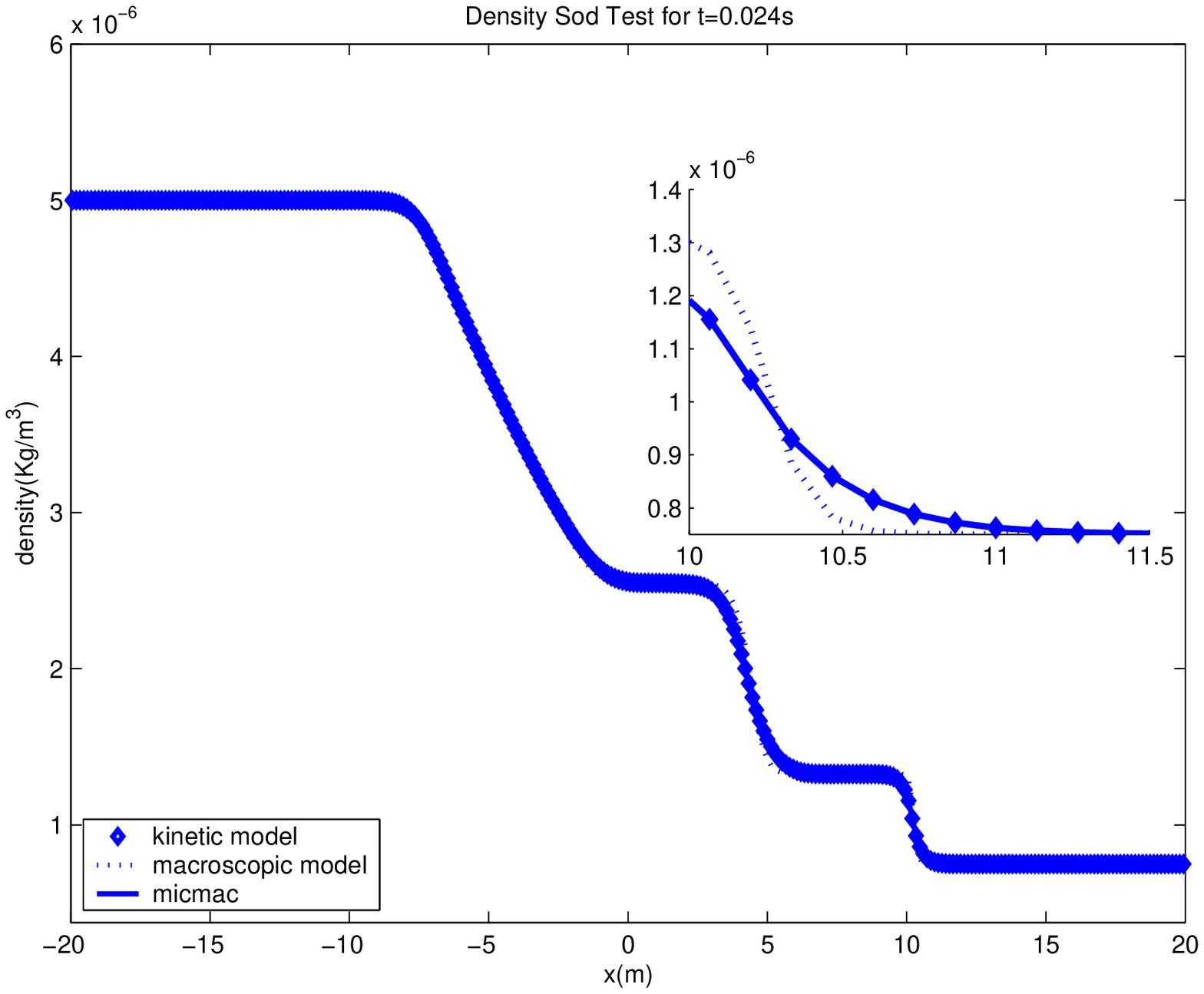}
\includegraphics[scale=0.34]{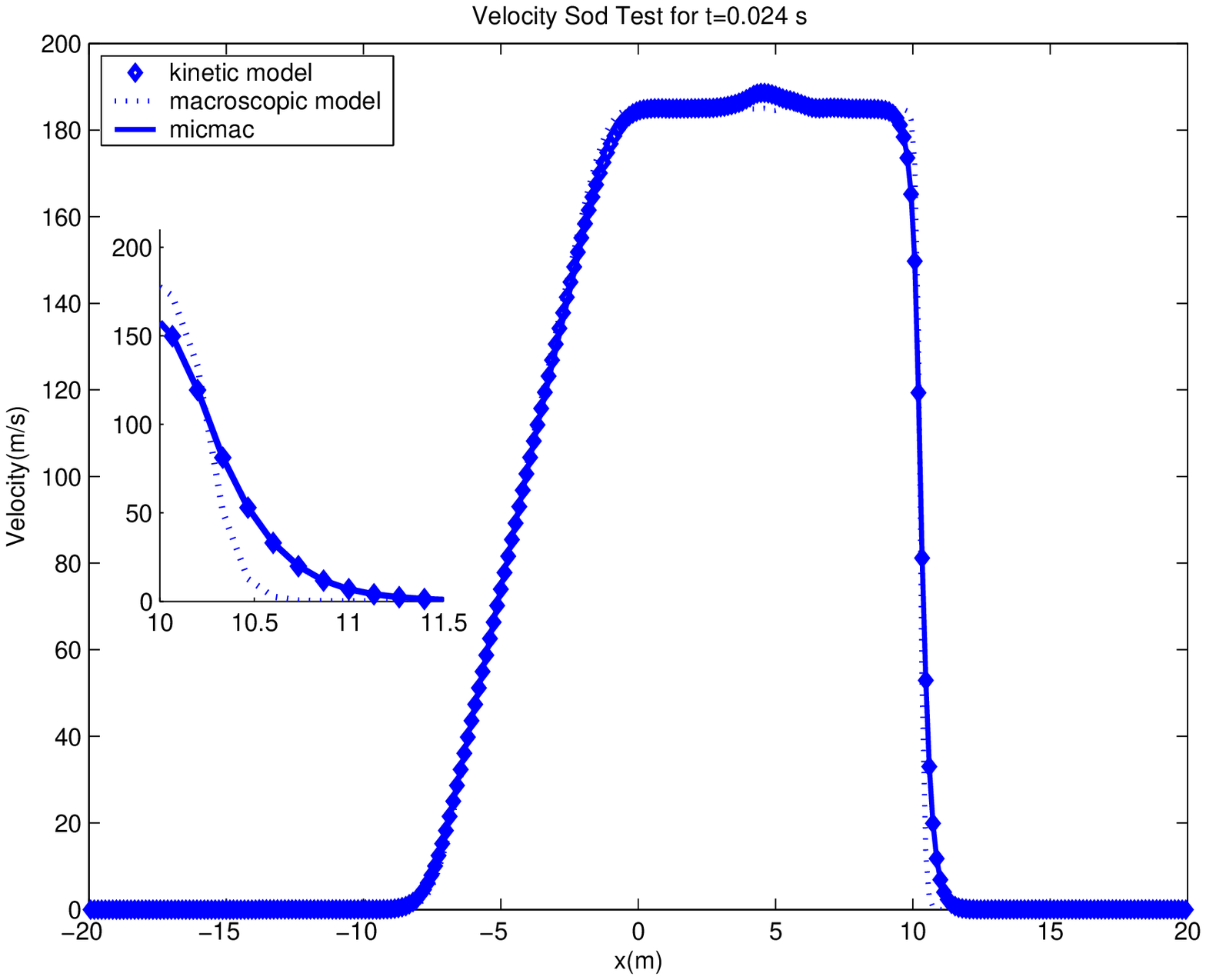}
\caption{Sod Test 2: Solution at $t=0.6\times 10^{-2}$ top,
$t=1.2\times 10^{-2}$ middle top, $t=1.8\times 10^{-2}$ middle
bottom, $t=2.4\times 10^{-2}$ bottom, density left, velocity  right.
The small panels are a magnification of the solution close to non
equilibrium regions.\label{sod2.1}}
\end{center}
\end{figure}

\begin{figure}
\begin{center}
\includegraphics[scale=0.34]{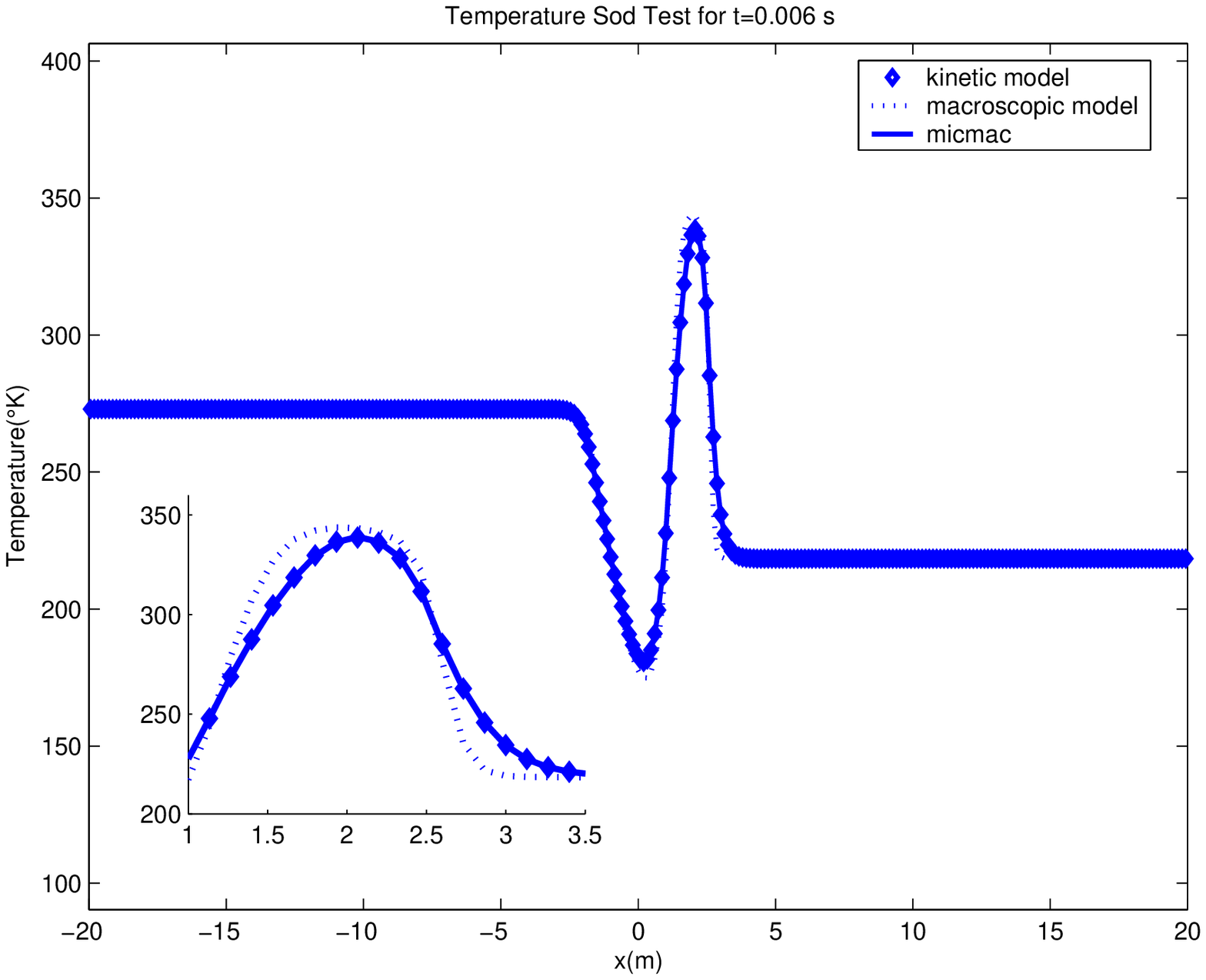}
\includegraphics[scale=0.34]{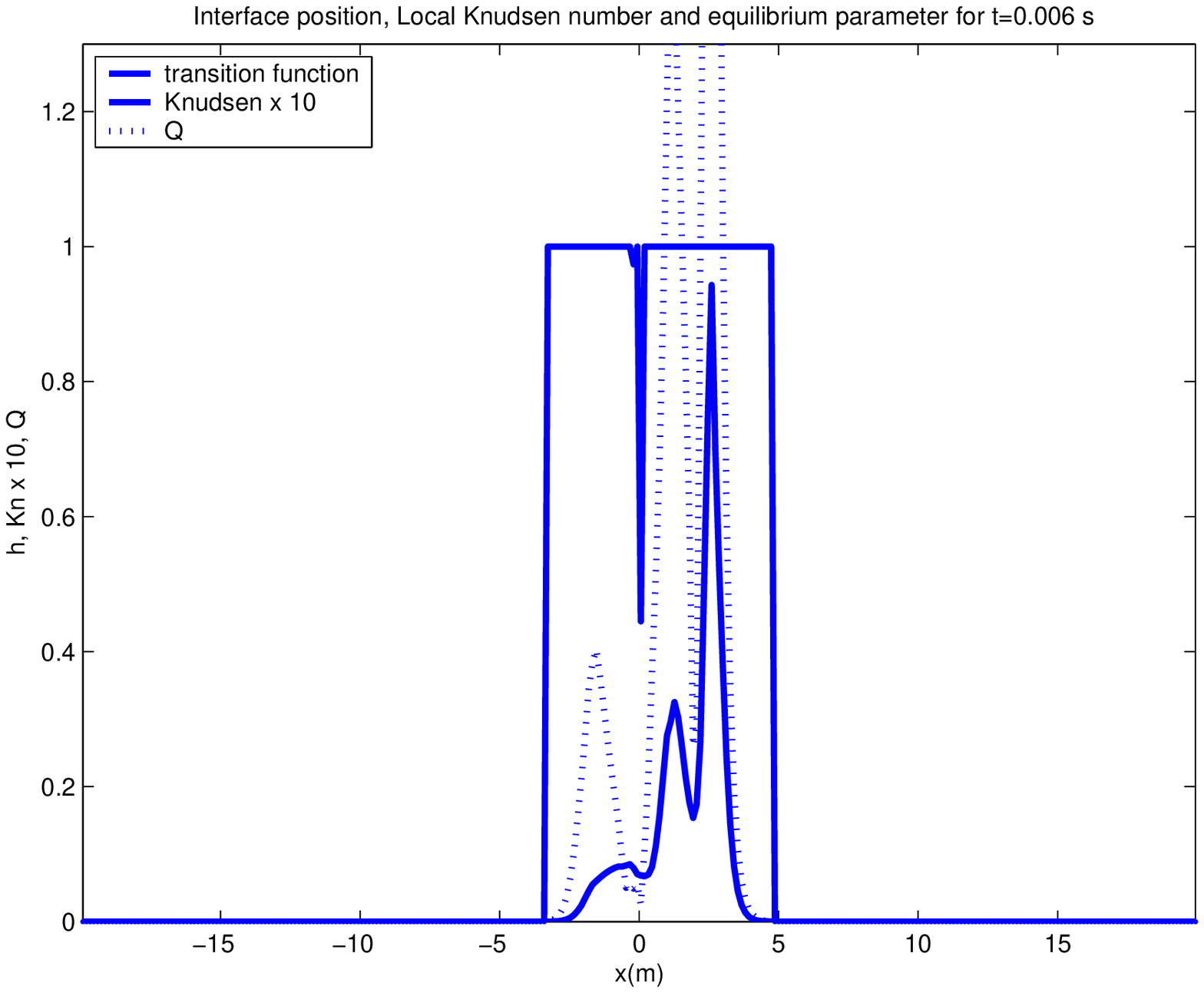}
\includegraphics[scale=0.34]{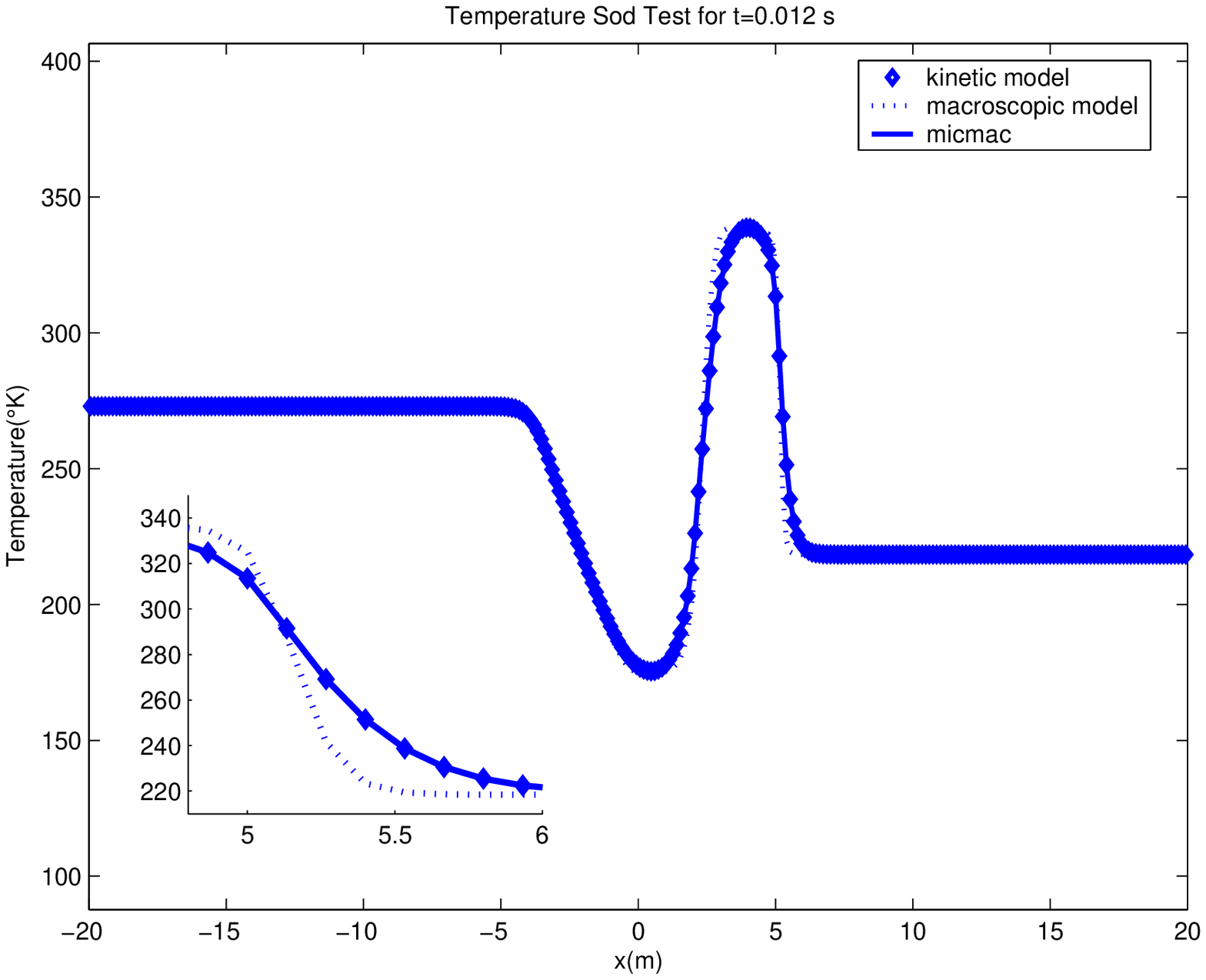}
\includegraphics[scale=0.34]{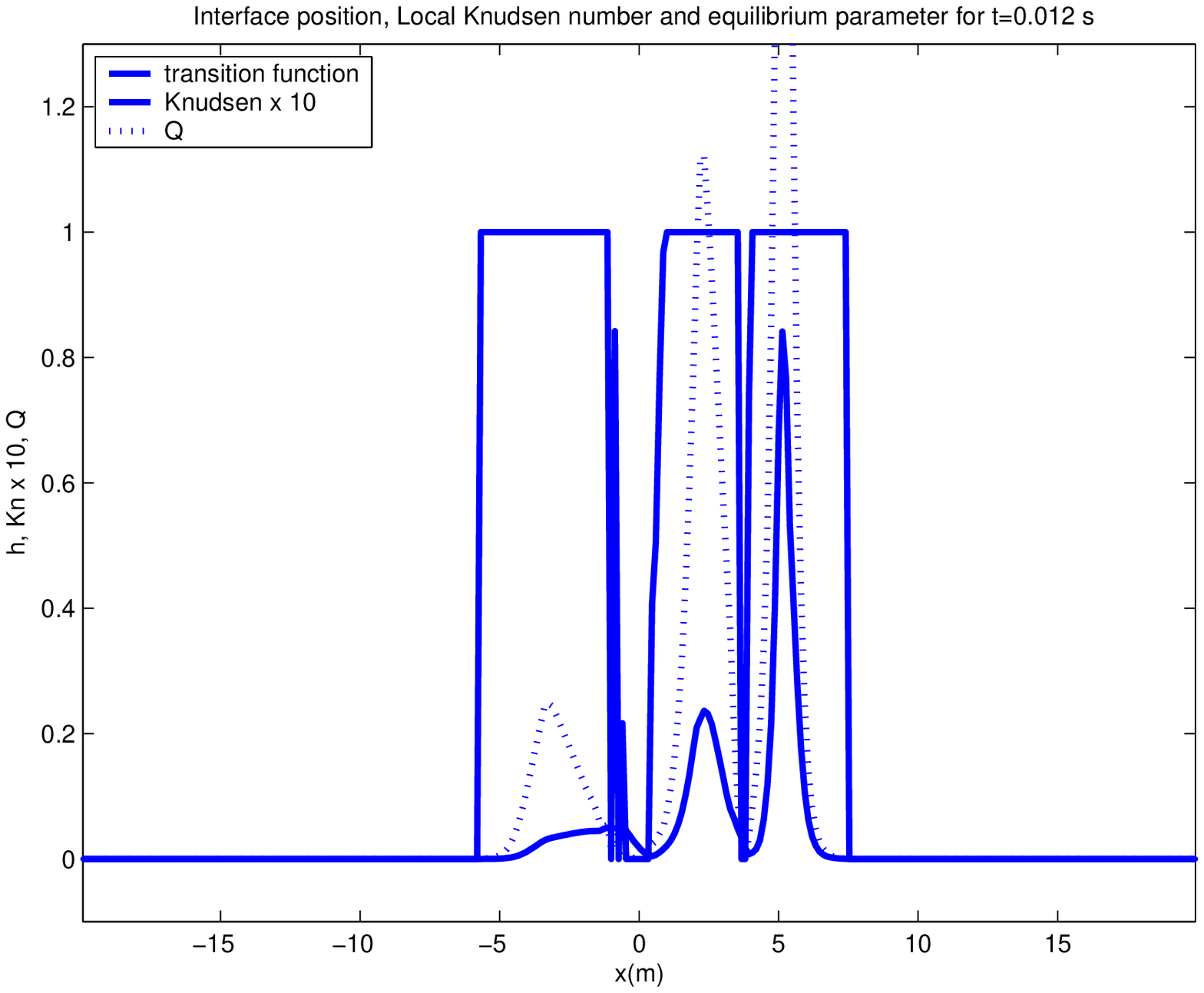}
\includegraphics[scale=0.34]{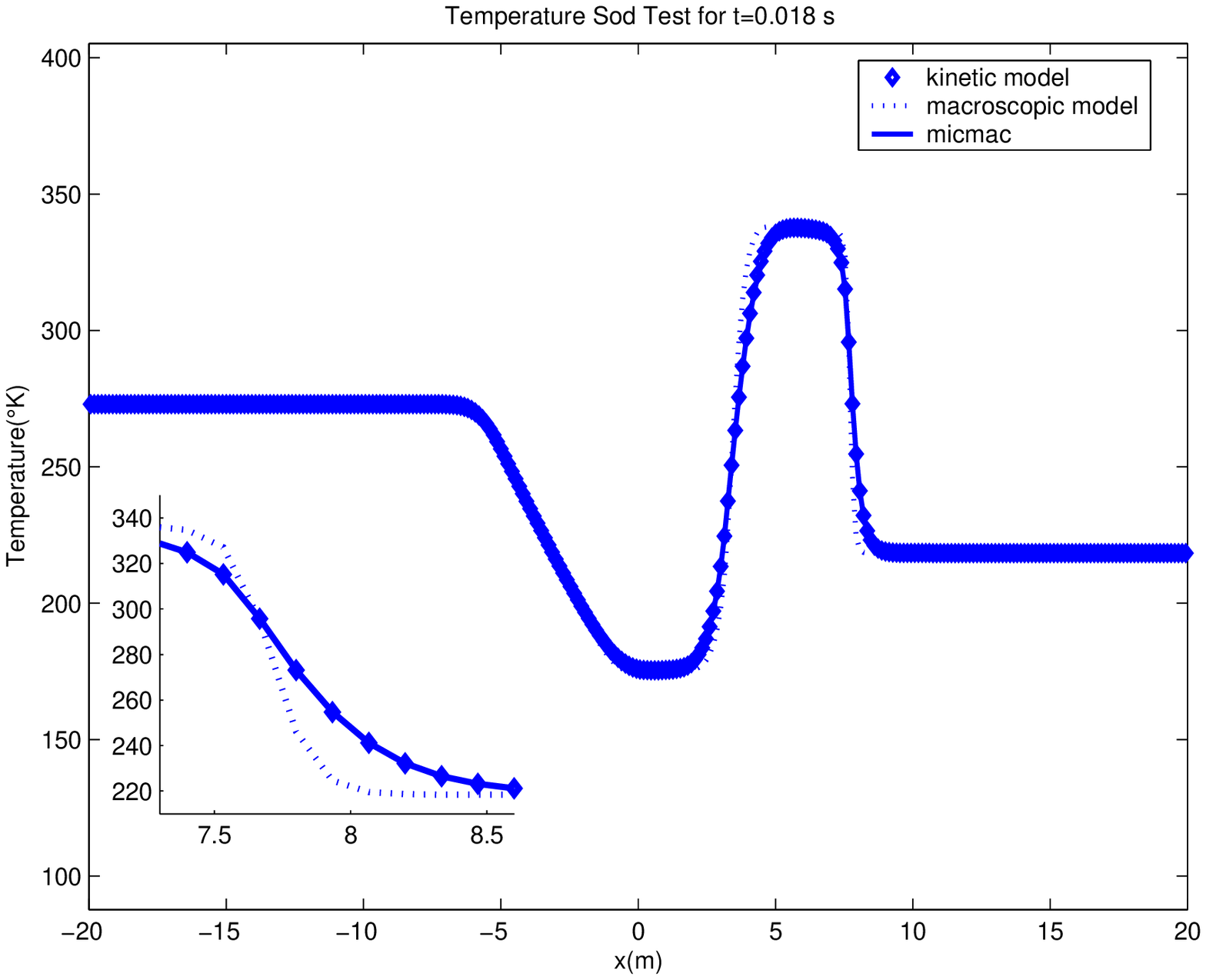}
\includegraphics[scale=0.34]{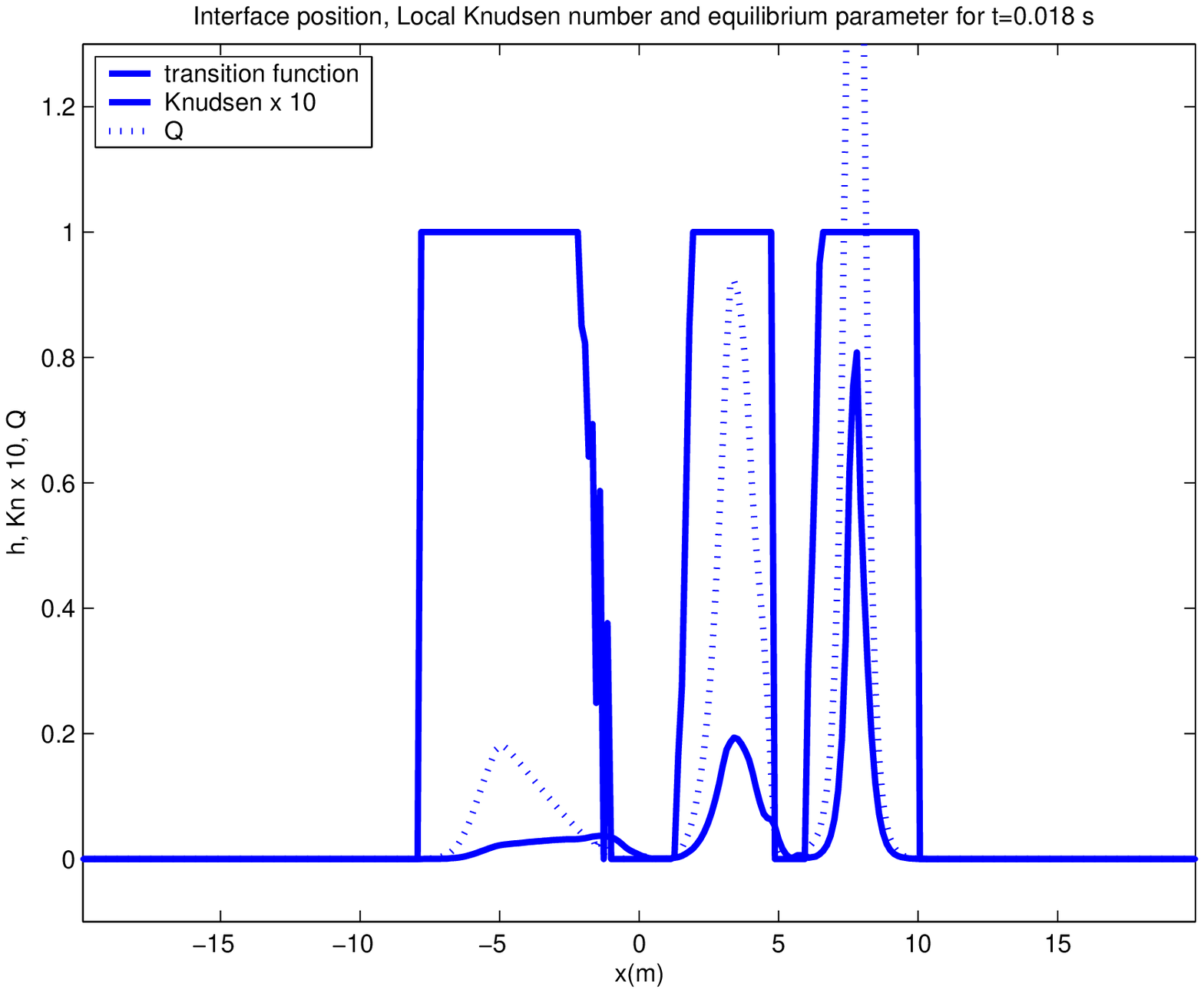}
\includegraphics[scale=0.34]{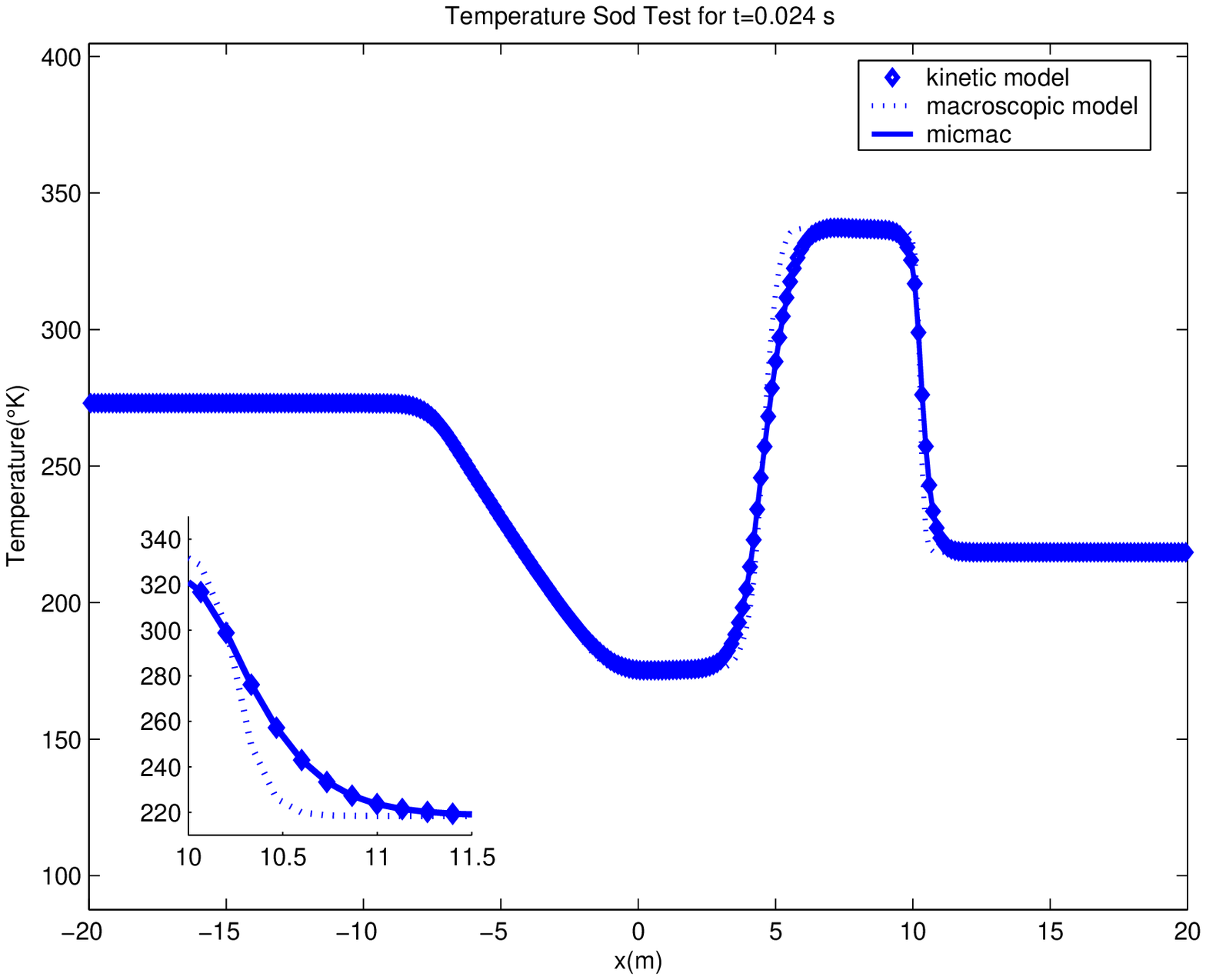}
\includegraphics[scale=0.34]{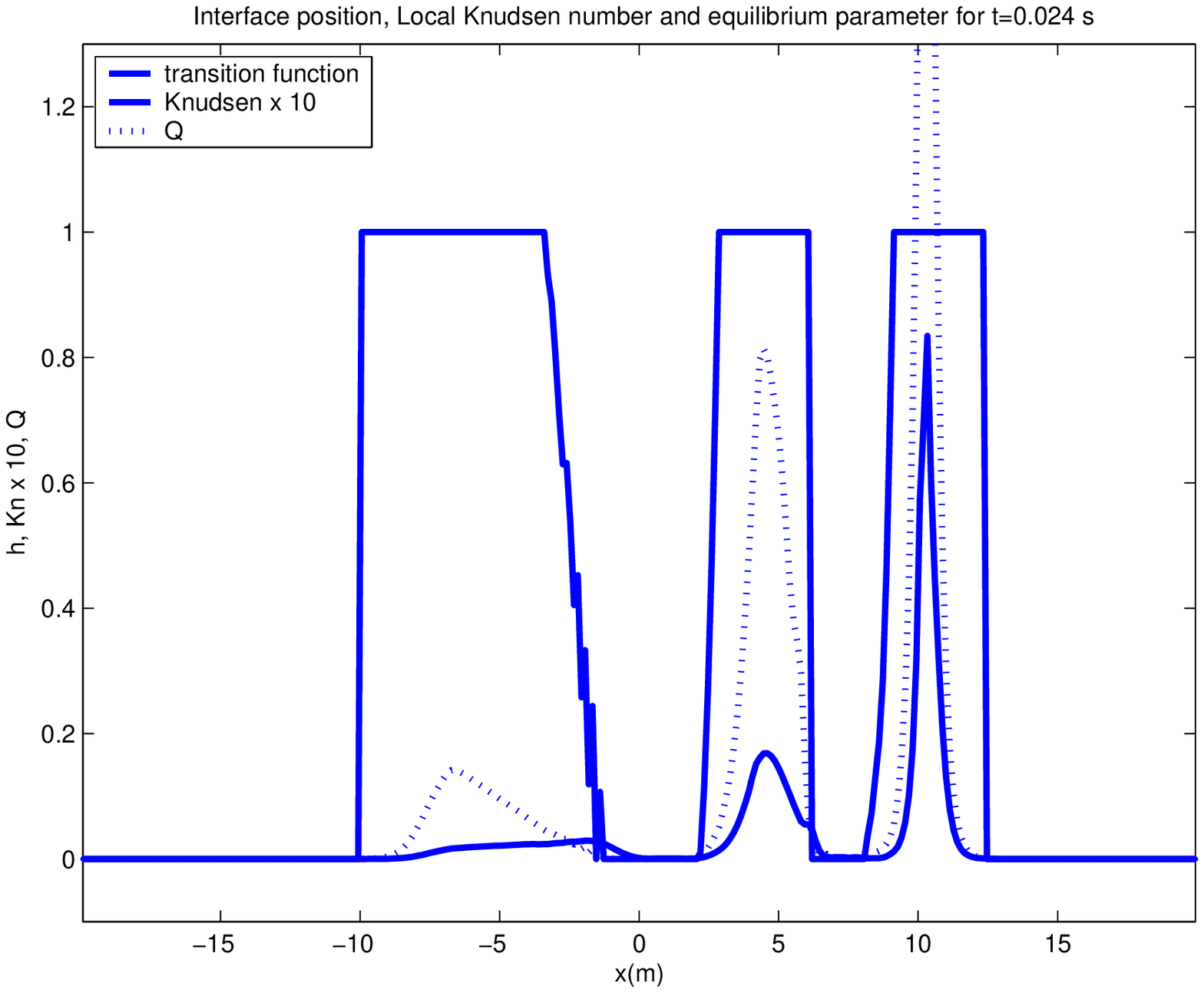}
\caption{Sod Test 2: Solution at $t=0.6\times 10^{-2}$ top,
$t=1.2\times 10^{-2}$ middle top, $t=1.8\times 10^{-2}$ middle
bottom, $t=2.4\times 10^{-2}$ bottom, temperature left, transition
function, Knudsen number and heat flux right. The small panels are a
magnification of the solution close to non equilibrium
regions.\label{sod2.2}}
\end{center}
\end{figure}


\begin{figure}
\begin{center}
\includegraphics[scale=0.34]{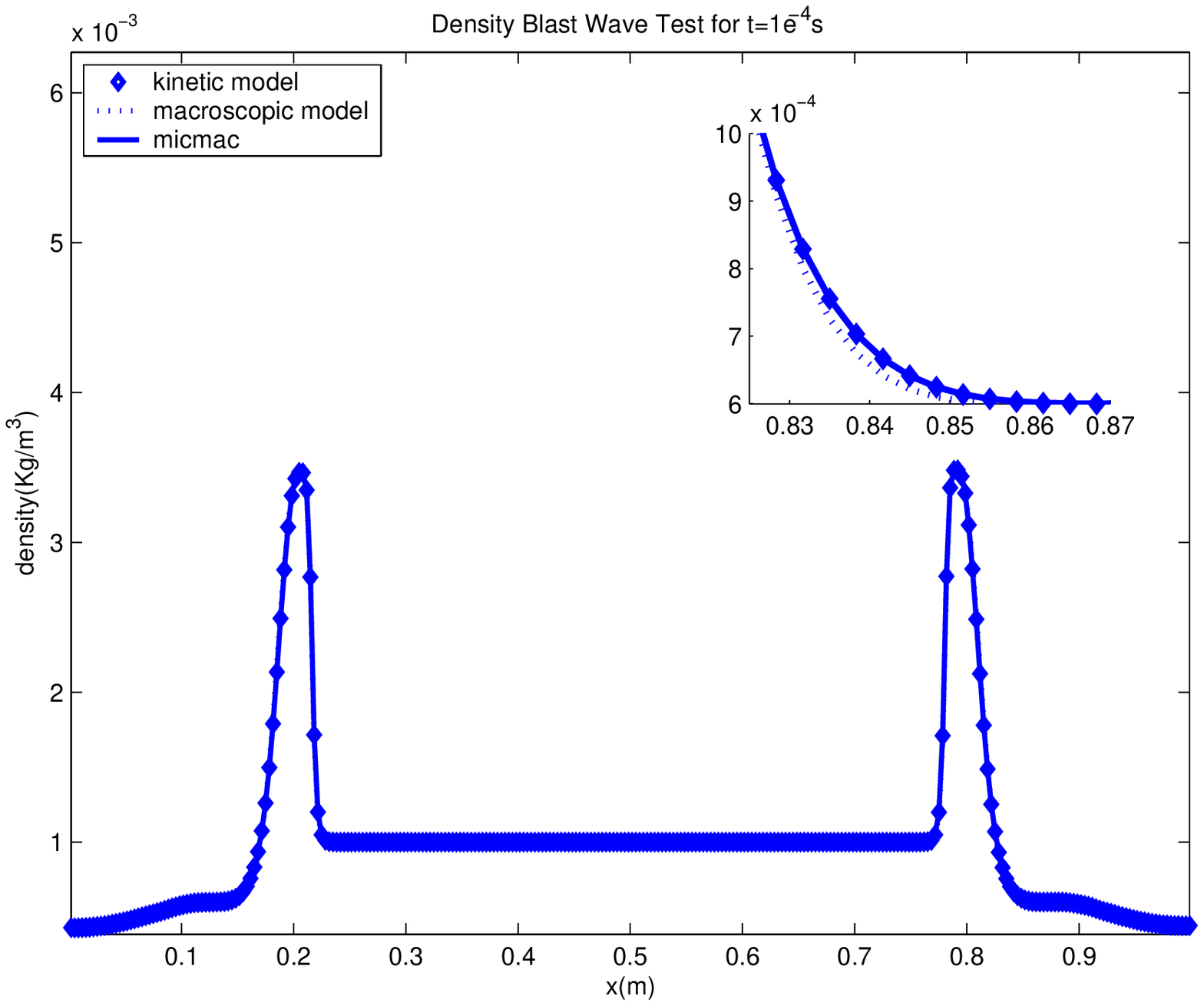}
\includegraphics[scale=0.34]{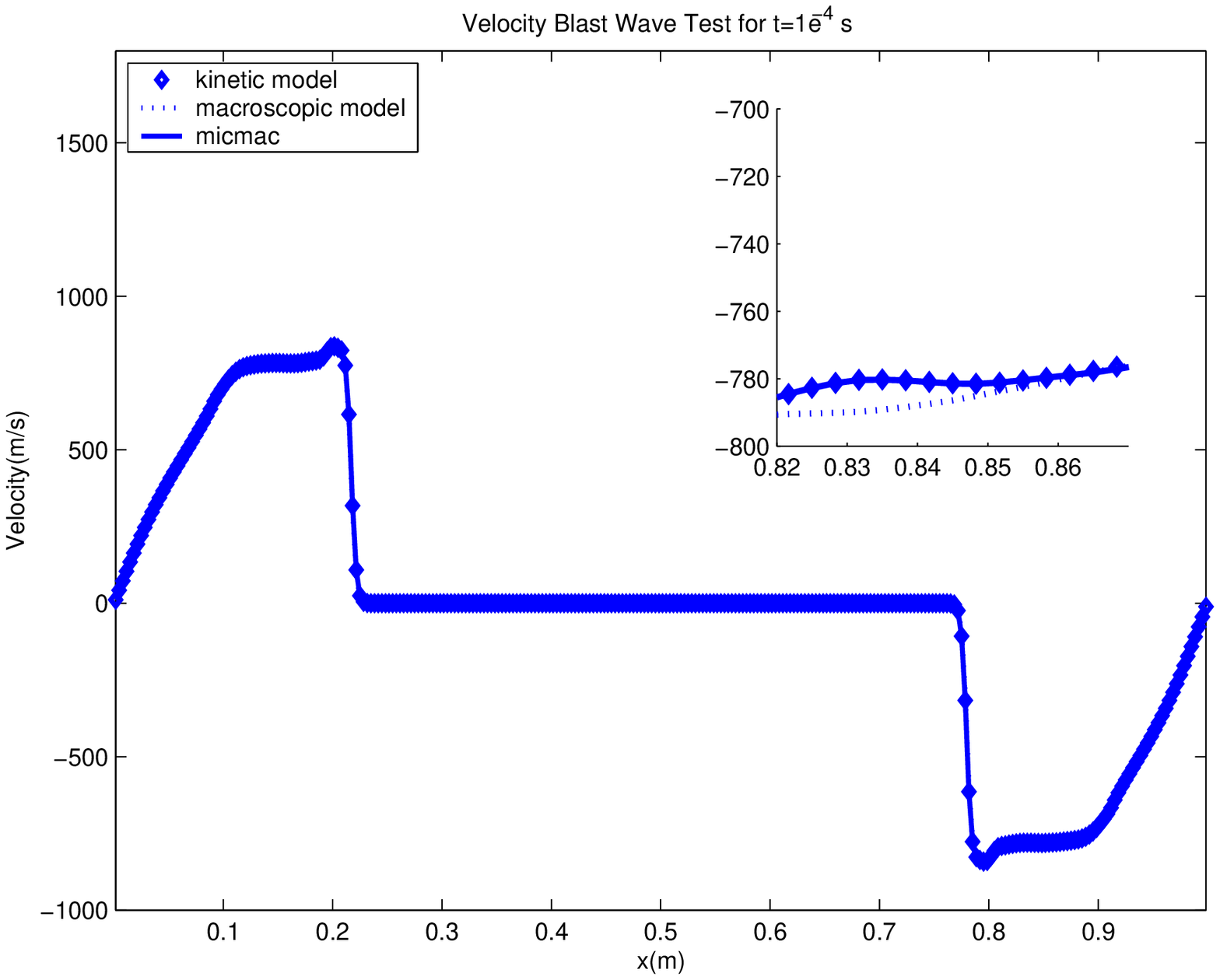}
\includegraphics[scale=0.34]{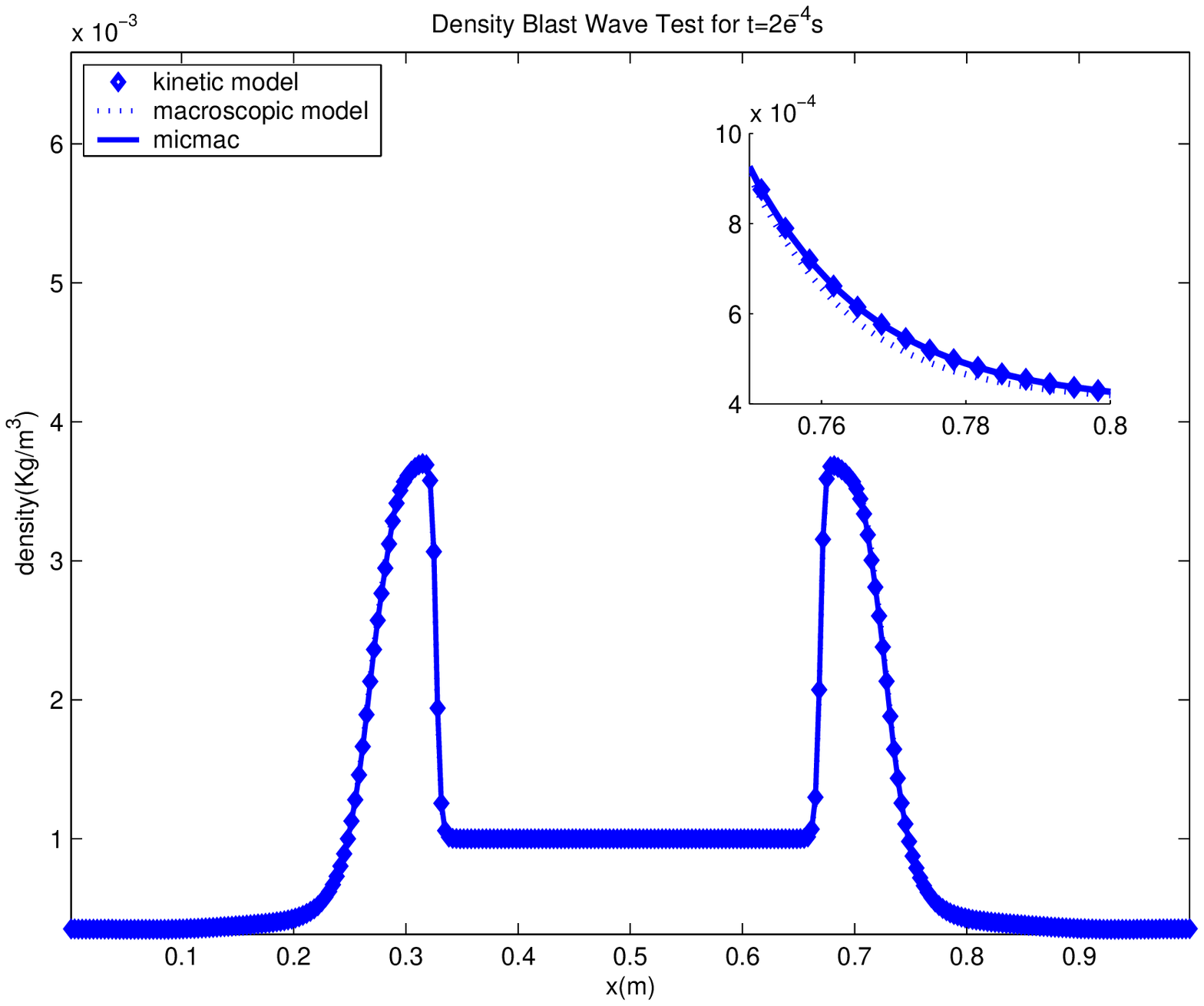}
\includegraphics[scale=0.34]{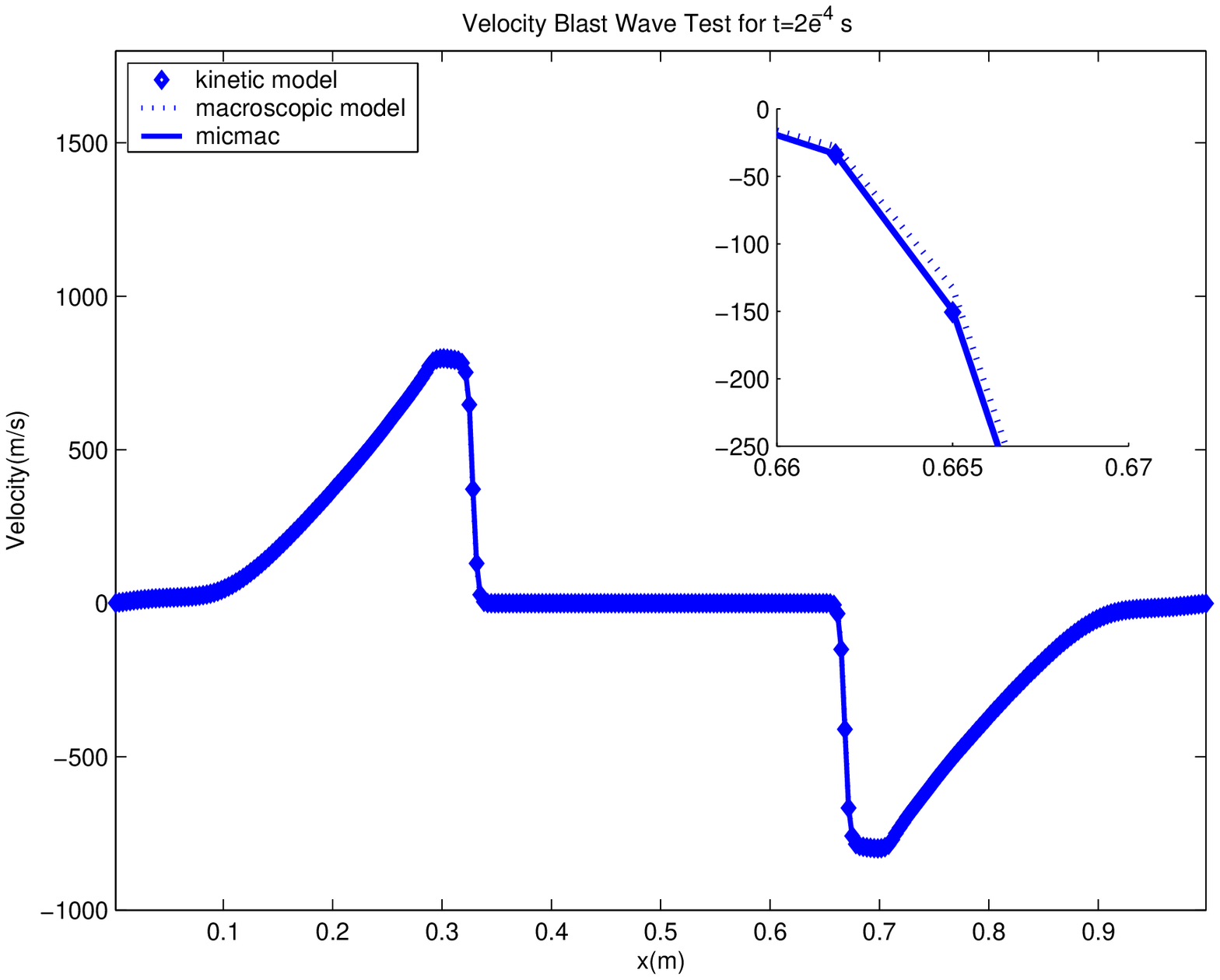}
\includegraphics[scale=0.34]{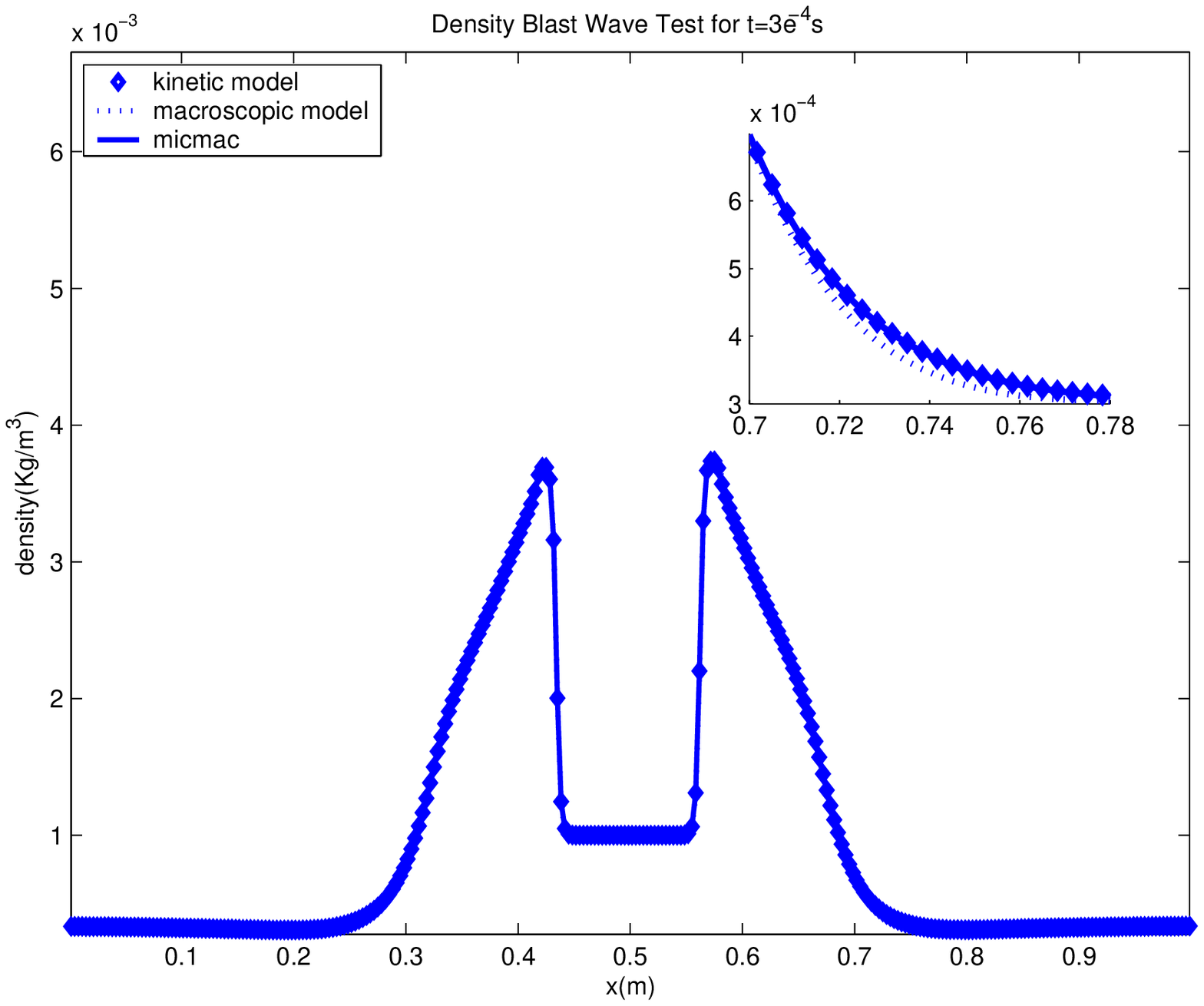}
\includegraphics[scale=0.34]{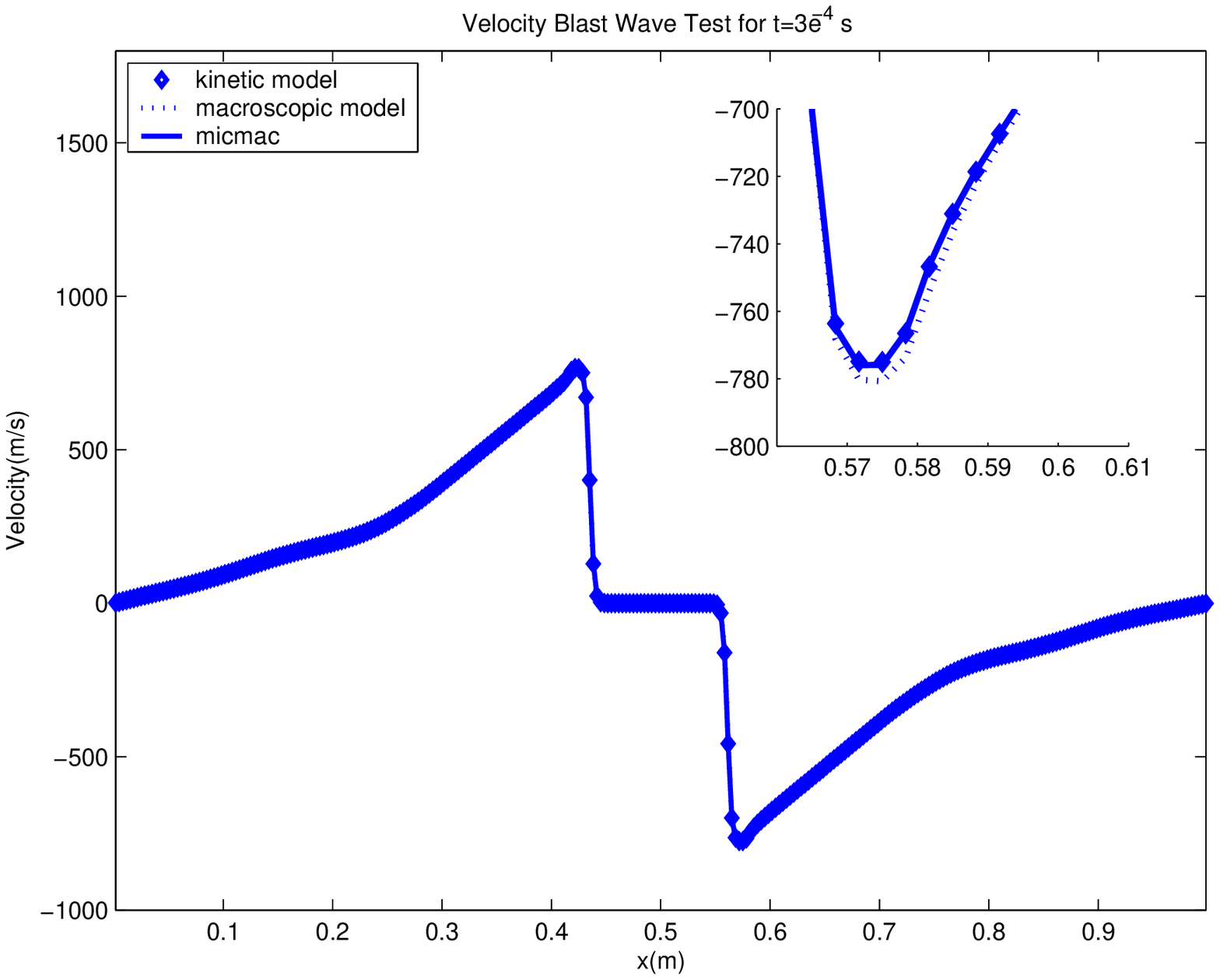}
\includegraphics[scale=0.34]{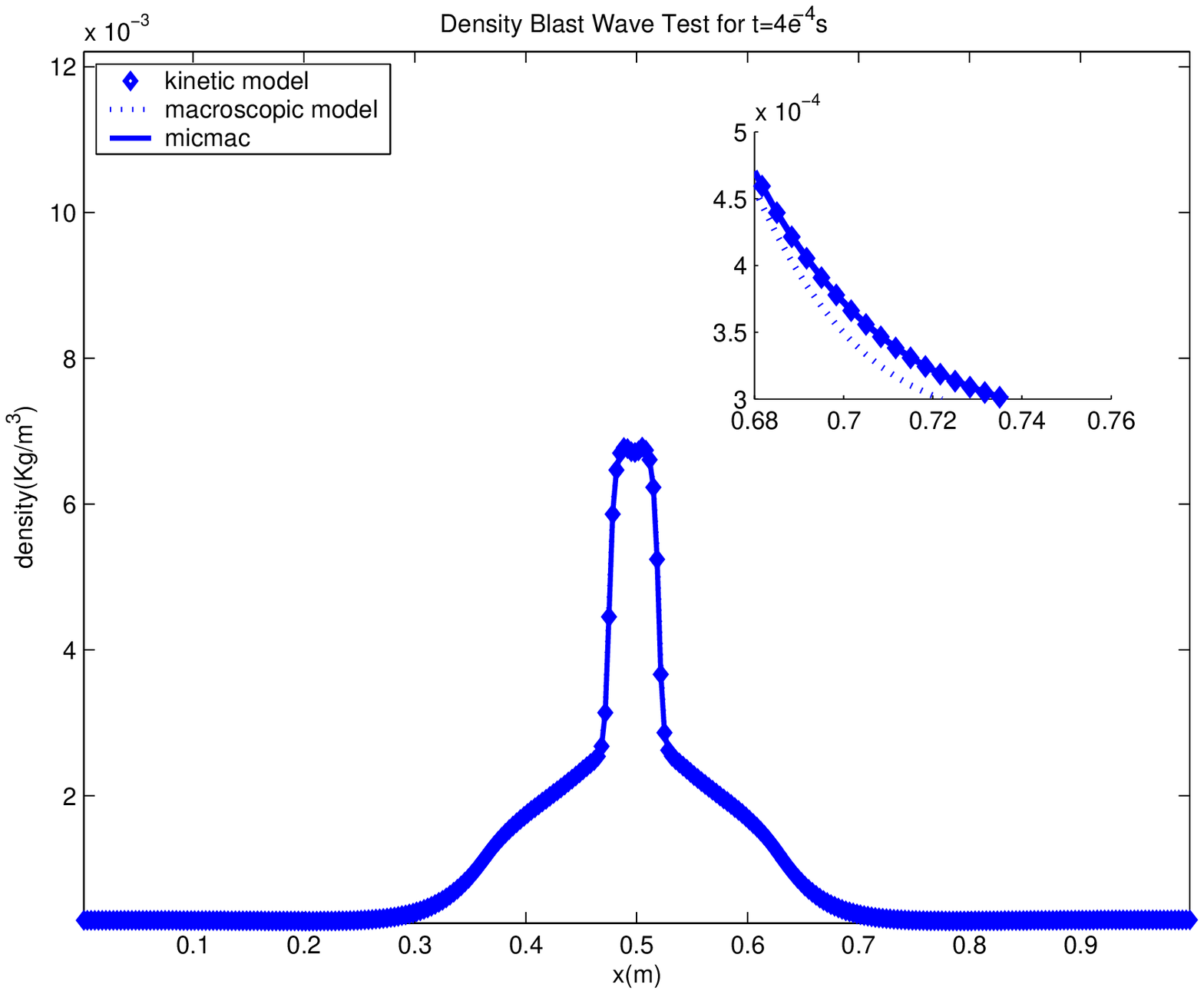}
\includegraphics[scale=0.34]{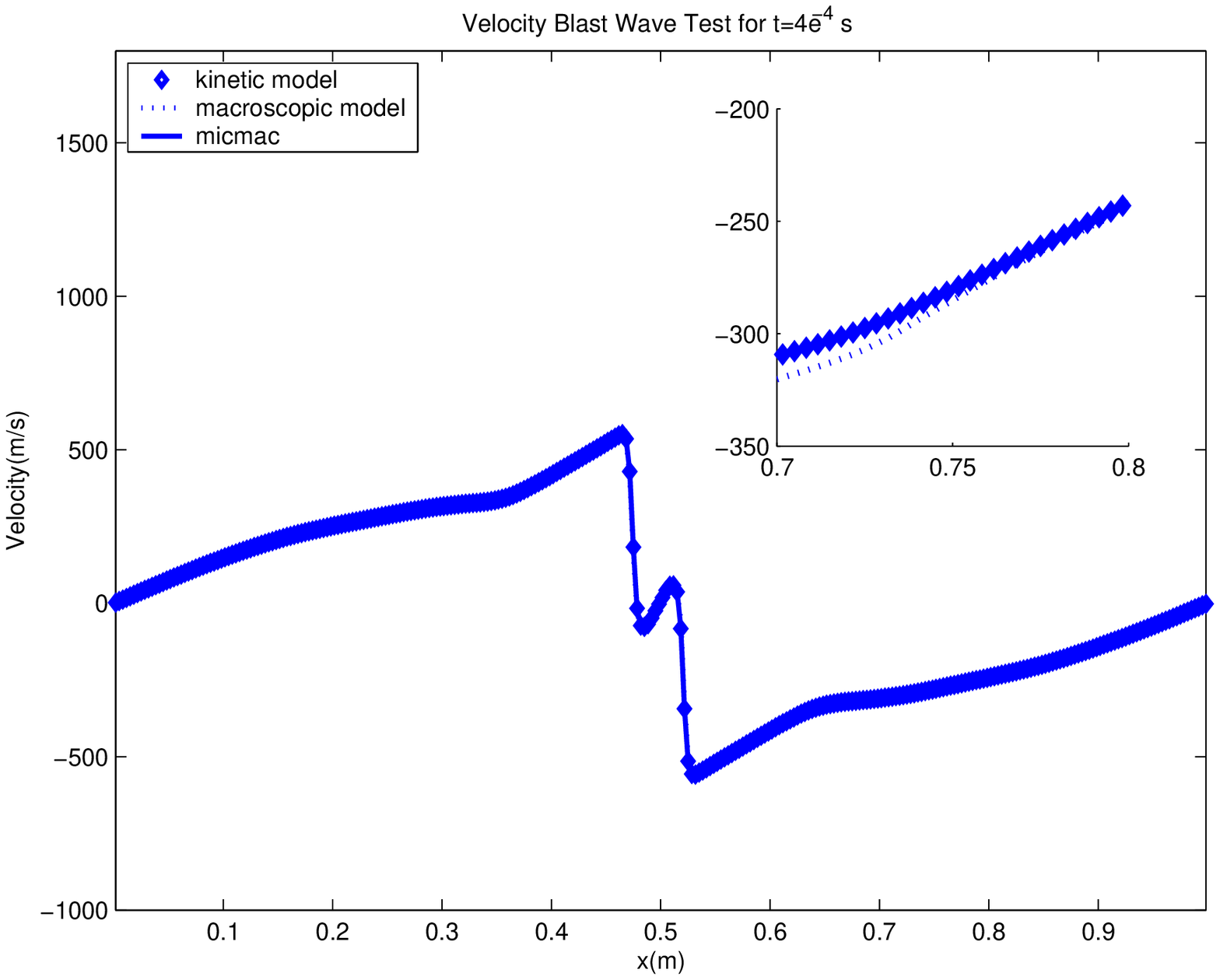}
\caption{Blast Wave Test 1: Solution at $t=1\times 10^{-4}$ top,
$t=2\times 10^{-4}$ middle top, $t=3\times 10^{-4}$ middle bottom,
$t=4\times 10^{-4}$ bottom, density left, velocity  right. The small
panels are a magnification of the solution close to non equilibrium
regions. \label{blast1.1}}
\end{center}
\end{figure}

\begin{figure}
\begin{center}
\includegraphics[scale=0.34]{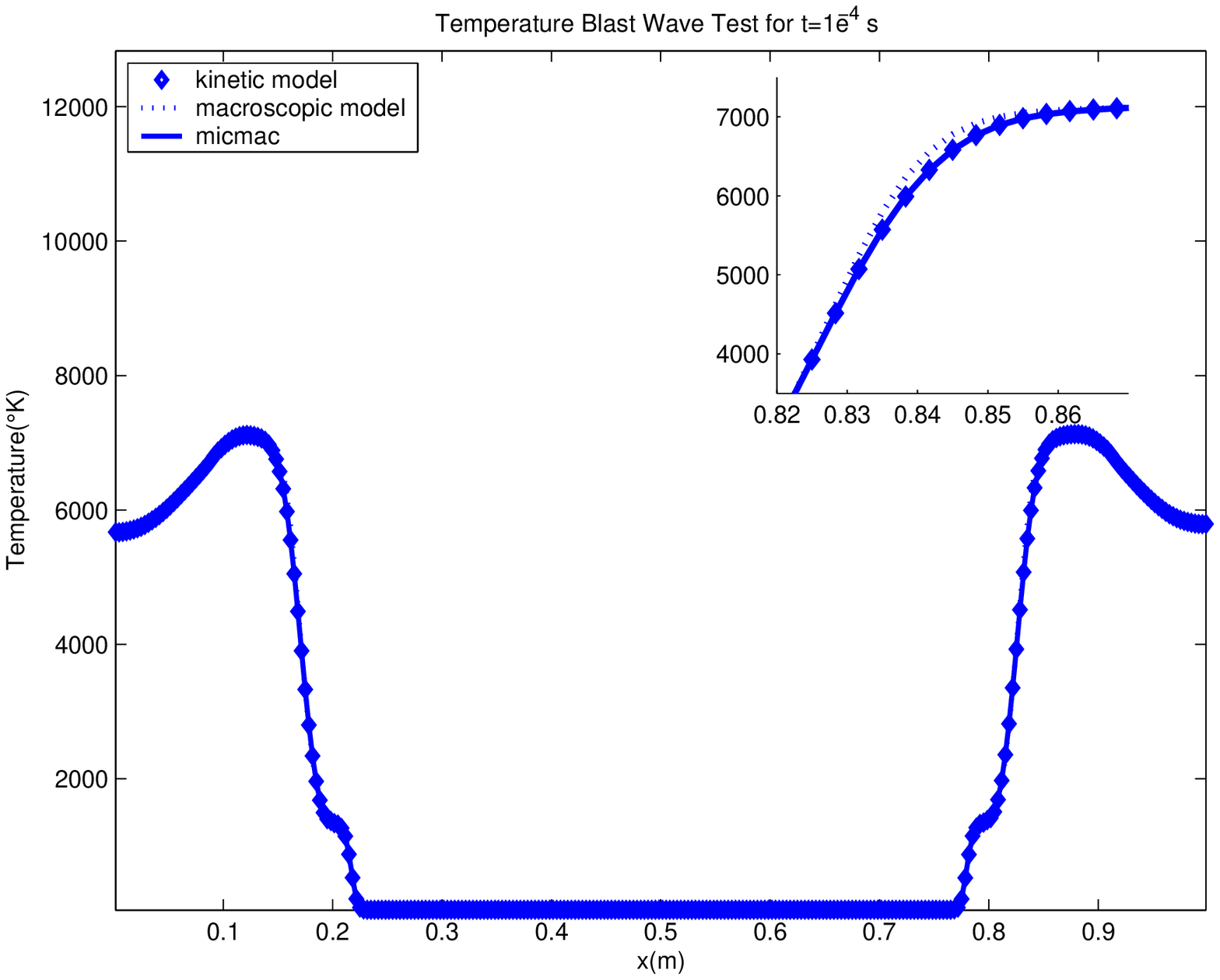}
\includegraphics[scale=0.34]{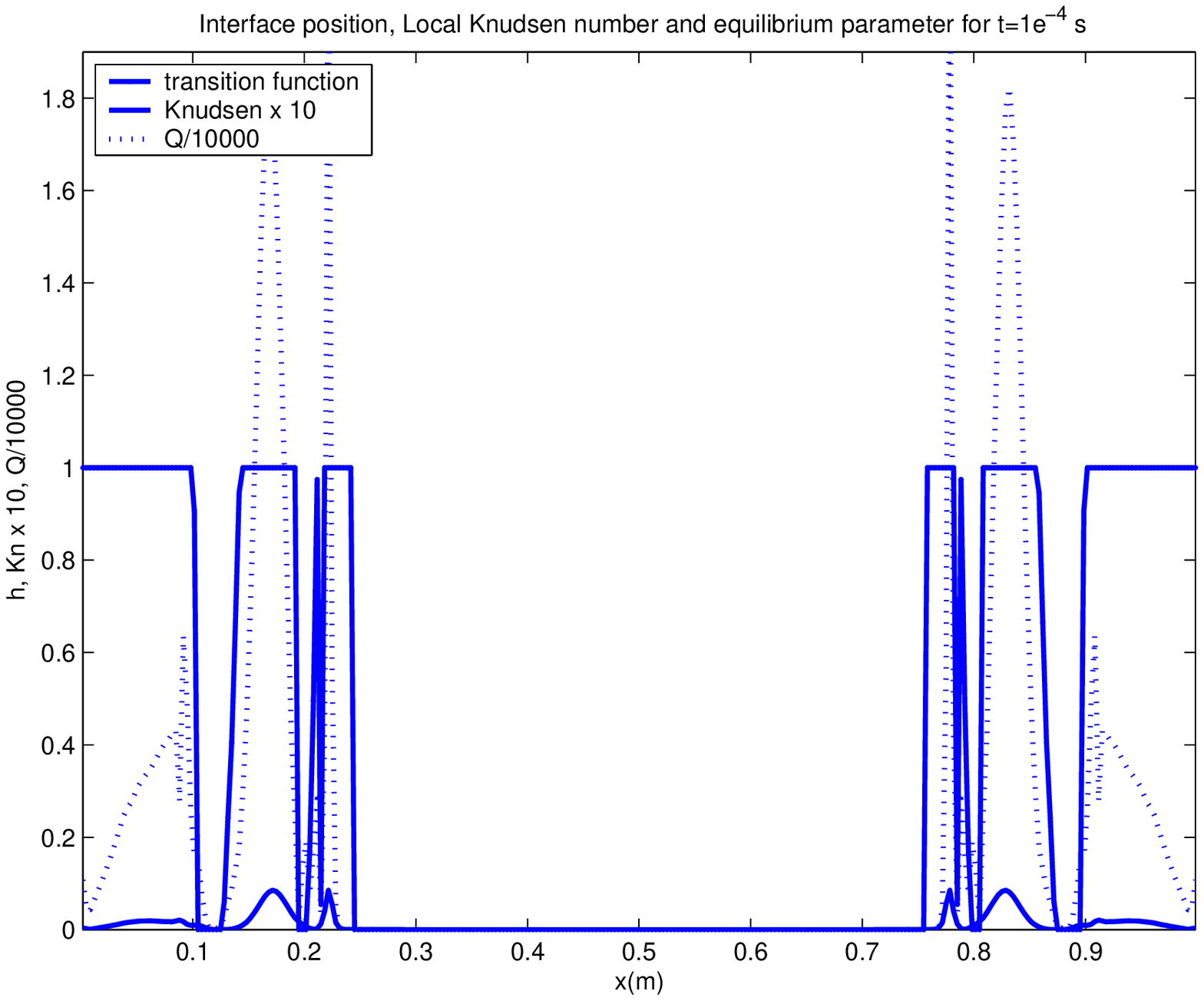}
\includegraphics[scale=0.34]{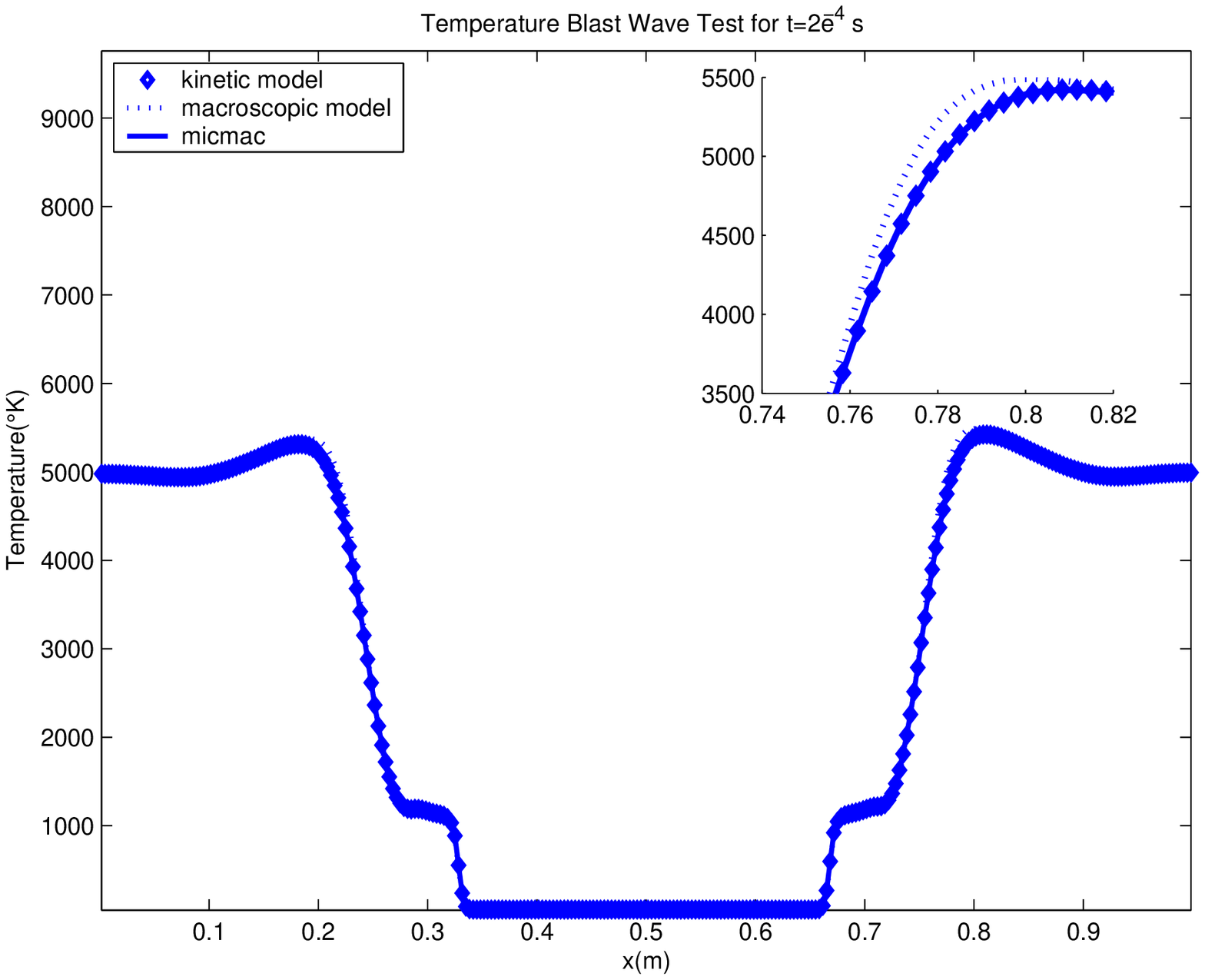}
\includegraphics[scale=0.34]{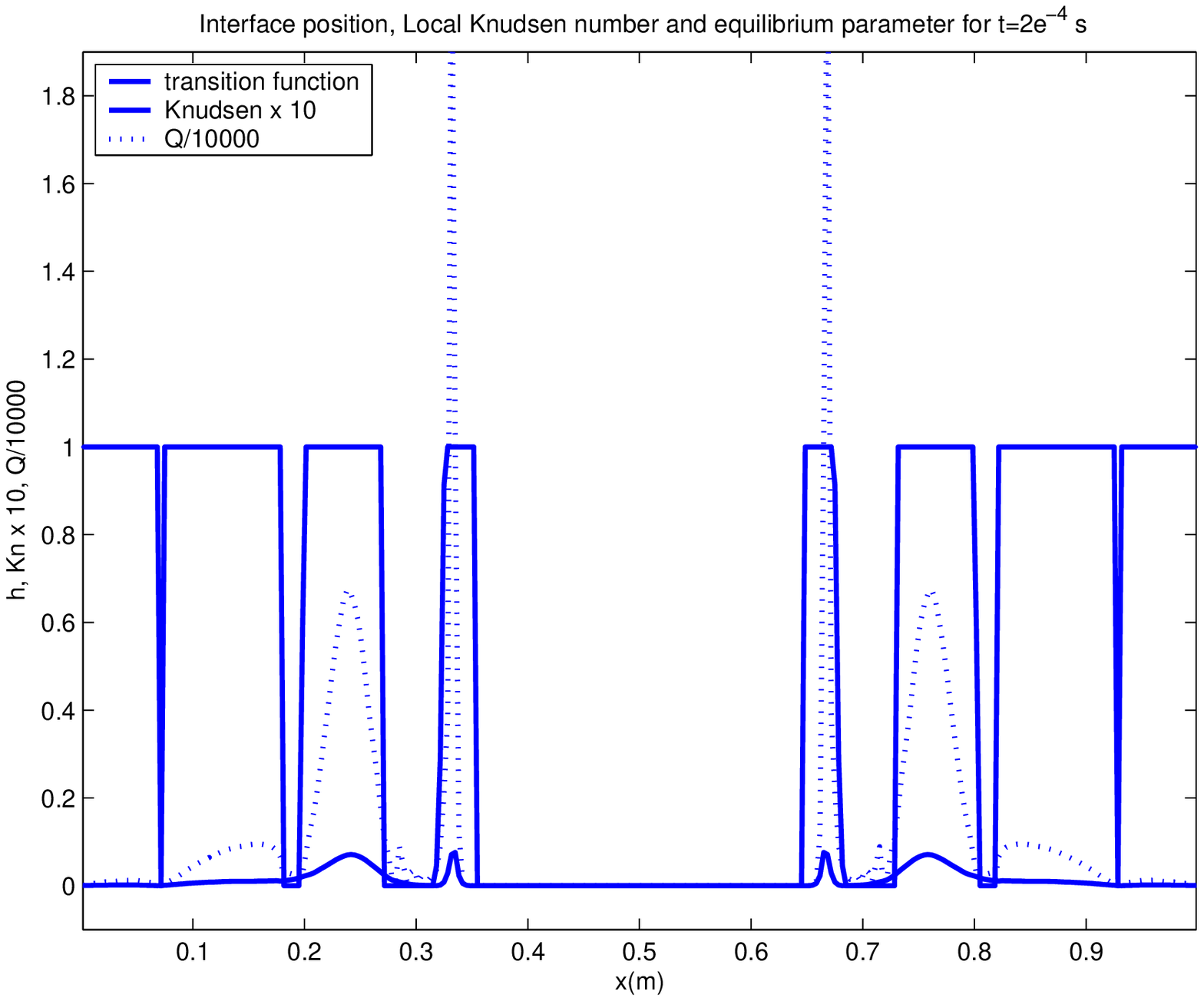}
\includegraphics[scale=0.34]{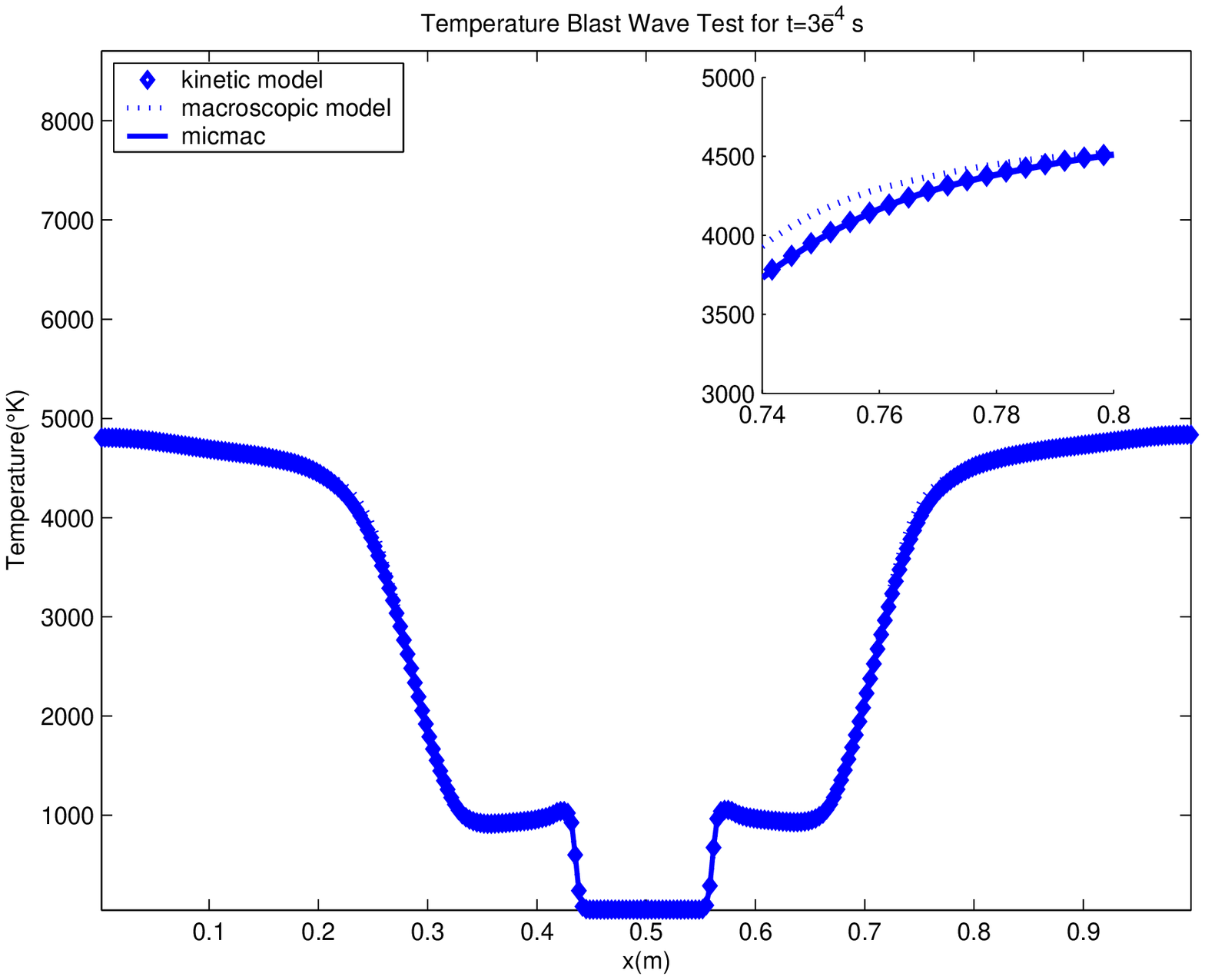}
\includegraphics[scale=0.34]{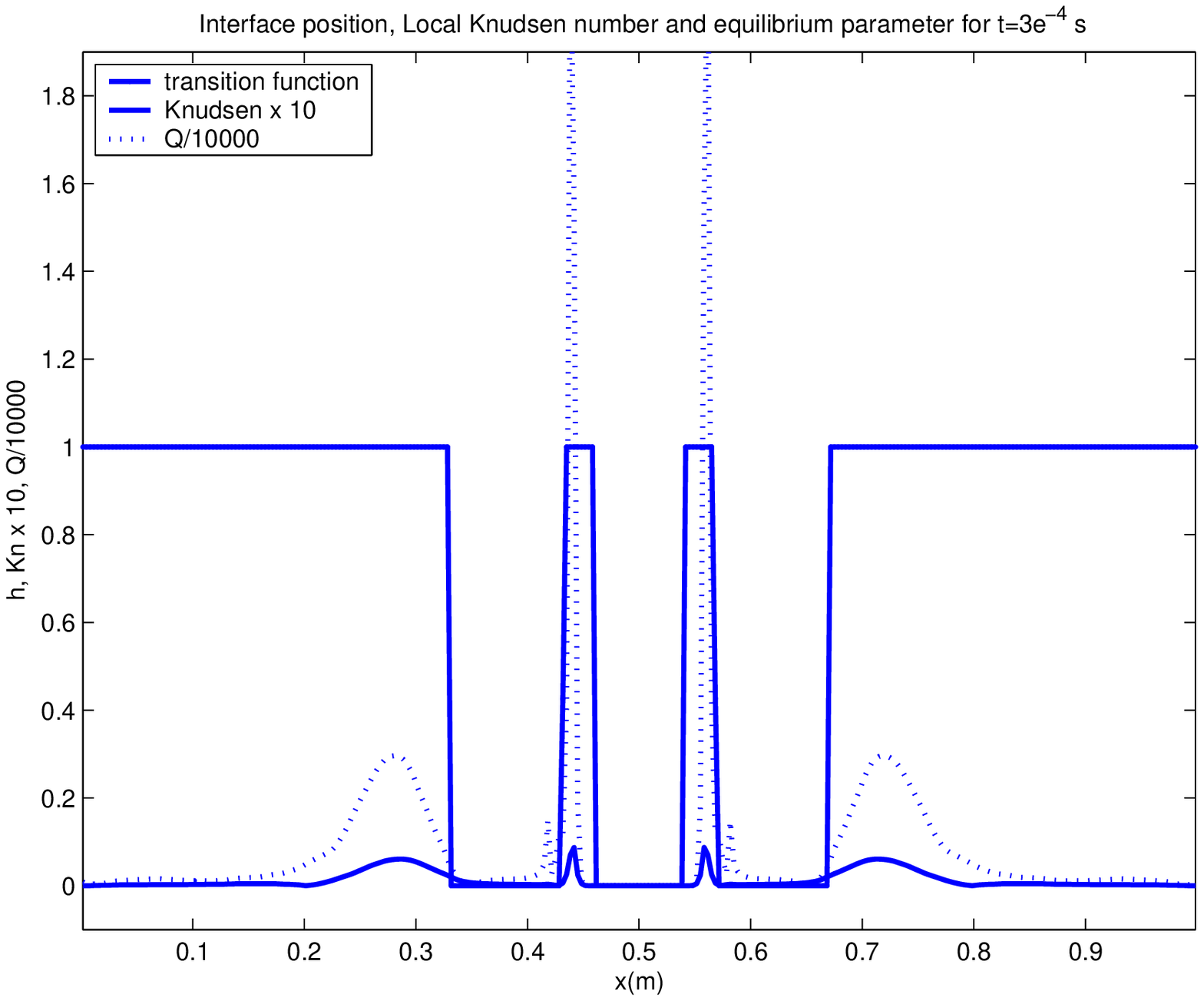}
\includegraphics[scale=0.34]{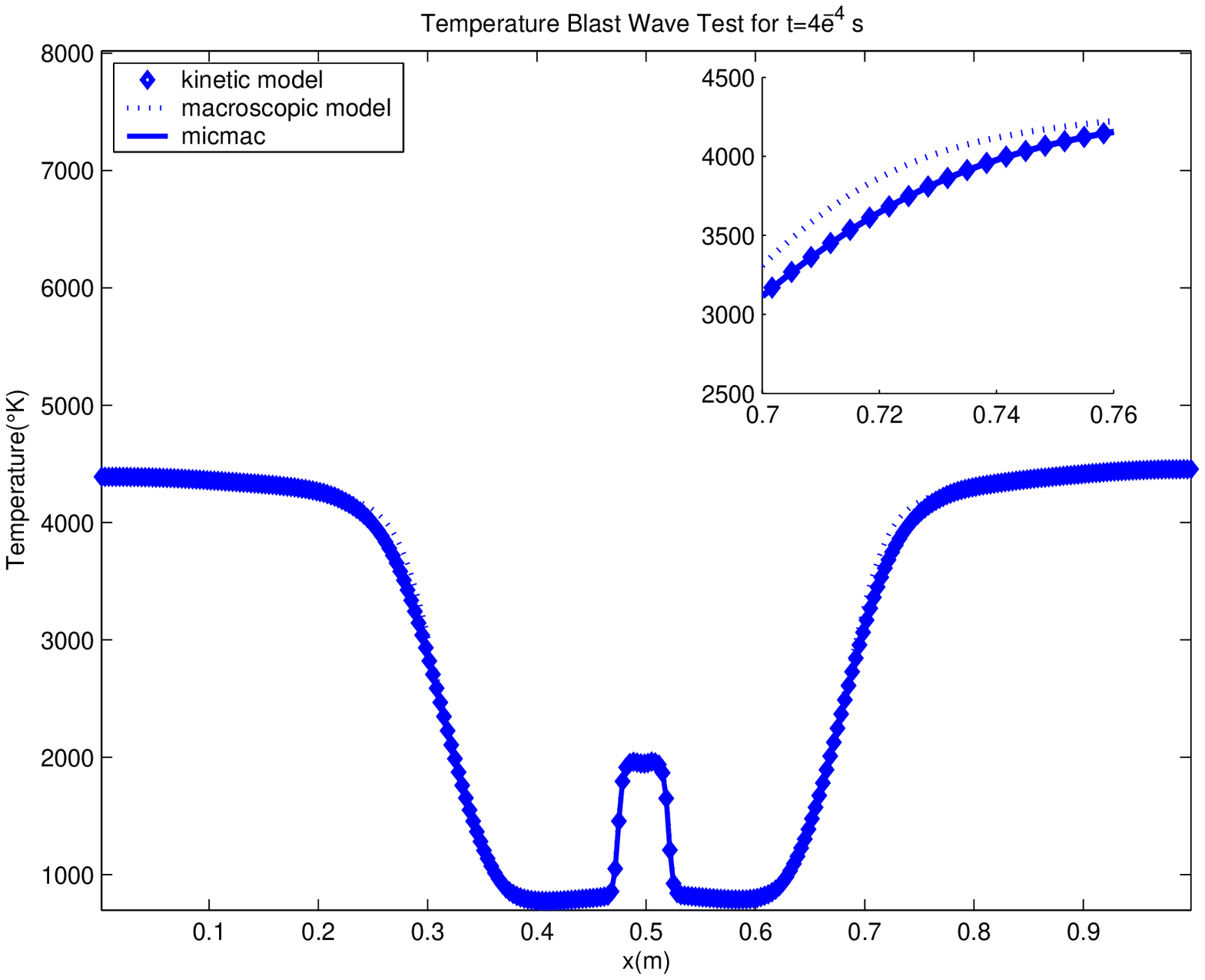}
\includegraphics[scale=0.34]{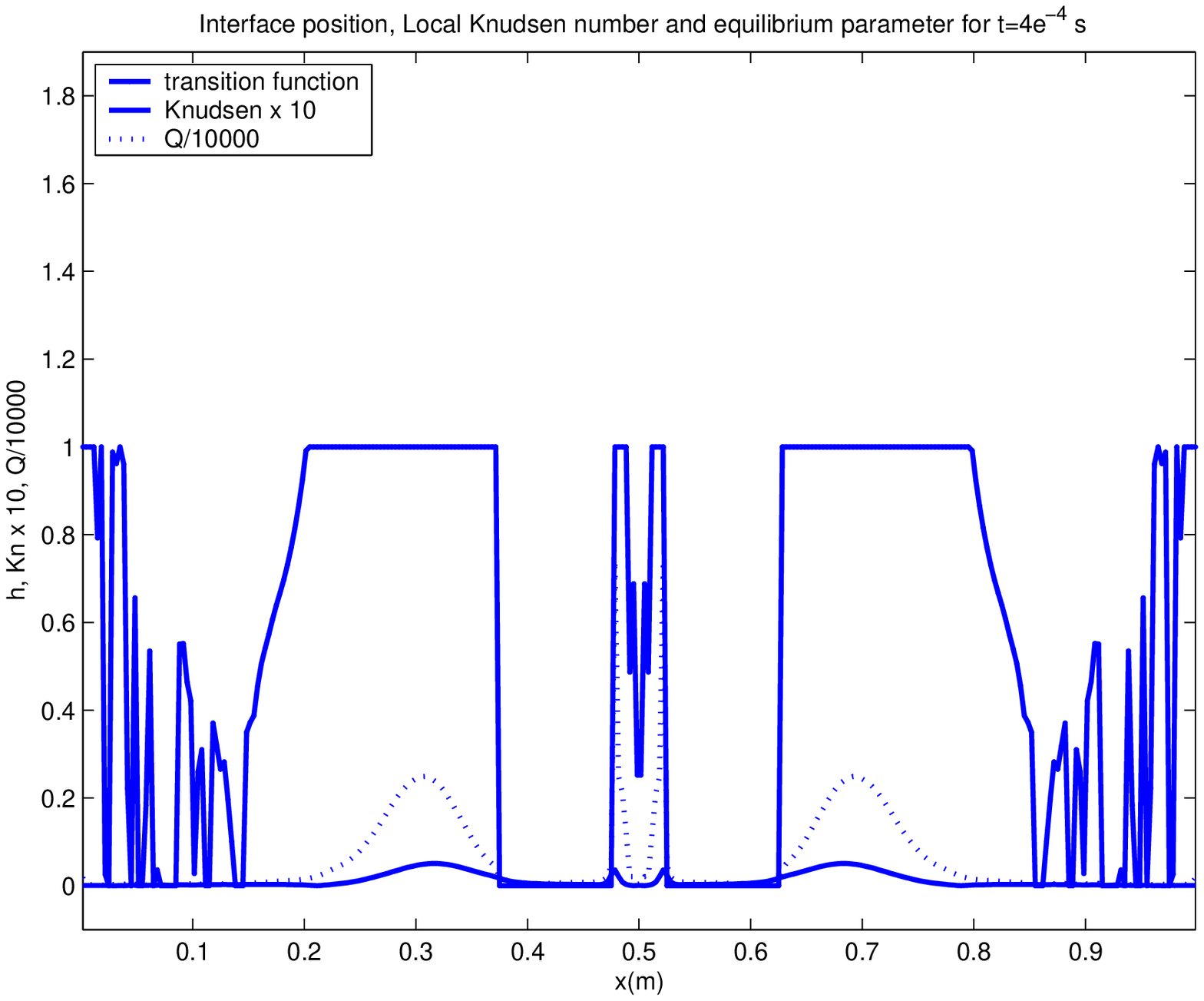}
\caption{Blast Wave Test 1: Solution at $t=1\times 10^{-4}$ top,
$t=2\times 10^{-4}$ middle top, $t=3\times 10^{-4}$ middle bottom,
$t=4\times 10^{-4}$ bottom, temperature left, transition function,
Knudsen number and heat flux right. The small panels are a
magnification of the solution close to non equilibrium regions.
\label{blast1.2}}
\end{center}
\end{figure}

\begin{figure}
\begin{center}
\includegraphics[scale=0.34]{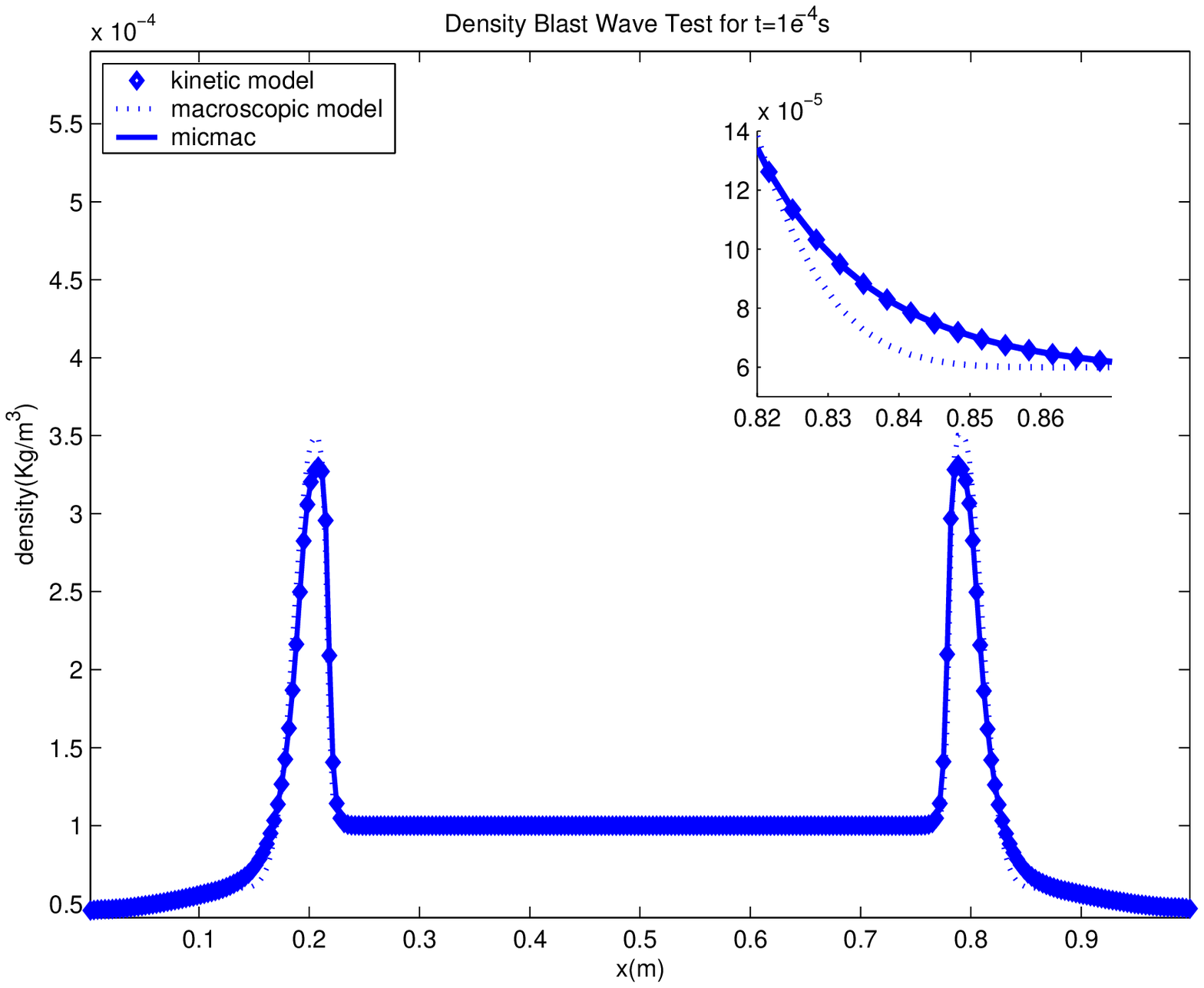}
\includegraphics[scale=0.34]{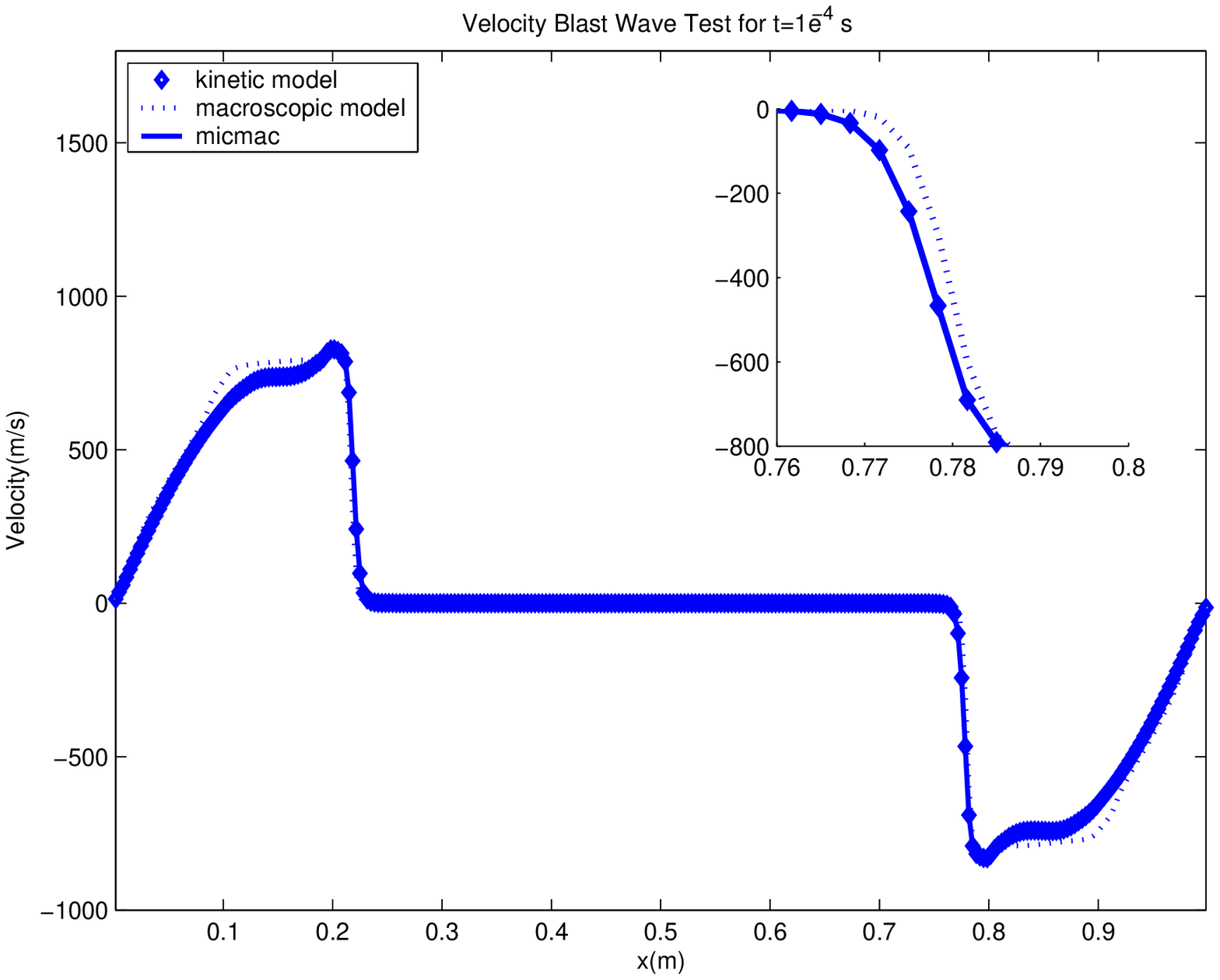}
\includegraphics[scale=0.34]{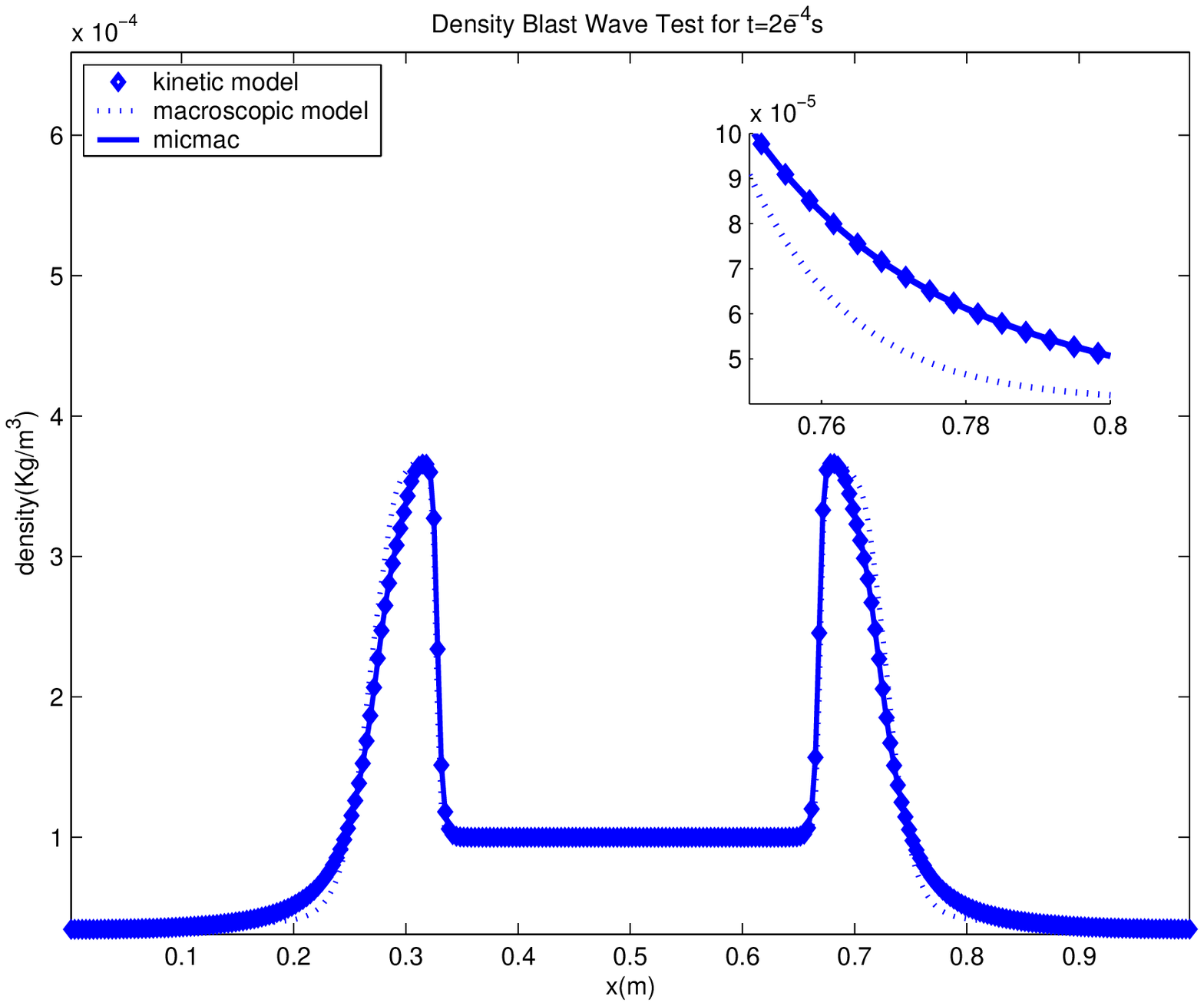}
\includegraphics[scale=0.34]{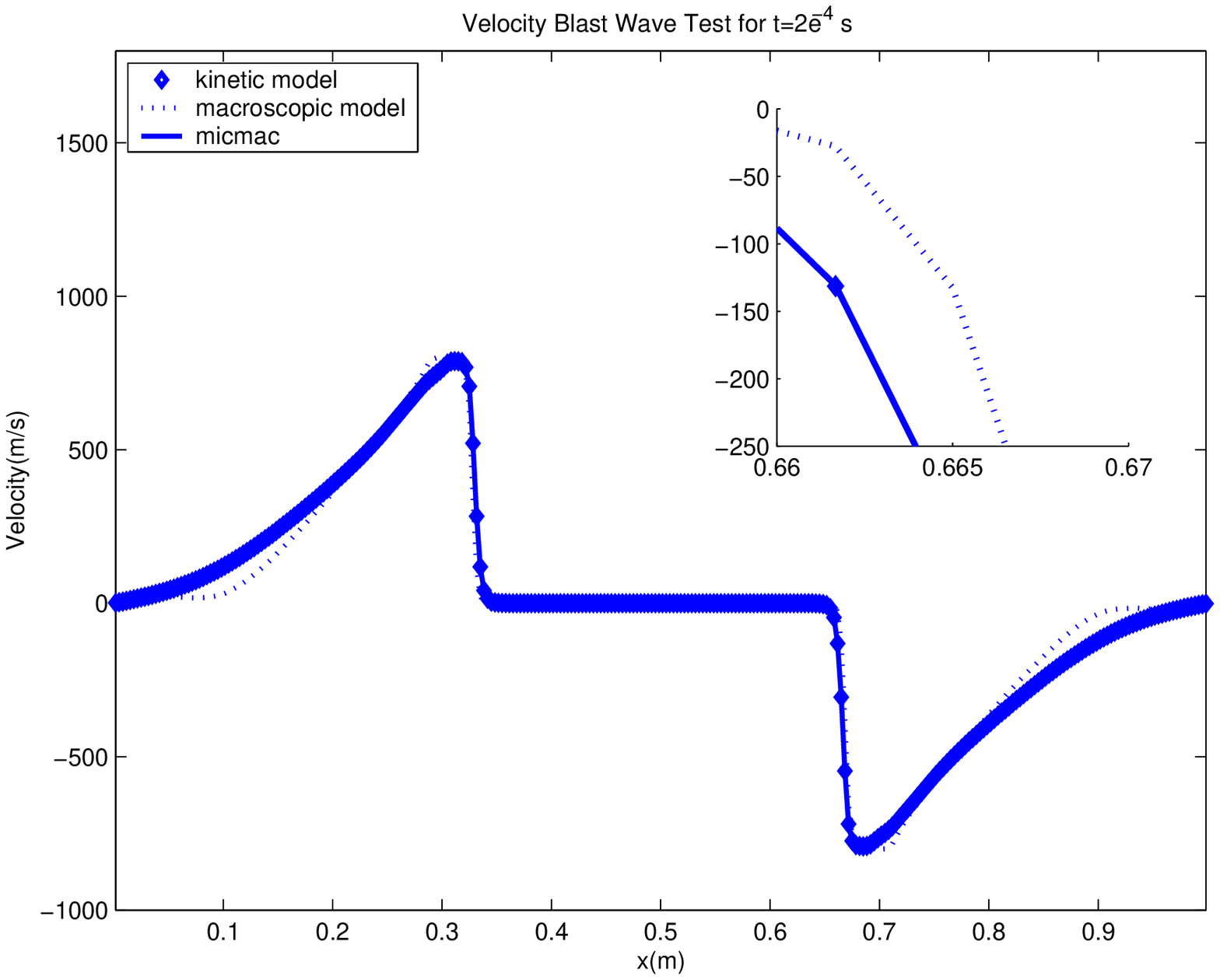}
\includegraphics[scale=0.34]{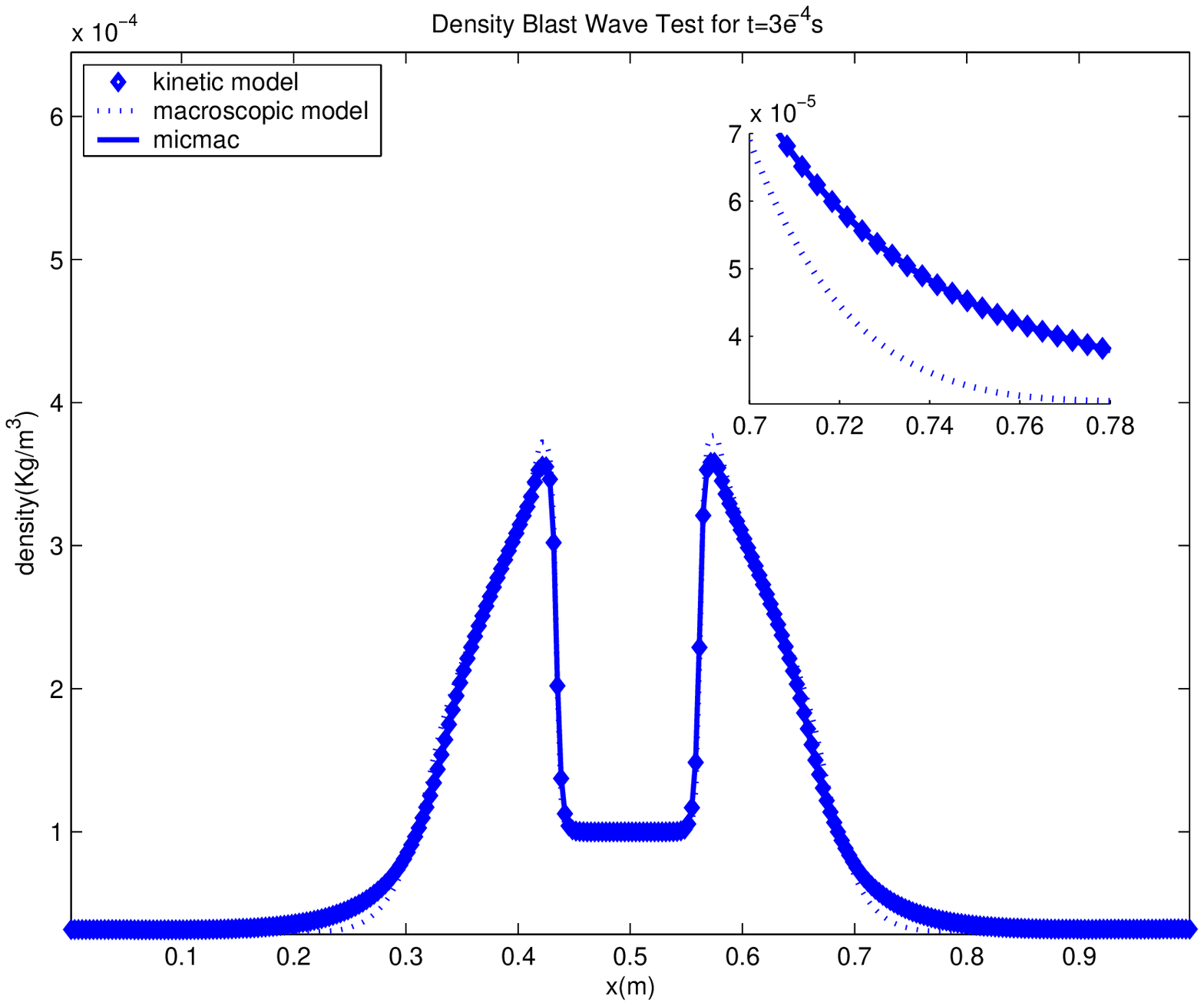}
\includegraphics[scale=0.34]{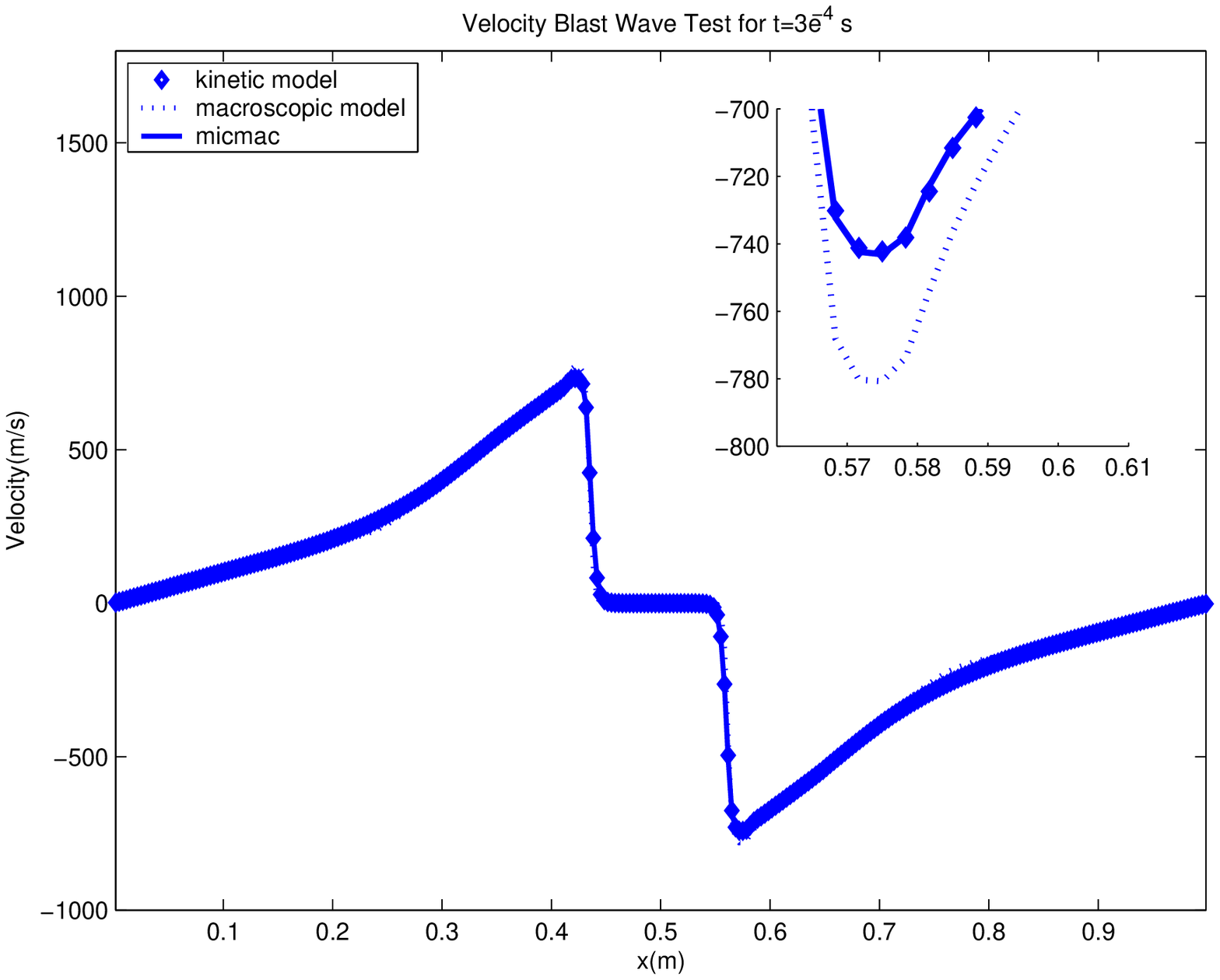}
\includegraphics[scale=0.34]{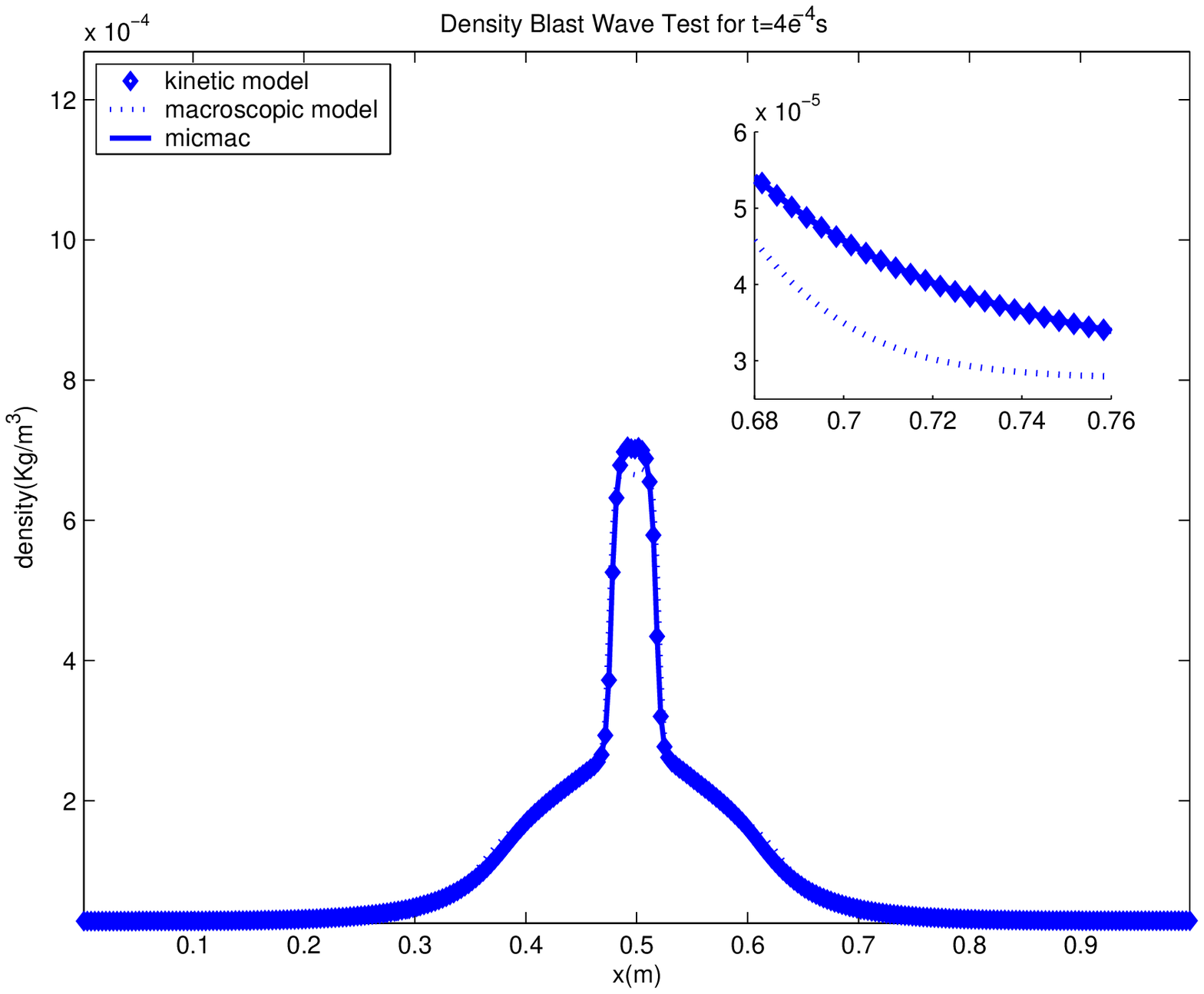}
\includegraphics[scale=0.34]{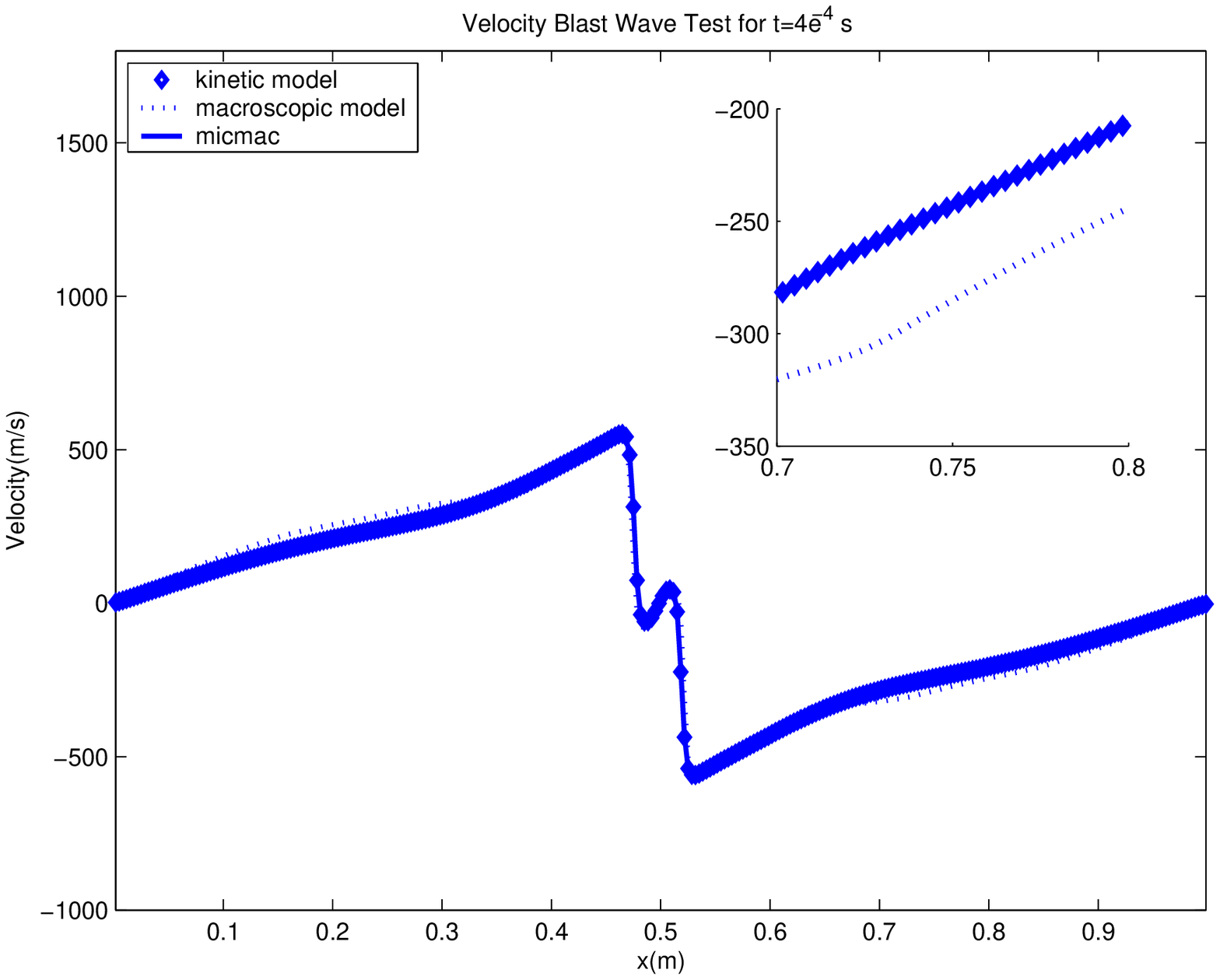}
\caption{Blast Wave Test 2: Solution at $t=1\times 10^{-4}$ top,
$t=2\times 10^{-4}$ middle top, $t=3\times 10^{-4}$ middle bottom,
$t=4\times 10^{-4}$ bottom, density left, velocity  right.
\label{blast2.1}}
\end{center}
\end{figure}

\begin{figure}
\begin{center}
\includegraphics[scale=0.34]{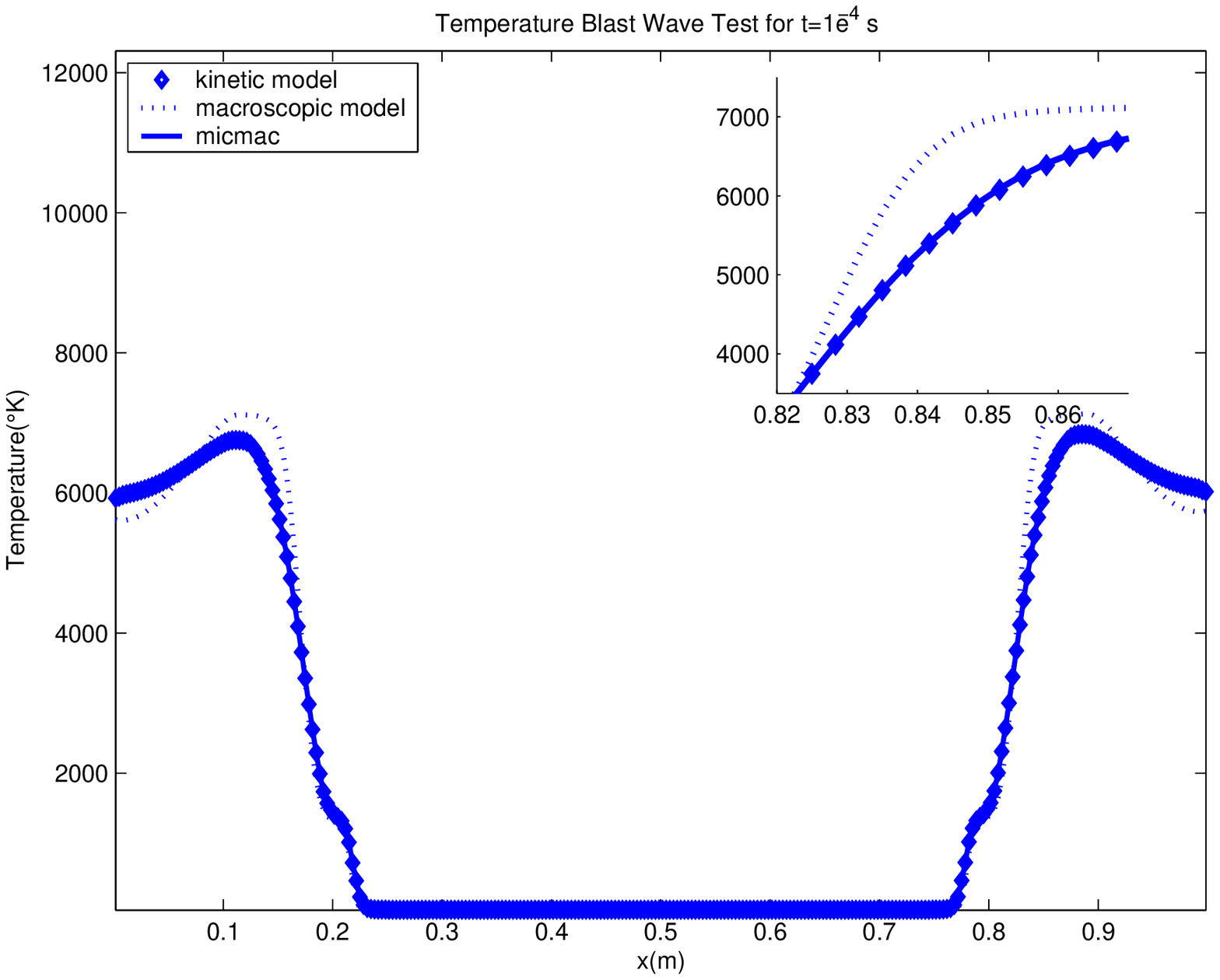}
\includegraphics[scale=0.34]{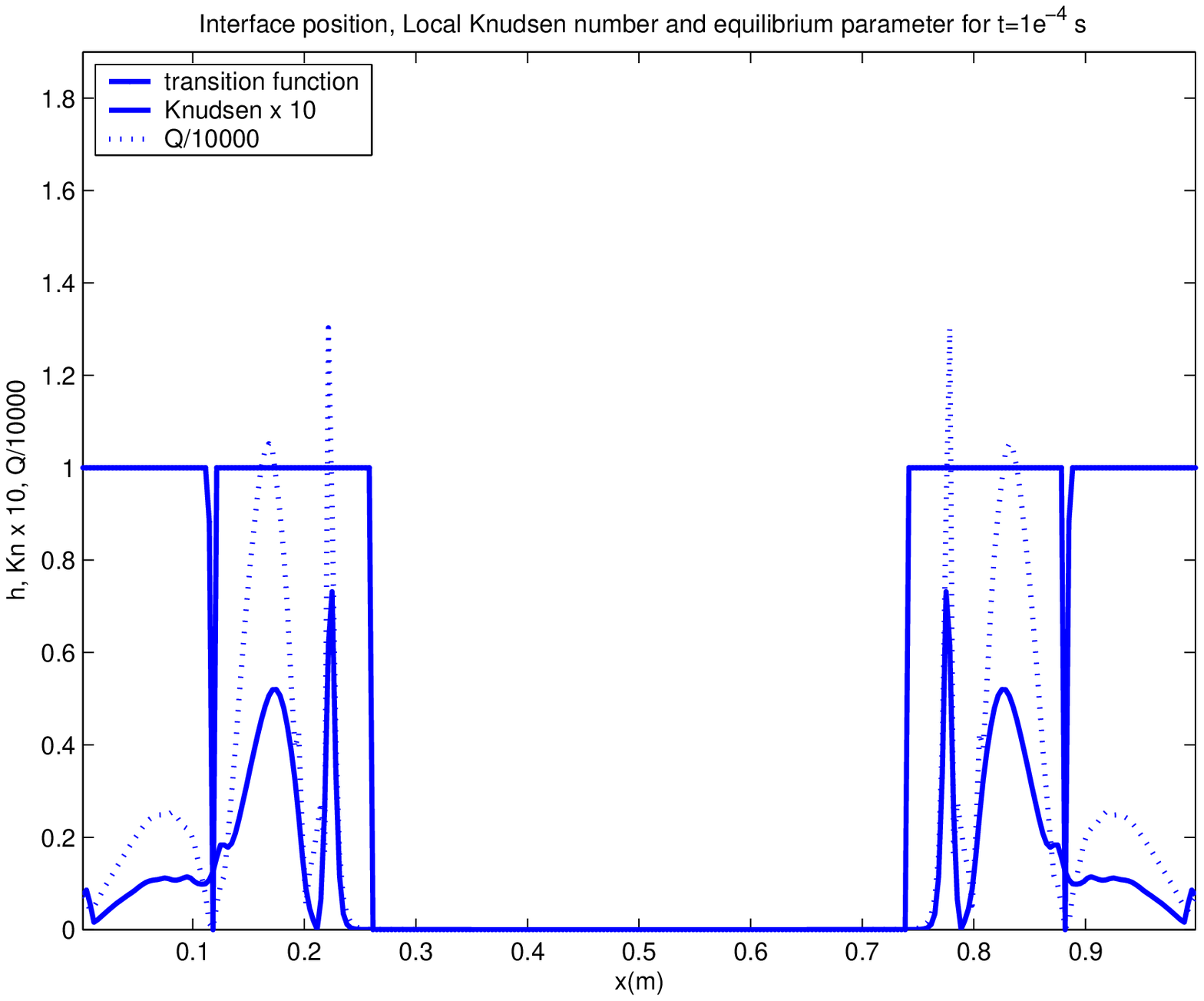}
\includegraphics[scale=0.34]{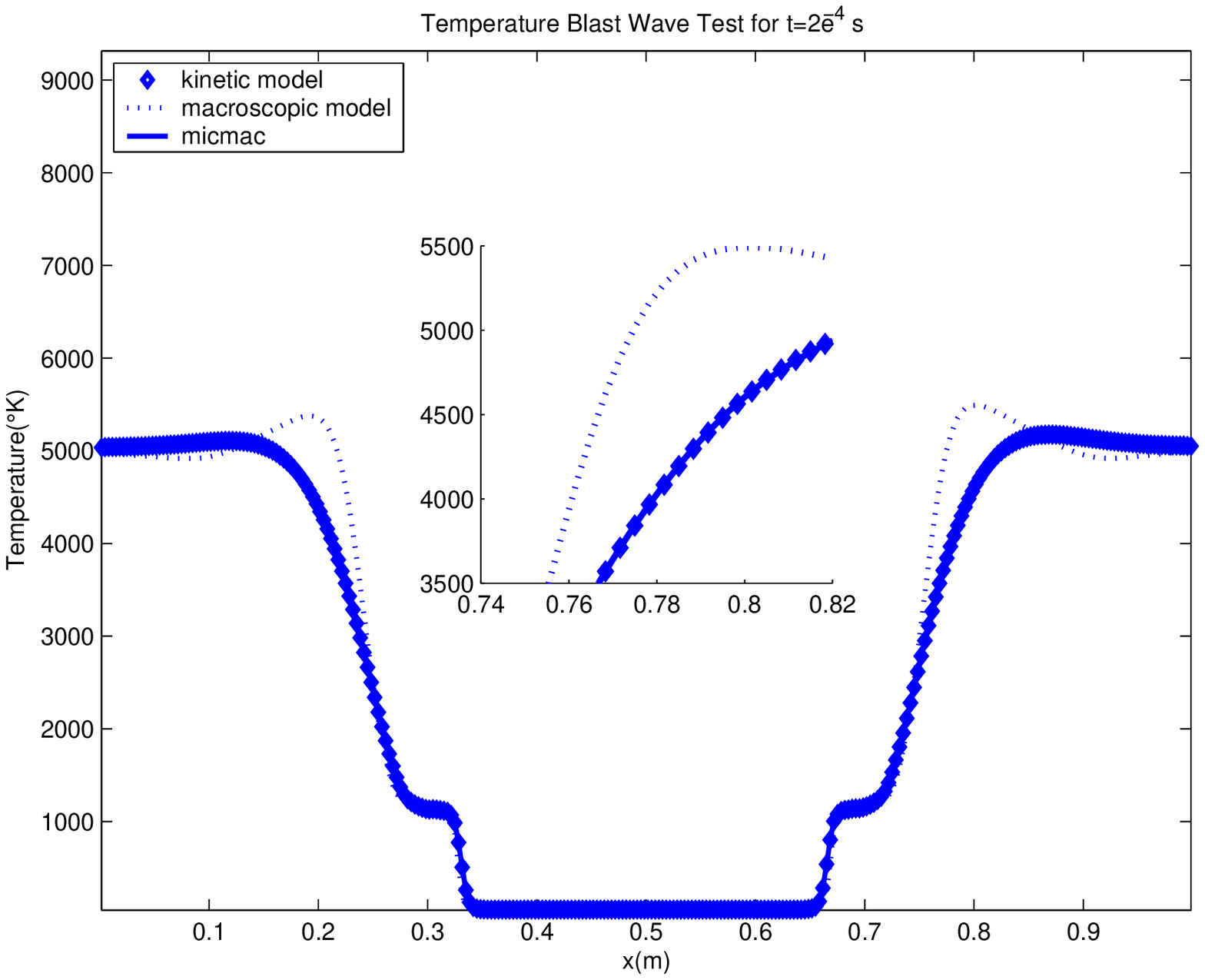}
\includegraphics[scale=0.34]{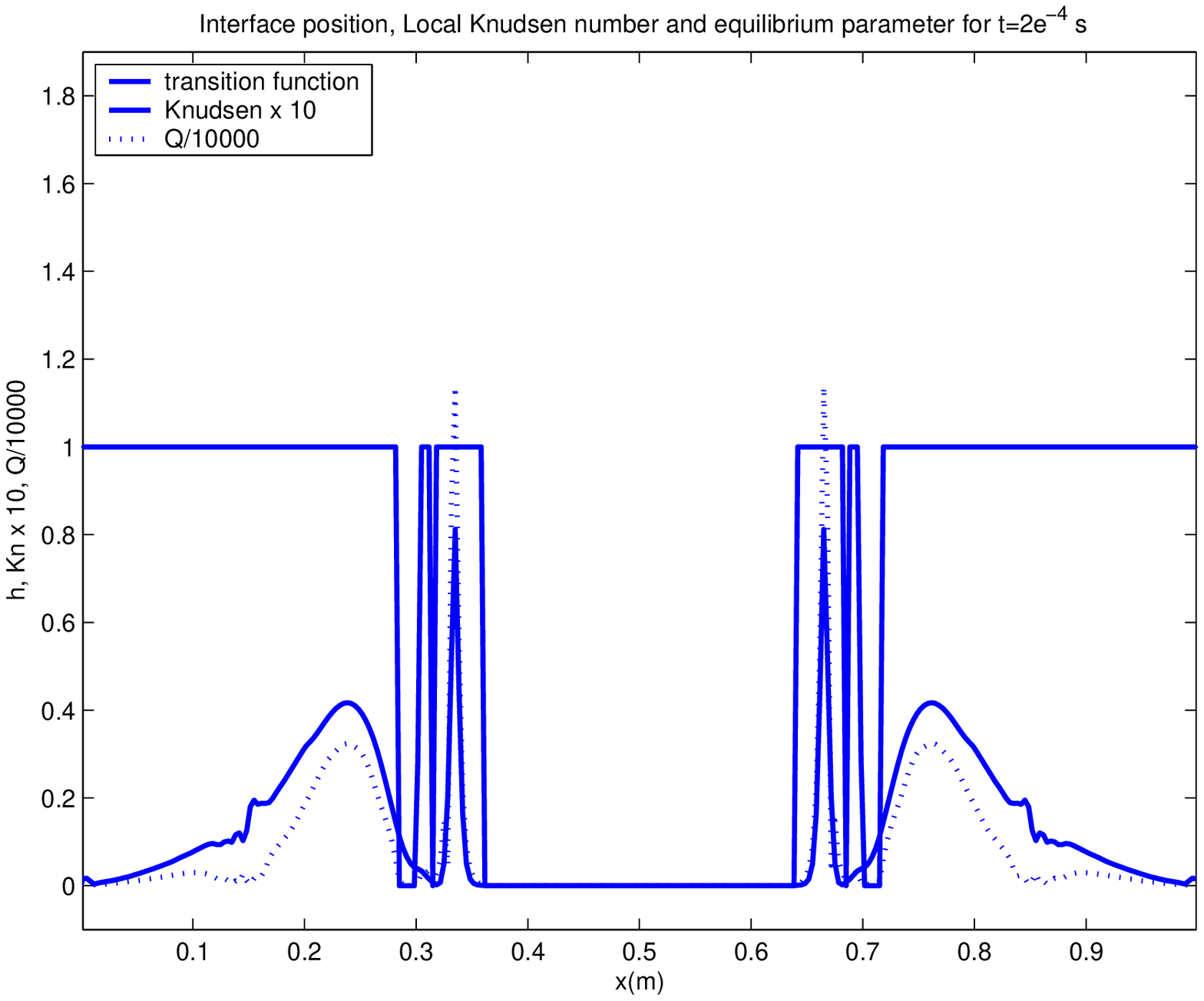}
\includegraphics[scale=0.34]{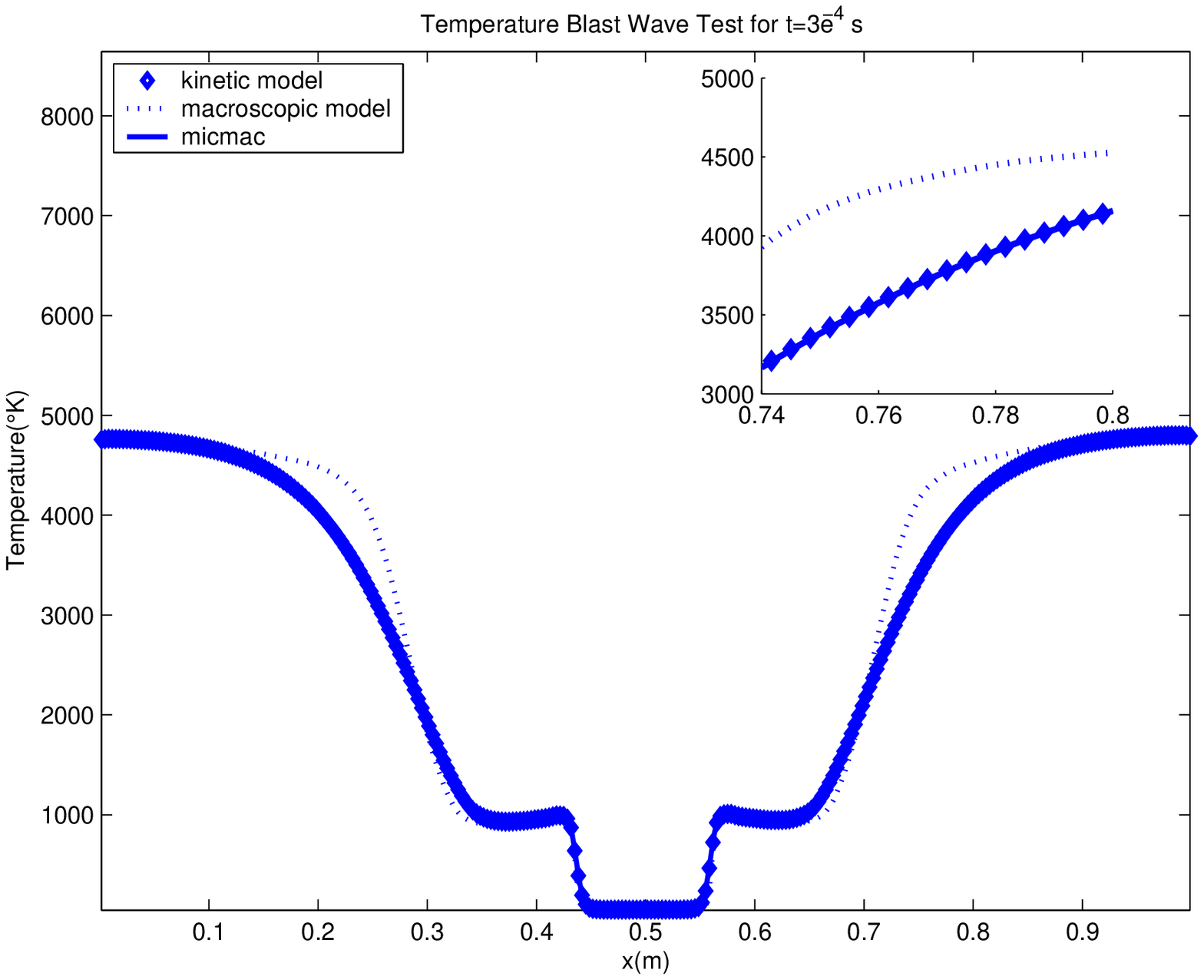}
\includegraphics[scale=0.34]{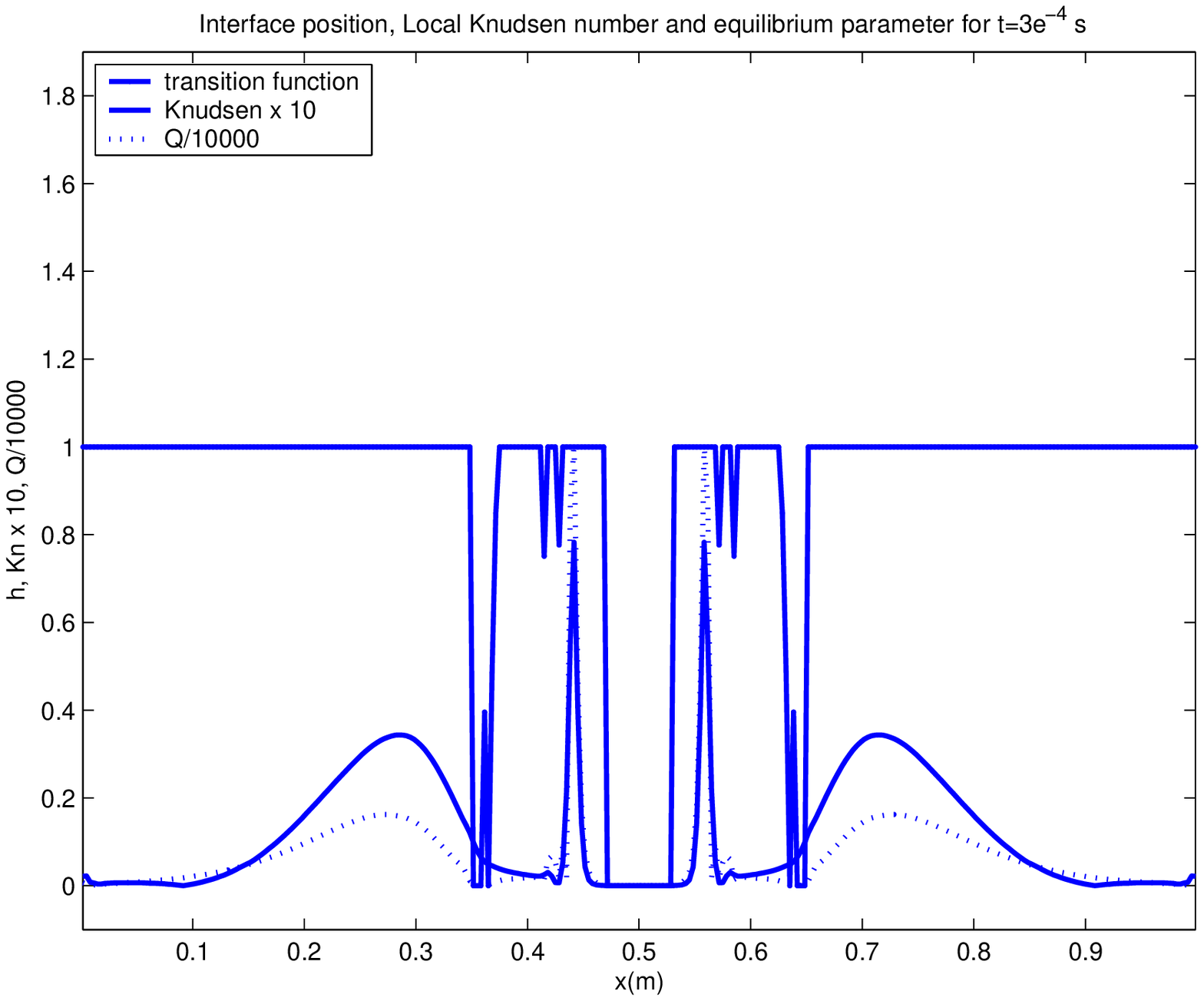}
\includegraphics[scale=0.34]{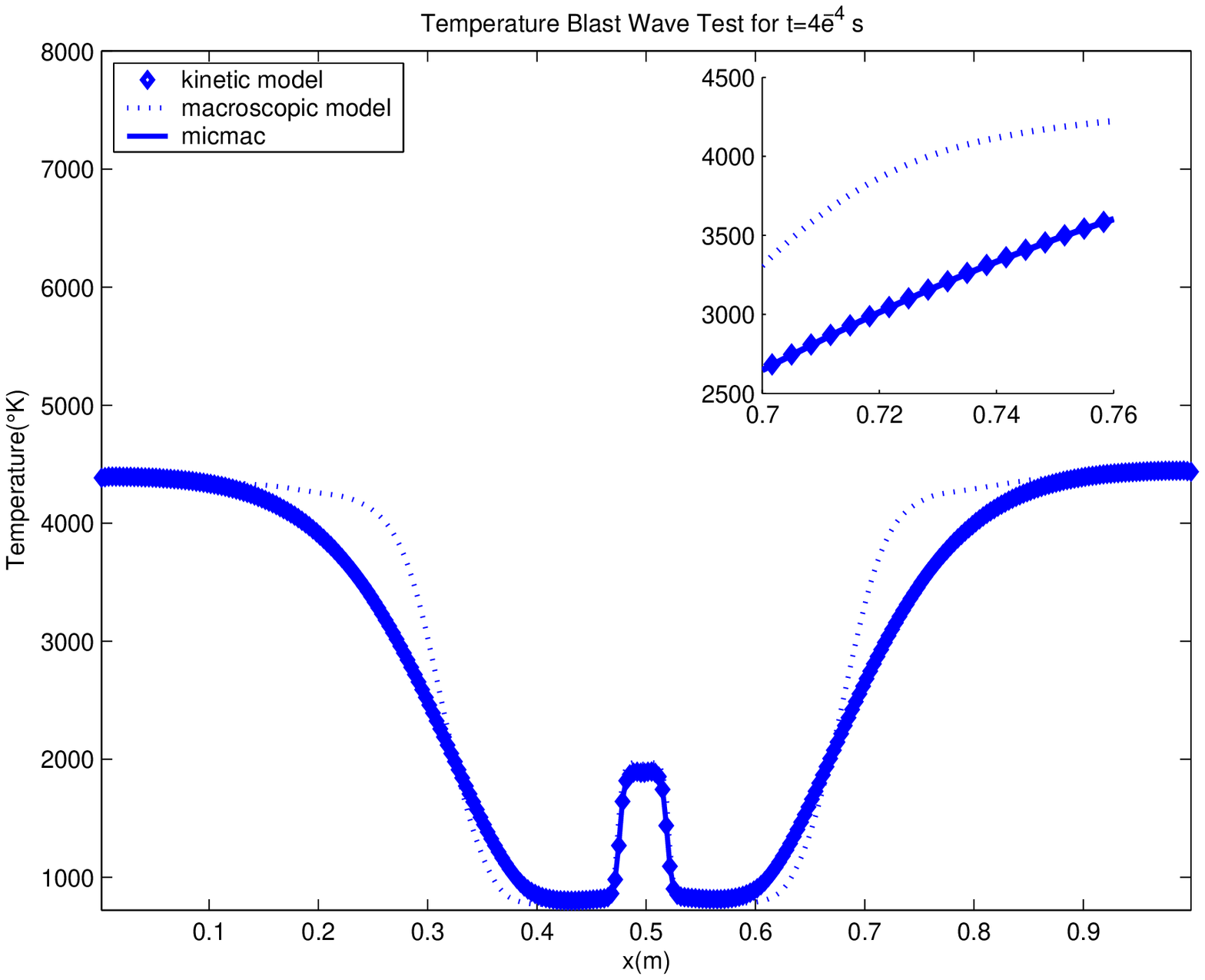}
\includegraphics[scale=0.34]{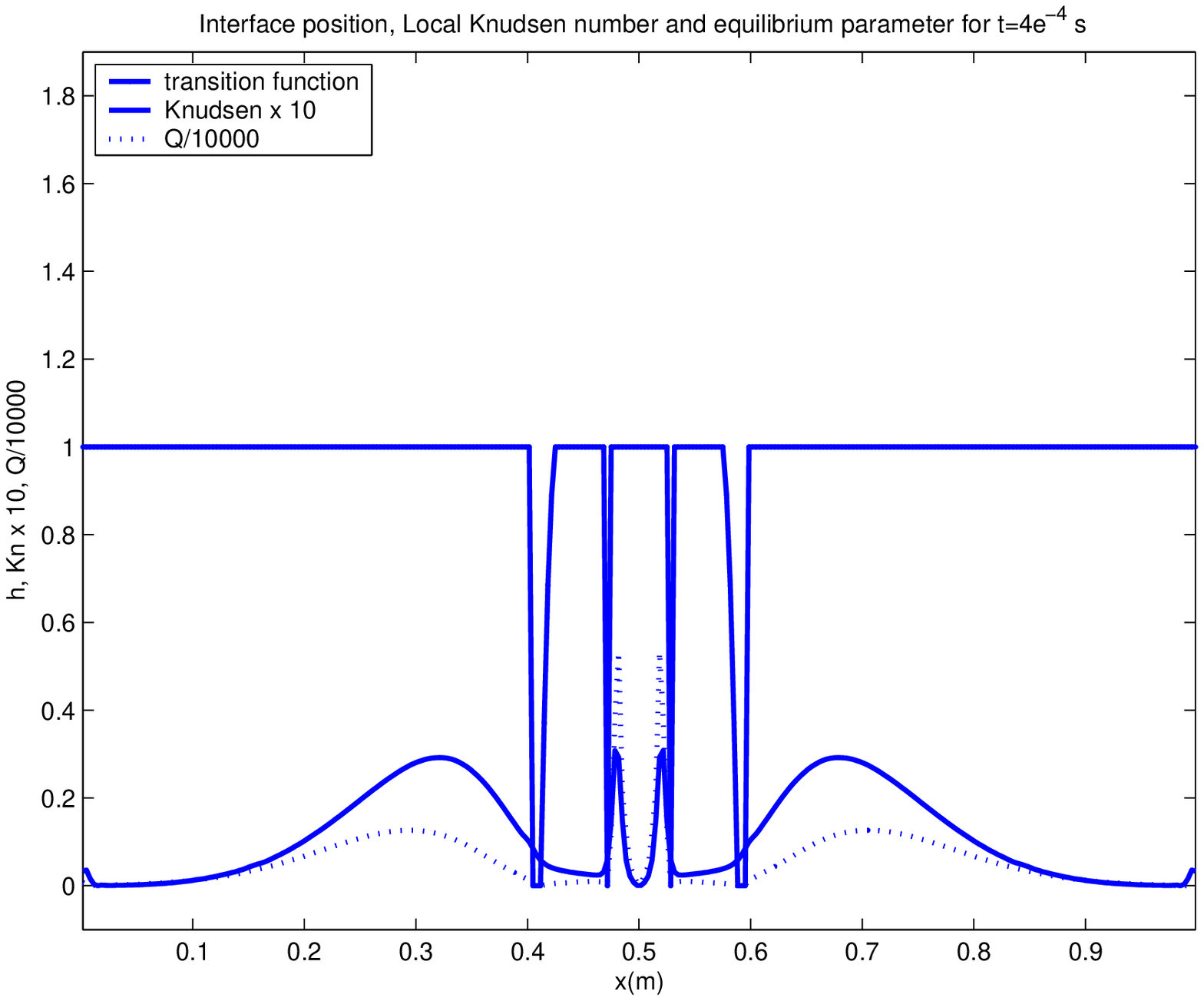}
\caption{Blast Wave Test 2: Solution at $t=1\times 10^{-4}$ top,
$t=2\times 10^{-4}$ middle top, $t=3\times 10^{-4}$ middle bottom,
$t=4\times 10^{-4}$ bottom, temperature left, transition function,
Knudsen number and heat flux. \label{blast2.2}}
\end{center}
\end{figure}



\begin{thebibliography}{99}
\small
%
\bibitem{AA}
{\sc O.~Aktas, and N.R.~Aluru}, {\em A combined continuum/DSMC
technique for multiscale analysis of microfluidic filters}. J.
Comp. Phys., vol.178, (2002) pp.~342--72.



\bibitem{bird}
{\sc G.~A.~Bird}, {\it{Molecular gas dynamics and direct simulation
of gas flows}}, Clarendon Press, Oxford (1994).

\bibitem{BLPQ}
{\sc J.~F.~Bourgat, P.~LeTallec, B.~Perthame, and Y.~Qiu}, {\it
Coupling Boltzmann and Euler equations without overlapping}, in
Domain Decomposition Methods in Science and Engineering, Contemp.
Math. 157, AMS, Providence, RI, (1994), pp.~377--398.

\bibitem{BLT}
{\sc J.~F.~Bourgat, P.~LeTallec, M.D.~Tidriri}, {\it Coupling
Boltzmann and Navier-Stokes equations by friction}. J. Comput.
Phys. 127, 2, (1996), pag.~227--245.









\bibitem{cercignani}
{\sc C.~Cercignani}, {The Boltzmann Equation and Its
Applications}, Springer-Verlag, New York, (1988).



\bibitem{CLL}
{\sc G.~Q.~Chen, D.~Levermore and T.~P.~Liu}, {\it Hyperbolic
conservation laws with stiff relaxation terms and entropy}, Comm.
Pure Appl. Math., 47 (1994), pp.~787--830.


\bibitem{degond}
{\sc N. Crouseilles, P. Degond, M. Lemou}, {\em A hybrid
kinetic-fluid model for solving the gas-dynamics Boltzmann BGK
equation}, J. Comp. Phys., 199 (2004) 776-808.

\bibitem{degond1}
{\sc P. Degond, S. Jin, L. Mieussens}, {\em A Smooth Transition
Between Kinetic and Hydrodynamic Equations }, J. Comp. Phys., 209
(2005) 665--694.

\bibitem{degond2}
{\sc P. Degond and S. Jin}, {\em  A Smooth Transition Between
Kinetic and Diffusion Equations}, SIAM J. Numer. Anal., 2005, vol.
42, no6, pp. 2671-2687 .

\bibitem{degond3}
{\sc P. Degond, J.G. Liu, L. Mieussens}, {\em Macroscopic Fluid
Model with Localized Kinetic Upscaling Effects}, SIAM Multi. Model.
Sim. 5(3), 940--979 (2006)



%

\bibitem{dimarco1}
{\sc P.Degond, G. Dimarco, L. Mieussens.}, {\it A Moving Interface
Method for Dynamic Kinetic-fluid Coupling.} J. Comp. Phys., Vol.
227, pp. 1176-1208, (2007).

\bibitem{dimarco2}
{\sc G.~Dimarco and L.~Pareschi}, {\it A Fluid Solver Independent
Hybrid Method for Multiscale Kinetic equations.}, Submitted to SIAM
J. Sci. Comp.

\bibitem{dimarco3}
{\sc G.~Dimarco and L.~Pareschi}, {\it Domain Decomposition
Techniques and Hybrid Multiscale Methods for kinetic equations},
Proceedings of the 11th International Conference on Hyperbolic
problems: Theory, Numerics, Applications, pp. 457-464 (2007).

\bibitem{WE1}
{\sc W.~E and B.~Engquist}, {\it The heterogeneous multiscale
methods}, Comm. Math. Sci., 1, (2003), pp.~87-133.

%






\bibitem{he} {\sc H.S. Wijesinghe, N.G. Hadjiconstantinou}, {\it
Discussion of Hybrid Atomistic-Continuum Methods for Multiscale
Hydrodynamics.} Int. J. Multi. Comp. Eng., Vol.2 pp. 189–202(2004).




\bibitem{Huang} {\sc A. B. Huang and P. F. Hwang},  {\it Test of statistical models for gases
with and without internal energy states}, Phy. Fluids,
16(4):466–475, 1973.








\bibitem{JX}
{\sc S.~Jin and Z.~P.~Xin}, {\it Relaxation schemes for systems of
conservation laws in arbitrary space dimensions}, Comm. Pure Appl.
Math., 48 (1995), pp.~235--276.

\bibitem{Klar}
{\sc A. Klar and C. Schmeiser}, {\em Numerical passage from
radiative heat transfer to nonlinear diffusion models}, Math. Models
Methods Appl. Sci., Vol. 11, pp.749-–767, 2001.

\bibitem{Kobolov}
{\sc V.I. Kolobov, R.R. Arslanbekov,V.V. Aristov, A.A. Frolova, S.A.
Zabelok}, {\em Unified Solver for Rarefied and Continuum Flows with
Adaptive Mesh and Algorithm Refinement}, J. Comp. Phys., Vol. 223
No. 2, pp. 589--608 (2007).


\bibitem{Levermore} {\sc D. Levermore, W.J.Morokoff, B.T. Nadiga}, {\em Moment realizability and the validity of the
Navier–Stokes equations for rarefied gas dynamics}, Phy. Fluids,
Vol. 10, No. 12, (1998).


\bibitem{Mieussens}
{\sc L. Mieussens}, {\em Discrete Velocity Model and Implicit Scheme
for the BGK Equation of Rarefied Gas Dynamic}, Math. Models Meth.
App. Sci., Vol 10 No. 8 (2000) 1121-1149.


\bibitem{Mieussens2}
{\sc L. Mieussens}, {\em Convergence of a discrete-velocity model
for the {B}oltzmann-{BGK} equation}, Comput. Math. Appl.,
41(1-2):83--96, 2001.


\bibitem{Letallec}
{\sc P.~LeTallec and F.~Mallinger},
\newblock {\em Coupling Boltzmann and Navier-Stokes by half
fluxes}
\newblock {J. Comp. Phys.}, 136,  (1997) pp.~51--67.

\bibitem{leveque:numerical-methods} {\sc R.~J.~LeVeque},
   {\em Numerical Methods for Conservation Laws},
   Lectures in Mathematics, Birkhauser Verlag, Basel (1992).


\bibitem{LiZa}
{\sc Z.-H.~Li and H.-X.~Zhang}, {\em Numerical investigation from
rarefied flow to continuum by solving the Boltzmann model equation},
Int. J. Num. Meth. Fluids, Vol. 42, 4, (2003) pp.~361--382.




\bibitem{MD}
{\sc J.C. Mandal, and S.M. Deshpande}, {\em Kinetic flux vector
splitting for Euler equations}, Comp. Fluids, Vol.23, (1994),
pp.~447--78.





%


%




%





\bibitem{RGV1}
{\sc R.~Roveda, D.B.~Goldstein, and P.L.~Varghese}, {\em Hybrid
Euler/Particle Approach for Continuum/ Rarefied Flows}, AIAA J.
Spacecraft Rockets 35, (1998), pp.~258--265.



\bibitem{SSE}
{\sc T. Schulze, P. Smereka and W. E}, {\em Coupling kinetic
Monte-Carlo and continuum models with application to epitaxial
growth}, J. Comput. Phys., 189, (2003), pp.~197--211.

\bibitem{Boyd2}
{\sc Q. Sun, I. D. Boyd, G.V. Candler}, {\em A hybrid
continuum/particle approach for modeling subsonic, rarefied gas
flows.} J. Comp. Phys., 194 (2004) 256–277.


%
\bibitem{tiwari_JCP}
  {\sc S. Tiwari}, {\it Coupling of the Boltzmann and Euler equations with automatic domain decomposition},
  J. Comput. Phys., vol. 144, 1998, 710--726.
\bibitem{tiwari_1}
{\sc S. Tiwari}, {\it Application Of moment realizability criteria
for coupling of the Boltzmann and the Euler equations}, Transp.
Theory Stat. Phys., Vol. 29, 7 pp 759--783

\bibitem{Wang}
{\sc W.-L. Wang and I. D. Boyd}, {\it Predicting Continuum breakdown
in Hypersonic Viscous Flows}, Phys. Fluids, Vol. 15, 2003, pp
91-100.

\bibitem{Wad}
{\sc D.C. Wadsworth, D.A. Erwin} {\it Two dimensional hybrid
continuum/particle simulation approach for rarefied hypersonic
flows.} AIAA Paper 92-2975, 1992.

\bibitem{Wij}
{\sc H.S. Wijesinghe, R.D. Hornung, A. L. Garcia, N. G.
Hadjiconstantinou}, {\it Three– dimensional hybrid
continuum-atomistic simulations for multiscale hydrodynamics.} J.
Fluids Eng. (to appear).
%



%

\end{thebibliography}
\end{document}